\documentclass[12pt]{article}
\usepackage{authblk}
\usepackage{amssymb}
\usepackage{amsthm}
\usepackage{amsmath}
\usepackage{cleveref}
\usepackage{enumerate}
\usepackage{mathtools}
\usepackage{enumitem}
\usepackage{cmll}

\allowdisplaybreaks[4]
\usepackage{geometry}
\usepackage{bm}
\usepackage{natbib}
\usepackage{xcolor}
\usepackage{colonequals}
\usepackage{float}
\usepackage{graphicx}
\usepackage{algorithm}
\usepackage{color}
\usepackage{algpseudocode}
\usepackage{epsfig}
\usepackage{dcolumn}
\usepackage{multirow}
\usepackage{booktabs}
\usepackage{geometry}
\usepackage{subfigure}
\usepackage{mathrsfs}  
\usepackage{titlesec}
\usepackage{url}

\usepackage[nodisplayskipstretch]{setspace}

\titleformat{\section}
{\normalfont\fontsize{12.5}{12.5}\bfseries}{\thesection}{1em}{}
\titleformat{\subsection}
{\normalfont\fontsize{12}{12}\bfseries}{\thesubsection}{1em}{}

\titlespacing{\section}{0pt}{10pt}{3pt}
\titlespacing{\subsection}{0pt}{10pt}{1pt}
\titlespacing{\subsubsection}{0pt}{1pt}{1pt}

\newcommand{\nodisplayskips}{
	\setlength{\abovedisplayskip}{5pt}
	\setlength{\belowdisplayskip}{5pt}
	\setlength{\abovedisplayshortskip}{5pt}
	\setlength{\belowdisplayshortskip}{5pt}
}

\makeatletter
\newcommand{\red}{\color{red}}

\newcommand{\re}{\color{cyan}}

\usepackage{etoolbox} 

\makeatletter
\patchcmd{\@makecaption}
{\parbox}
{\advance\@tempdima-\fontdimen2} 
{}{}
\makeatother 

\newtheorem{theorem}{\bf{Theorem}}
\newtheorem{remark}{\bf{Remark}}
\newtheorem{lemma}{\bf{Lemma}}

\newtheorem{proposition}{\bf{Proposition}}
\newtheorem{condition}{\bf{Condition}}

\def\T{\rm T}

\newcommand{\diag}{{\rm diag}}

\newcommand{\var}{{\rm var}}
\newcommand{\tr}{{\rm tr}}

\newcommand{\bX}{\boldsymbol{X}}
\newcommand{\bY}{\boldsymbol{Y}}
\newcommand{\bx}{\boldsymbol{x}}
\newcommand{\bZ}{\boldsymbol{Z}}
\newcommand{\be}{\boldsymbol{e}}
\newcommand{\bg}{\boldsymbol{g}}
\newcommand{\ba}{\boldsymbol{a}}
\newcommand{\bxi}{\boldsymbol{\xi}}
\newcommand{\bzero}{\boldsymbol{0}}

\newcommand{\bbeta}{\boldsymbol{\beta}}
\newcommand{\balpha}{\boldsymbol{\alpha}}
\newcommand{\bbias}{\boldsymbol{b}}
\newcommand{\bdelta}{\boldsymbol{\delta}}
\newcommand{\bvareps}{\boldsymbol{\varepsilon}}
\newcommand{\etab}{\boldsymbol{\eta}}

\newcommand{\bbX}{\mathbf{X}}

\newcommand{\bbU}{\mathbf{U}}

\newcommand{\bbPi}{\mathbf{\Pi}}

\newcommand{\bbV}{\mathbf{V}}

\newcommand{\bbQ}{\mathbf{Q}}

\newcommand{\bbI}{\mathbf{I}}
\newcommand{\bbA}{\mathbf{A}}
\newcommand{\bbB}{\mathbf{B}}
\newcommand{\bbM}{\mathbf{M}}
\newcommand{\bbH}{\mathbf{H}}
\newcommand{\bbSig}{\mathbf{\Sigma}}

\usepackage{scalerel}

\newcommand{\sA}{\mathscr{A}}

\newcommand{\cE}{\mathcal{E}}

\newcommand{\cT}{\scaleto{\mathcal{T}\mathstrut}{5.5pt}}
\newcommand{\cS}{\scaleto{\mathcal{S}\mathstrut}{5.5pt}}
\newcommand{\cC}{\scaleto{\mathcal{C}\mathstrut}{5.5pt}}
\newcommand{\SURE}{\scaleto{\rm SURE\mathstrut}{5.5pt}}
\newcommand{\IVW}{\scaleto{\rm IVW\mathstrut}{5.5pt}}

\newcommand{\hB}{\widehat{\bbB}}
\newcommand{\hH}{\widehat{\bbH}}
\newcommand{\hQ}{\widehat{\bbQ}}
\newcommand{\hSig}{\widehat{\bbSig}}
\newcommand{\hbeta}{\widehat{\bbeta}}
\newcommand{\tbeta}{\widetilde{\bbeta}}
\newcommand{\halpha}{\widehat{\balpha}}

\newcommand{\hdelta}{\widehat{\bdelta}}
\newcommand{\tdelta}{\widetilde{\bdelta}}

\usepackage{longtable}

\def\T{{\mathrm{\scriptscriptstyle T}}}

\titlespacing{\section}{0pt}{3pt}{3pt}
\titlespacing{\subsection}{0pt}{1pt}{1pt}
\titlespacing{\subsubsection}{0pt}{1pt}{1pt}

\graphicspath{{./figures/}}

\usepackage{etoolbox} 

\makeatletter
\patchcmd{\@makecaption}
{\parbox}
{\advance\@tempdima-\fontdimen2} 
{}{}
\makeatother

\newcommand\blfootnote[1]{%
	\begingroup
	\renewcommand\thefootnote{}\footnote{#1}%
	\addtocounter{footnote}{-1}%
	\endgroup
}

\title{Divide-and-shrink: An efficient and heterogeneity-agnostic approach for transfer estimation using summary statistics }
\author[1]{Ruoyu Wang}
\author[1,2,$*$]{Xihong Lin}
\affil[1]{Department of Biostatistics, Harvard School of Public Health, Boston, USA}
\affil[2]{Department of Statistics, Harvard University, Boston, USA}

\begin{document}
	\date{}
\maketitle
\blfootnote{$*$ Corresponding author. Email: xlin@hsph.harvard.edu}
	\begin{abstract}
         Knowledge transfer across data sources holds great promise for improving the estimation of target population parameters by leveraging the growing availability of data from different sources. However, the effectiveness of knowledge transfer is often challenged by the complex and pervasive heterogeneity between data sources and the lack of access to individual-level data. This paper proposes the divide-and-shrink (dShrink) method, a transfer estimation method that estimates target population parameters in a closed form using summary statistics from a target population and some external source populations while accounting for population heterogeneity. The dShrink estimator is guaranteed to outperform the estimator based solely on the target population in terms of expected quadratic error under arbitrary population heterogeneity. The gain can be substantial when the target and source populations are similar, or the underlying true parameter values are near zero. Notably, dShrink is model-free, requires no user-specified tuning parameters, robust to various types of heterogeneity between data sources, and applies to a broad range of parameter estimation problems. dShrink remains effective even when the covariance matrix is not accessible for the external summary statistics and offers flexibility in incorporating side information and summary statistics from multiple source populations. Simulations and real data analyses demonstrate the superior performance of the dShrink estimator and its potential as a robust tool for transfer estimation.
		 
	\end{abstract}
	
	\noindent%
	{\it Keywords:}  Combining information, Data integration; Efficiency; Population heterogeneity;  Robustness; Shrinkage estimators; Transfer learning
	\vfill
	 
    \setcounter{page}{1}
    \pagestyle{plain}
    \setstretch{1.9}
	\section{Introduction}    
     Data from diverse sources are increasingly available in contemporary data science tasks.  Knowledge transfer across data sources holds significant promise in reducing the parameter estimation error in the target population. In real-world applications, access to individual-level data is frequently limited due to privacy considerations or data transmission limitations. Consequently, researchers may only have access to specific summary statistics for certain data sources. While these summary statistics can provide valuable information, the heterogeneity between data sources poses a significant challenge to transferring knowledge for parameter estimation. A good amount of recent research has been dedicated to  
    transfer learning for prediction and estimation using individual-level data with heterogeneous data sources, developed under both frequentist \citep{tang2016fused, gu2025robust, cai2024semi} and Bayesian frameworks \citep{suder2025bayesian}.
    Despite these advances, methods capable of effectively using external summary statistics in the presence of general parameter and population heterogeneity are still relatively rare, and the existing methods usually entail strong assumptions on the heterogeneity between source and target populations \citep{sheng2022synthesizing, zhai2022data, huang2023simultaneous,chen2024integrating}. This situation underscores the necessity for developing heterogeneity-agnostic transfer estimation approaches using summary statistics that require minimal assumptions regarding heterogeneity.
	
	Suppose researchers conduct a study using their own data set, referred to as the internal data source, with the goal of estimating a specific parameter of interest in a target population. They can access individual-level data or summary statistics from this internal data source. Additionally, suppose that some summary statistics from an external data source are also available, offering potential utility in estimating the target parameter. However, heterogeneity between these data sources is a common and practical challenge. Such heterogeneity can be categorized into two main types. One is \emph{parameter heterogeneity}, which occurs when the summary statistics from the external data source estimate a parameter that differs from the target parameter. The other is \emph{population heterogeneity} which means the source population that generates the external summary statistics is different from the target population. 
	
	\emph{Parameter heterogeneity} usually occurs when the external data sources collect relatively crude information, while more detailed information is available from the internal data source. This situation is common, for example, when external data sources include fewer covariates than those collected by the internal data source \citep{qin2015using, chatterjee2016constrained, huang2016efficient}. This discrepancy can lead to challenges in incorporating the external summary statistics. Several transfer estimation methods have been developed to address parameter heterogeneity and effectively integrate summary statistics, including the confidence distribution method \citep{liu2015multivariate}, generalized method of moments-based methods \citep{huang2020unified, sheng2020censored, zhai2022data}, empirical likelihood-based methods \citep{qin2015using, chatterjee2016constrained, zhang2020generalized}, calibration-based methods \citep{yang2020combining}, among others \citep{taylor2023data}. While these approaches offer valuable strategies for dealing with parameter heterogeneity, they do not take into account population heterogeneity. Thus, these methods can be severely biased when the target and source populations are different and may have larger estimation errors than the target population-based estimator without integrating external summary statistics.
	
	\emph{Population heterogeneity} is ubiquitous when integrating or transferring information from different populations \citep{higgins2009re}.
	If not properly accounted for,  population heterogeneity can introduce a large bias in the transferred estimator in the target population \citep{hu2024efusion}, and result in a larger estimation error than the target population-based estimator. \cite{sheng2022synthesizing} attempts to mitigate this issue by applying a density ratio model for adjusting the covariate shift between source and target populations in the context of logistic regression. Yet, this method relies critically on both the assumption of covariate shift and the assumption of the correct specification of the density ratio model, which can be fragile in practice. \cite{hickey2024transfer} addresses  population heterogeneity by modeling the ratio between certain structural components of the two populations as a Cauchy random variable. However, the effectiveness of this method is well understood only in the specific case where the outcome and covariates follow a linear model and the observed outcomes in the target population are normally distributed, repeated samples from a fixed covariate value. In the presence of population heterogeneity,  directly using the summary statistics in the source population can yield biased estimators in the target population in general.
    
    To address this, \cite{chen2021combining}, \cite{zhai2022data}, \cite{huang2023simultaneous}, and \cite{han2025federated} employ penalization techniques to identify and incorporate the transportable external summary statistics. \cite{yang2023elastic} and \cite{gao2023pretest} employ a pretest procedure that incorporates the external data only when the test does not reject the null hypothesis that the external data are transportable. The success of these methods hinges on accurately identifying transportable summary statistics. They demonstrate promising results when transportable summary statistics are accurately selected with probability approaching one. However, attaining such selection consistency is possible only when the untransportable summary statistics have large biases so that they can be distinguished from the transportable ones. In practice, while the target and source populations can be similar, they are generally not identical. Additionally, in settings where transfer estimation methods are applied, the sample size from the target population is typically limited. In such cases, it may not be possible to accurately select the transportable summary statistics.
    Then, the above transferred estimators can be biased,
    have larger estimation errors than the target population-based estimator, and are not safe in general,
    where a	transfer estimation method is called \emph{safe} if the inclusion of external summary statistics does not increase the estimation error in terms of expected quadratic error compared to the target population-based estimator. 
    
    In scenarios where populations are similar but not identical, shrinkage-based estimators—such as the method of \cite{hector2024turning}, the James–Stein (JS) estimator for data combination \citep{green1991james, han2024improving}, and the data enriched linear regression \citep{chen2015data} \cite[and its extension to logistic regression,][]{zheng2025data}--exhibit desirable safety properties. The ISEDI method of \cite{hector2024turning} focuses on generalized linear models and requires access to individual-level data. Its safety is guaranteed when a data-independent ``oracle” tuning parameter is employed.
    The JS estimator and data enriched regressions are largely restricted to mean estimation or linear/logistic regression settings and require the summary statistics to be the estimated mean/coefficients in the source population. Please refer to Section \ref{subsec: JS and data enrich} for details about the properties and limitations of the JS estimator and data enriched linear regression.
	
	In this paper, we propose the divide-and-shrink (dShrink) method for improving parameter estimation and inference in a target population by robustly integrating individual-level data or summary statistics from the target population and the summary statistics from external populations that may be heterogeneous from the target population. The proposed method applies to general parameter estimation problems. It distinguishes itself by being safe in general parameter estimation problems with both parameter and population heterogeneity. 
    
    To motivate the proposed method, we begin by examining the idea of calibration \citep{chen2000unified} and pretest \citep{yang2023elastic}, which serve as insightful approaches 
    for
    data integration and transfer estimation. We show that neither the calibration method nor the pretest method can achieve safe transfer estimation without assumptions on population heterogeneity. Upon deeper analysis, we discover that both the calibration estimator and the pretest estimator fall within a general class of the divide-and-combine estimators. Specifically, a divide-and-combine estimator divide the target population-based estimator into two independent components utilizing the external summary statistics. The integrated estimator is then formed by adopting a linear combination of these two components. Nevertheless, both the calibration estimator and the pretest estimator apply unnecessary restrictions on the combination coefficients. Moreover, they overlook the implications of the estimation errors in the process of defining these coefficients. 
    
    Motivated by these,  we consider a general approach to combining the two estimators by estimating the combination coefficients 
    that minimizes the error of the combined estimator. The estimation error of the resulting combined estimator can be approximated using Stein's Unbiased Risk Estimate (SURE). Then, the estimated optimal combination coefficients can be obtained by minimizing SURE, which gives rise to the SURE estimator for the target parameter. However, we find the SURE estimator is not safe even when the target population is identical to the source population. This is because the combination coefficients that minimize SURE can be extremely unstable due to the ill-posedness of SURE in the transfer estimation context. To mitigate this problem, instead of directly minimizing an estimate of the error, we propose to minimize a separable surrogate objective function that can approximate the estimation error. The separable surrogate objective motivates a stabilized combination coefficient estimator, which shrinks each component in the combined estimators toward zero and leads to the proposed dShrink estimator. Furthermore, a re-bootstrap procedure is proposed to construct confidence intervals (CIs) of general functions of the parameter based on the dShrink estimator. 
	
	The proposed dShrink method is designed to accommodate both \emph{parameter heterogeneity} and \emph{population heterogeneity}, offering a safe and model-free transfer estimation approach to improving estimation and inference in the target population.  Moreover, the dShrink estimator enjoys a closed form and can be efficiently calculated using summary statistics of both the target and source populations.
	We further extend the dShrink method to accommodate the case where the covariance matrix of external summary statistics is not accessible and/or summary statistics from multiple source populations are available. We discuss the incorporation of side information to the dShrink method in Supplementary Material Section \ref{app: mds}. These extensions can be achieved without compromising the advantageous features of the dShrink method.
    
	We conduct extensive simulations to study the empirical performance of the dShrink method and compare it with the existing methods. In addition, we apply the dShrink method to the analysis of two data sets: One studies the association between infant death and various covariates, and one evaluates the treatment effect of rectal indomethacin on preventing endoscopic retrograde cholangiopancreatography (ERCP). These analysis results empirically show the attractive performance of dShrink across various settings. Code and data to produce numerical results in this paper can be found at \url{https://github.com/ryw-stats/dShrink}.
	
	The rest of this paper is organized as follows. Section \ref{sec: cal and pretest} delves into the idea of calibration and pretest in our context. Section \ref{sec: dShrink} proposes the dShrink method. Section \ref{sec: extension} extends the dShrink method to the case where the covariance matrix of the external summary statistics is unavailable and/or summary statistics from multiple source populations are available. We report extensive numerical results in simulation studies in Section \ref{sec: simulation} and real data analyses in Section \ref{sec: data analysis}, followed by discussions. 
	
	\section{Set-up and classical ideas}\label{sec: cal and pretest}
	\subsection{Problem setup}
	Let $P_{\cT}$ be a target population of primary interest, and $\{\bZ_{i}\}_{i=1}^{n_{\cT}}$ be a sample that consists of $n_{\cT}$ observations from $P_{\cT}$, where $n_{\cT}$ is the sample size from the target population. Let $\bbeta_{\cT} = \bbeta(P_{\cT})$ be the $p$-dimensional target parameter vector, where $\bbeta(\cdot)$ is a functional of distributions. Suppose one can assess the individual-level data $\{\bZ_{i}\}_{i=1}^{n_{\cT}}$ or  summary statistics calculated based on $\{\bZ_{i}\}_{i=1}^{n_{\cT}}$ in the target population. In addition, assume some external information is available. Specifically, suppose some summary statistics $\halpha_{\cS}$ are available that are derived from an independent sample of size $n_{\cS}$ from the source population $P_{\cS}$. In this paper, we allow the source population $P_{\cS}$ to deviate from the target population $P_{\cT}$. The summary statistics $\halpha_{\cS}$ can be mismatched with the target parameter $\bbeta_{\cT}$. For instance, suppose the parameter vector of interest $\bbeta_{\cT}$ is the joint regression coefficient vector under a generalized linear model in $P_{\cT}$. The external summary statistics $\halpha_{\cS}$ might be the marginal regression coefficients under a marginal model in $P_{\cS}$. Let $\balpha_{\cS} = \balpha(P_{\cS})$ be the $q$-dimensional external parameter defined as the probability limit of $\halpha_{\cS}$ with respect to $P_{\cS}$.  We say the external parameter is mismatched with the target parameter if the functional $\balpha(\cdot) \neq \bbeta(\cdot)$. We allow $q \neq p$ and assume $\balpha(P_{\cT})$ is well-defined and estimable using $\{\bZ_{i}\}_{i=1}^{n_{\cT}}$. Despite of heterogeneity, $\halpha_{\cS}$ can still provide valuable information for estimating $\bbeta_{\cT}$, especially when $P_{\cS}$ share some similarity with $P_{\cT}$. However, the similarity between populations is typically hard to quantify and validate. Our objective is to devise methods that can improve the estimation accuracy of $\bbeta_{\cT}$ utilizing $\halpha_{\cS}$ under minimal assumptions on the similarity between $P_{\cT}$ and $P_{\cS}$. 
    
    For any $p$-dimensional random vector $\widehat{\bxi}$ and deterministic vector $\bxi$, we introduce the expected quadratic error
	$\cE_{\bbH}(\widehat{\bxi}, \bxi) = E\left\{(\widehat{\bxi} - \bxi)^{\T}\bbH(\widehat{\bxi} - \bxi)\right\}$,
	 where $\bbH$ is a user-specified weighting matrix. Subsequently, the subscript $\bbH$ in $\cE_{\bbH}$ is omitted for simplicity.
	The estimation accuracy for any estimator $\widehat{\bbeta}$ is evaluated by
	\begin{equation}
	\small
	\nodisplayskips
	\cE(\widehat{\bbeta}, \bbeta_{\cT}) = E
    \left \{(\widehat{\bbeta} - \bbeta_{\cT})^{\T}\bbH(\widehat{\bbeta} - \bbeta_{\cT})\right\},
	\label{eq:epsilon-error}
    \end{equation}
	which is referred to as the $\cE$-error throughout. The choice of $\bbH$ can be problem-dependent.
	The $\cE$-error reduces to the mean squared error (MSE) of $\widehat{\bbeta}$ when $\bbH = \bbI_{p}$ where $\bbI_{p}$ is the identical matrix of size $p$. When $\bbeta_{\cT}$ is the parameter vector in a linear regression model  $\bX^{\T}\bbeta_{\cT}$, for a  estimator $\hbeta$ of $\bbeta_{\cT}$, the MSE of the predicted regression function $\bX^{\T}\hbeta$ is $(\hbeta - \bbeta_{\cT})^{\T} \bbSig_{\cT}(\hbeta - \bbeta_{\cT})$, where $\bX$ is the covariate vector, $\bbSig_{\cT}=E_{\cT}(\bX\bX^T)$ and the expectation $E_{\cT}(\cdot)$ is taken in the target population.
    In this case, a reasonable choice is $\bbH =  \bbSig_{\cT}$ if one concerns about the prediction performance in the target population. 
    Given a commonly-used standard estimator $\hbeta_{\cT}$ for $\bbeta_{\cT}$ based on target population data $\{\bZ_{i}\}_{i = 1}^{n_{\cT}}$, a 
    transfer estimation method is called \emph{safe} if its $\cE$-error is guaranteed to be no larger than $\cE(\hbeta_{\cT}, \bbeta_{\cT})$. 
	
	\subsection{The calibration and pretest estimators}\label{subsec: cal and pre}
	We first consider two classic approaches to transfer estimation in the target population, the calibration method and the pretest method, under the setting in this paper. To make use of the estimator $\halpha_{\cS}$ from the source population, we construct an estimator $\halpha_{\cT}$ for $\balpha_{\cT} = \balpha(P_{\cT})$ which is $\balpha_{\cS}$'s counterpart in the target population $P_{\cT}$. 
    Assume $(\hbeta_{\cT},\halpha_{\cT})$ are independent of the summary statistics $\halpha_{\cS}$ from the source population. For the sake of simplicity, we derive the estimator under the normality condition
	\begin{equation}\label{eq: normality}
		\small
		\nodisplayskips
	    \halpha_{\cS} \sim N(\balpha_{\cS}, \bbSig_{\alpha,\cS}),\
	\text{and} \ 
	\begin{pmatrix}
		\hbeta_{\cT}\\
		\halpha_{\cT}
	\end{pmatrix} \sim 
	N\left\{
	\begin{pmatrix}
		\bbeta_{\cT}\\
		\balpha_{\cT}
	\end{pmatrix}\\, 
	\begin{pmatrix}
		\bbSig_{\beta, \cT} & \bbSig_{\beta\alpha, \cT}\\
		\bbSig_{\beta\alpha, \cT}^{\T} & \bbSig_{\alpha, \cT}
	\end{pmatrix}
	\right\},
	\end{equation}
	where $\bbSig_{\alpha,\cS}$, $\bbSig_{\alpha, \cT}$, $\bbSig_{\beta\alpha, \cT}$, and $\bbSig_{\beta, \cT}$ are the variance and covariance matrices. The parameters $\balpha_{\cS}$, $\balpha_{\cT}$, $\bbeta_{\cT}$ and matrices $\bbSig_{\alpha,\cS}$, $\bbSig_{\alpha, \cT}$, $\bbSig_{\beta\alpha, \cT}$, and $\bbSig_{\beta, \cT}$ are all allowed to depend on $n_{\cT}$ and $n_{\cS}$ throughout this paper.
	In practice, $\halpha_{\cS}$, $\halpha_{\cT}$, and $\hbeta_{\cT}$ may not be exactly normal, while a broad spectrum of estimators are usually asymptotically normal under general conditions \citep{spokoiny2012parametric}, including scenarios with high-dimensional parameters and correlated data \citep{bentkus2003dependence,hall1981rates}. We motivate our proposed estimator under normality and propose to apply the estimator when normality holds asymptotically.
    Theoretical results when $\halpha_{\cS}$, $\halpha_{\cT}$, and $\hbeta_{\cT}$ are asymptotically normal are presented in Section \ref{subsec: asy normal}. In this section, we proceed under the assumption that the variance and covariance matrices are known. Discussions on using estimated covariance matrices are reserved for Section \ref{subsec: est var}. 
    
 \emph{Calibration Estimator:} Assume $\balpha_{\cT} = \balpha_{\cS}$. Then, the estimator $\hbeta_{\cT} - \bbQ(\halpha_{\cT} - \halpha_{\cS})$ is unbiased for $\bbeta_{\cT}$ for any $p\times q$ matrix $\bbQ$. It is not hard to verify that the matrix $\bbQ$ that minimizes the variance of the resulting estimator is $\bbQ_{\alpha} = \bbSig_{\beta\alpha, \cT}(\bbSig_{\alpha, \cT} + \bbSig_{\alpha,\cS})^{-1}$. Such a choice of $\bbQ$ leads to the  calibration estimator 
	\begin{equation}\label{eq:calib-est}
		\small
		\nodisplayskips
	\hbeta_{\cC} = \hbeta_{\cT} - \bbQ_{\alpha}(\halpha_{\cT} - \halpha_{\cS}).
	\end{equation}
	The variance of $\hbeta_{\cC}$ is no larger than that of $\hbeta_{\cT}$.
	The idea of calibration is widely adopted in survey sampling \citep{deville1992calibration}, data analysis with measurement error \citep{chen2000unified}, data integration \citep{yang2020combining}, and missing data \citep{cannings2022correlation}, and has been proved to be able to achieve the semiparametric efficiency bound in the context of transfer estimation in the presence of population homogeneity \citep{hu2024efusion}. Recently, the idea of calibration has attracted growing interest in machine learning-assisted inference \citep{miao2024valid, miao2025task, kluger2025prediction}.
    
   If one defines $\halpha_{\cS} = \hbeta_{\cS}$ and $\halpha_{\cT} = \hbeta_{\cT}$, where $\hbeta_{\cS}$ is an estimator of the target parameter using source population data and $\bbSig_{\beta,\cS}$ denotes the covariance of $\hbeta_{\cS}$, we have $\bbSig_{\beta\alpha,\cT} = \bbSig_{\beta,\cT}$ and the calibration estimator
    \begin{equation}\label{eq:meta}
    	\small
    	\nodisplayskips
      \hbeta_{\cC} = \hbeta_{\cT} - \bbQ_{\alpha}(\hbeta_{\cT} - \hbeta_{\cS}) = (\bbSig_{\beta,\cT}^{-1} + \bbSig_{\beta,\cS}^{-1})^{-1}(\bbSig_{\beta,\cT}^{-1}\hbeta_{\cT} + \bbSig_{\beta,\cS}^{-1}\hbeta_{\cS}),
    \end{equation}
    which is the commonly-used fixed effect meta-analysis estimator \citep{jackson2011multivariate}. Thus, $\hbeta_{\cC}$ in (\ref{eq:calib-est}) can be viewed as a generalized meta-analyzed estimator integrating information from different populations.
    
   \emph{Pretest Estimator:} Although being efficient when $\balpha_{\cT} = \balpha_{\cS}$, $\hbeta_{\cC}$ can be severely biased if  $\balpha_{\cT} \neq \balpha_{\cS}$.  Thus, $\hbeta_{\cC}$ is ideally employed if $\balpha_{\cT} = \balpha_{\cS}$,
   whereas $\hbeta_{\cT}$ is preferable to circumvent potential bias otherwise. However, it's often unclear whether $\balpha_{\cT}$ aligns with $\balpha_{\cS}$.  To address this, the pretest strategy conducts a test for the hypotheses $\balpha_{\cT} = \balpha_{\cS}$, with the choice of estimator informed by the test result. To be specific, let $T_{\rm g} = (\halpha_{\cT} - \halpha_{\cS})^{\T}(\bbSig_{\alpha, \cT} + \bbSig_{\alpha,\cS})^{-1}(\halpha_{\cT} - \halpha_{\cS})$ be the $\chi^{2}$-statistics. Then, the pretest estimator is
    \[
    \small
    \nodisplayskips
	\hbeta_{\rm prt} = 1\{T_{\rm g} \leq c_{\rm g}\}\hbeta_{\cC} + 1\{T_{\rm g} > c_{\rm g}\}\hbeta_{\cT},
    \]
    where $c_{\rm g}$ is the threshold that depends on the significance level. The concept of pretest has been a foundational strategy in data integration and transfer estimation, tracing back decades with notable early applications by \cite{mosteller1948pooling} and continuing to inspire contemporary advancements as seen in works by \cite{cai2022semi}, \cite{yang2023elastic}, and \cite{gao2023pretest}. 
   
    \subsection{The calibration and pretest estimators are unsafe}
    The calibration estimator is generally unsafe, as its bias and $\cE$-error can be large when $\|\bbQ_{\alpha}(\balpha_{\cS} - \balpha_{\cT})\|$ is large, where we use $\|\cdot\|$ to denote the Euclid/spectral norm of a vector/matrix throughout the paper. 
    The situation for the pretest estimator is more subtle, owing to the additional pretesting step designed to mitigate bias.
    When the difference $\balpha_{\cT} - \balpha_{\cS}$ is large in the sense that $\|\balpha_{\cT} - \balpha_{\cS}\|^{2} \gg \|\bbSig_{\alpha, \cT} + \bbSig_{\alpha,\cS}\|$, the test statistics $T_{\rm g}$ tends to be large and then $\hbeta_{\rm prt}$ is identical to the unbiased estimator $\hbeta_{\cT}$ with a high probability. This suggests that $\hbeta_{\rm prt}$ can mitigate the bias issue of the calibration estimator $\hbeta_{\cC}$ in this case. On the other hand, $\hbeta_{\rm prt} = \hbeta_{\cT}$ with a positive probability when $\balpha_{\cS} = \balpha_{\cT}$, due to the type-I error of the test. Hence, the pretest estimator $\hbeta_{\rm prt}$ may not be as efficient as $\hbeta_{\cC}$ when populations are homogeneous. 
    
     Suppose that $\|\balpha_{\cT} - \balpha_{\cS}\|^{2} = 0$ when the populations are homogeneous and that $\|\balpha_{\cT} - \balpha_{\cS}\|^{2} / \|\bbSig_{\alpha, \cT} + \bbSig_{\alpha,\cS}\| \to \infty$ as $n_{\cT}, n_{\cS} \to \infty$ when they are heterogeneous. Then, the efficiency loss in the homogeneous setting can be avoided asymptotically by letting the significance level of the test go to zero at a proper rate as $n_{\cT}, n_{\cS} \to \infty$. In this case, $\hbeta_{\rm prt}$ enjoys the oracle property: it is asymptotically equivalent to $\hbeta_{\cC}$ when populations are homogeneous and to $\hbeta_{\cT}$ when they are heterogeneous. The oracle properties of penalization-based transfer estimation methods \citep{chen2021combining, zhai2022data,huang2023simultaneous} follow the same rationale as that of $\hbeta_{\rm prt}$. In practice, while the target and source populations may be similar, they are generally not identical. Moreover, in settings where transfer estimation methods are applied, the sample size from the target population is typically limited. This situation makes the assumption $\|\balpha_{\cT} - \balpha_{\cS}\|^{2} / \|\bbSig_{\alpha, \cT} + \bbSig_{\alpha,\cS}\| \to \infty$ unrealistic. The problem becomes particularly challenging when the difference between $\balpha_{\cT}$ and $\balpha_{\cS}$ is moderate, placing $\|\balpha_{\cT} - \balpha_{\cS}\|^{2}$ and $\|\bbSig_{\alpha, \cT} + \bbSig_{\alpha,\cS}\|$ on comparable scales. In such instances, the bias introduced to $\hbeta_{\cC}$ may be significant enough to warrant concern, yet insufficient for detection through pretest, leading $\hbeta_{\rm prt}$ to be unsafe. 
    This dilemma highlights the limitations of relying solely on testing to guide the integration of external summary statistics, and cannot be resolved by carefully choosing the threshold $c_{\rm g}$ in general. To demonstrate this issue, we provide a concrete example. Suppose $n_{\cT}, n_{\cS} \geq 1$,
	\begin{equation}\label{eq: counter eg}
		\small
		\nodisplayskips
		\begin{aligned}
		    \hbeta_{\cT} \sim N(\bbeta_{\cT}, 2n_{\cT}^{-1}\bbI_{p}),\  \hbeta_{\cS} \sim N(\bbeta_{\cS}, 2n_{\cS}^{-1}\bbI_{p}), \ \halpha_{\cT} = \hbeta_{\cT}, \ \halpha_{\cS} = \hbeta_{\cS},       \bbeta_{\cT} = \boldsymbol{\mu} + \bbias, \ \text{and}\ \bbeta_{\cS} = \boldsymbol{\mu} - \bbias,
		\end{aligned} 
	\end{equation}
	where $\boldsymbol{\mu}$ is a location parameter and $\bbias$ characterizes the difference between $\balpha_{\cT}$ and $\balpha_{\cS}$. Then we have the following proposition.
	\begin{proposition}\label{prop: pretest}
		In the example stated in \eqref{eq: counter eg}, for any $c_{\rm g} > 0$, $n_{\cT}$, $n_{\cS}$, $\boldsymbol{\mu}$, and positive definite $\bbH$, there is some $\bbias$ such that $\cE(\hbeta_{\rm prt}, \bbeta_{\cT}) > \cE(\hbeta_{\cT}, \bbeta_{\cT})$.
	\end{proposition}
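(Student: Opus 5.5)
Write everything in terms of $\bdelta = \hbeta_{\cT} - \hbeta_{\cS}$. In the example \eqref{eq: counter eg} we have $\bbSig_{\beta,\cT} = 2n_{\cT}^{-1}\bbI_p$ and $\bbSig_{\beta,\cS} = 2n_{\cS}^{-1}\bbI_p$. Hence $\bbQ_{\alpha} = w\bbI_p$ with $w = n_{\cS}/(n_{\cT}+n_{\cS}) \in (0,1)$, and $\hbeta_{\cC} = \hbeta_{\cT} - w\bdelta$. The difference $\bdelta \sim N(2\bbias, \sigma^{2}\bbI_p)$ with $\sigma^{2} = 2n_{\cT}^{-1}+2n_{\cS}^{-1}$, and $T_{\rm g} = \|\bdelta\|^{2}/\sigma^{2}$. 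Therefore
\[
\hbeta_{\rm prt} - \bbeta_{\cT} = (\hbeta_{\cT}-\bbeta_{\cT}) - w\,1\{\|\bdelta\|^{2}\le c_{\rm g}\sigma^{2}\}\,\bdelta .
\]
Expanding the quadratic form gives $\cE(\hbeta_{\rm prt},\bbeta_{\cT}) - \cE(\hbeta_{\cT},\bbeta_{\cT}) = w^{2}E\{1_A\bdelta^{\T}\bbH\bdelta\} - 2wE\{1_A(\hbeta_{\cT}-\bbeta_{\cT})^{\T}\bbH\bdelta\}$, where $A = \{T_{\rm g}\le c_{\rm g}\}$. The task is to choose $\bbias$ so that this difference is positive.

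To handle the cross term, I decompose $\hbeta_{\cT}-\bbeta_{\cT}$ into its regression on $\bdelta$ plus an independent remainder. Since $\mathrm{cov}(\hbeta_{\cT},\bdelta) = 2n_{\cT}^{-1}\bbI_p$, we get $E(\hbeta_{\cT}-\bbeta_{\cT}\mid\bdelta) = \kappa(\bdelta - 2\bbias)$ with $\kappa = 2n_{\cT}^{-1}/\sigma^{2} = w$. The remainder is independent of $\bdelta$ and so drops out. Writing $\bm{m}=2\bbias$, the excess error becomes
\[
D(\bm m) = w^{2}E\{1_A\bdelta^{\T}\bbH\bdelta\} - 2w^{2}E\{1_A(\bdelta-\bm m)^{\T}\bbH\bdelta\} = w^{2}E\bigl[1_A\{2\bm m^{\T}\bbH\bdelta - \bdelta^{\T}\bbH\bdelta\}\bigr].
\]
Because $w>0$, it suffices to find $\bm m$ with $E[1_A\{2\bm m^{\T}\bbH\bdelta - \bdelta^{\T}\bbH\bdelta\}]>0$, where $\bdelta\sim N(\bm m,\sigma^2\bbI_p)$ and $A=\{\|\bdelta\|\le r\}$ with $r = \sqrt{c_{\rm g}}\,\sigma>0$. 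The location $\boldsymbol{\mu}$ has disappeared, which explains why the statement holds for any $\boldsymbol{\mu}$.

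Next, take $\bm m = t\bm v$ for a fixed unit vector $\bm v$ and $t>0$. On $A$ we have $\|\bdelta\|\le r$, so $\bdelta^{\T}\bbH\bdelta\le \|\bbH\| r^{2}$. We also have $\bm m^{\T}\bbH\bdelta = t\,\bm v^{\T}\bbH\bdelta$, and this is the main obstacle: I need $\bm v^{\T}\bbH\bdelta$ to stay bounded away from zero on the relevant part of $A$, and the sign works out only once $t$ exceeds $r$. I would first try a ball-shift argument. Write the integrand as $2\bm m^{\T}\bbH\bdelta - \bdelta^{\T}\bbH\bdelta = \bm m^{\T}\bbH\bm m - (\bdelta-\bm m)^{\T}\bbH(\bdelta-\bm m)$. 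Take $t = \|\bm m\| > 2r$. For $\bdelta\in A$, $\|\bdelta-\bm m\|\le r$ would not hold, so this form does not give positivity directly.

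The better route is the symmetric form $2\bm m^{\T}\bbH\bdelta - \bdelta^{\T}\bbH\bdelta = \bm m^{\T}\bbH\bm m - (\bm m-\bdelta)^{\T}\bbH(\bm m-\bdelta)$, bounded using $\lambda_{\min}$ and $\lambda_{\max}$ of $\bbH$. For $\bdelta\in A$ we have $\|\bm m - \bdelta\|\le t + r$, which gives the lower bound $\lambda_{\min}t^{2} - \lambda_{\max}(t+r)^{2}$. This is positive only when $\bbH$ is well conditioned, so it is not enough in general. Instead I would choose $\bm v$ to be the top eigenvector of $\bbH$ and decompose along it. The bound $2t\lambda_{\max}\bm v^{\T}\bdelta - \lambda_{\max}\|\bdelta\|^{2}$ is not quite available either, because $\bdelta^{\T}\bbH\bdelta\le\lambda_{\max}\|\bdelta\|^2$ holds but $\bm v^{\T}\bbH\bdelta=\lambda_{\max}\bm v^{\T}\bdelta$ only along the eigenvector.

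Using that eigen-relation, the integrand is at least $\lambda_{\max}(2t\,\bm v^{\T}\bdelta - \|\bdelta\|^{2})$, which is positive on $A\cap\{\bm v^{\T}\bdelta > r^{2}/(2t)\}$. Taking $t\to\infty$, the Gaussian density of $\bdelta$ restricted to the ball $A$ tilts exponentially toward the direction $\bm v$: the conditional law on $A$ has density proportional to $\exp(t\,\bm v^{\T}\bdelta/\sigma^{2} - \|\bdelta\|^2/(2\sigma^2))$. As $t\to\infty$ it therefore concentrates near the boundary point $r\bm v$, where the integrand is approximately $\lambda_{\max}(2tr - r^{2})>0$. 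The contribution from the region where the integrand is negative, $\{\bm v^{\T}\bdelta\le r^2/(2t)\}$, is bounded by $\lambda_{\max}r^{2}P(A\cap\{\bm v^{\T}\bdelta\le r^2/(2t)\})$. By the exponential tilting this is $o\bigl(P(A\cap\{\bm v^{\T}\bdelta\ge r/2\})\bigr)$, while the positive part is at least of order $tr\lambda_{\max}P(A\cap\{\bm v^{\T}\bdelta\ge r/2\})$. Hence for sufficiently large $t$ the expectation is strictly positive, which completes the construction of $\bbias = t\bm v/2$. Making this tilting comparison quantitative, via a ratio of integrals over the two subsets of the ball, is the step I expect to need the most care.
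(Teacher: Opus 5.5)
Your proof is correct in substance. After the first step it takes a genuinely different route from the paper.

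\textbf{The reduction is the paper's.} The residual of $\hbeta_{\cT}-\bbeta_{\cT}$ after regressing on $\bdelta$ is exactly $\hbeta_{\cC}-\bbeta_{\cC}$. So your identity $D=w^{2}E[1_A\{2\bm m^{\T}\bbH\bdelta-\bdelta^{\T}\bbH\bdelta\}]$ is the paper's split of $\hbeta_{\cT}$ into the independent pieces $\hbeta_{\cC}$ and $\hdelta_{\cC}=w\bdelta$.

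\textbf{The paper then computes exactly.} It evaluates the truncated moments with Bock's identities (Lemma~\ref{lem: threshold norm}):
$E[1_A\bdelta]=P_2\bm m$ and $E[1_A\bdelta^{\T}\bbH\bdelta]=P_2\sigma^{2}\tr\{\bbH\}+P_4\bm m^{\T}\bbH\bm m$.
Here $P_k$ is the probability that a noncentral $\chi^2$ variable with $p+k$ degrees of freedom, and the same noncentrality as $T_{\rm g}$, is at most $c_{\rm g}$. This gives
$D/w^{2}=P_2(\bm m^{\T}\bbH\bm m-\sigma^{2}\tr\{\bbH\})+(P_2-P_4)\bm m^{\T}\bbH\bm m$.
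Since $P_2>P_4$ by stochastic monotonicity in the degrees of freedom, $D>0$ for every $\bbias$ with $\|2\bbias\|_{\bbH}^{2}>\sigma^{2}\tr\{\bbH\}$. This is the paper's condition $\|\bdelta_{\cC}\|_{\bbH}^{2}>v_{\delta}^{2}\tr\{\bbH\}$.

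\textbf{What each route buys.} Your eigenvector bound plus exponential tilting avoids the noncentral $\chi^2$ machinery and is self-contained. However, it only produces some large, unspecified $t$ along the top eigenvector of $\bbH$. The paper's computation exhibits an explicit, direction-free set of harmful biases: the excess error is positive for every bias past that threshold.

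\textbf{One inequality in your final paragraph is wrong as written.} Let $q_-$ and $q_+$ be the probabilities of $A\cap\{\bm v^{\T}\bdelta\le r^{2}/(2t)\}$ and $A\cap\{\bm v^{\T}\bdelta\ge r/2\}$. On the first set, the lower bound $\lambda_{\max}(2t\bm v^{\T}\bdelta-\|\bdelta\|^{2})$ can be as small as $-\lambda_{\max}(2tr+r^{2})$, near $\bdelta=-r\bm v$. So the negative contribution is bounded by $\lambda_{\max}(2tr+r^{2})q_-$, not by $\lambda_{\max}r^{2}q_-$.

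This costs only a factor of order $t$, which the tilting absorbs. Bound the common unnormalized density $\exp\{-\|\bdelta\|^{2}/(2\sigma^{2})+t\bm v^{\T}\bdelta/\sigma^{2}\}$ from above on the first set and from below on the second. This gives
$q_-/q_+\le\{\mathrm{vol}(A)/\mathrm{vol}(A\cap\{\bm v^{\T}\bdelta\ge r/2\})\}e^{r^{2}/\sigma^{2}}e^{-tr/(2\sigma^{2})}$.
Hence $(tr-r^{2})q_+>(2tr+r^{2})q_-$ for all large $t$. That one line is the whole quantitative step you flagged as needing care.

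Also drop the two abandoned attempts. In particular, $\lambda_{\min}t^{2}-\lambda_{\max}(t+r)^{2}$ is negative even for $\bbH=\bbI_{p}$, so that attempt fails regardless of how well conditioned $\bbH$ is.
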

    Proposition \ref{prop: pretest} demonstrates that, for any chosen significance level, there exists a bias magnitude such that the pretest estimator's $\cE$-error surpasses that of the baseline estimator $\hbeta_{\cT}$. This proposition thus suggests the pretest estimator is unsafe in transfer estimation. Its proof can be found in Supplementary Material Section \ref{app: proof prop}.
    Further explorations in the Supplemental Material  Section \ref{app: comp test} expand on the above issue by considering component-wise testing strategies.	
	
	\section{Safe transfer using the dShrink estimator}\label{sec: dShrink}
	\subsection{The dShrink estimator}\label{subsec: ds}
	In this section, we introduce a safe transfer estimator that is guaranteed to outperform the target population-based estimator $\hbeta_{\cT}$ in $\cE$-error. To lay the groundwork for our proposed estimator, we offer a new interpretation of the calibration estimator $\hbeta_{\cC}$ and the pretest estimator $\hbeta_{\rm prt}$.
	
	Let $\hdelta_{\cC} = \bbQ_{\alpha}(\halpha_{\cT} - \halpha_{\cS})$, which is the projection of $\hbeta_{\cT}$ on $\halpha_{\cT} - \halpha_{\cS}$, the estimate of the parameter difference $\balpha_{\cT} - \balpha_{\cS}$ between the target and source populations. Then, $\hbeta_{\cC}$ is independent of $\hdelta_{\cC}$ and the target population-based estimator $\hbeta_{\cT}$ can be expressed as the sum of two independent components $\hbeta_{\cT} = \hbeta_{\cC} + \hdelta_{\cC}$. The independence between $\hbeta_{\cC}$ and $\hdelta_{\cC}$ ensures that $\var(\hbeta_{\cT}) = \var(\hbeta_{\cC}) + \var(\hdelta_{\cC})$. In addition, notice that $E[\hdelta_{\cC}] = \bzero$ when $\balpha_{\cT} = \balpha_{\cS}$. These observations indicate that replacing $\hdelta_{\cC}$ with zero reduces $\hbeta_{\cT}$'s variance without affecting its expectation, leading to the calibration estimator $\hbeta_{\cC} = \hbeta_{\cC} + \bzero$. However, this substitution introduces bias if $\balpha_{\cT} \neq \balpha_{\cS}$. The pretest estimator, $\hbeta_{\rm prt} = \hbeta_{\cC} + 1\{T_{\alpha} > c_{\rm g}\}\hdelta_{\cC}$, offers mitigation by adopting this substitution based on data; it replaces $\hdelta_{\cC}$ with $\bzero$ only if the test fails to reject the null hypothesis that $\balpha_{\cT} = \balpha_{\cS}$. It is noteworthy that $\hbeta_{\cT}$, $\hbeta_{\cC}$, and $\hbeta_{\rm prt}$ are all linear combinations of $\hbeta_{\cC}$ and $\hdelta_{\cC}$. Among them, $\hbeta_{\cT}$ and $\hbeta_{\cC}$ utilize a fixed combination coefficient, while $\hbeta_{\rm prt}$ data-adaptively determines the combination coefficient. However, the combination coefficients for all three estimators can only be equal to $``0"$ or $``1"$, which is an unattractive restriction. Moreover, these combination coefficients do not directly take into account the error of the resulting estimator. 
    
     To achieve {\it safe} and {\it efficient} transfer estimation that reduces the estimation error of the parameter of interest in the target population, we propose to construct an estimator of the target population parameter $\bbeta_{\cT}$ by optimally combining $\hbeta_{\cC}$ and $\hdelta_{\cC}$ through the estimation of the combination coefficients that minimize a objective function. The resulting dShrink estimator is robust, efficient, and guaranteed to outperform the target population-based estimator $\hbeta_{\cT}$.

    Motivated by the above observations, we consider to estimate $\bbeta_{\cT}$ using the linear combination $\lambda_{1}\hbeta_{\cC} + \lambda_{2}\hdelta_{\cC}$ for some combination coefficients $\lambda_{1}$ and $\lambda_{2}$. Let $\bbeta_{\cC} = E[\hbeta_{\cC}] = \bbeta_{\cT} - \bdelta_{\cC}$, $\bdelta_{\cC} = E[\hdelta_{\cC}] = \bbQ_{\alpha}(\balpha_{\cT} - \balpha_{\cS})$, $\bbSig_{\cC} = \var(\hbeta_{\cC}) = \bbSig_{\beta, \cT} - \bbSig_{\beta\alpha, \cT}(\bbSig_{\alpha, \cT} + \bbSig_{\alpha,\cS})^{-1}\bbSig_{\beta\alpha, \cT}^{\T}$, and $\bbSig_{\delta} = \var(\hdelta_{\cC}) = \bbSig_{\beta\alpha, \cT}(\bbSig_{\alpha, \cT} + \bbSig_{\alpha,\cS})^{-1}\bbSig_{\beta\alpha, \cT}^{\T}$. We call $\bdelta_{\cC}$ the heterogeneity parameter. Instead of directly minimizing the $\cE$-error $\cE(\lambda_{1}\hbeta_{\cC} + \lambda_{2}\hdelta_{\cC}, \bbeta_{\cT})$, we consider obtaining the combination coefficients $(\lambda_1,\lambda_2)$ by minimizing the following separable surrogate objective function
    \begin{equation}\label{eq: separable surrogate}
    \small
    \nodisplayskips
    \cE(\lambda_{1}\hbeta_{\cC}, \bbeta_{\cC}) + \cE(\lambda_{2}\hdelta_{\cC}, \bdelta_{\cC}).
	\end{equation}
    The rationale behind the objective function \eqref{eq: separable surrogate}, as well as the reasons it is preferable to the original $\mathcal{E}$-error, will be discussed in the next section.
    It can verified that $(\lambda_{{\rm s}, 1}, \lambda_{{\rm s}, 2}) = (1 - \tr\{\bbH\bbSig_{\cC}\} / [\tr\{\bbH\bbSig_{\cC}\} + \bbeta_{\cC}^{\T}\bbH\bbeta_{\cC}],
	1 - \tr\{\bbH\bbSig_{\delta}\} / [\tr\{\bbH\bbSig_{\delta}\} + \bdelta_{\cC}^{\T}\bbH\bdelta_{\cC}])$ is the minimizer of \eqref{eq: separable surrogate}. Note that $E(\hbeta_{\cC}^{\T}\bbH\hbeta_{\cC}) = \tr\{\bbH\bbSig_{\cC}\} + \bbeta_{\cC}^{\T}\bbH\bbeta_{\cC}$ and $E(\bdelta_{\cC}^{\T}\bbH\bdelta_{\cC}) = \tr\{\bbH\bbSig_{\delta}\} + \bdelta_{\cC}^{\T}\bbH\bdelta_{\cC}$. According to the discussions in Supplementary Material Section \ref{app: derive dShrink}, we propose the adjusted plug-in estimators 
	\begin{equation}\label{eq: lambda ds}
	\small
	\nodisplayskips
	\widehat{\lambda}_{{\rm s}, 1} = \left(1 -  \frac{c_{\beta}}{\hbeta_{\cC}^{\T}\bbH\hbeta_{\cC}}\right)_{+}\ \text{and}\ 
	\widehat{\lambda}_{{\rm s}, 2} = \left(1 - \frac{c_{\delta}}{\hdelta_{\cC}^{\T}\bbH\hdelta_{\cC}}\right)_{+}
	\end{equation}
    for  $\lambda_{{\rm s}, 1}$ and $\lambda_{{\rm s}, 2}$,
     where $c_{\beta} = (\tr\{\bbH\bbSig_{\cC}\} - 2\sigma_{\rm max}(\bbH\bbSig_{\cC}))_{+}$, $c_{\delta} = (\tr\{\bbH\bbSig_{\delta}\} - 2\sigma_{\rm max}(\bbH\bbSig_{\delta}))_{+}$,
    $(u)_{+} = \max\{0, u\}$ for any $u$, and
    $\sigma_{\rm max}(\bbM)$ is the maximum singular value of $\bbM$ for any matrix $\bbM$. 
    By plugging in $\widehat{\lambda}_{{\rm s}, 1}$ and $\widehat{\lambda}_{{\rm s}, 2}$, we obtain the dShrink estimator
	\begin{equation}\label{eq: dShrink}
		\small
		\nodisplayskips
		\hbeta_{\rm ds} = \widehat{\lambda}_{{\rm s}, 1}\hbeta_{\cC} + \widehat{\lambda}_{{\rm s}, 2}\hdelta_{\cC}.
	\end{equation}
    Here we assume that $\bbH$ and the variance-covariance matrices are known. The discussion on plugging in estimators for these matrices is deferred to Section \ref{subsec: est var}.
	Note that $\widehat{\lambda}_{{\rm s}, 1}, \widehat{\lambda}_{{\rm s}, 2} \leq 1$. The dShrink estimator $\hbeta_{\rm ds}$ divides the baseline estimator $\hbeta_{\cT}$ into two independent components, $\hbeta_{\cC}$ and $\hdelta_{\cC}$, and shrinks them to zero adaptively. The shrinkage enables the resulting estimator to achieve a significant efficiency gain compared to $\hbeta_{\cT}$ when either of the targeted true values of the two components, i.e., $\bbeta_{\cC}$ or $\bdelta_{\cC}$, is close to zero. The term $\bdelta_{\cC}$ is close to zero when the source population is similar to the target population. Moreover, if the true parameter $\bbeta_{\cT}$ is close to zero, i.e., with small effect sizes, 
   and the populations are alike, both $\bbeta_{\cC}$ and $\bdelta_{\cC}$ are expected to be small, thereby allowing the dShrink estimator to achieve substantial efficiency gains. Importantly, the shrinkage process is adaptively data-driven, ensuring that efficiency is not compromised regardless of the magnitudes of $\bbeta_{\cC}$ and $\bdelta_{\cC}$. 
	Theorem \ref{thm: dominance ds} shows that the dShrink estimator $\hbeta_{\rm ds}$ is guaranteed to outperform the target population-based estimator $\hbeta_{\cT}$.
	
	\begin{theorem}\label{thm: dominance ds}
		Suppose $\halpha_{\cS}$, $\halpha_{\cT}$, and $\hbeta_{\cT}$ are normally distributed as in \eqref{eq: normality}. Then, $\cE(\hbeta_{\rm ds},\bbeta_{\cT}) \leq \cE(\hbeta_{\cT}, \bbeta_{\cT})$ for any $\balpha_{\cS}$, $\balpha_{\cT}$, and  $\bbeta_{\cT}$.
		If $c_{\beta} > 0$ or $c_{\delta} > 0$, then $\cE(\hbeta_{\rm ds},\bbeta_{\cT}) < \cE(\hbeta_{\cT}, \bbeta_{\cT})$ for any $\balpha_{\cS}$, $\balpha_{\cT}$, and  $\bbeta_{\cT}$.
	\end{theorem}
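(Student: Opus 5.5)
The plan is to use the independence of $\hbeta_{\cC}$ and $\hdelta_{\cC}$ to split the $\cE$-error into two separate James--Stein-type problems. Each one is then handled with Stein's identity for a positive-part shrinkage estimator with a general covariance and weight matrix.

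\textbf{Step 1 (decomposition).} Under \eqref{eq: normality}, the pair $(\hbeta_{\cC},\hdelta_{\cC})$ is jointly normal, since it is a linear transform of $(\hbeta_{\cT},\halpha_{\cT},\halpha_{\cS})$. A direct computation with $\bbQ_{\alpha}$ shows $\mathrm{cov}(\hbeta_{\cC},\hdelta_{\cC})=\bzero$, so the two are independent. Because $\hbeta_{\rm ds}-\bbeta_{\cT}=(\widehat{\lambda}_{{\rm s},1}\hbeta_{\cC}-\bbeta_{\cC})+(\widehat{\lambda}_{{\rm s},2}\hdelta_{\cC}-\bdelta_{\cC})$, and each shrunk piece is a function of only one of the independent vectors, the cross term equals
\[
2E(\widehat{\lambda}_{{\rm s},1}\hbeta_{\cC}-\bbeta_{\cC})^{\T}\bbH\,E(\widehat{\lambda}_{{\rm s},2}\hdelta_{\cC}-\bdelta_{\cC}).
\]
It is not zero in general, so exact separability needs care; this is the main obstacle. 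The same decomposition applied to $\hbeta_{\cT}=\hbeta_{\cC}+\hdelta_{\cC}$ has a vanishing cross term, because there both pieces are unbiased. So the aim is to show that the bias product cannot spoil the dominance.

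My first attempt would avoid splitting altogether. I would view $\bX=(\hbeta_{\cC}^{\T},\hdelta_{\cC}^{\T})^{\T}$ as a $2p$-dimensional normal vector with block-diagonal covariance, and $\hbeta_{\rm ds}=\bbA\,g(\bX)$ with $\bbA=(\bbI_p,\bbI_p)$. Stein's identity then gives an unbiased risk estimate for the whole $\cE$-error, with cross terms
\[
2\,(\hbeta_{\cC}-\widehat{\lambda}_{{\rm s},1}\hbeta_{\cC})^{\T}\bbH(\hdelta_{\cC}-\widehat{\lambda}_{{\rm s},2}\hdelta_{\cC}).
\]
The divergence part, however, only involves derivatives of each block with respect to its own argument. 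If the cross term cannot be controlled pointwise, the fallback is to use independence: conditioning on $\hdelta_{\cC}$, the function $\hbeta_{\cC}\mapsto \widehat{\lambda}_{{\rm s},1}\hbeta_{\cC}+\widehat{\lambda}_{{\rm s},2}\hdelta_{\cC}$ is a shrinkage of $\hbeta_{\cC}$ toward nothing plus a fixed shift. I would apply the one-block argument conditionally, with the target $\bbeta_{\cT}-\widehat{\lambda}_{{\rm s},2}\hdelta_{\cC}$ treated as fixed. This shows that replacing $\hbeta_{\cC}$ by $\widehat{\lambda}_{{\rm s},1}\hbeta_{\cC}$ does not increase conditional risk for any fixed target. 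I would then condition on $\hbeta_{\cC}$ and do the same for the $\hdelta_{\cC}$ block, going from $\widehat{\lambda}_{{\rm s},1}\hbeta_{\cC}+\hdelta_{\cC}$ to $\hbeta_{\rm ds}$. Chaining the two steps gives $\cE(\hbeta_{\rm ds})\le\cE(\widehat{\lambda}_{{\rm s},1}\hbeta_{\cC}+\hdelta_{\cC})\le\cE(\hbeta_{\cT})$. This works only if the one-block lemma holds for \emph{arbitrary} mean and target, which is exactly the setting of Step 2 below.

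\textbf{Step 2 (one-block lemma).} Let $\bY\sim N(\boldsymbol{\theta},\bbSig)$ and $c=(\tr\{\bbH\bbSig\}-2\sigma_{\max}(\bbH\bbSig))_{+}$. For any fixed $\bg$, I claim $E\|\,(1-c/\bY^{\T}\bbH\bY)_{+}\bY+\bg-\boldsymbol{\theta}\|_{\bbH}^2\le E\|\bY+\bg-\boldsymbol{\theta}\|_{\bbH}^2$, and since $\bg$ is arbitrary we can absorb it by setting $\boldsymbol{\eta}=\boldsymbol{\theta}-\bg$. Writing $\bY=\boldsymbol{\theta}+\bvareps$, the difference of the two risks is $E\|h(\bY)\|^2_{\bbH}-2E\,h(\bY)^{\T}\bbH(\bY-\boldsymbol{\eta})$ with $h(\bY)=\min(1,c/\bY^{\T}\bbH\bY)\bY$. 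If $\boldsymbol{\eta}=\boldsymbol{\theta}$, Stein's identity gives $E\,h^{\T}\bbH\bvareps=E\,\tr(\bbH\bbSig\nabla h)$. The usual computation for $h=c\bY/(\bY^{\T}\bbH\bY)$ gives
\[
\tr(\bbH\bbSig\nabla h)=\frac{c\,\tr(\bbH\bbSig)}{\bY^{\T}\bbH\bY}-\frac{2c\,\bY^{\T}\bbH\bbSig\bbH\bY}{(\bY^{\T}\bbH\bY)^2}\ge \frac{c\{\tr(\bbH\bbSig)-2\sigma_{\max}(\bbH\bbSig)\}}{\bY^{\T}\bbH\bY},
\]
using $\bY^{\T}\bbH\bbSig\bbH\bY\le\sigma_{\max}(\bbH\bbSig)\,\bY^{\T}\bbH\bY$. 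This bound holds because $\bbH^{1/2}\bbSig\bbH^{1/2}$ is similar to $\bbH\bbSig$. Combined with $\|h\|_{\bbH}^2=c^2/\bY^{\T}\bbH\bY$, the risk difference is at most $-E\,c(2(\tr-2\sigma_{\max})-c)/\bY^{\T}\bbH\bY\le -E\,c^2/\bY^{\T}\bbH\bY$. This is strictly negative when $c>0$, and the region where the positive part binds only helps, by the standard Baranchik argument.

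The genuinely hard part is the shift $\boldsymbol{\eta}\ne\boldsymbol{\theta}$ in Step 2. Shrinking toward zero is not safe for an arbitrary target, so the conditional chaining in Step 1 fails as stated. I therefore expect the real proof to use the $2p$-dimensional Stein computation directly. There I would show the cross term's expectation equals $2E[\bbeta_{\cC}\text{-bias}]^{\T}\bbH E[\bdelta_{\cC}\text{-bias}]$, and then bound it. The natural first try is to check whether each shrinkage bias $E(\widehat{\lambda}\bY)-\boldsymbol{\theta}$ is a nonpositive multiple of $\boldsymbol{\theta}$ direction-wise. If that only holds partially, I would try to absorb the product into the slack $E\,c^2/\bY^{\T}\bbH\bY$ from each block via Cauchy--Schwarz.
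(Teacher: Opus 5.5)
Your Step 1 decomposition and your Step 2 Stein bound (the unshifted case) are the same ingredients the paper uses. The idea you mention only as a last resort, absorbing the cross term into the per-block slack via Cauchy--Schwarz, is exactly the paper's closing step. What is missing is committing to that step and checking that it closes.

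You also misplace the difficulty. No $2p$-dimensional Stein computation or conditioning is needed. Your ``natural first try'' cannot succeed: even if each bias were a positive multiple of $-\bbeta_{\cC}$ and $-\bdelta_{\cC}$ respectively, the cross term would have the sign of $\bbeta_{\cC}^{\T}\bbH\bdelta_{\cC}$, which is arbitrary. Only a magnitude bound can work. Write $q_1=\hbeta_{\cC}^{\T}\bbH\hbeta_{\cC}$, $q_2=\hdelta_{\cC}^{\T}\bbH\hdelta_{\cC}$, $m_1=\min\{1,c_\beta/q_1\}$ and $m_2=\min\{1,c_\delta/q_2\}$. Independence and unbiasedness then give
\[
2E\{(\widehat{\lambda}_{{\rm s},1}\hbeta_{\cC}-\bbeta_{\cC})^{\T}\bbH(\widehat{\lambda}_{{\rm s},2}\hdelta_{\cC}-\bdelta_{\cC})\}
=2E\{m_1m_2\,\hbeta_{\cC}^{\T}\bbH\hdelta_{\cC}\}
\le E\Bigl[\frac{\min\{q_1,c_\beta\}^2}{q_1}\Bigr]+E\Bigl[\frac{\min\{q_2,c_\delta\}^2}{q_2}\Bigr].
\]
The inequality is $2\boldsymbol{u}^{\T}\bbH\boldsymbol{v}\le\boldsymbol{u}^{\T}\bbH\boldsymbol{u}+\boldsymbol{v}^{\T}\bbH\boldsymbol{v}$ applied pointwise with $\boldsymbol{u}=m_1\hbeta_{\cC}$ and $\boldsymbol{v}=m_2\hdelta_{\cC}$.

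For this to cancel, each block must satisfy $\cE(\widehat{\lambda}_{{\rm s},1}\hbeta_{\cC},\bbeta_{\cC})\le\cE(\hbeta_{\cC},\bbeta_{\cC})-E[\min\{q_1,c_\beta\}^2/q_1]$, and similarly for $\hdelta_{\cC}$. You pass from the untruncated slack $E[c^2/q]$ to the positive-part estimator by appealing to Baranchik. That is not justified for a general pair $(\bbH,\bbSig)$, since the usual symmetry argument covers $\bbH\propto\bbSig^{-1}$, not arbitrary pairs. You do not need it, because the required slack $E[\min\{q,c\}^2/q]$ is smaller than $E[c^2/q]$.

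Instead, apply Stein's identity directly to the weakly differentiable map $\bY\mapsto\min\{1,c/q\}\bY$, where $q=\bY^{\T}\bbH\bY$.
\begin{itemize}
\item On $\{q\ge c\}$, your own computation bounds the integrand by $-c^2/q$.
\item On $\{q<c\}$, the map is the identity, and the integrand is $q-2\tr\{\bbH\bbSig\}\le q-2c$.
\end{itemize}
Hence the risk change is at most $E[\min\{q,c\}(\min\{q,c\}-2c)/q]\le-E[\min\{q,c\}^2/q]$. This is the bound the paper takes from Tan (2015).

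For the strict claim, note that the cross-term bound can use up this entire slack, so negativity of each block alone does not give strictness. Strictness comes from the last inequality being strict on the positive-probability event $\{0<q<c\}$ whenever $c>0$.
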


      Proof for Theorem \ref{thm: dominance ds} is in Supplementary Material Section \ref{app: proof of dominance}. Notably, Theorem \ref{thm: dominance ds} does not rely on additional conditions other than the normality of $\hbeta_{\cT}$, $\halpha_{\cT}$ and $\halpha_{\cS}$. It theoretically justifies that the dShrink estimator $\hbeta_{\rm ds}$ always outperforms the target population-based estimator $\hbeta_{\cT}$ in terms of $\cE$-error regardless of the potential parameter and population heterogeneity. 
    
    \subsection{Rationales behind the dShrink estimator}\label{subsec: SURE}
    
     This section provides the rationale behind the objective function \eqref{eq: separable surrogate} that leads to the dShrink estimator. 
    As discussed in the previous section, we would like to explore estimators of the form $\lambda_{1}\hbeta_{\cC} + \lambda_{2}\hdelta_{\cC}$. A natural idea is to obtain 
    the coefficients $\lambda_{1}$ and $\lambda_{2}$ that minimize
    the $\cE$-error of $\lambda_{1}\hbeta_{\cC} + \lambda_{2}\hdelta_{\cC}$, which can be expressed as
	\begin{equation}\label{eq: original objective}
		\small
		\nodisplayskips
           \begin{aligned}
		\cE(\lambda_{1}\hbeta_{\cC} + \lambda_{2}\hdelta_{\cC}, \bbeta_{\cT}) & = \tr\{\bbH\bbSig_{\beta, \cT}\} + 
		(\lambda_{1} - 1, \lambda_{2} - 1)
		\bbB
		\begin{pmatrix}
			\lambda_{1} - 1\\
			\lambda_{2} - 1
		\end{pmatrix}
	    + 
		(2\tr\{\bbH\bbSig_{\cC}\}, 2\tr\{\bbH\bbSig_{\delta}\})
		\begin{pmatrix}
			\lambda_{1} - 1\\
			\lambda_{2} - 1
		\end{pmatrix},
	   \end{aligned}
        \end{equation}
	where
	\[
	\small
	\nodisplayskips
	\bbB = 
	E\begin{pmatrix}
		\hbeta_{\cC}^{\T}\bbH\hbeta_{\cC} & \hbeta_{\cC}^{\T}\bbH\hdelta_{\cC}\\
		\hdelta_{\cC}^{\T}\bbH\hbeta_{\cC} & \hdelta_{\cC}^{\T}\bbH\hdelta_{\cC}
	\end{pmatrix} = 
        \begin{pmatrix}
	\bbeta_{\cC}^{\T}\bbH\bbeta_{\cC} + \tr\{\bbH\bbSig_{\cC}\} & 
        \bbeta_{\cC}^{\T}\bbH\bdelta_{\cC}\\
		\bdelta_{\cC}^{\T}\bbH\bbeta_{\cC} & \bdelta_{\cC}^{\T}\bbH\bdelta_{\cC} + \tr\{\bbH\bbSig_{\delta}\}
	\end{pmatrix}.
	\]
Then, $(\lambda_{{\rm opt}, 1}, \lambda_{{\rm opt}, 2}) = (1, 1) - (\tr\{\bbH\bbSig_{\cC}\}, \tr\{\bbH\bbSig_{\delta}\})\bbB^{-1}$ is the optimal coefficient that minimizes the $\cE$-error.
An unbiased estimate of the above $\cE$-error, SURE, is  
	\[
	\small
	\nodisplayskips
	\tr\{\bbH\bbSig_{\beta, \cT}\} + 
	(\lambda_{1} - 1, \lambda_{2} - 1)
	\hB
	\begin{pmatrix}
		\lambda_{1} - 1\\
		\lambda_{2} - 1
	\end{pmatrix} +  
	(2\tr\{\bbH\bbSig_{\cC}\}, 2\tr\{\bbH\bbSig_{\delta}\})
	\begin{pmatrix}
		\lambda_{1} - 1\\
		\lambda_{2} - 1
	\end{pmatrix},
	\]
	where
	\[
	\hB = 
	\begin{pmatrix}
		\hbeta_{\cC}^{\T}\bbH\hbeta_{\cC} & \hbeta_{\cC}^{\T}\bbH\hdelta_{\cC}\\
		\hdelta_{\cC}^{\T}\bbH\hbeta_{\cC} & \hdelta_{\cC}^{\T}\bbH\hdelta_{\cC}
	\end{pmatrix}.\]
	The combination coefficient that minimizes SURE is
	\begin{equation}\label{eq: lambda sure}
		\small
		\nodisplayskips
	    \begin{pmatrix}
		\widehat{\lambda}_{\SURE,1}\\
		\widehat{\lambda}_{\SURE, 2}
	\end{pmatrix} = 
	\begin{pmatrix}
		1 \\
		1
	\end{pmatrix} - 
	\widehat{\bbB}^{-1}
	\begin{pmatrix}
		\tr\{\bbH\bbSig_{\cC}\}\\
		\tr\{\bbH\bbSig_{\delta}\}
	\end{pmatrix}.
	\end{equation}
	This yields the SURE estimator
	$\hbeta_{\SURE} = \widehat{\lambda}_{\SURE,1} \hbeta_{\cC} + \widehat{\lambda}_{\SURE,2}\hdelta_{\cC}$.
    The SURE estimator uses an unbiased estimator for the $\cE$-error to determine the combination coefficients $\lambda_1$ and $\lambda_2$. However, it is observed in our numerical results that $\hbeta_{\SURE}$ may exhibit a higher estimation error than $\hbeta_{\cT}$   in finite samples when $\bbB$ is nearly singular.

    For illustration, we conduct a simulation study under the setting of \eqref{eq: counter eg}. Figure \ref{fig: eg-ds} shows the MSEs of different estimators under different heterogeneity levels (characterized by $t$) calculated based on $5000$ simulation runs. The results clearly demonstrate that $\hbeta_{\rm ds}$ consistently yields smaller error than $\hbeta_{\cT}$, empirically validating the dShrink estimator's capability in reducing the MSE. Both the calibration and pretest estimators behave as anticipated. However, 
    the SURE estimator underperforms  $\hbeta_{\cT}$, especially at small values of $t$.
        \begin{figure}
		\centering
		\includegraphics[scale = 0.25]{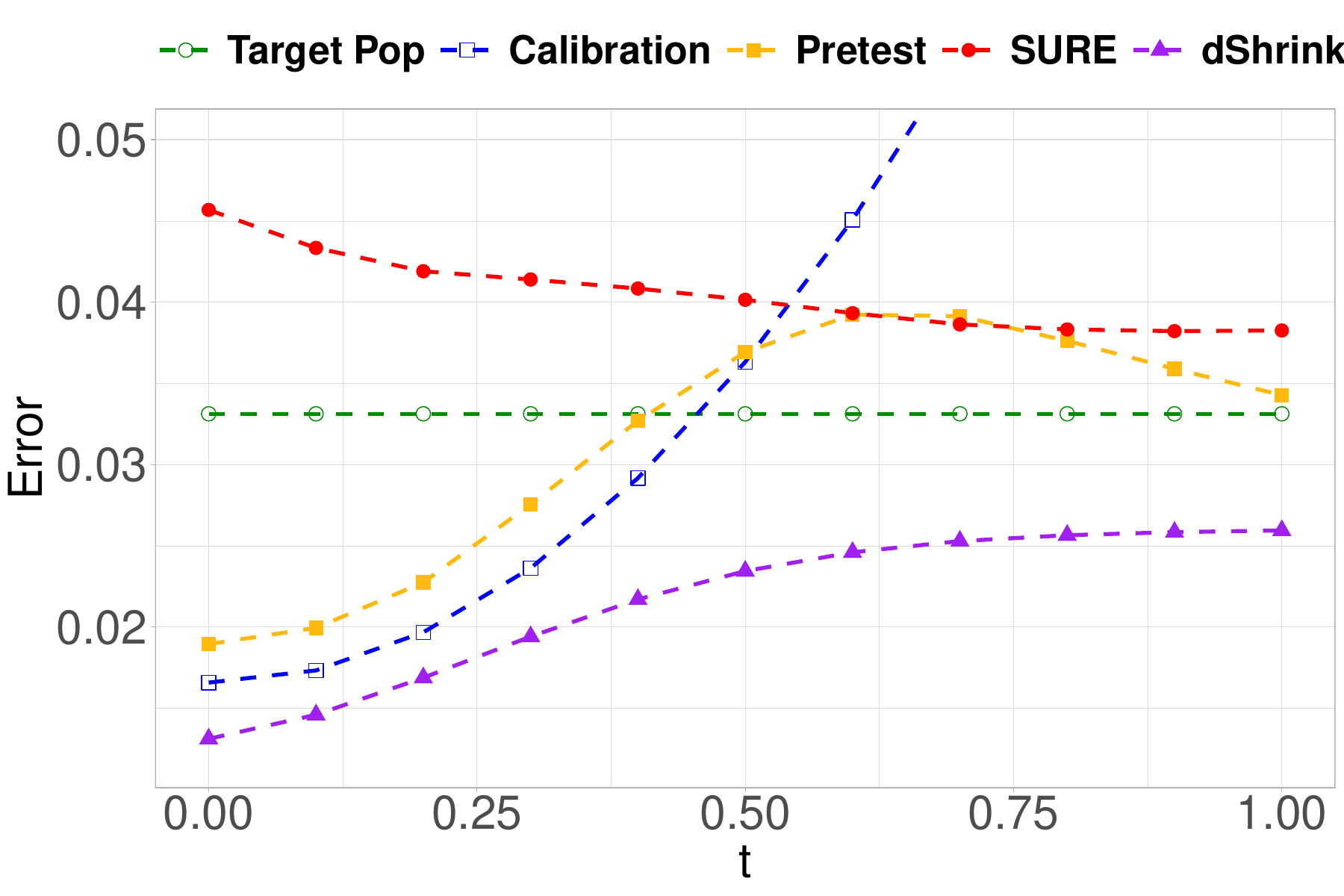}
		\caption{\it MSEs of $\hbeta_{\cT}$, $\hbeta_{\cC}$, $\hbeta_{\rm prt},$ and $\hbeta_{\rm ds}$ in the simulation under the setting of \eqref{eq: counter eg} with $\bbH = \bbI_{p}$, $p = 5$, $n_{\cT} = n_{\cS} = 300$, $\bbeta_{\cT} = \balpha_{\cT} = \boldsymbol{\mu} + \bbias,$  $\bbeta_{\cS} = \balpha_{\cS} = \boldsymbol{\mu} - \bbias$, $\boldsymbol{\mu} = (0.05, 0.02, 0.1, 0.1, 0.1) ^{\T} / \sqrt{5}$,  $\bbias = t\etab$, $\etab = (0.2, 0.3, 0.3, 0.3, 0.3)^{\T} / \sqrt{5}$ and different $t$'s.}\label{fig: eg-ds}
	\end{figure}

    We delve into the reasons behind the observed phenomena concerning the performance of the SURE estimator, which will further motivate the objective \eqref{eq: separable surrogate}.
    Specifically, the estimation error of  $\widehat{\bbB}^{-1}$ can be large when $\bbB$ is nearly singular. The matrix $\bbB$ will be nearly singular either when $\bbeta_{\cC} = \boldsymbol{\mu}$ is small, i.e., the effects are weak or zero, or when $\bdelta_{\cC} = \bbias$ is small,  e.g., the target and source populations are similar. In the illustrative example,  $\delta_{\cC}$ is small when $t$ is small.
    In addition, the condition number of $\bbB$ is larger than 
	$
	\max\{E[\hbeta_{\cC}^{\T}\bbH\hbeta_{\cC}], E[\hdelta_{\cC}^{\T}\bbH\hdelta_{\cC}]\}/\min\{E[\hbeta_{\cC}^{\T}\bbH\hbeta_{\cC}], E[\hdelta_{\cC}^{\T}\bbH\hdelta_{\cC}]\}$.
    Thus, the condition number of $\bbB$ is large when $E[\hbeta_{\cC}^{\T}\bbH\hbeta_{\cC}] / E[\hdelta_{\cC}^{\T}\bbH\hdelta_{\cC}]$ or $E[\hdelta_{\cC}^{\T}\bbH\hdelta_{\cC}] / E[\hbeta_{\cC}^{\T}\bbH\hbeta_{\cC}]$ is large.
	The large condition number suggests that the estimation error associated with $\widehat{\bbB}^{-1}$ can be substantial compared to the error of $\widehat{\bbB}$ itself  \citep{higham1994survey}. This increased estimation error can adversely affect the performance of the $\hbeta_{\SURE}$ estimator, leading to situations where $\hbeta_{\SURE}$ underperforms compared to $\hbeta_{\cT}$.

     The above challenges in estimating the minimizer of $\cE(\lambda_{1}\hbeta_{\cC} + \lambda_{2}\hdelta_{\cC}, \bbeta_{\cT})$ lies in the ill-posed nature of $\bbB$  in transfer estimation problems.
     To mitigate this problem, we use the surrogate objective \eqref{eq: separable surrogate} instead of the original objective function. The surrogate objective is separable in $\lambda_{1}$ and $\lambda_{2}$ and hence it can be minimized by considering two $1$-dimensional quadratic optimization problems, simplifying the estimation of the combination coefficients. Given that the condition number for a one-dimensional quadratic function is one, this approach is expected to alleviate the ill-posedness associated with the original objective function.  We revisit the example considered in Figure \ref{fig: eg-ds} to compare the estimation errors of the combination coefficients $(\widehat{\lambda}_{{\rm s}, 1}, \widehat{\lambda}_{{\rm s}, 2})$ and $(\widehat{\lambda}_{\SURE, 1}, \widehat{\lambda}_{\SURE, 2})$ adopted in the dShrink and SURE estimators with respect to their population counterparts $(\lambda_{\rm s, 1}, \lambda_{\rm s, 2})$ and $(\lambda_{{\rm opt}, 1}, \lambda_{{\rm opt}, 2})$. 
     Figure \ref{fig: eg-weight} shows the MSEs of the combination coefficient estimators under different $t$'s. In addition, although $\lambda_{{\rm s}, t} \neq \lambda_{{\rm opt}, t}$ in general for $t = 1, 2$, Figure \ref{fig: eg-weight2} in Supplementary Material shows that the MSE $E[(\widehat{\lambda}_{{\rm s}, t} - \lambda_{{\rm opt}, t})^{2}]$ is smaller than $E[(\widehat{\lambda}_{{\rm opt}, t} - \lambda_{{\rm opt}, t})^{2}]$ in most cases in the above example. These results suggest that the separable surrogate effectively reduces the estimation error of the combined coefficient estimator.
     
      One may notice that the expressions of $\widehat{\lambda}_{{\rm s}, 1}$ and $\widehat{\lambda}_{{\rm s}, 2}$ involve $\hbeta_{\cC}^{\T}\bbH\hbeta_{\cC}$ and $\hdelta_{\cC}^{\T}\bbH\hdelta_{\cC}$ in the denominator. At first glance, this could raise concerns about potential instability when $\bbeta_{\cC}$ or $\bdelta_{\cC}$ is small. However, Figure \ref{fig: eg-weight} shows that these estimates are quite stable and Theorem 1 establishes the safety of $\hbeta_{\rm ds}$ with no assumption on $\bbeta_{\cC}$ and $\bdelta_{\cC}$. We provide some explanation of this stability. To illustrate, take the case with a small $\bdelta_{\cC}$ as an example. We focus on the estimation error of $\widehat{\lambda}_{{\rm s}, 2}$ because the value of $\bdelta_{\cC}$ has no impact on the estimator $\widehat{\lambda}_{{\rm s}, 1}$. In such cases, the population-level combination coefficient $\lambda_{{\rm s}, 2}$ is close to zero and $\hdelta_{\cC}$ tend to be small as well. If $\hdelta_{\cC}$ is very small, $\widehat{\lambda}_{{\rm s}, 2} = (1 - c_{\delta}/\hdelta_{\cC}^{\T}\bbH\hdelta_{\cC})_{+}$ is zero
     and hence close to $\lambda_{{\rm s}, 2}$. This ensures that the estimation error of $\widehat{\lambda}_{{\rm s}, 2}$ remains controlled even if $\bdelta_{\cC}$ is small. Moreover, noticing that $\|\widehat{\lambda}_{{\rm s}, 2}\hdelta_{\cC} - \lambda_{{\rm s}, 2}\hdelta_{\cC}\| \leq \|\hdelta_{\cC}\||\widehat{\lambda}_{{\rm s}, 2} -  \lambda_{{\rm s}, 2}|$, the estimation error of $\widehat{\lambda}_{{\rm s}, 2}$ does not contribute significantly to the resulting dShrink estimator if $\bdelta_{\cC}$ and hence $\hdelta_{\cC}$ are small. This further enhances the robustness of the dShrink estimator against a small $\bdelta_{\cC}$. On the other hand, although $\|\widehat{\lambda}_{\SURE, 2}\hdelta_{\cC} - \lambda_{{\rm opt}, 2}\hdelta_{\cC}\| \leq \|\hdelta_{\cC}\||\widehat{\lambda}_{\SURE, 2} -  \lambda_{{\rm opt}, 2}|$, the above robustness property of the dShrink estimator $\hbeta_{\rm ds}$ is not shared by the SURE estimator $\hbeta_{\SURE}$ because a small $\bdelta_{\cC}$ may affect the estimation accuracy of not only $\widehat{\lambda}_{\SURE, 2}$ but also $\widehat{\lambda}_{\SURE, 1}$.
	
	\begin{figure}
		\centering
		\includegraphics[scale = 0.4]{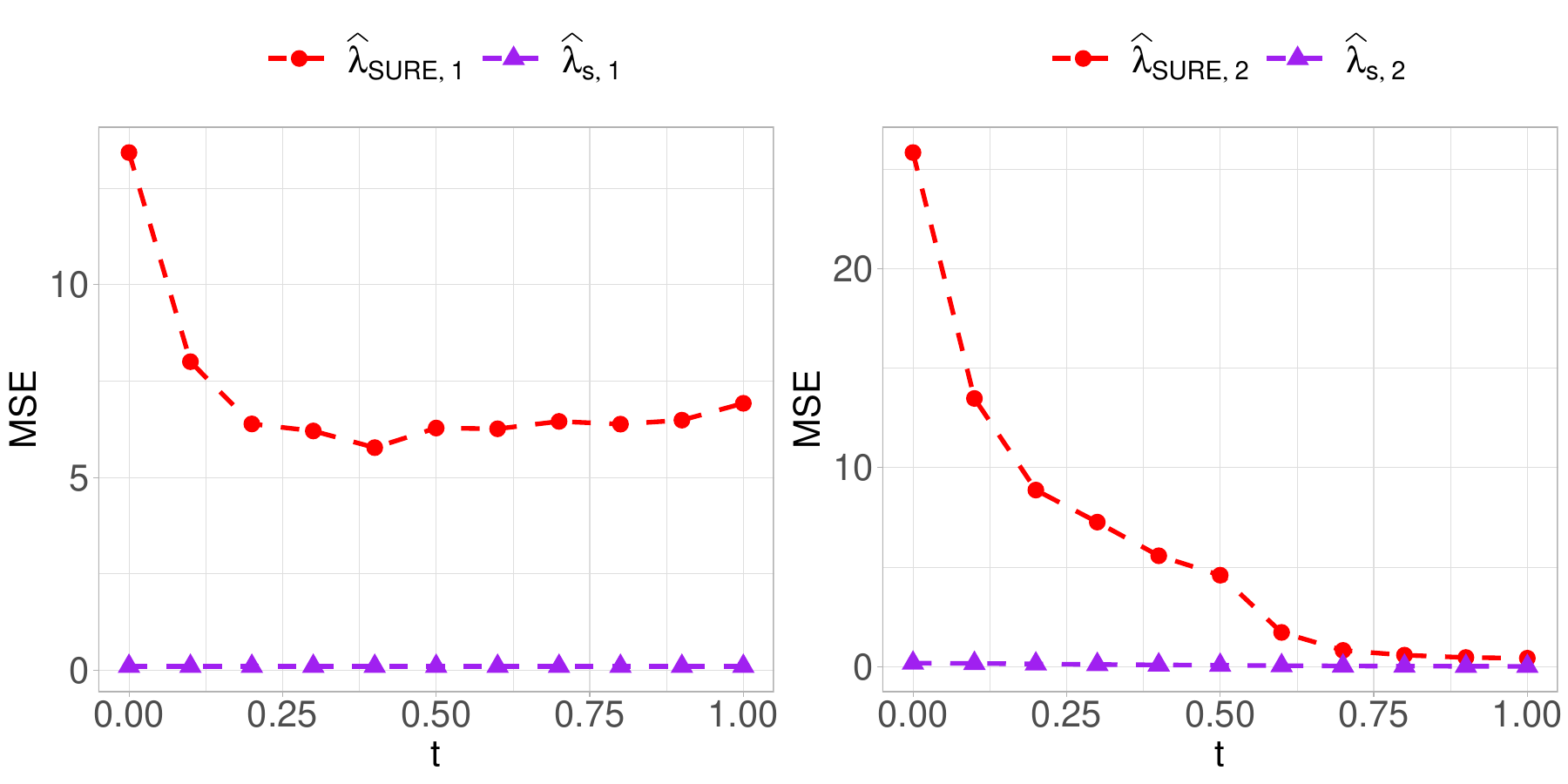}
		\caption{\it MSEs of $\widehat{\lambda}_{{\rm s}, 1}$, $\widehat{\lambda}_{{\rm s}, 2}$ $\widehat{\lambda}_{{\SURE}, 1}$, and $\widehat{\lambda}_{{\SURE}, 2}$ defined in \eqref{eq: lambda ds} and \eqref{eq: lambda sure} with respect to their population counterparts $\lambda_{\rm s, 1}$, $\lambda_{\rm s, 2}$, $\lambda_{{\rm opt}, 1}$, and $\lambda_{{\rm opt}, 2}$  in the simulation under the setting of \eqref{eq: counter eg} with $\bbH = \bbI_{p}$, $p = 5$, $n_{\cT} = n_{\cS} = 300$,  $\bbeta_{\cT} = \balpha_{\cT} = \boldsymbol{\mu} + \bbias,$  $\balpha_{\cS} = \boldsymbol{\mu} - \bbias,$ $\boldsymbol{\mu} = (0.05, 0.02, 0.1, 0.1, 0.1) ^{\T} / \sqrt{5}$, 
        $\bbias = t\etab$, $\etab = (0.2, 0.3, 0.3, 0.3, 0.3)^{\T} / \sqrt{5}$ and different $t$'s.}\label{fig: eg-weight}
	\end{figure}
    
    \begin{remark} We provide some intuition on the error of using the surrogate objective \eqref{eq: separable surrogate} to approximate \eqref{eq: original objective}.  Note that the surrogate objective \eqref{eq: separable surrogate} can be derived by setting the off-diagonal elements of $\bbB$ (which are equal to $\bbeta_{\cC}^{\T}\bbH\bdelta_{\cC}$) in the original objective function \eqref{eq: original objective} to zero. The surrogate objective serves as a good approximation of the original objective when $\bbeta_{\cC}^{\T}\bbH\bdelta_{\cC}$ is small. For any vectors $\boldsymbol{v}_{1}$ and  $\boldsymbol{v}_{2}$, define the inner product $\langle \boldsymbol{v}_{1}, \boldsymbol{v}_{2} \rangle_{\bbH} =  \boldsymbol{v}_{1}^{\T}\bbH\boldsymbol{v}_{2}$, norm $\|\boldsymbol{v}_{1}\|_{\bbH} = \sqrt{\langle \boldsymbol{v}_{1}, \boldsymbol{v}_{1} \rangle_{\bbH}}$ and the cosine similarity $ \cos_{\bbH}(\boldsymbol{v}_{1}, \boldsymbol{v}_{2}) = \langle\boldsymbol{v}_{1}, \boldsymbol{v}_{2}\rangle_{\bbH}/(\|\boldsymbol{v}_{1}\|_{\bbH} \|\boldsymbol{v}_{2}\|_{\bbH})$  between $\boldsymbol{v}_{1}$ and $\boldsymbol{v}_{2}$.
    Then, the off-diagonal term $\bbeta_{\cC}^{\T}\bbH\bdelta_{\cC} = \|\bbeta_{\cC}\|_{\bbH} \|\bdelta_{\cC}\|_{\bbH}\cos_{\bbH}(\bbeta_{\cC}, \bdelta_{\cC})$. Consequently, the surrogate objective closely approximates the original objective when either $\bbeta_{\cC}$ or $\bdelta_{\cC}$ is close $\bzero$, which happens under the null or the homogeneous case  or  $\bbeta_{\cC}$ is largely unrelated to $\bdelta_{\cC}$. 
   
     On the other hand, if both diagonal elements of $\bbB$, i.e., $\|\bbeta_{\cC}\|_{\bbH} + \tr\{\bbH\bbSig_{\cC}\}$ and $\|\bdelta_{\cC}\|_{\bbH} + \tr\{\bbH\bbSig_{\bdelta}\}$, are large, $\bbB^{-1}$ is close to zero and the population-level optimal combination coefficient $(\lambda_{{\rm opt}, 1}, \lambda_{{\rm opt}, 2})$ that minimizes
     the $\cE$-error in (\ref{eq: original objective})
     is close to $(1, 1)$.
    At the point $(\lambda_{1}, \lambda_{2}) = (1, 1)$, the surrogate objective \eqref{eq: separable surrogate} has the identical value and gradient to the original objective function $\cE(\lambda_{1}\hbeta_{\cC} + \lambda_{2}\hdelta_{\cC}, \bbeta_{\cT})$ in 
    \eqref{eq: original objective}, indicating a locally close approximation around $(\lambda_{1}, \lambda_{2}) = (1, 1)$. Consequently, employing the separable surrogate objective function \eqref{eq: separable surrogate} used by the dShrink estimator $\hbeta_{\rm ds}$ results in minimal efficiency loss, even when both $\|\bbeta_{\cC}\|_{\bbH}$ and $\|\bdelta_{\cC}\|_{\bbH}$ are large. 
    
    Recall that $(\lambda_{{\rm s}, 1}, \lambda_{{\rm s}, 2})$ is the minimizer of the surrogate objective \eqref{eq: separable surrogate}. We show in  Supplemental Materials \ref{app: proof pop dominance} that 
	\begin{equation}\label{eq: pop dominance}
		\small
		\nodisplayskips
		\cE(\lambda_{{\rm s},1}\hbeta_{\cC} + \lambda_{{\rm s},2}\hdelta_{\cC}, \bbeta_{\cT}) < \cE(\hbeta_{\cT}, \bbeta_{\cT}).
	\end{equation}
	Inequality \eqref{eq: pop dominance} indicates that, although the combination coefficients $(\lambda_{{\rm s}, 1}, \lambda_{{\rm s}, 2})$ may not represent the minimizer of the $\cE$-error, it ensures that the resulting estimator outperforms the target population-based estimator $\hbeta_{\cT}$.	
    \end{remark}

    \begin{remark} Another, perhaps more standard, approach to mitigate the ill-posedness of $\bbB$ is to replace it by $\bbB + r\bbI_{2}$ where $r$ is a small constant. Then, one can define the ridge-SURE estimator $\hbeta_{\SURE}^{(r)} = \widehat{\lambda}_{\SURE,1}^{(r)} \hbeta_{\cC} + \widehat{\lambda}_{\SURE,2}^{(r)}\hdelta_{\cC}$ where $(\widehat{\lambda}_{\SURE,1}^{(r)},		\widehat{\lambda}_{\SURE, 2}^{(r)}) =  (1, 1) - 
	(\tr\{\bbH\bbSig_{\cC}\}, \tr\{\bbH\bbSig_{\delta}\})(\widehat{\bbB} + r\bbI_{2})^{-1}
	$. Simulation results in Supplementary Material Section \ref{app: sim ridge SURE} demonstrate that the performance of $\hbeta_{\SURE}^{(r)}$  is  sensitive to the choice of
    the tuning parameter $r$ which can be difficult to choose in practice. By contrast, the dShrink estimator $\hbeta_{\rm ds}$ is requires no user-specified tuning parameters and exhibits consistently competitive performance across various settings in our numerical experiments.
\end{remark}

    \begin{remark}\label{remark: linear}
      It‘s helpful to consider the linear regression to get some insights. Suppose $\bY_{\cT} = \bbX_{\cT}\bbeta_{\cT} + \be_{\cT}$ and $\bY_{\cS} = \bbX_{\cS}\bbeta_{\cS} + \be_{\cS}$, where $\bbeta_{\cT}$, $\bY_{\cT}$, $\bbX_{\cT}$ are the regression coefficient vector, the outcome vector, and the covariate matrix from the target population,  and  $\bbeta_{\cS}$, $\bY_{\cS}$, $\bbX_{\cS}$ are the corresponding quantities from the source population. Assume for simplicity that all the variables are centered and the errors $\be_{\cT} \sim N(\bzero, \sigma_{\cT}^{2}\bbI_{n_{\cT}})$, $\be_{\cS} \sim N(\bzero, \sigma_{\cS}^{2}\bbI_{n_{\cS}})$ with $\sigma_{\cT}^{2}$ and $\sigma_{\cS}^{2}$ known. Let $\halpha_{\cT} = \hbeta_{\cT}$ and $\halpha_{\cS}=\hbeta_{\cS}$ be the least squares regression coefficients in the target and source data, respectively. Define the pooled estimator $\hbeta_{\rm pool}$ as the minimizer of the pooled least squares loss 
            $\frac{1}{\sigma_{\cT}^{2}}\|\bY_{\cT} - \bbX_{\cT}\bbeta\|^{2} 
                +
                \frac{1}{\sigma_{\cS}^{2}}\|\bY_{\cS} - \bbX_{\cS}\bbeta\|^{2}$,
            which is equivalent to the calibration estimator \eqref{eq:meta}. Then, the SURE estimator and the dShrink estimator can be obtained from the penalized least squares problem
            \begin{equation}\label{eq: linear loss ds}
            	\small
            	\nodisplayskips
            	\frac{1}{\sigma_{\cT}^{2}}\|\bY_{\cT} - \bbX_{\cT}\bbeta\|^{2} + \lambda_{0}\|\bbX_{\cT}\bbeta\|^{2} + \lambda_{\rm pool}\|\bbX_{\cT}(\bbeta - \hbeta_{\rm pool})\|^{2}
            \end{equation}
            with some specific choices of the penalization parameters $\lambda_{0}$ and $\lambda_{\rm pool}$.   The proof is given in Supplementary Material Section S2.5.
           Thus,  the SURE estimator and the dShrink estimator shrink the estimate towards $\bzero$ and $\hbeta_{\rm pool}$ using ridge-type penalties with adaptively determined penalization parameters, which is expected to lead to improved estimation accuracy when $\bbeta_{\cT}$ is close to $\bzero$ or $\hbeta_{\rm pool}$.

           Next, we examine the relationship between the SURE estimator $\hbeta_{\SURE}$, the dShrink estimator $\hbeta_{\rm ds}$ and the least squares estimators $\hbeta_{\cT}$ and $\hbeta_{\cS}$ in the target and source populations. One can easily see that $\lambda_{1}\hbeta_{\cC}+ \lambda_{2}\hdelta_{\cC}=\lambda_{1}\hbeta_{\cT}+ (\lambda_{2} - \lambda_{1})\bbSig_{\beta,\cT}(\bbSig_{\beta,\cT} + \bbSig_{\beta,\cS})^{-1}(\hbeta_{\cT} - \hbeta_{\cS})$. Simple calculations show that the SURE estimator $\hbeta_{\SURE}$ and the dShrink estimator $\hbeta_{\rm ds}$ takes the form
            \begin{equation}
            \small
            \nodisplayskips
             (\lambda_{1} \bbSig_{\beta,\cS} + \lambda_{2}\bbSig_{\beta,\cT})
            (\bbSig_{\beta,\cT} + \bbSig_{\beta,\cS})^{-1}
            \hbeta_{\cT}+(\lambda_1-\lambda_2)
            \bbSig_{\beta,\cT}(\bbSig_{\beta,\cT} + \bbSig_{\beta,\cS})^{-1}
            \hbeta_{\cS},
              \label{eq:ds1} 
            \end{equation}
         with $\Sigma_{\beta,\cS} = \var(\hbeta_{\cS})$, $\lambda_1$ and $\lambda_2$ determined based on the estimated $\cE$-error 
            or the separable surrogate objective function. This implies that the SURE estimator $\hbeta_{\SURE}$ and the drink estimator $\hbeta_{\rm ds}$ 
            are weighted averages of $\hbeta_{\cT}$ and $\hbeta_{\cS}$. However,  the weights are generally not scalar unless $\bbSig_{\beta,\cT}=c\bbSig_{\beta,\cS}$ for some constant $c$.
\end{remark}

	\subsection{Connections with the JS type estimators and data enriched linear regression}\label{subsec: JS and data enrich}

	As discussed in the Introduction, many existing transfer estimation methods are safe only under strong assumptions on the difference between the target and source
	populations.
    In contrast, the dShrink method works in a heterogeneity-agnostic way in the sense that it can incorporate external summary statistics safely with minimal assumptions on the difference between the target and source
    populations.
    The rationale behind this intriguing property is akin to that of the JS estimator and its variants. The JS estimator is well-known for achieving a lower MSE compared to the target population-based maximum likelihood estimator in the absence of any further assumptions other than the normality of the estimator. This improvement is achieved by adaptively shrinking the estimator toward zero or other targets suggested by side information \citep{stein1981estimation, green1991james, han2024improving}. Although motivated differently, the dShrink estimator employs adaptive shrinkage in a similar spirit to
    that of JS type estimators and achieves the dominance result. 
    
    Next, we compare the dShrink estimator with the JS estimator. Existing JS type methods and are largely limited to mean estimation or linear regression, and are subject to restrictive conditions in the form of the summary statistics. \citep{green1991james, han2024improving}. 
    On the contrary, the dShrink estimator is applicable to general estimation problems in the presence of parameter heterogeneity and population heterogeneity.   In addition to its broader applicability, the dShrink estimator offers certain advantages over the James–Stein estimator, even in scenarios where both estimators are applicable.
    
      Specifically, we consider the setting studied in \cite{green1991james} where $\halpha_{\cS} = \hbeta_{\cS} \sim N(\bzero, \sigma_{\cS}^{2}\bbI)$ and $\halpha_{\cT} = \hbeta_{\cT} \sim N(\bzero, \sigma_{\cT}^{2}\bbI)$ are the estimators of the target parameter in the source and target populations respectively. The JS type estimator $\hbeta_{\rm JS}$ proposed in \cite{green1991james} estimates $\bbeta_{\cT}$ using a weighted average $w\hbeta_{\cT} + (1 - w)\hbeta_{\cS}$, 
    where $w$ is determined based on an estimate of the $\cE$-error of the resulting estimator.     
     The weight proposed in \cite{green1991james} is $w=\{1 - (p - 2)\sigma_{\cT}^{2}/(\hbeta_{\cT} - \hbeta_{\cS})^{2}\}_{+}$.
    Note that, although the fixed effect meta-analysis estimator is also a weighted average of $\hbeta_{\cT}$ and $\hbeta_{\cS}$ in this case, it uses the weight $w=\sigma_{\cT}^{-2} / (\sigma_{\cT}^{-2} + \sigma_{\cS}^{-2})$ that is different from the JS estimator. As a linear combination of $\hbeta_{\cC}$ and $\hdelta_{\cC}$ can be equivalently expressed as a linear combination of $\hbeta_{\cT}$ and $\hbeta_{\cS}$ in this scenario,   one can easily see that the dShrink method considers estimators of the form $w_{1}\hbeta_{\cT} + w_{2}\hbeta_{\cS}$ without requiring $w_{1} + w_{2} = 1$ and determines $w_{1}$ and $w_{2}$ by minimizing a separable surrogate objective function \eqref{eq: separable surrogate} that approximates the $\cE$-error. This can be seen using equation (\ref{eq:ds1}) by setting $\bbSig_{\beta, \cT}=\sigma_{\cT}^{2}\bbI$ and $\bbSig_{\beta, \cS}=\sigma_{\cS}^{2}\bbI$, which gives $w_1=(\lambda_{1}\sigma_{\cS}^{2}+\lambda_{2}\sigma_{\cT}^{2})/(\sigma_{\cT}^{2} + \sigma_{\cS}^{2})$ and $w_2=(\lambda_1-\lambda_2)\sigma_{\cT}^{2}/(\sigma_{\cT}^{2} + \sigma_{\cS}^{2})$.
    
    By optimizing over a broader class of estimators, the dShrink method has the potential to achieve a lower estimation error than the JS type estimator. For example, the estimator $w_{1}\hbeta_{\cT} + w_{2}\hbeta_{\cS}$ can outperform $w\hbeta_{\cT} + (1 - w)\hbeta_{\cS}$ when $\bbeta_{\cT} \approx \bzero$ in MSE by taking $w_{1} = w_{2} = 0$. In addition, setting $w_{1} = w$ and $w_{2} = c(1 - w)$ in $w_{1}\hbeta_{\cT} + w_{2}\hbeta_{\cS}$ can yield a lower estimation error than $w\hbeta_{\cT} + (1 - w)\hbeta_{\cS}$ when $\bbeta_{\cT} \approx c\bbeta_{\cS}$ for some $c \neq 0$. The scenario $\bbeta_{\cT} \approx \bzero$ can arise if covariates have weak effects on the outcome in regression analysis, while the scenario $\bbeta_{\cT} \approx c\bbeta_{\cS}$ may occur in transfer estimation problems when, in different data sources, the outcomes are on different scales (e.g., continuous v.s. dichotomous; raw v.s. standardized data) \citep{gu2025robust} or different covariates are adjusted in the regression \citep{taylor2023data}. 
    
    The trade-off for optimizing over a broader estimator class is that determining $w_{1}, w_{2}$ (equivalently $\lambda_{1}$, $\lambda_{2}$ in previous sections) in a data-adaptive manner is more challenging. For the James--Stein type estimator, by minimizing the estimated $\cE$-error of the resulting estimator, one can obtain an estimator that dominates the target population-based estimator $\hbeta_{\cT}$ \citep{green1991james}. 
    However, this does not extend to the estimator of the form $w_{1}\hbeta_{\cT} + w_{2}\hbeta_{\cS}$ or equivalently $\lambda_{1} \hbeta_{\cC} + \lambda_{2}\hdelta_{\cC}$ as demonstrated by the performance of $\hbeta_{\SURE}$ in Section \ref{subsec: SURE}. 
    To address this issue, dShrink adopts a separable surrogate objective function using $\hbeta_{\cC}$ and $\hdelta_{\cC}$, thereby restoring theoretical safeguards against negative transfer.
    
    Under the linear model considered in Remark \ref{remark: linear}, the loss function \eqref{eq: linear loss ds} with $\lambda_{0} = 0$ and $\hbeta_{\rm pool}$ is replaced by $\hbeta_{\cS}$ reduces to that of data enriched linear regression \citep{chen2015data} with a quadratic penalty, whose safety is established when covariance matrices of the covariates are identical in the target and source populations. Compared with data enriched linear regression, dShrink introduces an additional penalty toward zero in the linear case, which improves efficiency when the true effect is small. Moreover, dShrink applies to a broader class of models beyond the linear setting, accommodates differences in variables measured across populations, and maintains safety in general settings. 
    
    We conclude this section with a numerical comparison of JS, the data enriched linear regression, and the dShrink estimators in Figure \ref{fig: eg-JS} under \eqref{eq: counter eg} with the same parameter configuration used in Figure \ref{fig: eg-ds}. Consider the linear regression setup in Remark \ref{remark: linear}, where $\bbeta_{\cT} = \boldsymbol{\mu} + \bbias$, $\bbeta_{\cS} = \boldsymbol{\mu} - \bbias$, $n_{\cT}^{-1}\bbX_{\cT}^{\T}\bbX_{\cT} = n_{\cS}^{-1}\bbX_{\cS}^{\T}\bbX_{\cS} = \bbI_{p}$, and $\sigma_{\cT}^{2} = \sigma_{\cS}^{2} = 2$. Under these settings, the least squares estimators $\hbeta_{\cT}$ and $\hbeta_{\cS}$ satisfy model \eqref{eq: counter eg}, and the data enriched linear regression can be implemented accordingly. We use the bias-adjusted weight in (5.5) of \cite{chen2015data} when implementing the data enriched linear regression.
    The results in Figure \ref{fig: eg-JS} underscore the advantageous performance of the dShrink estimator.
   \begin{figure}
		\centering
		\includegraphics[scale = 0.25]{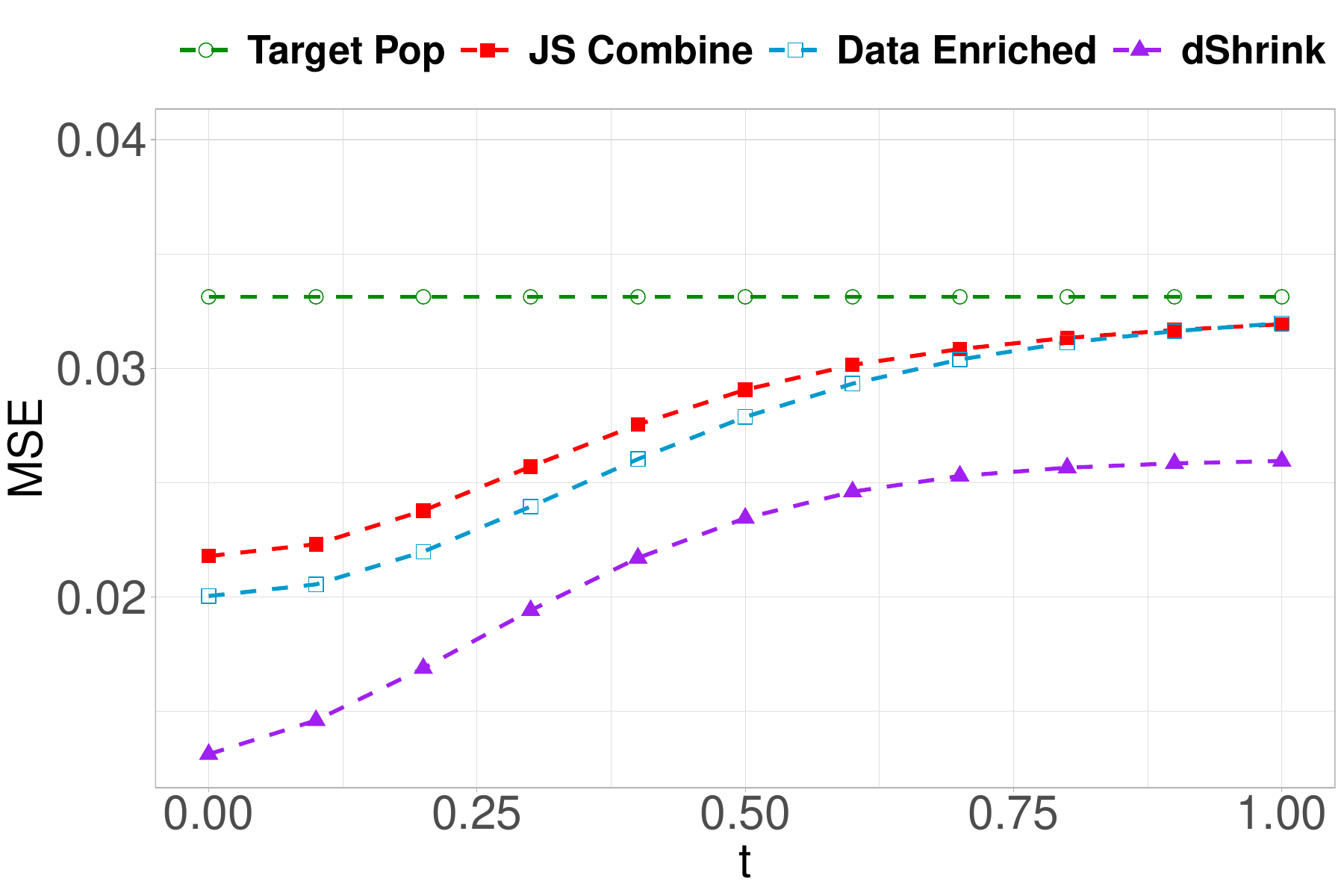}
		\caption{\it MSEs of the estimator using only the data from the target population $\hbeta_{\cT}$, the JS estimator $\hbeta_{\rm JS}$, the data enriched linear regression, and the dShrink estimator $\hbeta_{\rm ds}$  in the simulation under the setting of \eqref{eq: counter eg} with $\bbH = \bbI_{p}$, $p = 5$, $n_{\cT} = n_{\cS} = 300$, $\boldsymbol{\mu} = (0.05, 0.02, 0.1, 0.1, 0.1) ^{\T} / \sqrt{5}$, $\bbias = t\etab$, $\etab = (0.2, 0.3, 0.3, 0.3, 0.3)^{\T} / \sqrt{5}$ and different $t$'s.}\label{fig: eg-JS}
	\end{figure}

    \bigskip
    
    \subsection{An asymptotic version of the theoretical guarantee}\label{subsec: asy normal}  
    Theorem~\ref{thm: dominance ds} establishes the safety of the dShrink estimator under the assumption that $\halpha_{\cS}$, $\halpha_{\cT}$, and $\hbeta_{\cT}$ are normally distributed. In practice, this finite-sample normality assumption may not hold exactly. In this section, we extend the theoretical guarantee for the dShrink estimator to settings in which $\halpha_{\cS}$, $\halpha_{\cT}$, and $\hbeta_{\cT}$ are only asymptotically normal. In the theoretical development, we use $C$ to denote a generic positive constant whose value may vary from line to line. For any matrix $\bbM$, let $\sigma_{\min}(\bbM)$ denote its minimum singular value. We allow vectors and matrices to have infinite entries, and call a vector or matrix finite if all of its components or entries are finite. We study the asymptotic regime where $n_{\cT} \to \infty$ and $n_{\cS}$ goes to infinity as $n_{\cT} \to \infty$.
    \begin{condition}\label{cond: asy normal}
        (i) $\halpha_{\cS}$ is independent of $(\hbeta_{\cT}^{\T}, \halpha_{\cT}^{\T})^{\T}$ and for finite vectors $\balpha_{\cS}$, $\balpha_{\cS}$, and $\bbeta_{\cT}$,
        \[
        \small
        \nodisplayskips
        \sqrt{r_{\cS}}(\halpha_{\cS} - \balpha_{\cS}) \to N(\bzero, \bbV_{\alpha,\cS}),\
	\sqrt{r_{\cT}}
        \begin{pmatrix}
		\hbeta_{\cT} - \bbeta_{\cT}\\
		\halpha_{\cT} - \balpha_{\cT}
	\end{pmatrix}
        \to 
	N\left\{
	\begin{pmatrix}
		\bzero\\
		\bzero
	\end{pmatrix}, 
	\begin{pmatrix}
		\bbV_{\beta, \cT} & \bbV_{\beta\alpha, \cT}\\
		\bbV_{\beta\alpha, \cT}^{\T} & \bbV_{\alpha, \cT}
	\end{pmatrix}
	\right\}
	\]
    in distribution as $n_{\cT} \to \infty$, where $\{r_{\cS}\}$ and $\{r_{\cT}\}$ are positive sequences reflecting the convergence rates of $\halpha_{\cS}$ and $(\hbeta_{\cT}^{\T}, \halpha_{\cT}^{\T})^{\T}$ that depend on $n_{\cS}$ and $n_{\cT}$, and $\bbV_{\alpha,\cS}$, $\bbV_{\alpha, \cT}$, $\bbV_{\beta\alpha, \cT}$, and $\bbV_{\beta, \cT}$ are finite variance and covariance matrices; (ii) $r_{\cS}, r_{\cT} \to \infty$, $r_{\cS} / r_{\cT} \to \rho_{\cS, \cT}$, $C^{-1} \leq \sigma_{\rm min}(\bbV_{\beta, \cT}) \leq \sigma_{\rm max}(\bbV_{\beta, \cT}) \leq C$ and $C^{-1} \leq \sigma_{\rm max}(\bbV_{\alpha, \cS}) \leq \sigma_{\rm max}(\bbV_{\alpha, \cS}) \leq C$, $\|\bbV_{\beta\alpha, \cT}\| > 0$; (iii) $E\{(\sqrt{r_{\tau}}\|\hbeta_{\cT} - \bbeta_{\cT}\|)^{2 + \tau}\} \leq C$  for some positive constants $\rho_{\cS, \cT}$ and $C$. 
    \end{condition}
    Under Condition \ref{cond: asy normal}, one can define $\hbeta_{\cC}$, $\bbeta_{\cC}$, $\hdelta_{\cC}$, $\bdelta_{\cC}$, $c_{\beta}$, $c_{\delta}$, and $\hbeta_{\rm ds}$ similarly to the previous sections with $\bbSig_{\alpha, \cS} = r_{\cS}^{-1}\bbV_{\alpha, \cS}$, $\bbSig_{\beta, \cT} = r_{\cT}^{-1}\bbV_{\beta, \cT}$, $\bbSig_{\beta, \cT} = r_{\cT}^{-1}\bbV_{\beta\alpha, \cT}$, and $\bbSig_{\alpha, \cT} = r_{\cT}^{-1}\bbSig_{\alpha, \cT}$. We have the following result.
    \begin{theorem}\label{thm: dominance ds asy}
        Suppose Condition \ref{cond: asy normal} holds, $\bbH$ is positive definite, $\lim_{n_{\cT}\to \infty}r_{\cT}c_{\beta} > 0$, $\lim_{n_{\cT}\to \infty}r_{\cT}c_{\delta} > 0$, $\lim_{n_{\cT}\to \infty}\sqrt{r_{\cT}}\bbeta_{\cC} = \bxi_{\cC}$, and $\lim_{n_{\cT}\to \infty}\sqrt{r_{\cT}}\bdelta_{\cC} = \bxi_{\delta}$, where $\bxi_{\cC}$ and $\bxi_{\delta}$ are vectors with possibly infinite components. Then, we have
        $
        \lim_{n_{\cT}\to \infty} \cE(\hbeta_{\rm ds},  \bbeta_{\cT})/\cE(\hbeta_{\cT}, \bbeta_{\cT}) \leq 1.
        $
        Further assuming that at least one of $\bxi_{\cC}$ and $\bdelta_{\cC}$ is finite, we have 
       $\lim_{n_{\cT}\to \infty} \cE(\hbeta_{\rm ds},  \bbeta_{\cT})/\cE(\hbeta_{\cT}, \bbeta_{\cT}) < 1$.
    \end{theorem}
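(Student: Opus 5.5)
Our plan is to rescale by $\sqrt{r_{\cT}}$ and reduce to the Gaussian limit, where the finite-sample argument behind Theorem~\ref{thm: dominance ds} applies. Write $\bZ_{\beta} = \sqrt{r_{\cT}}\hbeta_{\cC}$ and $\bZ_{\delta} = \sqrt{r_{\cT}}\hdelta_{\cC}$. Then $r_{\cT}c_{\beta}$, $r_{\cT}c_{\delta}$ and $r_{\cT}\bbSig_{\cC}$, $r_{\cT}\bbSig_{\delta}$ converge to deterministic limits determined by $\bbV_{\alpha,\cS}$, $\bbV_{\beta,\cT}$, $\bbV_{\beta\alpha,\cT}$, $\bbV_{\alpha,\cT}$ and $\rho_{\cS,\cT}$. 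By Condition~\ref{cond: asy normal}(i)--(ii) and independence, $(\bZ_{\beta} - \sqrt{r_{\cT}}\bbeta_{\cC}, \bZ_{\delta} - \sqrt{r_{\cT}}\bdelta_{\cC})$ converges in distribution to independent Gaussians $(\bg_{\beta},\bg_{\delta})$ with the limiting covariances. The projection $\bbQ_{\alpha}$ has a finite limit because $\bbV_{\alpha,\cS}$ is well conditioned. The denominator satisfies $r_{\cT}\cE(\hbeta_{\cT},\bbeta_{\cT}) = \tr\{\bbH\bbV_{\beta,\cT}\}+o(1)$, which is bounded away from zero. It therefore suffices to compute $\lim r_{\cT}\cE(\hbeta_{\rm ds},\bbeta_{\cT})$.

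\textbf{Decomposition.} We write $\sqrt{r_{\cT}}(\hbeta_{\rm ds}-\bbeta_{\cT}) = \{\widehat\lambda_{{\rm s},1}\bZ_{\beta} - \sqrt{r_{\cT}}\bbeta_{\cC}\} + \{\widehat\lambda_{{\rm s},2}\bZ_{\delta} - \sqrt{r_{\cT}}\bdelta_{\cC}\}$. Each bracket is a positive-part James--Stein map $\bx \mapsto (1-c/\bx^{\T}\bbH\bx)_{+}\bx$ applied to $\bZ$, with the centre subtracted.

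\textbf{Uniform integrability.} The shrinkage error is dominated by $\|\bZ - \text{centre}\|$ plus $\sqrt{c}\,\|\bbH\|^{1/2}/\sigma_{\min}(\bbH)^{1/2}$. This holds because $\|(1-c/\bx^{\T}\bbH\bx)_{+}\bx - \bx\| \leq C\sqrt{c}$ for positive definite $\bbH$. So the squared error is bounded by $C(\|\sqrt{r_{\cT}}(\hbeta_{\cT}-\bbeta_{\cT})\|^{2} + \|\sqrt{r_{\cT}}(\halpha_{\cT}-\balpha_{\cT})\|^{2} + \|\sqrt{r_{\cT}}(\halpha_{\cS}-\balpha_{\cS})\|^{2} + 1)$. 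Condition~\ref{cond: asy normal}(iii) gives a $2+\tau$ moment for the first term. We would need analogous moment control for the $\alpha$ terms; if it is not available, we would route them through $\hdelta_{\cC}$ and note that $\hbeta_{\cC} + \hdelta_{\cC} = \hbeta_{\cT}$. Uniform integrability then lets us pass to the limit inside the expectation along the distributional convergence. This is the step we expect to be the main obstacle: making the truncation argument rigorous with only the stated moment condition, and handling the cross term between the two brackets.

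\textbf{Limits componentwise, with cases on $\bxi_{\cC}$ and $\bxi_{\delta}$.} If $\bxi_{\cC}$ has an infinite component, then $\bZ_{\beta}^{\T}\bbH\bZ_{\beta}\to\infty$ in probability. Hence $\widehat\lambda_{{\rm s},1}\to 1$, and the first bracket behaves like $\bg_{\beta}$ plus an $O_p(\|\bZ_{\beta}\|^{-1})$ term that vanishes. If $\bxi_{\cC}$ is finite, the bracket converges to $\phi(\bxi_{\cC}+\bg_{\beta})-\bxi_{\cC}$, where $\phi$ is the positive-part shrinkage. The same holds for $\delta$. In the limit we are exactly in the Gaussian setting of Theorem~\ref{thm: dominance ds}: the two brackets are independent, and each is either unshrunk or a positive-part Stein estimator with constant $c = \tr - 2\sigma_{\max}$. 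The cross term has limit zero, and this is where independence of $\bg_{\beta}$ and $\bg_{\delta}$ is used.

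\textbf{Comparison with Theorem~\ref{thm: dominance ds}.} The Stein-type argument from the proof of Theorem~\ref{thm: dominance ds} (in the Supplementary Material) shows that each shrunk component has risk at most $\tr\{\bbH\bbV\}$. The inequality is strict when $c>0$ and the centre is finite. The two limiting variances sum to $\tr\{\bbH\bbV_{\beta,\cT}\}$, so the limiting ratio is at most $1$. If at least one of $\bxi_{\cC}$, $\bxi_{\delta}$ is finite, then $\lim r_{\cT}c>0$ yields a strict risk reduction in that component, so the ratio is strictly below $1$.
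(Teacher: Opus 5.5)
Your overall route is the paper's: two shrinkage maps of the centred, rescaled components, continuous mapping to an independent Gaussian limit, uniform integrability, then a risk comparison in the limit. The obstacle you single out is resolved exactly by your fallback. Since $\hbeta_{\cC}+\hdelta_{\cC}=\hbeta_{\cT}$, one has $\hbeta_{\rm ds}-\bbeta_{\cT}=(\hbeta_{\cT}-\bbeta_{\cT})-\boldsymbol{m}_{\beta}-\boldsymbol{m}_{\delta}$, where $\boldsymbol{m}_{\beta}=\min\{1,c_{\beta}\|\hbeta_{\cC}\|_{\bbH}^{-2}\}\hbeta_{\cC}$ satisfies $\|\boldsymbol{m}_{\beta}\|_{\bbH}^{2}=\min\{\|\hbeta_{\cC}\|_{\bbH}^{2},c_{\beta}\}^{2}/\|\hbeta_{\cC}\|_{\bbH}^{2}\le c_{\beta}$, and likewise for $\delta$. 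Hence $r_{\cT}\|\hbeta_{\rm ds}-\bbeta_{\cT}\|_{\bbH}^{2}\le 3r_{\cT}\|\hbeta_{\cT}-\bbeta_{\cT}\|_{\bbH}^{2}+3r_{\cT}c_{\beta}+3r_{\cT}c_{\delta}$, and Condition~\ref{cond: asy normal}(iii) alone gives the $2+\tau$ moment. No moments of $\halpha_{\cT},\halpha_{\cS}$ and no cross term enter at this stage.

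The genuine gap is your claim that the limiting cross term vanishes. Independence of $\bg_{\beta}$ and $\bg_{\delta}$ only factorizes it. The limiting brackets are $\bg_{\beta}-\boldsymbol{m}_{\beta,\infty}$ and $\bg_{\delta}-\boldsymbol{m}_{\delta,\infty}$, where $\boldsymbol{m}_{\beta,\infty}=\min\{1,c_{\beta,\infty}\|\bxi_{\cC}+\bg_{\beta}\|_{\bbH}^{-2}\}(\bxi_{\cC}+\bg_{\beta})$ and $c_{\beta,\infty}=\lim r_{\cT}c_{\beta}$. So the cross term equals $2E(\boldsymbol{m}_{\beta,\infty})^{\T}\bbH\,E(\boldsymbol{m}_{\delta,\infty})$, the product of the two shrinkage biases.

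When both $\bxi_{\cC}$ and $\bxi_{\delta}$ are finite, these biases are in general nonzero, and the product can be positive. For example, take $\bbH,\bbV_{\cC},\bbV_{\delta}$ proportional to $\bbI_{p}$ and $\bxi_{\cC}=\bxi_{\delta}\neq\bzero$; then both biases are positive multiples of $\bxi_{\cC}$. In that case the bounds you invoke (each component's risk at most $\tr\{\bbH\bbV\}$, strictly for a finite centre) do not imply total risk at most $\tr\{\bbH\bbV_{\cC}\}+\tr\{\bbH\bbV_{\delta}\}=\tr\{\bbH\bbV_{\beta,\cT}\}$. When exactly one centre is infinite, that bracket tends to a mean-zero Gaussian, the cross term really is zero, and your componentwise argument is valid.

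What is missing is the quantitative deficit from the proof of Theorem~\ref{thm: dominance ds}. With $c=(\tr\{\bbH\bbV\}-2\sigma_{\rm max}(\bbH\bbV))_{+}$, each positive-part component has risk at most $\tr\{\bbH\bbV\}-E\|\boldsymbol{m}\|_{\bbH}^{2}$; this is the bound the paper takes from Tan (2015). Pointwise, $2\boldsymbol{m}_{\beta}^{\T}\bbH\boldsymbol{m}_{\delta}\le\|\boldsymbol{m}_{\beta}\|_{\bbH}^{2}+\|\boldsymbol{m}_{\delta}\|_{\bbH}^{2}$, so the two deficits absorb the cross term. This is precisely why that particular $c$ is used.

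The paper does not re-derive this. It applies Theorem~\ref{thm: dominance ds} directly to the independent pair $N(\bxi_{\cC},\bbV_{\cC})$, $N(\bxi_{\delta},\bbV_{\delta})$, after noting that by continuity $c_{\beta,\infty}=(\tr\{\bbH\bbV_{\cC}\}-2\sigma_{\rm max}(\bbH\bbV_{\cC}))_{+}$, and likewise for $\delta$. You should do the same rather than argue component by component. The strict inequality in the case where both centres are finite then also comes from that theorem.
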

    
    \subsection{Estimating unknown matrices}\label{subsec: est var}
	The estimator $\hbeta_{\rm ds}$ cannot be directly calculated in practice because it involves possibly unknown matrices $\bbH$, $\bbQ_{\alpha}$, $\bbSig_{\cC}$, and $\bbSig_{\delta}$. Their estimators $\hH$, $\hQ_{\alpha}$, $\hSig_{\cC}$, and $\hSig_{\delta}$ can be obtained in various contexts, including the linear regression, logistic regression, and general M-estimation.
	Let $\tbeta_{\cC} = \hbeta_{\cT} - \hQ_{\alpha}(\halpha_{\cT} - \halpha_{\cS})$, $\tdelta_{\cC} = \hQ_{\alpha}(\halpha_{\cT} - \halpha_{\cS})$, $\widetilde{\lambda}_{{\rm s}, 1} = \left\{1 -  \hat{c}_{\beta}/(\hbeta_{\cC}^{\T}\hH\hbeta_{\cC})\right\}_{+}$, and 
	$\widetilde{\lambda}_{{\rm s}, 2} = \left\{1 - \hat{c}_{\delta}/(\hdelta_{\cC}^{\T}\hH\hdelta_{\cC})\right\}_{+}$
	where $\hat{c}_{\beta}  = (\tr\{\hH\hSig_{\cC}\} - 2\sigma_{\rm max}(\hH\hSig_{\cC}))_{+}$ and $\hat{c}_{\delta} = (\tr\{\hH\hSig_{\delta}\} - 2\sigma_{\rm max}(\hH\hSig_{\delta}))_{+}$.   In practice, we propose to use the estimator 
	$  \tbeta_{\rm ds} = \widetilde{\lambda}_{{\rm s}, 1}\tbeta_{\cC} + \widetilde{\lambda}_{{\rm s}, 2}\tdelta_{\cC}$
	which can be efficiently calculated using summary statistics $\hbeta_{\cT}$, $\halpha_{\cT}$, $\halpha_{\cS}$, $\hH$, $\hQ_{\alpha}$, $\hSig_{\cC}$, and $\hSig_{\delta}$.
     Next, we delve into the influence of estimation errors introduced by substituting the estimators of the unknown matrices. 
     We consider the asymptotic regime in Section \ref{subsec: asy normal} and study the property of $\tbeta_{\rm ds}$ under the following condition.
     
     \begin{condition}\label{cond: matrices error}
     	(i) $\|\hH - \bbH\| = o_{P}(\|\bbH\|)$, $\|\hQ_{\alpha} - \bbQ_{\alpha}\| = o_{P}(\|\bbQ_{\alpha}\|)$, $\|\hSig_{\cC} - \bbSig_{\cC}\| = o_{P}(\|\bbSig_{\cC}\|)$, and $\|\hSig_{\delta} - \bbSig_{\delta}\| = o_{P}(\|\bbSig_{\delta}\|)$; (ii) $E\left\{(r_{\cT}\hat{c}_{\beta})^{1 + \tau/2}\right\} \leq C$ and $E\left\{(r_{\cT}\hat{c}_{\delta})^{1 + \tau/2}\right\} \leq C$ for some positive constant $C$.
     \end{condition}
     Next, we establish the theoretical properties of $\tbeta_{\rm ds}$ in the following theorem, whose proof can be found in Supplementary Material Section \ref{app: estimate matrix}.
     \begin{theorem}\label{thm: error unknown matrix}
     	Suppose $\bbH$ is positive definite. Assume $\lim_{n_{\cT}\to \infty}r_{\cT}c_{\beta} > 0$, $\lim_{n_{\cT}\to \infty}r_{\cT}c_{\delta} > 0$, $\lim_{n_{\cT}\to \infty}\sqrt{r_{\cT}}\bbeta_{\cC} = \bxi_{\cC}$, and $\lim_{n_{\cT}\to \infty}\sqrt{r_{\cT}}\bdelta_{\cC} = \bxi_{\delta}$, where $\bxi_{\cC}$ and $\bxi_{\delta}$ are vectors with possibly infinite components. Then $\|\tbeta_{\rm ds} - \hbeta_{\rm ds}\| = o_{P}(\sqrt{r_{\cT}})$ under Conditions \ref{cond: asy normal}(i)(ii), and \ref{cond: matrices error}(i).
        In addition, we have $
        \lim_{n_{\cT}\to \infty} \cE(\tbeta_{\rm ds},  \bbeta_{\cT})/\cE(\hbeta_{\cT}, \bbeta_{\cT}) \leq 1$ under Conditions \ref{cond: asy normal} and \ref{cond: matrices error}.
        Furthermore, if at least one of $\bxi_{\cC}$ and $\bdelta_{\cC}$ is finite, we have 
        $\lim_{n_{\cT}\to \infty} \cE(\tbeta_{\rm ds},  \bbeta_{\cT})/\cE(\hbeta_{\cT}, \bbeta_{\cT}) < 1$.
     \end{theorem}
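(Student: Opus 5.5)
Theorem~\ref{thm: error unknown matrix} makes three claims. The plan is to prove the first one, $\|\tbeta_{\rm ds}-\hbeta_{\rm ds}\|=o_P(r_{\cT}^{-1/2})$, by a perturbation argument. (This is the natural scale; the statement's $o_P(\sqrt{r_{\cT}})$ is weaker.) The two $\cE$-error claims then follow by transferring Theorem~\ref{thm: dominance ds asy} through uniform integrability.

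\emph{Step 1: the pieces entering $\tbeta_{\rm ds}$.} First, $\tdelta_{\cC}-\hdelta_{\cC}=(\hQ_{\alpha}-\bbQ_{\alpha})(\halpha_{\cT}-\halpha_{\cS})$. We decompose $\halpha_{\cT}-\halpha_{\cS}=(\balpha_{\cT}-\balpha_{\cS})+O_P(r_{\cT}^{-1/2})$, using Condition~\ref{cond: asy normal}(i)(ii) and $r_{\cS}\asymp r_{\cT}$. The stochastic part is then killed by $\|\hQ_{\alpha}-\bbQ_{\alpha}\|=o_P(\|\bbQ_{\alpha}\|)$ together with $\|\bbQ_{\alpha}\|=O(1)$. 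The latter holds because $\bbQ_{\alpha}=\bbV_{\beta\alpha,\cT}(\bbV_{\alpha,\cT}+\rho\bbV_{\alpha,\cS})^{-1}$ asymptotically, with eigenvalues bounded by (ii).

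The deterministic part $(\hQ_{\alpha}-\bbQ_{\alpha})(\balpha_{\cT}-\balpha_{\cS})$ is harder, because $\bdelta_{\cC}$ may be of order larger than $r_{\cT}^{-1/2}$. Here I would use a relative-error comparison rather than an absolute one. When $\sqrt{r_{\cT}}\|\bdelta_{\cC}\|\to\infty$, the factor $\widetilde\lambda_{{\rm s},2}$ tends to $1$ in probability. The difference $\tdelta_{\cC}-\hdelta_{\cC}$ is then $o_P(\|\bdelta_{\cC}\|)$ relative to the baseline, and it cancels in $\tbeta_{\rm ds}-\hbeta_{\rm ds}$ up to $(\widetilde\lambda_{{\rm s},2}-1)\tdelta_{\cC}$ and $(\widehat\lambda_{{\rm s},2}-1)\hdelta_{\cC}$. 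Each of these is $O_P(r_{\cT}^{-1}/\|\bdelta_{\cC}\|)=o_P(r_{\cT}^{-1/2})$. The cancellation works because $\tbeta_{\cC}+\tdelta_{\cC}=\hbeta_{\cT}=\hbeta_{\cC}+\hdelta_{\cC}$ exactly. So the plan is to write
\[
\tbeta_{\rm ds}-\hbeta_{\rm ds}=(\widetilde\lambda_{{\rm s},1}-1)\tbeta_{\cC}-(\widehat\lambda_{{\rm s},1}-1)\hbeta_{\cC}+(\widetilde\lambda_{{\rm s},2}-1)\tdelta_{\cC}-(\widehat\lambda_{{\rm s},2}-1)\hdelta_{\cC}
\]
and to treat each shrinkage term $(1-\lambda)\boldsymbol v=\min\{1,c/\boldsymbol v^{\T}\bbH\boldsymbol v\}\boldsymbol v$ as a function $g_c(\boldsymbol v)$.

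\emph{Step 2: the shrinkage map.} The map $g_c$ has norm at most $\sqrt{c/\lambda_{\min}(\bbH)}=O(r_{\cT}^{-1/2})$. It is also continuous in $(c,\bbH,\boldsymbol v)$ after rescaling, i.e.\ $\sqrt{r_{\cT}}g_{c}(\boldsymbol v)=g_{r_{\cT}c}(\sqrt{r_{\cT}}\boldsymbol v)$. Condition~\ref{cond: matrices error}(i) gives $r_{\cT}\hat c_{\beta}-r_{\cT}c_{\beta}\to0$ and $\hH\to\bbH$ in relative norm. Both $r_{\cT}c_{\beta}$ and $r_{\cT}c_\delta$ converge to positive limits, so $g$ is continuous at the limit points; near $\boldsymbol v=0$ it is still bounded and continuous.

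We then split into cases by whether each component of $\bxi_{\cC}$, $\bxi_{\delta}$ is finite. If $\sqrt{r_{\cT}}\boldsymbol v$ converges in distribution to a tight limit, the continuous mapping theorem gives $\sqrt{r_{\cT}}\{g_{\hat c}(\tilde{\boldsymbol v})-g_c(\boldsymbol v)\}\to0$ in probability. If instead $\sqrt{r_{\cT}}\|\boldsymbol v\|\to\infty$, both terms are $O_P(r_{\cT}^{-1}/\|\boldsymbol v\|)$. This proves the first claim.

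\emph{Step 3: the $\cE$-error claims.} Convergence in probability of $r_{\cT}\|\tbeta_{\rm ds}-\hbeta_{\rm ds}\|^2$ to zero is upgraded to convergence in $L^1$ via uniform integrability. The baseline $r_{\cT}\|\hbeta_{\cT}-\bbeta_{\cT}\|^2$ has a bounded $(1+\tau/2)$ moment by Condition~\ref{cond: asy normal}(iii). Each shrinkage term has $r_{\cT}\|g\|^2\lesssim r_{\cT}\hat c$, which is bounded in $L^{1+\tau/2}$ by Condition~\ref{cond: matrices error}(ii). Hence
\[
r_{\cT}\cE(\tbeta_{\rm ds},\bbeta_{\cT})-r_{\cT}\cE(\hbeta_{\rm ds},\bbeta_{\cT})\to0 .
\]
Also, $r_{\cT}\cE(\hbeta_{\cT},\bbeta_{\cT})\to\tr(\bbH\bbV_{\beta,\cT})$, which is bounded away from zero. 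Both limit statements then follow from Theorem~\ref{thm: dominance ds asy}.

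The main obstacle is Step 1's deterministic bias term when $\bdelta_{\cC}$ or $\bbeta_{\cC}$ diverges on the $\sqrt{r_{\cT}}$ scale. The relative error of $\hQ_{\alpha}$ is not quantified, so the difference of the unshrunk components is not $o_P(r_{\cT}^{-1/2})$ on its own. The plan is to avoid needing that bound by exploiting the exact identity $\tbeta_{\cC}+\tdelta_{\cC}=\hbeta_{\cC}+\hdelta_{\cC}$, so that only the shrinkage parts, which are uniformly $O_P(r_{\cT}^{-1/2})$, must be compared.
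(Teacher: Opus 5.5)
Your structure matches the paper's. The identity $\tbeta_{\cC}+\tdelta_{\cC}=\hbeta_{\cT}$ reduces $\tbeta_{\rm ds}-\hbeta_{\rm ds}$ to differences of shrinkage terms, those are perturbed, and Conditions~\ref{cond: asy normal}(iii) and \ref{cond: matrices error}(ii) give the uniform integrability that lets Theorem~\ref{thm: dominance ds asy} finish. Your Step~3 is fine. You are also right that the rate actually needed, and used by the paper to transfer the limit law, is $o_P(r_{\cT}^{-1/2})$.

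\textbf{The gap in Step~2.} It sits exactly in the regime you single out: $\bxi_{\cC}$ finite and $\bxi_{\delta}$ with infinite components.
\begin{itemize}
\item There $\sqrt{r_{\cT}}\hbeta_{\cC}$ is tight, so your continuous-mapping step needs $\sqrt{r_{\cT}}(\tbeta_{\cC}-\hbeta_{\cC})\to\bzero$ in probability.
\item But $\tbeta_{\cC}-\hbeta_{\cC}=-(\hQ_{\alpha}-\bbQ_{\alpha})(\halpha_{\cT}-\halpha_{\cS})$ contains the deterministic piece $(\hQ_{\alpha}-\bbQ_{\alpha})(\balpha_{\cT}-\balpha_{\cS})$.
\item Condition~\ref{cond: matrices error}(i) only makes that piece $o_P(\|\balpha_{\cT}-\balpha_{\cS}\|)$, and in this regime $\|\balpha_{\cT}-\balpha_{\cS}\|\gg r_{\cT}^{-1/2}$.
\end{itemize}
The exact identity does not remove this error. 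It transfers $-(\tdelta_{\cC}-\hdelta_{\cC})$ into $\tbeta_{\cC}$, which is then the argument of the $\beta$-shrinkage map. A perturbation of that argument of order $\gg r_{\cT}^{-1/2}$ changes $\sqrt{r_{\cT}}\min\{1,\hat c_{\beta}\|\tbeta_{\cC}\|_{\hH}^{-2}\}\tbeta_{\cC}$ by $O_P(1)$, not $o_P(1)$. Knowing that each shrinkage term is $O_P(r_{\cT}^{-1/2})$ only bounds their difference by $O_P(r_{\cT}^{-1/2})$. A related point: your claim $\tdelta_{\cC}-\hdelta_{\cC}=o_P(\|\bdelta_{\cC}\|)$ tacitly needs $\|\bbQ_{\alpha}\|\,\|\balpha_{\cT}-\balpha_{\cS}\|=O(\|\bdelta_{\cC}\|)$. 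That can fail when $\bbQ_{\alpha}$ has a nontrivial null space.

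\textbf{Why this cannot be repaired under the stated hypotheses.} The paper's proof passes over the same point with its unjustified claim $\|\tbeta_{\cC}-\hbeta_{\cC}\|=o_P(\|\hbeta_{\cC}\|)$. Here is a counterexample satisfying every hypothesis.
\begin{itemize}
\item Take $p\ge 3$, $\bbH=\bbI_{p}$, $r_{\cT}=r_{\cS}=r$, and $\hbeta_{\cT}=\halpha_{\cT}\sim N(\bbeta_{\cT},2r^{-1}\bbI_{p})$.
\item Take $\halpha_{\cS}\sim N(-\bbeta_{\cT},2r^{-1}\bbI_{p})$ with $\bbeta_{\cT}$ fixed and all entries nonzero.
\item Use exact $\hH$, $\hSig_{\cC}$, $\hSig_{\delta}$, and $\hQ_{\alpha}=(1+r^{-1/4})\bbQ_{\alpha}$.
\end{itemize}
Then $\bbeta_{\cC}=\bzero$, $\bdelta_{\cC}=\bbeta_{\cT}$ and $rc_{\beta}=rc_{\delta}=p-2$. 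Conditions~\ref{cond: asy normal} and \ref{cond: matrices error} and the theorem's limit assumptions all hold.

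Moreover $\tbeta_{\cC}=\hbeta_{\cC}-r^{-1/4}\hdelta_{\cC}$. So $\sqrt{r}\|\tbeta_{\cC}\|\to\infty$ at rate $r^{1/4}$, while $\sqrt{r}\hbeta_{\cC}\sim N(\bzero,\bbI_{p})$. Consequently:
\begin{itemize}
\item $\sqrt{r}(\tbeta_{\rm ds}-\hbeta_{\rm ds})$ converges in distribution to $\min\{1,(p-2)\|\bvareps_{\cC}\|^{-2}\}\bvareps_{\cC}\neq\bzero$, so even the intended rate fails;
\item $r\cE(\tbeta_{\rm ds},\bbeta_{\cT})\to 2p=\lim r\cE(\hbeta_{\cT},\bbeta_{\cT})$, which contradicts the strict inequality although $\bxi_{\cC}=\bzero$ is finite.
\end{itemize}

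\textbf{What would close the argument.} Your case split works under an extra assumption such as $\sqrt{r_{\cT}}\|(\hQ_{\alpha}-\bbQ_{\alpha})(\balpha_{\cT}-\balpha_{\cS})\|\to 0$ in probability. More generally, it suffices that the perturbations of $\hbeta_{\cC}$ and $\hdelta_{\cC}$ are $o_P(\max\{r_{\cT}^{-1/2},\|\bbeta_{\cC}\|\})$ and $o_P(\max\{r_{\cT}^{-1/2},\|\bdelta_{\cC}\|\})$ respectively. Note that even a $\sqrt{r_{\cT}}$-consistent $\hQ_{\alpha}$ violates this when $\balpha_{\cT}-\balpha_{\cS}$ is fixed and nonzero. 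In that case $\tbeta_{\rm ds}$ has a different limit law from $\hbeta_{\rm ds}$ and needs its own analysis.
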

	
	\section{Extensions}\label{sec: extension}	
	\subsection{Conditional dShrink estimator}\label{subsec: cds}
	This section makes some extensions for the dShrink estimator. We focus on the normal setting with known covariance matrices as in Section \ref{subsec: ds} for clarity. Extensions to the asymptotic regime with estimated covariance matrices can be developed analogously to those in Sections \ref{subsec: asy normal} and \ref{subsec: est var}. An extension of the dShrink estimator to incorporate side information can be found in Supplementary Material Section \ref{app: mds}.
	
	The dShrink estimator requires the covariance matrix $\bbSig_{\alpha,\cS}$ of the external summary statistics, which is not always available.
	To avoid this requirement, one can construct the dShrink estimator conditional on $\halpha_{\cS}$. Let $\bbQ_{\alpha}^{\dag} = \bbSig_{\beta\alpha, \cT}\bbSig_{\alpha, \cT}^{-1}$, $\hbeta_{\cC}^{\dag} = \hbeta_{\cT} - \bbQ_{\alpha}^{\dag}(\halpha_{\cT} - \halpha_{\cS})$, and
	$\hdelta_{\cC}^{\dag} = \bbQ_{\alpha}^{\dag}(\halpha_{\cS} - \halpha_{\cT})$. 
	Conditional on $\halpha_{\cS}$, $\hbeta_{\cC}^{\dag}$ is independent of $\hdelta_{\cC}^{\dag}$. 
	Then, the same procedure that derives the dShrink estimator gives rise to the conditional dShrink (c-dShrink) estimator
	$\hbeta_{\rm ds}^{\dag} = \widehat{\lambda}_{{\rm s}, 1}^{\dag}\hbeta_{\cC}^{\dag} + \widehat{\lambda}_{{\rm s}, 2}^{\dag}\hdelta_{\cC}^{\dag}$,
	where 
	$
	\widehat{\lambda}_{{\rm s}, 1}^{\dag} = \left\{1 -  c_{\beta}^{\dag}/(\hbeta_{\cC}^{\dag\T}\bbH\hbeta_{\cC}^{\dag})\right\}_{+}$, $\widehat{\lambda}_{{\rm s}, 2}^{\dag} = \left\{1 - c_{\delta}^{\dag}/(\hdelta_{\cC}^{\dag\T}\bbH\hdelta_{\cC}^{\dag})\right\}_{+}$,	$c_{\beta}^{\dag} = (\tr\{\bbH\bbQ_{\alpha}^{\dag}\bbSig_{\alpha, \cT}\bbQ_{\alpha}^{\dag\T}\} - 2\sigma_{\rm max}(\bbH\bbQ_{\alpha}^{\dag}\bbSig_{\alpha, \cT}\bbQ_{\alpha}^{\dag\T}))_{+}$,  $c_{\delta}^{\dag} = (\tr\{\bbH\bbSig_{\alpha, \cT}\} - 2\sigma_{\rm max}(\bbH\bbSig_{\alpha, \cT}))_{+}$.
	Theorem \ref{thm: dominance cds} establishes the dominance result for $\hbeta_{\rm ds}^{\dag}$. The proof of Theorem 3 is given in Supplemental Material Section \ref{app: proof cds}.
	\begin{theorem}\label{thm: dominance cds}
		Suppose $\halpha_{\cT}$ and $\hbeta_{\cT}$ are normally distributed as in \eqref{eq: normality}. Then, $\cE(\hbeta_{\rm ds}^{\dag},\bbeta_{\cT}) \leq \cE(\hbeta_{\cT}, \bbeta_{\cT})$ for any $\balpha_{\cS}$, $\balpha_{\cT}$, and  $\bbeta_{\cT}$. If $c_{\beta}^{\dag} > 0$ or $c_{\delta}^{\dag} > 0$, then $\cE(\hbeta_{\rm ds}^{\dag},\bbeta_{\cT}) < \cE(\hbeta_{\cT}, \bbeta_{\cT})$ for any $\balpha_{\cS}$, $\balpha_{\cT}$, and  $\bbeta_{\cT}$.
	\end{theorem}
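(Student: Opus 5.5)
The plan is to reduce Theorem~\ref{thm: dominance cds} to Theorem~\ref{thm: dominance ds} by conditioning on $\halpha_{\cS}$, and then to integrate the resulting conditional inequality over the law of $\halpha_{\cS}$. Fix a realization $\halpha_{\cS} = \ba$. Conditionally, $\ba$ is a deterministic vector, and $(\hbeta_{\cT},\halpha_{\cT})$ keeps its normal law \eqref{eq: normality}, because the external statistic is independent of the internal sample. A deterministic vector is a degenerate normal $N(\ba, \bzero)$, so formally we are in setting \eqref{eq: normality} with $\balpha_{\cS}$ replaced by $\ba$ and $\bbSig_{\alpha,\cS}$ replaced by $\bzero$. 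In that case $\bbQ_{\alpha}$ becomes $\bbSig_{\beta\alpha,\cT}\bbSig_{\alpha,\cT}^{-1} = \bbQ_{\alpha}^{\dag}$. Hence $\hbeta_{\cC}$ and $\hdelta_{\cC}$ become $\hbeta_{\cC}^{\dag}$ and $-\hdelta_{\cC}^{\dag}$; the sign flip is harmless, since the shrinkage factor depends on $\hdelta^{\T}\bbH\hdelta$ only and the estimator is equivariant under $\hdelta\mapsto-\hdelta$.

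The key steps, in order, are as follows.

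\emph{Step 1: decomposition and conditional independence.} Verify directly that $\hbeta_{\cT} = \hbeta_{\cC}^{\dag} - \hdelta_{\cC}^{\dag}$ (up to the sign convention above). Next, compute
\[
\mathrm{cov}(\hbeta_{\cT} - \bbQ_{\alpha}^{\dag}\halpha_{\cT},\, \bbQ_{\alpha}^{\dag}\halpha_{\cT}) = \bbSig_{\beta\alpha,\cT}\bbQ_{\alpha}^{\dag\T} - \bbQ_{\alpha}^{\dag}\bbSig_{\alpha,\cT}\bbQ_{\alpha}^{\dag\T} = \bzero .
\]
By joint normality, $\hbeta_{\cC}^{\dag}$ and $\hdelta_{\cC}^{\dag}$ are therefore conditionally independent Gaussians. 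Their conditional covariances are
\[
\bbSig_{\cC}^{\dag} = \bbSig_{\beta,\cT} - \bbSig_{\beta\alpha,\cT}\bbSig_{\alpha,\cT}^{-1}\bbSig_{\beta\alpha,\cT}^{\T}, \qquad \bbSig_{\delta}^{\dag} = \bbQ_{\alpha}^{\dag}\bbSig_{\alpha,\cT}\bbQ_{\alpha}^{\dag\T},
\]
and their means are $\bbeta_{\cT} - \bbQ_{\alpha}^{\dag}(\balpha_{\cT}-\ba)$ and $\bbQ_{\alpha}^{\dag}(\ba - \balpha_{\cT})$.

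\emph{Step 2: match the constants.} Check that $c^{\dag}_{\beta}$ and $c^{\dag}_{\delta}$ are exactly the constants $(\tr\{\bbH\bbSig\} - 2\sigma_{\max}(\bbH\bbSig))_+$ that Theorem~\ref{thm: dominance ds} attaches to the conditional covariances $\bbSig_{\cC}^{\dag}$ and $\bbSig_{\delta}^{\dag}$ of the respective components. I would state this carefully, because the displayed formulas for $c^{\dag}_{\beta}$ and $c^{\dag}_{\delta}$ must be read with the covariances of Step 1. With that, $\hbeta_{\rm ds}^{\dag}$ is literally the dShrink estimator of the degenerate conditional problem.

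\emph{Step 3: apply Theorem~\ref{thm: dominance ds} and integrate.} Theorem~\ref{thm: dominance ds} gives
\[
E\{(\hbeta_{\rm ds}^{\dag}-\bbeta_{\cT})^{\T}\bbH(\hbeta_{\rm ds}^{\dag}-\bbeta_{\cT}) \mid \halpha_{\cS}=\ba\} \le E\{(\hbeta_{\cT}-\bbeta_{\cT})^{\T}\bbH(\hbeta_{\cT}-\bbeta_{\cT})\}
\]
for every $\ba$, with strict inequality when $c^{\dag}_{\beta}>0$ or $c^{\dag}_{\delta}>0$. These constants do not depend on $\ba$. The right-hand side is free of $\ba$, so taking expectations over $\halpha_{\cS}$ (tower property) gives the claim. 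Strictness survives integration because the inequality is strict for every $\ba$. The argument also shows that no normality of $\halpha_{\cS}$ is needed, which is consistent with the hypotheses of the theorem.

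The main obstacle is legitimacy of invoking Theorem~\ref{thm: dominance ds} with $\bbSig_{\alpha,\cS}=\bzero$, a degenerate external statistic. I would check that the proof of Theorem~\ref{thm: dominance ds} uses only three facts: (a) the decomposition $\hbeta_{\cT} = \hbeta_{\cC}+\hdelta_{\cC}$; (b) independence and normality of the two pieces; and (c) invertibility of $\bbSig_{\alpha,\cT}+\bbSig_{\alpha,\cS}$, here $\bbSig_{\alpha,\cT}$. If so, the reduction is immediate.

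If instead that proof is tied to the specific structure, I would redo it directly for the conditional problem. The plan there is to condition further on $\hdelta_{\cC}^{\dag}$, treat the first component as a Gaussian vector being shrunk, and use Stein's identity $E\{(\bX-\boldsymbol{\theta})^{\T}\bbH\bg(\bX)\} = E\,\tr\{\bbH\bbSig\nabla\bg(\bX)\}$. Combined with $\tr\{\bbH\bbSig\nabla\bg\} \ge \{\tr(\bbH\bbSig) - 2\sigma_{\max}(\bbH\bbSig)\}/\|\bX\|^2_{\bbH}$ for the positive-part shrinkage, this is where the constant $2\sigma_{\max}$ enters. The cross term $2E\{\bg_1\}^{\T}\bbH E\{\bg_2\}$ between the two shrinkage corrections is the delicate piece. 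It must be handled exactly as in the proof of Theorem~\ref{thm: dominance ds}, which is why reusing that theorem through conditioning is the preferred route.
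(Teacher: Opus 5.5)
Your proposal is correct and matches the paper's proof: condition on $\halpha_{\cS}$, use independence to rerun the argument of Theorem~\ref{thm: dominance ds} on the two conditionally independent Gaussian pieces, and integrate with the tower property. The pieces use $\bbQ_{\alpha}^{\dag}$ in place of $\bbQ_{\alpha}$, and the constants must be matched to their conditional covariances; you are right that the displayed $c_{\beta}^{\dag}$ and $c_{\delta}^{\dag}$ have to be reread this way. Strictness survives the integration because these constants do not depend on $\halpha_{\cS}$.

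One small correction: equivariance of $\hdelta \mapsto \widehat{\lambda}(\hdelta)\hdelta$ does not cure the sign in $\hdelta_{\cC}^{\dag}$. With the literal definition, the unshrunk estimator is $\hbeta_{\cT} - 2\bbQ_{\alpha}^{\dag}(\halpha_{\cT} - \halpha_{\cS})$, and dominance fails under large heterogeneity. You should therefore read $\hdelta_{\cC}^{\dag} = \bbQ_{\alpha}^{\dag}(\halpha_{\cT} - \halpha_{\cS})$, which is what the paper itself uses in its computation comparing $\hbeta_{\rm ds}$ and $\hbeta_{\rm ds}^{\dag}$.
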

	 Although $\hbeta_{\rm ds}^{\dag}$ dominates $\hbeta_{\cT}$ in terms of $\cE$-error, our numerical results in Supplementary Material Section~\ref{app: sim m&c} show that $\hbeta_{\rm ds}^{\dag}$ can have a slightly larger $\cE$-error than $\hbeta_{\rm ds}$ due to ignoring information regarding the uncertainty of $\halpha_{\cS}$.
	 An analytical comparison between the $\cE$-errors of $\hbeta_{\rm ds}^{\dag}$ and $\hbeta_{\rm ds}$ is challenging.
	 Under the setting of example \eqref{eq: counter eg} with $p > 2$ and $\boldsymbol{\mu} = \bbias = \bzero$, we show in Supplementary Material Section~\ref{app: compr ds&cds} that
	 \begin{equation}\label{eq: compr ds&cds}
	 	\small
	 	\nodisplayskips
	 	\begin{aligned}
	 		\cE(\hbeta_{\rm ds}^{\dag}, \bbeta_{\cT}) - \cE(\hbeta_{\rm ds}, \bbeta_{\cT})
	 		& = E\left(1\left\{(p - 2)v_{\delta}^{2}\|\hdelta_{\cC}\|^{-2} > 1\right\}\|\hdelta_{\cC}\|^{2}\left[\min\{\pi_{\cS}^{-1}, (p - 2)v_{\delta}^{2}\|\hdelta_{\cC}\|^{-2}\} - 1\right]^{2}\right)\\
	 		&\quad + (\pi_{\cS}^{-1} - 1)E\left[1\left\{(p - 2)v_{\delta}^{2}\|\hdelta_{\cC}\|^{-2}\leq 1\right\}(p - 2)v_{\delta}^{2}\{2 - (p - 2)v_{\delta}^{2}\|\hdelta_{\cC}\|^{-2}\}\right]\\
	 		&\quad
	 		+ (\pi_{\cS}^{-1} - 1)E\left[1\left\{(p - 2)v_{\delta}^{2}\|\hdelta_{\cC}\|^{-2} > 1\right\}\|\hdelta_{\cC}\|^{2}\right] > 0,
	 	\end{aligned}
	 \end{equation}
	 where $\pi_{\cS} = n_{\cS} / (n_{\cT} + n_{\cS})$ is the proportion of the sample from the source population. Notably, $\cE(\hbeta_{\rm ds}^{\dag}, \bbeta_{\cT}) - \cE(\hbeta_{\rm ds}, \bbeta_{\cT}) \approx 0$ when $\pi_{\cS}$ is close to one.
	 This is intuitive, as the uncertainty of $\halpha_{\cS}$ becomes negligible compared to that of $\hbeta_{\cT}$ when $\pi_{\cS} \approx 1$, so ignoring it does not substantially inflate the estimation error.

\bigskip

\subsection{Multiple source populations}\label{subsec: multi-source}
The proposed methods can be naturally extended to integrate summary statistics from multiple source populations. For $k = 1, \dots, K$, let $\halpha_{\cS,k}$ denote a $q_{k}$-dimensional summary statistic based on an independent sample of size $n_{\cS,k}$ from the $k$th source population, which estimates a functional $\balpha_{\cS,k} = \balpha_{k}(P_{\cS,k})$ of the corresponding population distribution $P_{\cS,k}$.
With some abuse of notation, let $\halpha_{\cT}$ denote an estimator for $\balpha_{\cT} = (\balpha_{1}(P_{\cT})^{\T}, \dots, \balpha_{K}(P_{\cT})^{\T})^{\T}$ based on the target population data, and let $q = \sum_{k=1}^{K} q_{k}$.
Assume $\halpha_{\cT} \sim N(\balpha_{\cT}, \bbSig_{\alpha,\cT})$ and $\halpha_{\cS,k} \sim N(\balpha_{\cS,k}, \bbSig_{\alpha,\cS,k})$ for $k = 1, \dots, K$.
Define $\halpha_{\cS} = (\halpha_{\cS,1}^{\T}, \dots, \halpha_{\cS,K}^{\T})^{\T}$ and $\bbSig_{\alpha,\cS} = \diag\{\bbSig_{\alpha,\cS,1}, \dots, \bbSig_{\alpha,\cS,K}\}$.
Then, the dShrink estimator can be defined analogously to \eqref{eq: dShrink}, and a similar theoretical guarantee can be established.

When $\{\halpha_{\cS,k}\}_{k=1}^{K}$ estimate the same functional, i.e., $\balpha_{1}(\cdot) = \dots = \balpha_{K}(\cdot)$ (for example, marginal regression coefficients), $\halpha_{\cT}$ can be constructed as $\halpha_{\cT} = \boldsymbol{1}_{K} \otimes \halpha_{\cT, 1}$, where $\halpha_{\cT,1}$ estimates $\balpha_{1}(\cdot)$, $\boldsymbol{1}_{K}$ is a $K$-dimensional vector of ones, and $\otimes$ denotes the Kronecker product.
In this case, the variance–covariance matrix $\bbSig_{\alpha,\cT}$ of $\halpha_{\cT}$ is singular; however, this does not pose any issue in defining the dShrink estimator.
After some algebra, one can show that the dShrink estimator based on $K$ source populations is equivalent to the dShrink estimator that employs the inverse-variance–weighted meta-analysis estimator $\halpha_{\cS, \IVW} = (\sum_{k = 1}^{K}\bbSig_{\alpha, \cS, k}^{-1})\sum_{k = 1}^{K}\bbSig_{\alpha, \cS, k}^{-1}\halpha_{\cS, k}$ based on $K$ source populations as external summary statistics.

The potential singularity of $\bbSig_{\alpha,\cT}$ is more consequential in defining the c-dShrink estimator, since $\bbSig_{\alpha,\cT}^{-1}$ in $\bbQ_{\alpha}^{\dag} = \bbSig_{\beta\alpha,\cT} \bbSig_{\alpha,\cT}^{-1}$ is undefined when $\bbSig_{\alpha,\cT}$ is singular.
Let $\bbSig_{\alpha,\cT} = \bbU_{\cT} \mathbf{\Lambda}{\cT} \bbU{\cT}^{\T}$ denote the eigendecomposition of $\bbSig_{\alpha,\cT}$, where $\bbU_{\cT}$ is a $q \times d_{\cT}$ column-orthogonal matrix, $\mathbf{\Lambda}{\cT}$ is a $d{\cT} \times d_{\cT}$ invertible diagonal matrix, and $d_{\cT}$ is the rank of $\bbSig_{\alpha,\cT}$.
Then, the c-dShrink estimator can be defined by replacing $\bbSig_{\alpha,\cT}^{-1}$ in $\bbQ_{\alpha}^{\dag}$ with $\bbU_{\cT}\mathbf{\Lambda}{\cT}^{-1}\bbU{\cT}^{\T}$.
Equivalently, this corresponds to defining the c-dShrink estimator based on $\bbU_{\cT}^{\T}\halpha_{\cT}$ and $\bbU_{\cT}^{\T}\halpha_{\cS}$ instead of $\halpha_{\cT}$ and $\halpha_{\cS}$.
Under this reparameterization, the result in Theorem~\ref{thm: dominance cds} continues to hold.
    \bigskip
    
	\subsection{Confidence intervals based on the dShrink estimator}\label{subsec: reboot}
	In this section, we construct the confidence interval (CI) of $\Psi(\bbeta_{\cT})$ based on $\hbeta_{\rm ds}$, where $\Psi$ is a smooth function. An important example is $\Psi(\bbeta_{\cT}) = \beta_{\cT, j}$ where $\beta_{\cT, j}$ is the $j$-th component of $\bbeta_{\cT}$ for some $j = 1,\dots, p$. We consider CIs based on $\Psi(\hbeta_{\rm ds})$. The main challenge is that $\hbeta_{\rm ds}$ can be a biased estimator for $\bbeta_{\cT}$.
	
	Let $q(\tau; \bbeta_{\cC}, \bdelta_{\cC})$ be the $\tau$-quantile of the distribution of $\Psi(\hbeta_{\rm ds}) - \Psi(\bbeta_{\cT}) = \Psi(\hbeta_{\rm ds}) - \Psi(\bbeta_{\cC} + \bdelta_{\cC})$ for any $0 < \tau < 1$. Then, $[\Psi(\hbeta_{\rm ds}) - q(1 - \tau/2; \bbeta_{\cT}, \bdelta_{\cC}), \Psi(\hbeta_{\rm ds}) - q(\tau/2; \bbeta_{\cT}, \bdelta_{\cC})]$ is a valid $\tau$-CI for $\Psi(\bbeta_{\cT})$. Note that $q(\tau; \bbeta_{\cC}, \bdelta_{\cC})$ can be viewed as a function of $\tau$, $\bbeta_{\cC}$, and $\bdelta_{\cC}$. For any given $\tau^{*}$, $\bbeta^{*}$ and $\bdelta^{*}$, we can approximate $q(\tau^{*}; \bbeta^{*}, \bdelta^{*})$ using the following bootstrap procedure.
	\begin{enumerate}[topsep=0pt,itemsep=-1ex,partopsep=1ex,parsep=1ex]\label{num: bootstrap}
		\item For $b^{*} = 1,\dots, B^{*}$, independently generate
		$\hbeta_{\cC}^{(b^{*})} \sim N(\bbeta^{*}, \bbSig_{\cC}) \ \text{and} \ \hdelta_{\cC}^{(b^{*})} \sim
		N(\bdelta^{*}, \bbSig_{\delta})$;
		\item Compute the $\tau^{*}$-quantile $\widehat{q}(\tau^{*}; \bbeta^{*}, \bdelta^{*})$ of 
		$\left\{\Psi(\hbeta_{\rm ds}^{(b^{*})}) - \Psi(\bbeta^{*} + \bdelta^{*})\right\}_{b^{*} = 1}^{B^{*}}$.
	\end{enumerate}

    For given $\tau^{*}$, $\bbeta^{*}$ and $\bdelta^{*}$, the difference between $\widehat{q}(\tau^{*}; \bbeta^{*}, \bdelta^{*})$ and $q(\tau^{*}; \bbeta^{*}, \bdelta^{*})$ can be neglected if $B^{*}$ is sufficiently large.
    However, $q(\tau^{*}; \bbeta_{\cC}, \bdelta_{\cC})$ depends on the unknown quantities $\bbeta_{\cC}$ and $\bdelta_{\cC}$.
    Although $\sqrt{n_{\cT}}$-consistent estimators $\hbeta_{\cC}$ and $\hdelta_{\cC}$ are available for $\bbeta_{\cC}$ and $\bdelta_{\cC}$, respectively, the estimation error still can have a non-negligible impact on the resulting quantile when plugging in these estimates. To improve the robustness against the error for estimating $\bbeta_{\cC}$ and $\bdelta_{\cC}$, we calculate the quantiles under multiple reasonable candidate parameters and use the most conservative quantile to construct the CI. We impose working priors $N(\bzero, \sigma_{\beta}^{2}\bbI_{p})$ and $N(\bzero, \sigma_{\delta}^{2}\bbQ_{\alpha}\bbQ_{\alpha}^{\T})$ on $\bbeta_{\cC}$ and $\bdelta_{\cC}$, estimate $\sigma_{\beta}^{2}$ and $\sigma_{\delta}^{2}$ by their maximum likelihood estimators $\hat{\sigma}_{\beta}^{2}$ and $\hat{\sigma}_{\delta}^{2}$, and sample from the estimated posterior distributions to obtain candidate parameters. Here, the prior on $\bdelta_{\cC}$ is induced by a normal prior $N(\bzero, \sigma_{\delta}^{2}\bbI_{q})$ on $\balpha_{\cT} - \balpha_{\cS}$. The resulting re-bootstrap procedure is summarized as follows.
    \begin{enumerate}[topsep=0pt,itemsep=-1ex,partopsep=1ex,parsep=1ex]
    	\item For $b = 1,\dots, B$, independently generate
    	\[\hbeta_{\cC}^{(b)} \sim N\left\{\hat{\sigma}_{\beta}^{2}(\bbSig_{\cC} + \hat{\sigma}_{\beta}^{2}\bbI_{p})^{-1}\hbeta_{\cC}, \hat{\sigma}_{\beta}^{2}\bbI_{p} - \hat{\sigma}_{\beta}^{4}(\bbSig_{\cC} + \hat{\sigma}_{\beta}^{2}\bbI_{p})^{-1}\right\}
    	\] 
        and
        $\hdelta_{\cC}^{(b)} = \bbQ_{\alpha}\zeta^{(b)}$
    	with
    	\[
    	\zeta^{(b)} \sim
    	N\left\{\hat{\sigma}_{\delta}^{2}(\bbSig_{\alpha, \cT} + \bbSig_{\alpha, \cS} + \hat{\sigma}_{\delta}^{2}\bbI_{q})^{-1}(\halpha_{\cT} - \halpha_{\cS}), 
    	\hat{\sigma}_{\delta}^{2}\bbI_{q} - \hat{\sigma}_{\delta}^{4}(\bbSig_{\alpha, \cT} + \bbSig_{\alpha, \cS} + \hat{\sigma}_{\delta}^{2}\bbI_{q})^{-1}\right\};
    	\]
    	\item Compute the quantiles $\widehat{q}(\tau / 2; \hbeta_{\cC}^{(b)}, \hdelta_{\cC}^{(b)})$  and $\widehat{q}(1 - \tau / 2;  \hbeta_{\cC}^{(b)}, \hdelta_{\cC}^{(b)})$ using the bootstrap procedure introduced in the last paragraph; 
    	\item Construct the $(1 - \tau)$-CI for $\Psi(\bbeta_{\cT})$ as 
    	\begin{equation}\label{eq: CI ds}
    		\small
    		\nodisplayskips
    		\left[\Psi(\hbeta_{\rm ds}) - \max_{b = 1,\dots, B}\widehat{q}(1 - \tau / 2;  \hbeta_{\cC}^{(b)}, \hdelta_{\cC}^{(b)}), \Psi(\hbeta_{\rm ds}) - \min_{b = 1,\dots, B}\widehat{q}(\tau / 2;  \hbeta_{\cC}^{(b)}, \hdelta_{\cC}^{(b)})\right].
    	\end{equation}
    	    \end{enumerate}
    Confidence intervals based on the c-dShrink estimator can be defined similarly.
    
	The above procedure constructs the CI using the most conservative quantile derived from $B$ candidate parameters. The quantile functions $q(\tau/2; \bbeta^{*}, \bdelta^{*})$ and $q(1 - \tau/2; \bbeta^{*}, \bdelta^{*})$ are continuous functions of $(\bbeta^{*}, \bdelta^{*})$ due to the continuity of the normal distribution and the dShrink estimator with respect to observed data. If $B$ is large, with high probability, there are some candidate parameters that are very close to the true parameters $(\bbeta_{\cC}, \bdelta_{\cC})$ \citep{guo2023causal,guo2025robust} and the resulting quantiles are very close to the true quantiles. Thus, the coverage rate can be ensured by adopting the most conservative quantile given by different candidate parameters. 
	On the other hand, the candidate parameters typically cluster around $(\hbeta_{\cC}, \hdelta_{\cC})$, as they are from the working posterior distributions. Consequently, the quantiles calculated from these parameters are likely to be similar, which prevents the CI from being overly conservative.  We take $B = 10$ in our implementation. The idea to take the most conservative criterion induced by multiple randomly generated candidate parameters has been adopted in \cite{guo2023causal, guo2025robust} to tackle other non-standard inference problems. We leverage the idea here and combine it with a bootstrap procedure to make inferences based on the dShrink estimator. Numerical results confirm the efficacy of this approach in transfer estimation.

	\section{Simulation Studies}\label{sec: simulation}
	\subsection{Non-real data based simulation study}\label{subsec: sim}
	In this section, we evaluate the performance of the estimators discussed in this paper through a numerical example. 
	Consider a linear regression problem. Suppose the observations $\bZ = (Y, \bX)$ in the target population follow
	\[
	\small
	\nodisplayskips
	Y = \beta_{0,\cT} + \bX^{\T}\bbeta_{X, \cT} + \epsilon,
	\]
	under the distribution in the target population $P_{\cT}$,
	where  $\epsilon \sim N(0, 1)$, $\bX \sim N(0, \bbSig_{\cT})$ is a $p$-dimensional covariate vector, and $\bbSig_{\cT}$ is the covariance matrix of $\bX$ under $P_{\cT}$. The target parameter $\bbeta_{\cT} = (\beta_{0,\cT}, \bbeta_{X, \cT}^{\T})^{\T}$ is the regression coefficient vector in $P_{\cT}$. In the source population, under $P_{\cS}$, assume 
	\[
	\small
	\nodisplayskips
	Y = \beta_{0,\cS} + \bX^{\T}\bbeta_{X, \cS} + \epsilon,
	\]
	with  $\epsilon \sim N(0, 1)$ and $\bX \sim N(0, \bbSig_{\cS})$, where $\bbSig_{\cS}$ is the covariance matrix under $P_{\cS}$.
	Let $\bX_{1}$ be the sub-vector of $\bX$ consists of the first $q_{X} = 2p_{X}/3$ components. Suppose $\halpha_{\cS}$ is the regression coefficient of $Y$ on $(1, \bX_{1}^{\T})^{\T}$ based on a sample from the $P_{\cS}$. Then, $p = p_{X} + 1$ and $q = q_{X} + 1$.
	
	The components of $\bbeta_{\cT}$ were generated independently from $N(0, 0.5/p)$. Let $\etab$ be a $p$-dimensional vector whose components were generated independently from ${\rm Bernoulli}(0.5) \times N(\sqrt{1 / p}, 0.5 / p)$ and $\bbeta_{\cS} = \bbeta_{\cT} + t\etab$, where $\bbeta_{\cS} = (\beta_{0,\cS}, \bbeta_{X, \cS}^{\T})^{\T}$ and $0\leq t\leq 1$ is a parameter that characterizes the difference between $P_{\cT}$ and $P_{\cS}$. For $i,j = 1,\dots, p$, let the $i,j$th element of $\bbSig_{\cT}$ be $0.4^{|i - j|}s_{\cT,i}s_{\cT,j}$ where $s_{\cT ,i}^{2}, s_{\cT,j}^{2}$ were independently generated from  ${\rm Gamma}(2, 0.5)$. Suppose $\bbSig_{\cS} = (1 - t)\bbSig_{\cT} + t \bbSig_{X}$, where $\bbSig_{X}$ is a matrix whose $i,j$th element equals to $0.2^{|i - j|}s_{X,i}s_{X,j}$ where $s_{X ,i}^{2}, s_{X,j}^{2}$ were independent generated from ${\rm Gamma}(2, 0.5)$. For any $\bbeta$, the MSE of the regression function is $E_{\cT}[\{(1 \ \bX^{\T})\bbeta - (1 \ \bX^{\T})\bbeta_{\cT}\}^{2}] = (\bbeta - \bbeta_{\cT})^{\T}E[(1 \ \bX^{\T})^{\T}(1 \ \bX^{\T})](\bbeta - \bbeta_{\cT})$. This motives us to take $\bbH$ as an estimate of $E[(1 \ \bX^{\T})^{\T}(1 \ \bX^{\T})]$ in this example. The target population-based estimator $\hbeta_{\cT}$ is the least squares estimator. The expected quadratic error is approximated by the mean of the quadratic error across $5000$ simulation runs. Figure \ref{fig: sim error} shows the errors of different estimators under different settings.  We varied the number of variables $p_{x}$ and $q_{x}$ and the sample sizes $n_{\cT}$ and $n_{\cS}$ in Figure \ref{fig: sim error}.
	
	\begin{figure}[h]
		\centering
		\subfigure[]{
			\includegraphics[scale = 0.25]{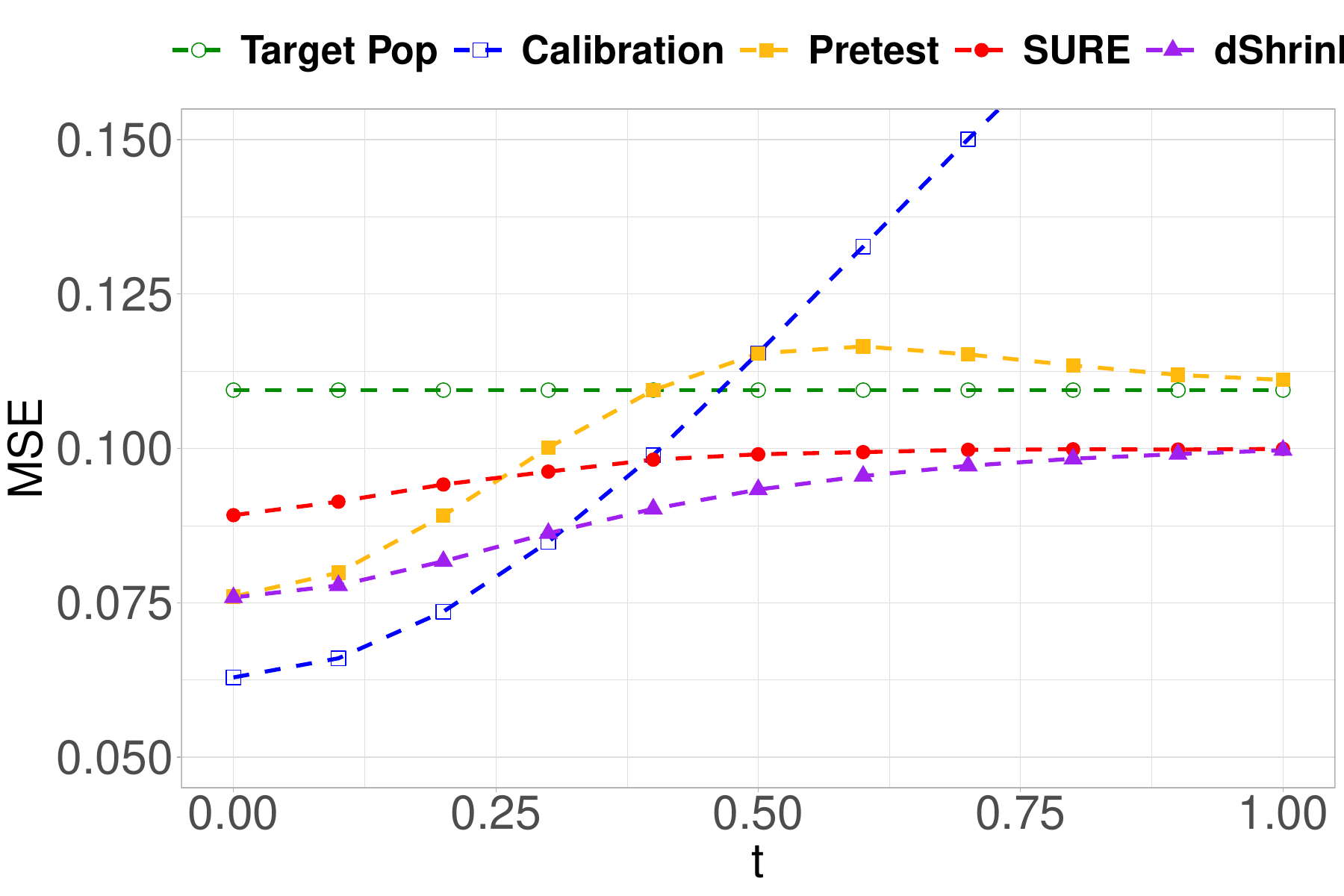}
		}
		\subfigure[]{
			\includegraphics[scale = 0.25]{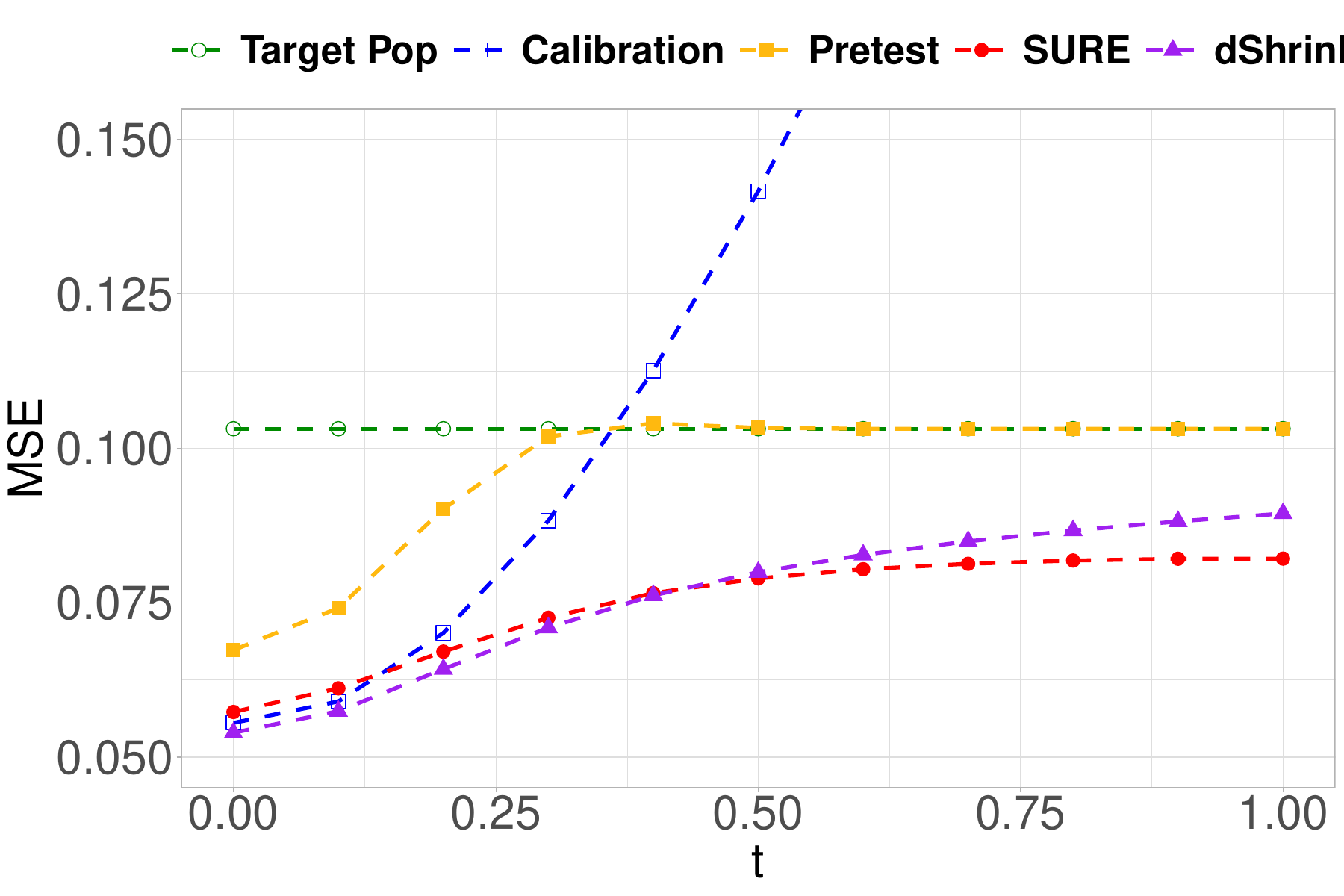}
		}
		\caption{\it MSEs of the regression functions derived from $\hbeta_{\cT}$, $\hbeta_{\cC}$, $\hbeta_{\rm prt}$, $\hbeta_{\SURE}$, and $\hbeta_{\rm ds}$ in simulation studies. (a) $p_{X} = 10$, $q_{X} = 6$, $n_{\cT} = 100$, and $n_{\cS} = 500$; (b)  $p_{X} = 30$, $q_{X} = 20$, $n_{\cT} = 300$, and $n_{\cS} = 1000$.}\label{fig: sim error}
	\end{figure}
	
	 The calibration estimator $\hbeta_{\cC}$ performs well when $t$ is small, i.e., when target and source populations are approximately homogeneous. Its error grows rapidly as $t$ increases, i.e., as the populations become more heterogeneous. The pretest estimator $\hbeta_{\rm prt}$ achieves a smaller error than $\hbeta_{\cC}$ when $t$ is large but has a larger error than $\hbeta_{\cC}$ when $t$ is small. Notably, $\hbeta_{\rm prt}$ has a larger error than $\hbeta_{\cT}$, which confirms the intuitions discussed before Proposition \ref{prop: pretest}. The SURE estimator performs well in the simulation setting considered here. It has a smaller error than the target population-based estimator $\hbeta_{\cT}$ across all $t$'s. However, it still has a larger error than $\hbeta_{\rm ds}$ when $p = 10$ or $t$ is small. The dShrink estimator consistently achieves a smaller error than $\hbeta_{\cT}$ and outperforms its competitors in most scenarios. 
	
	Figure \ref{fig: sim cmpr} compares the dShrink estimator with the constrained maximum likelihood (CML) method \citep{chatterjee2016constrained}, the generalized integration method (GIM) \citep{zhang2020generalized}, the penalized constrained maximum likelihood (PCML) method \citep{zhai2022data}, and the JS type estimator (JS Comb) proposed in \citep{han2024improving} based on $5000$ simulation runs.
	
	\begin{figure}[h]
		\centering
		\subfigure[]{
			\includegraphics[scale = 0.25]{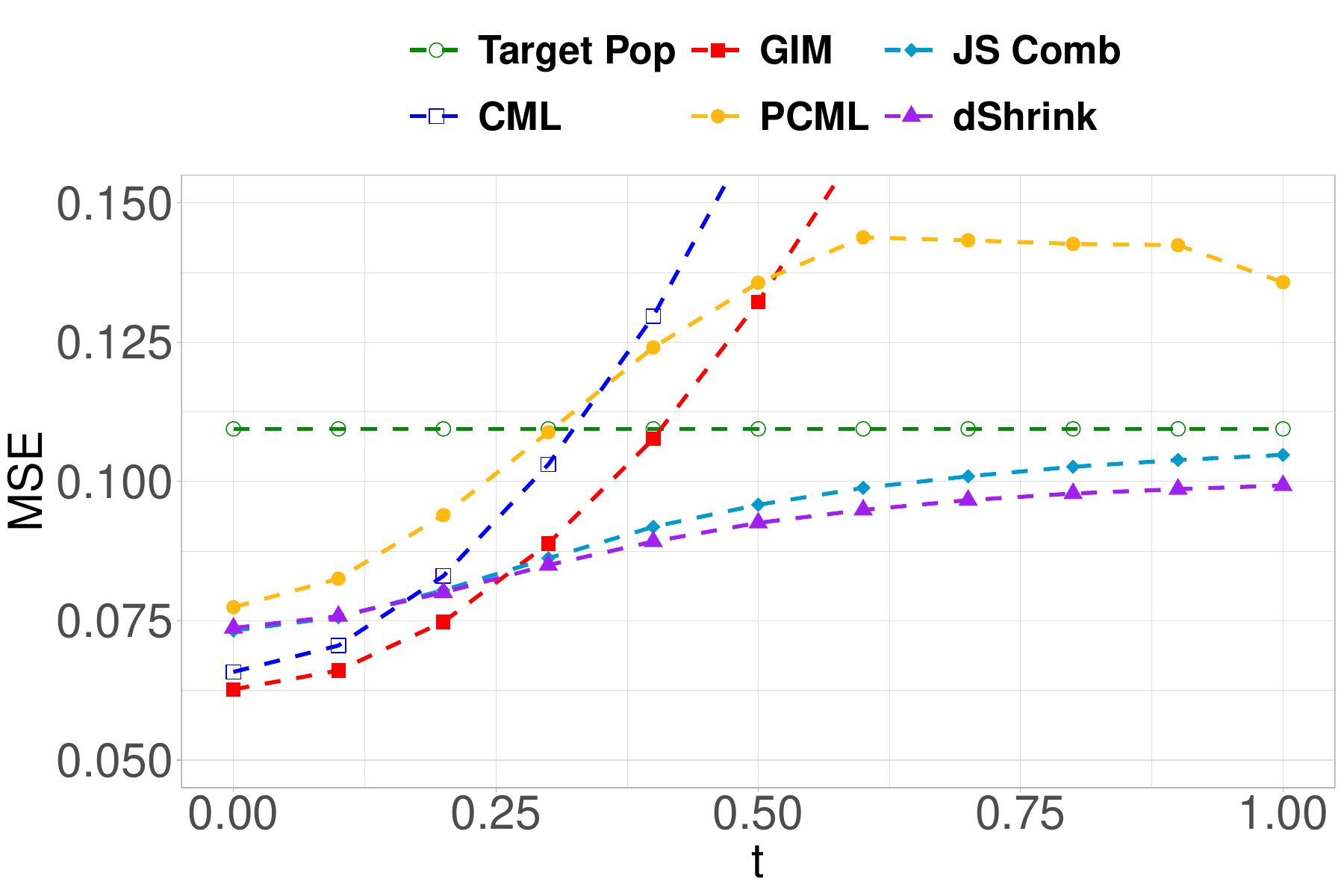}
		}
		\subfigure[]{
			\includegraphics[scale = 0.25]{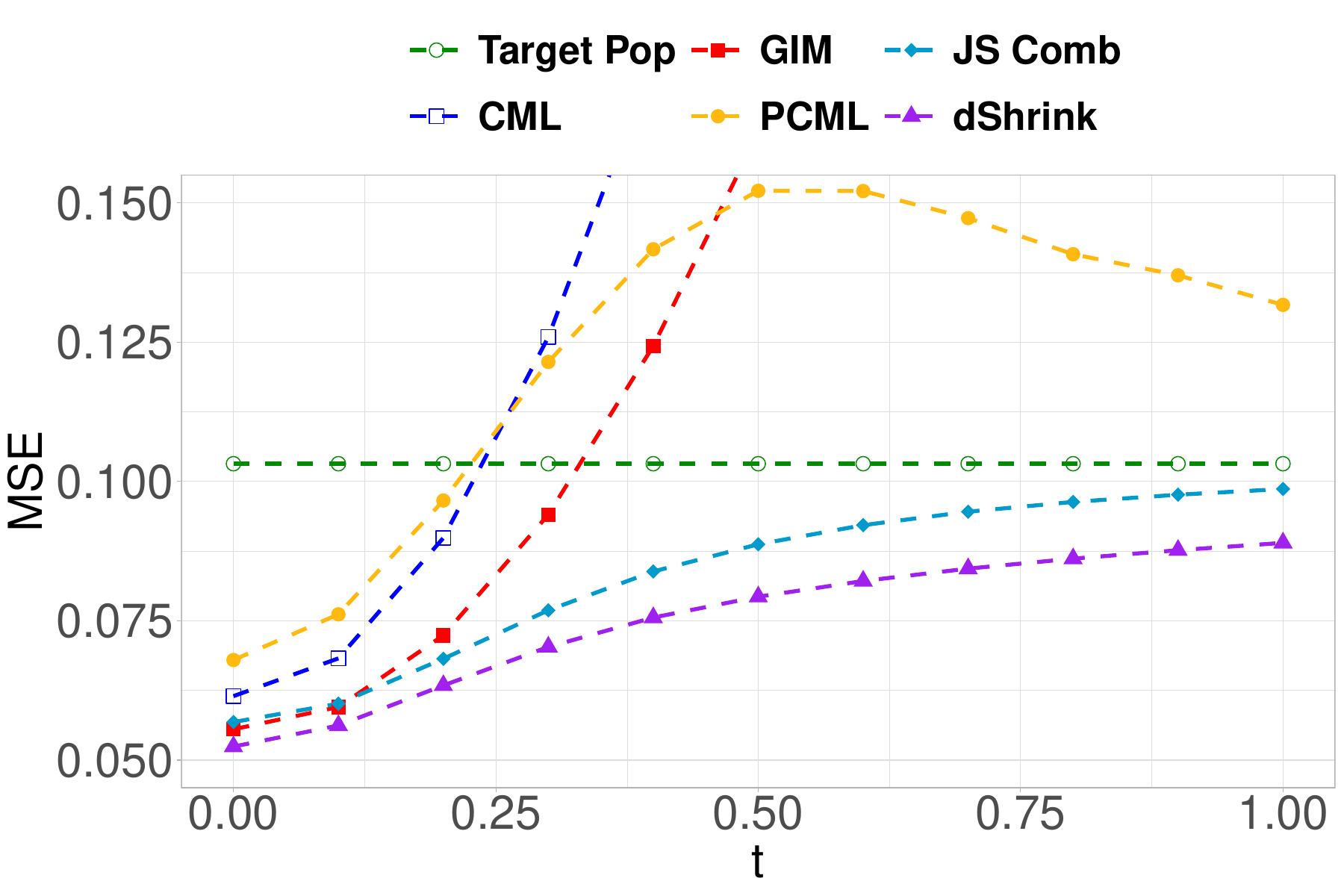}
		}
		\caption{\it MSEs of the regression functions derived from $\hbeta_{\cT}$, $\hbeta_{\rm ds}$, and the existing methods CML, GIM, PCML in simulation studies. (a) $p_{X} = 10$, $q_{X} = 6$, $n_{\cT} = 100$, and $n_{\cS} = 500$; (b)  $p_{X} = 30$, $q_{X} = 20$, $n_{\cT} = 300$, and $n_{\cS} = 1000$.}\label{fig: sim cmpr}
	\end{figure}
	The CML and GIM estimators perform well when $t$ is small but have large errors when $t$ is large. The PCML estimator achieves a smaller error than the CML estimator and the GIM estimator when $t$ is large, while it has a larger error than these two estimators when $t$ is small. The error of the dShrink estimator is smaller than the PCML estimator across all scenarios. The dShrink estimator demonstrates superior performance compared to the CML and GIM estimators when $t$ is not very small.	JS Comb and the dShrink estimator consistently outperform $\hbeta_{\cT}$, supporting our theoretical results that $\hbeta_{\rm ds}$ is safe. Moreover, dShrink achieves a smaller MSE than JS Comb in all cases, confirming the advantages of dShrink discussed in Section~\ref{subsec: JS and data enrich}. In addition, JS Comb \citep{han2024improving} is restricted to linear models, whereas dShrink applies to a broader class of parameter estimation problems. Simulation results in Supplementary Material Section~\ref{app: sim correlated obs} further demonstrate that dShrink performs well in nonlinear models with non-normal, correlated data.
	Simulation results regarding the c-dShrink estimator are provided in Appendix \ref{app: sim m&c}.
	
	Next, we evaluate the performance of the CIs based on $\hbeta_{\rm ds}$ under the simulation setting in Section \ref{subsec: sim}. We set $\bbH$ to be an estimate of $\bbSig_{\beta, \cT}^{-1}$ in the implementation. 
	Figure \ref{fig: sim CI} shows the average coverage rates and average widths over different components of the target population-based standard CIs 
	$[\hat{\beta}_{\cT, j} - \hat{\sigma}_{\cT, j} q_{1 - \tau / 2}, \hat{\beta}_{\cT, j} + \hat{\sigma}_{\cT, j} q_{1 - \tau / 2}]$,
	and the CIs \eqref{eq: CI ds} with $B = 10$, where $\hat{\beta}_{\cT, j}$ is the $j$-th component of $\hbeta_{\cT}$ and  $q_{1 - \tau / 2}$ is the $1 - \tau / 2$ quantile of a standard normal distribution. The CIs \eqref{eq: CI ds} have larger coverage rates and shorter average widths compared to those based on $\hbeta_{\cT}$ in most settings.
	\begin{figure}[h]
		\centering
		\subfigure[]{
			\includegraphics[scale = 0.28]{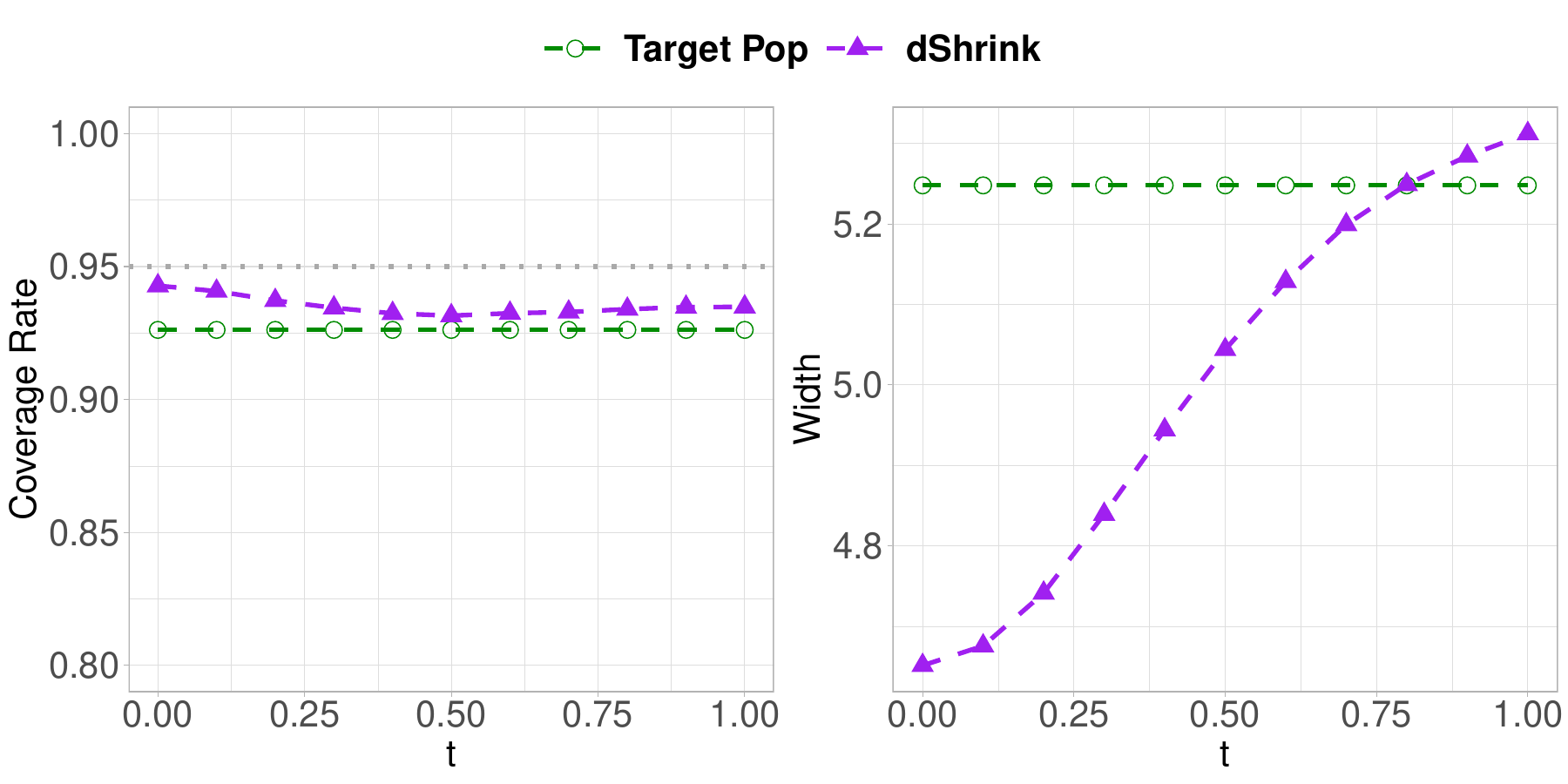}
		}
		\subfigure[]{
			\includegraphics[scale = 0.28]{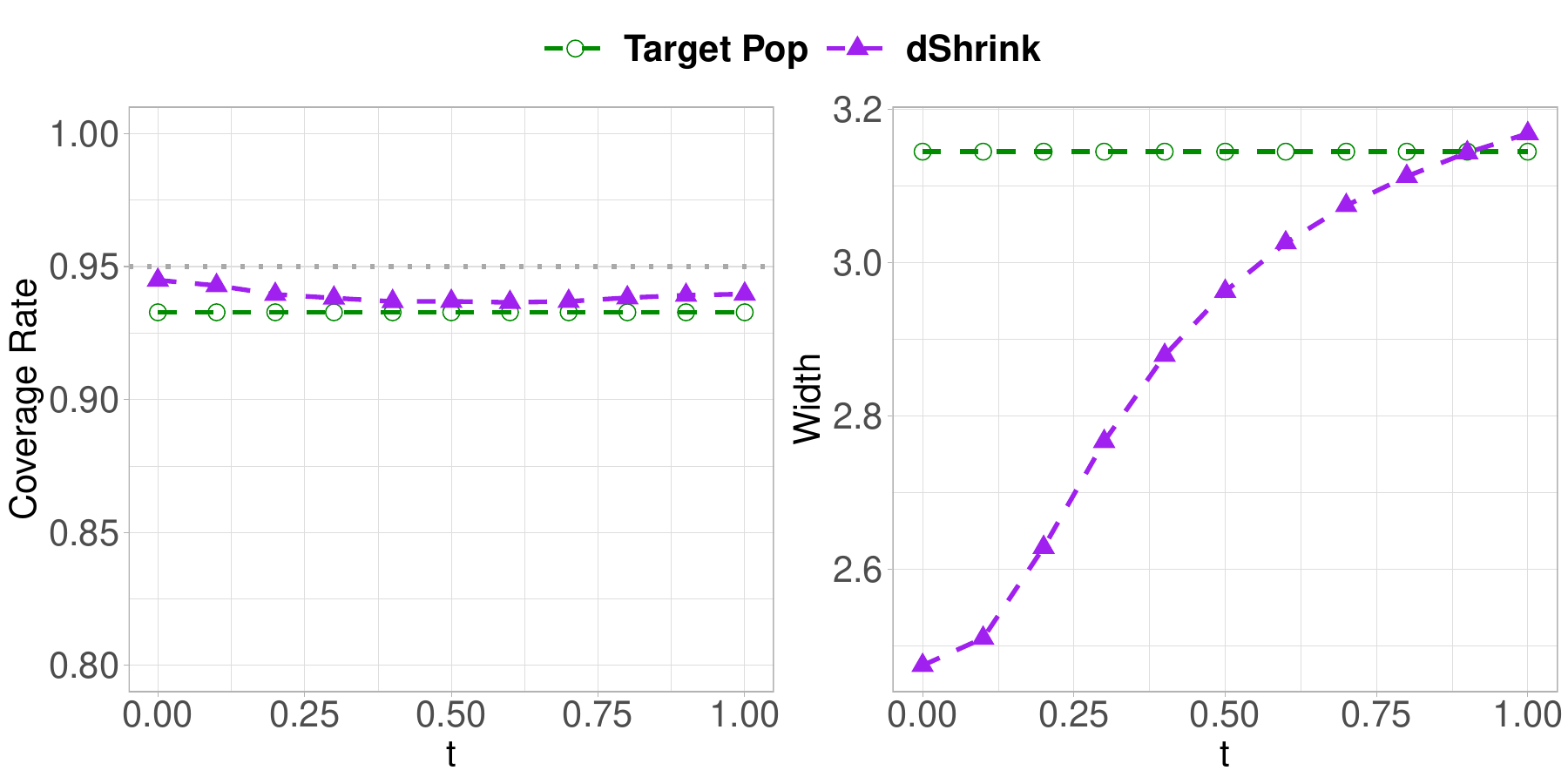}
		}
		\caption{\it Average coverage rates and average widths over different components of the standard CIs based on $\hbeta_{\cT}$ and the CIs \eqref{eq: CI ds} in simulation studies. The average widths are multiplied by ten. (a) $p_{X} = 10$, $q_{X} = 6$, $n_{\cT} = 100$, and $n_{\cS} = 500$; (b)  $p_{X} = 30$, $q_{X} = 20$, $n_{\cT} = 300$, and $n_{\cS} = 1000$.}\label{fig: sim CI}
	\end{figure}

    This section considers a setting in which different sets of variables are observed in the target and source populations, and only summary statistics from the source population are available. Under this scenario, many existing transfer estimation methods—such as data-enriched linear regression \citep{chen2015data}, TransLasso \citep{li2022transfer}, TransGLM \citep{tian2022transfer}, and ISEDI \citep{hector2024turning}—are not directly applicable.
    In Supplementary Material Section~\ref{app: sim same variable}, we also compare the proposed method with the above methods under a setting where the same variables are observed in both populations and individual-level data are available. Even though our method uses only summary statistics, it outperforms the competitors in terms of MSE and coverage rate, and reduces the width of the CIs compared to the target population-based estimator in most cases.
	
	%
	%
 
\subsection{Real data based simulation}\label{subsec: sim infant death}
	In this section, we conduct a real-data-based simulation study based on the 2015 birth cohort 
    data set collected by the U.S. Centers for Disease Control and Prevention, to evaluate the performance of the proposed method in realistic settings. In this simulation, we fit the regression relationship between the infant death $Y$ and various covariates $\bX$. The considered covariates include the mother's age, mother's education, mother's marital status, mother's smoking status, mother's BMI, the length of gestation, the birth weight, plurality, and the sex of the infant. We used the discretized covariates in the original data and converted the categorical variable to dummy variables. This results in a $25$-dimensional covariate vector $\bX$. About $6.6\%$ of observations had missing values and were removed from the analysis. We divided the cohort into subgroups according to the maternal race. The parameter of interest is the regression coefficient in the logistic regression of $Y$ on $\bX$ in the Southeast Asian population, which has the sample size $n_{\cT} \approx 50 {\rm K}$. Moreover, the proportion of infant deaths is very low ($\approx 0.4\%$ in the target population), which makes the effective sample size much smaller than $n_{\cT}$ and necessitates transfer estimation. 
    We use the maximum likelihood estimator for the logistic regression coefficient based on all $25$ covariates from another race as the summary statistics $\halpha_{\cS}$ and apply the dShrink method to borrow information from $\halpha_{\cS}$ to estimate the target parameter. Two source populations, Black and  White, are considered, which have much larger sample sizes $n_{\cS} \approx 2.8 {\rm M}$ and $561 {\rm K}$, respectively. The multi-source dShrink is implemented to incorporate both populations.
    
    We sampled with replacement from both the target and source population data to obtain the simulated data sets and repeated the simulation for $200$ times. To reduce the computational burden while mimicking the real data, we adopt a case-control design to obtain the resampled data set with case numbers matching the original ones and a case-control ratio $1:5$. The target population estimator $\hbeta_{\cT}$ is the maximum likelihood estimator using only the data from Southeast Asian.  
    We then calculated the target population-based estimator and external summary statistics using the resampled data set and applied different methods to estimate the target parameter. We took $\bbH$ to be an estimate of $E_{\cT}[\partial p(\bX^{\T}\bbeta_{\cT})/\partial \bbeta \{\partial p(\bX^{\T}\bbeta_{\cT})/\partial \bbeta\}^{\T}]$ in the implementation of the dShrink method, where $p(\cdot)$ is the logistic function. This choice of $\bbH$ reflects the first-order approximation of the MSE of the regression function. 
    We treated the target population-based estimator obtained from the original data as the true parameter (weights are imposed to account for the case-control design) and calculated the MSEs of the regression function based on different estimators. MSEs (multiplied by $n_{\cT}$) based on $\hbeta_{\cT}$, $\hbeta_{\cC}$, data enriched logistic regression (DELogis) \citep{zheng2025data}, TransGLM, ISEDI, and dShrink are shown in Table \ref{table: infant sim}.  We constructed $0.95$-CI for each component of the target parameter using different methods. The DELogis result is absent because its inference procedure is not available. We took $\bbH$ as an estimate of $\bbSig_{\beta, \cT}^{-1}$ in dShrink when constructing CIs. The average CI widths and coverage are also reported in Table \ref{table: infant sim}. 

    \begin{table}
    \caption{\label{table: infant sim}
    \it  MSE, CI width and coverage of different methods in the simulation based on CDC infant death data. The CI width is reported only when the CI coverage approximately achieves the nominal level ($> 0.9$).}
    \resizebox{\textwidth}{!}{
    \begin{tabular}{l*{6}{c}}
    \toprule
    Method & Target population only & Calibration & DELogis & TransGLM & ISEDI & dShrink \\
    \midrule
    $n_{\cT}\rm MSE$ &
    3.39 & 3.01 & 3.42 & 2.41 & 2.26 & 1.83\\
    CI Width & 1.94 & -- & -- & 2.32 & -- & 1.68\\
    Coverage Rate & 0.93 & 0.25 & -- & 0.92 & 0.85 & 0.92\\
    \bottomrule
    \end{tabular}}
\end{table}
    
Table \ref{table: infant sim} shows that all methods except for the data enriched logistic regression improve the MSE compared to the target population only method, with dShrink making the most significant improvement (more than $45\%$).
The CIs based on the target population method, TransGLM, and dShrink can achieve coverage rates close to the nominal level. Among these methods, dShrink produces CIs with the smallest average width.

    \section{Data Analysis}
    \label{sec: data analysis}
	\subsection{Application to the CDC infant death risk analysis}\label{subsec: infant death}
	In this section, we apply the proposed dShrink method to the analysis of the CDC infant death data set introduced in Section \ref{subsec: sim infant death}, to study the associations between the infant death $Y$ and the covariates $\bX$ described in the same section in the target Southeast Asian population. We use Black and White as the source populations. We exclude the intercept from both $\hbeta_{\cT}$ and $\halpha_{\cS}$ because the association between the $\bY$ and $\bX$ is of primary interest. Table \ref{table: infant death}  presents the point estimates (Est), CIs, and CI widths for the effects of four potential risk factors, i.e., maternal obesity, maternal smoking, multiple births, and male gender of the infant on the risk of infant death. For the dShrink CI estimates, we took $\bbH$ to be an estimate of $\bbSig_{\beta, \cT}^{-1}$ when implementing $\hbeta_{\rm ds}$. We compare the results using the target population, Southeast Asian, only with the dShrink results using the target Southeast Asian population and the source populations, Black and White. We didn't compute the results of TransGLM and ISEDI which also produced (approximately) valid CIs in the simulation in Section \ref{subsec: sim infant death}, because their packages do not seem scalable for the large data set considered here (R console reported more than $50$ TB memory is required when implementing these packages).
    
    Table \ref{table: infant death} shows that the CIs of the dShrink estimates $\hbeta_{\rm ds}$ are much shorter than those calculated based on the target Southeast Asian population $\hbeta_{\cT}$. The target population-based estimator identifies no significant risk factor. Notably, the dShrink method identifies smoking as a significant risk factor in Southeast Asian. This finding aligns with the literature that maternal smoking can increase the risk of infant death \citep{pineles2016systematic}.
     The average width of the CIs across regression coefficients of all the covariates considered in Section \ref{subsec: sim infant death} based on the target population estimator $\hbeta_{\cT}$ and dShrink estimator $\hbeta_{\rm ds}$ are $1.50$ and $1.09$, respectively. This demonstrates the effectiveness of the dShrink method in narrowing CIs.

\begin{table}
    \caption{\label{table: infant death}
    \it  Analysis results of the CDC infant death data to study the risk factors of the risk of infant death in the target Asian population. The point estimates and CIs of the regression coefficients of different risk factors based on the target Asian population only and the dShrink method that uses the target population Asian and the source populations Black and White.}
    \resizebox{\textwidth}{!}{
    \begin{tabular}{l*{6}{c}}
    \toprule
    \multirow{2}{*}{Risk factor} & \multicolumn{3}{c}{Target population only} & \multicolumn{3}{c}{dShrink}\\
    \multirow{2}{*}{ } & \multicolumn{3}{c}{Southeast Asian} & \multicolumn{3}{c}{(Source populations: Black and White)} \\
    \midrule
    \specialrule{0em}{2pt}{2pt}
    & EST & CI & CI Width & EST & CI & CI Width\\
    \midrule
    Obesity& 0.07 & $[-0.39, 0.54]$ & 0.93 & 0.13 & $[-0.18, 0.47]$ & 0.66\\
    Smoking& 0.93 & $[-0.04, 1.89]$ & 1.94 & 0.72 & $[0.09, 1.41]$ & 1.32 \\
    Multiple births& -0.03 & $[-0.43, 0.37]$ & 0.81 & 0.08 & $[-0.24, 0.32]$ & 0.56\\
    Male infant& 0.08 & $[-0.21, 0.37]$ & 0.58 & 0.14 & $[-0.11, 0.29]$ & 0.41\\
    \bottomrule
    \end{tabular}}
\end{table}


    \bigskip
    
	\subsection{Application to estimating treatment effects   of rectal indomethacin}\label{subsec: ITE}
	Rectal indomethacin has been demonstrated to be effective in preventing post-ERCP pancreatitis \citep{elmunzer2020skyrocketing}.  However, the
	price of rectal indomethacin has approximately 20-fold increase since 2012 \citep{elmunzer2020skyrocketing}.
	Considering the cost implications, it is desirable to administer the preventive drug only when it is truly necessary. This requires evaluating the treatment effect of rectal indomethacin for different patients at a more granular level. Available data include individual-level data from a randomized trial in the US with sample size $n_{\cT} = 602$ \citep{elmunzer2012randomized} and summary statistics from a randomized trial in Hungarian with sample size $n_{\cS} = 686$ \citep{dobronte2014rectal} which contains the incidence rate of each subgroup. 
    
    Suppose $P_{\cT}$ is the US population studied by the trial in \cite{elmunzer2012randomized} and $P_{\cS}$ is the Hungarian population studied in \cite{dobronte2014rectal}. The covariates in the randomized trial in \cite{elmunzer2012randomized} consist of age, gender, and $26$ risk factors for post-ERCP pancreatitis. To avoid the collinearity issue, we excluded covariates whose multiple correlation with other covariates is larger than $0.95$. Let $\bX$ be the random vector consisting of the intercept term ``$1$" and the $21$ remaining covariates, $T \in \{0, 1\}$ the treatment indicator which indicates the use of rectal indomethacin, $Y \in \{0, 1\}$ the indicator of post-ERCP pancreatitis. Then $P_{\cT}(T=1) = 1/2$ in the randomized trial. Let $Y_{1}$ and $Y_{0}$ be the potential outcomes under treatment and control, respectively. In addition, let $\halpha_{\cS}$ be the vector of incidence rates of each subgroup in \cite{dobronte2014rectal}. Neither the source population covariance matrix $\bbSig_{\alpha,\cS}$ nor its estimate is available in this example. Thus, we employed the conditional dShrink (c-dShrink) method discussed in Section \ref{subsec: cds},  to incorporate the summary statistics $\halpha_{\cS}$. 

    We first analyze the average treatment effects for the target population and average treatment effects at the subgroup level by stratifying the target population into six age-based subgroups.
    The average treatment effect of rectal indomethacin on post-ERCP pancreatitis estimated using the target population-based difference-in-means estimator is $-0.07$ with a CI of $[-0.13, -0.02]$. The c-dShrink estimator that incorporates $\halpha_{\cS}$ produces an estimate $-0.08$ with a CI of $[-0.13, -0.01]$. These results of the two estimators largely align with each other, and both of them reproduce the findings that rectal indomethacin is effective in preventing post-ERCP pancreatitis \citep{elmunzer2020skyrocketing}. 
     Table \ref{table: post-ERCP}  shows the Est, CIs, and CI widths for the subgroup-level average treatment effects obtained from both the target population-based estimator and the c-dShrink estimator. The results indicate that the CIs derived from the c-dShrink estimator are consistently narrower than those from the target population-based estimator. In addition, the target population only estimator detects no subgroup-level average treatment effect, probably due to the sample size limitation. On the contrary, the c-dShrink estimator suggests a significant effect of rectal indomethacin in the $25$--$35$ age group. 
     
     \begin{table}
    \caption{\label{table: post-ERCP}
    \it  Analysis results of the age subgroup-level average treatment effects of rectal indomethacin on post-ERCP pancreatitis}
    \begin{tabular}{l*{6}{c}}
    \toprule
    Age group & \multicolumn{3}{c}{Target population only} & \multicolumn{3}{c}{c-dShrink}\\
    \midrule
    \specialrule{0em}{2pt}{2pt}
    & EST & CI & Width & EST & CI & Width \\
    \midrule
    $\leq 25$ & -0.13 & $[-0.33, 0.07]$ & 0.40 & -0.15 & $[-0.31, 0.05]$ & 0.36\\
    $(25, 35]$ & -0.09 & $[-0.22, 0.04]$ & 0.26 & -0.15 & $[-0.25, -0.02]$ & 0.23\\
    $(35, 45]$ & -0.05 & $[-0.16, 0.05]$ & 0.21 & -0.05 & $[-0.14, 0.04]$ & 0.18\\
    $(45, 55]$ & -0.05 & $[-0.14, 0.05]$ & 0.19 & -0.04 & $[-0.13, 0.04]$ & 0.17\\
    $(55, 65]$ & -0.14 & $[-0.29, 0.01]$ & 0.30 & -0.06 & $[-0.20, 0.07]$ & 0.27\\
    $> 65$ & 0.00 & $[-0.14, 0.15]$ & 0.29 & 0.01 & $[-0.12, 0.13]$ & 0.25\\
    \bottomrule
    \end{tabular}
\end{table}
    Next, we study the individualized treatment effect $\tau(\bX) = E[Y_{1} - Y_{0}\mid \bX]$, a quantity that is particularly relevant in personalized medicine.
    Define the pseudo-outcome $Y^{\dag} = 2TY - 2(1 - T)Y$. Then $E[Y^{\dag}\mid \bX] = \tau(\bX)$. Suppose $\tau(\bX) = m(\bX^{\T}\bbeta_{\cT})$,
	where $m(v) = 2e^{v}/(1 + e^{v}) - 1$ is a shifted-scaled logistic function to the range of $[-1, 1]$.
	The parameter $\bbeta_{\cT}$ can be estimated by the least squares regression of $Y^{\dag}$ on $\bX$. 
    The resulting estimator is denoted by $\hbeta_{\cT}$. 
	We calculate the point estimates and their CIs for the individualized treatment effects of individual patients based on their covariate values in the data set of \cite{elmunzer2012randomized} using data from the target US population $\hbeta_{\cT}$ and the c-dShrink estimate $\hbeta_{\rm ds}^{\dag}$ that uses the data from both the target US population and the Hungarian population, respectively.  Specifically, the target population-based and the transferred estimate for individuals with covariate $\bx$ is calculated as $m(\bx^{\T}\hbeta_{\cT})$ and $m(\bx^{\T}\hbeta_{\rm ds}^{\dag})$, respectively. The matrix $\bbH$ was taken as the inverse of the estimated covariance matrix of $\hbeta_{\cT}$. The average widths of the resulting CIs based on $\hbeta_{\cT}$ and $\hbeta_{\rm ds}^{\dag}$ overall patients are $0.48$ and $0.27$, respectively. Figure \ref{fig: ITE} plots the point estimates and CIs of the individualized treatment effects for covariate values of $10$ randomly selected patients, which demonstrates the effectiveness of the c-dShrink method in reducing the width of the CIs for each individual.
	
	\begin{figure}
		\centering
		\includegraphics[scale = 0.4]{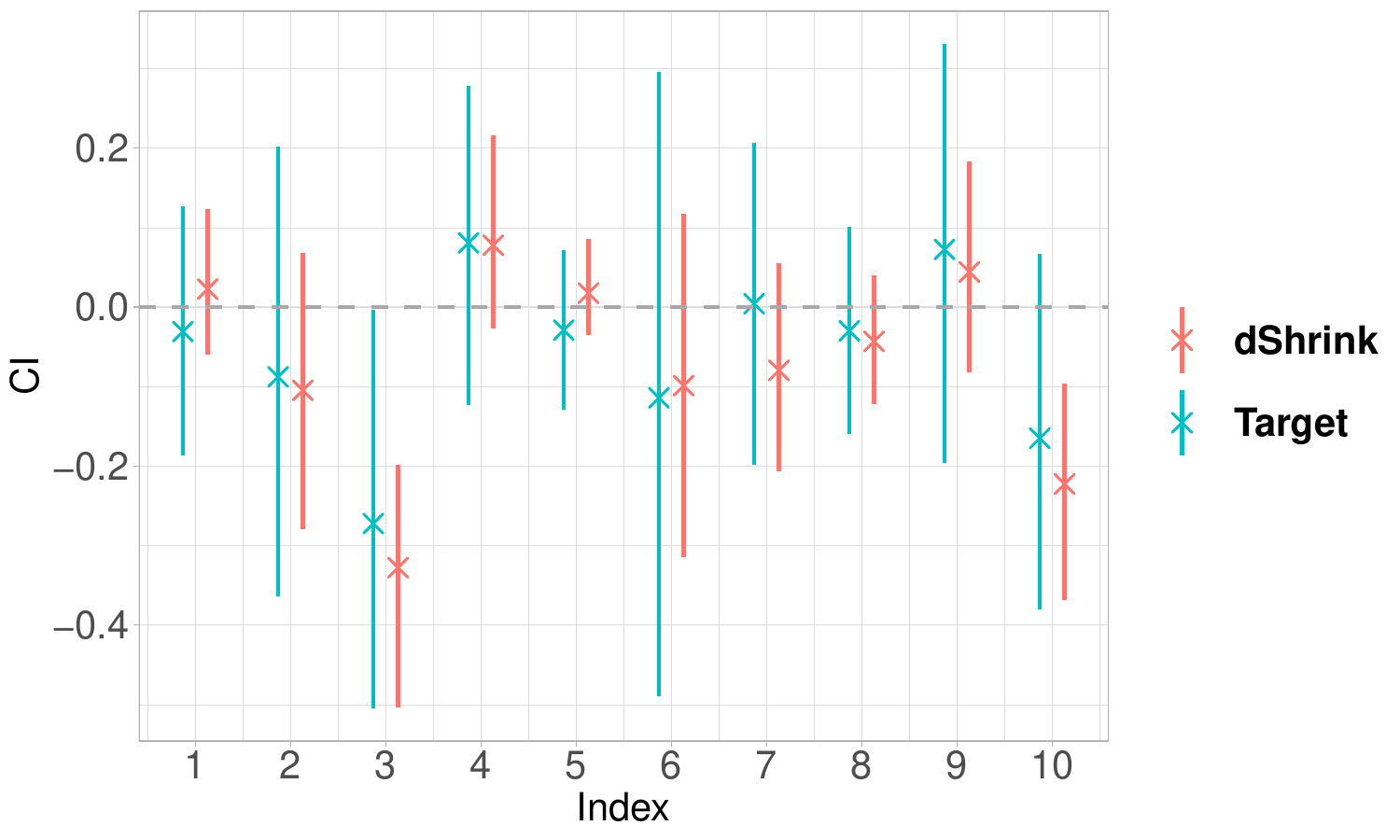}
		\caption{\it Confidence intervals of the individualized treatment effects for the covariate values of $10$ randomly selected patients from the randomized trial \citep{elmunzer2012randomized} constructed using the target population-based estimator and the dShrink estimator}\label{fig: ITE}
	\end{figure}

	\section{Discussion}\label{sec: disscuss}
	We propose in this paper the transfer estimation method dShrink that offers efficient and robust estimators of target population parameters in a closed form using
    summary statistics from both a target population and an external source population, while accounting for population and parameter heterogeneity between the two populations. The dShrink estimator presents a suite of desirable features.
	First, it is robust to population heterogeneity. The dShrink estimator effectively handles the difference between target and source populations with few assumptions on the form or magnitude of heterogeneity. Second, it can accommodate parameter heterogeneity. The dShrink estimator is capable of integrating external summary statistics even when there is a mismatch with the target parameters. Third, the dShrink method does not include any user-specified tuning parameter, which simplifies its usage and makes it accessible for a wide range of applications. Fourth, the dShrink estimator can incorporate additional information, such as study-level information and side information, to improve the estimator's efficiency. Fifth, the dShrink estimator can achieve guaranteed improvement. It is {\it safe} in the sense that it consistently outperforms the target population-based estimator $\hbeta_{\cT}$ in terms of reducing the MSE, ensuring the benefit of positive knowledge transfer.  Furthermore, our extensive simulation studies show that the dShrink estimator also outperforms the existing methods.
	The robust, flexible, and efficient nature of the dShrink estimator makes it stand out as a promising tool for transfer estimation, offering significant advantages in statistical analysis to robustly and efficiently estimate target parameters.

    The dShrink method makes the (asymptotic) normality assumption to estimate the $\cE$-error and the separable surrogate objective function. In high-dimensional problems, penalized methods are often adopted to construct estimators that are not asymptotically normal. The high-dimensional federated learning method proposed by \cite{li2023targeting} enables transfer learning without sharing individual-level data, which can prevent negative transfer in the sense that information sharing does not degrade the convergence rate. However, the guarantee on convergence rate does not imply a reduction in asymptotic estimation error; for example, many estimators share the same $1/\sqrt{n}$ convergence rate in scalar parameter estimation problems while differing substantially in efficiency. A future direction of considerable interest involves extending the dShrink method to high-dimensional settings, ensuring a provable reduction in estimation error.
        
	\bibliographystyle{plainnat}
	\bibliography{dShrink}

	\appendix
	
	\renewcommand{\thesection}{S\arabic{section}}
	\renewcommand{\thecondition}{S\arabic{condition}}
	\renewcommand{\thetheorem}{S\arabic{theorem}}
	\renewcommand{\thetable}{S\arabic{table}}
	\renewcommand{\thefigure}{S\arabic{figure}}
	\renewcommand{\theexample}{S\arabic{example}}
	\renewcommand{\theproposition}{S\arabic{proposition}}
	\renewcommand{\thelemma}{S\arabic{lemma}}
	\renewcommand{\theequation}{S\arabic{equation}}
	
	\newpage
	
	\centerline{\Large\bf Supplementary Material}
	
	\vspace{20pt}
	\section{Component-wise pretest estimator}\label{app: comp test}
	Let $\sA = \{j: (\halpha_{\cT, j} - \halpha_{\cS, j})^{2} / \var(\halpha_{\cT, j} - \halpha_{\cS,j}) \leq c_{j}\}$, where $c_{j}\geq 0$ is the threshold for the $j$th component and $\halpha_{\cT,j}$ and $\halpha_{\cS,j}$ is the $j$th components of $\halpha_{\cT}$ and $\halpha_{\cS}$, respectively. Then the component-wise pretest estimator $\hbeta_{\rm cprt}$ is defined as the estimator that calibrate $\hbeta_{\cT}$ using only the components in $\sA$. Specifically, let $\halpha_{\cS}^{\sA}$ and $\halpha_{\cT}^{\sA}$ be the sub-vectors of $\halpha_{\cS}$ and $\halpha_{\cT}$ that contain only the components in $\sA$. Let $\bbSig_{\alpha,\cS}^{\sA}$, $\bbSig_{\alpha, \cT}^{\sA}$ and $\bbSig_{\beta\alpha, \cT}^{\sA}$ be the corresponding variance and covariance matrices. Then the component-wise pretest estimator is given by
	\[
	\hbeta_{\rm cprt} = \hbeta_{\cT} - \bbQ_{\alpha}^{\sA}(\halpha_{\cT}^{\sA} - \halpha_{\cS}^{\sA}),
	\]
	where $\bbQ_{\alpha}^{\sA} = \bbSig_{\beta\alpha, \cT}^{\sA}(\bbSig_{\alpha, \cT}^{\sA} + \bbSig_{\alpha,\cS}^{\sA})^{-1}$. We have the following result for $\hbeta_{\rm cprt}$.
	\begin{proposition}\label{prop: pretest comp}
		Suppose $\bbH = \bbI_{p}$. In the example stated in \eqref{eq: counter eg} in the main text, for any $n_{\cT}$, $n_{\cS}$, $\boldsymbol{\mu}$, and any $c_{1}, \dots, c_{p}$ that are not all equal to zero, there is some $\bbias$ such that $\cE(\hbeta_{\rm cprt}, \bbeta_{\cT}) > \cE(\hbeta_{\cT}, \bbeta_{\cT})$.
	\end{proposition}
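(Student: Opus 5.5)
In the example \eqref{eq: counter eg} with $\bbH=\bbI_{p}$, all covariances are diagonal. We have $\bbSig_{\beta,\cT}=2n_{\cT}^{-1}\bbI_p$, $\bbSig_{\alpha,\cS}=2n_{\cS}^{-1}\bbI_p$ and $\bbSig_{\beta\alpha,\cT}=\bbSig_{\beta,\cT}$. Because of this diagonal structure, the component-wise pretest estimator decouples across coordinates. For coordinate $j$, write $d_j=\hat\beta_{\cT,j}-\hat\beta_{\cS,j}\sim N(2b_j,\,v)$ with $v=2n_{\cT}^{-1}+2n_{\cS}^{-1}$, and $w=n_{\cT}^{-1}/(n_{\cT}^{-1}+n_{\cS}^{-1})=\pi_{\cS}$. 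Then
\[
\hat\beta_{{\rm cprt},j}=\hat\beta_{\cT,j}-w\,1\{d_j^2/v\le c_j\}\,d_j .
\]
The plan is to compute the total error as a sum of per-coordinate errors, $\cE(\hbeta_{\rm cprt},\bbeta_{\cT})=\sum_j E(\hat\beta_{{\rm cprt},j}-\beta_{\cT,j})^2$, and compare it to $\cE(\hbeta_{\cT},\bbeta_{\cT})=\sum_j 2n_{\cT}^{-1}$. Coordinates with $c_j=0$ contribute exactly the baseline error, since the event $d_j^2/v\le 0$ has probability zero. It therefore suffices to fix one $j$ with $c_j>0$, set $b_k=0$ for every $k\ne j$ whose coordinate is not already harmless, and show that some choice of $b_j$ makes the $j$-th error strictly exceed $2n_{\cT}^{-1}$ while every other coordinate contributes at most the baseline. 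The cleanest way to arrange this is as follows: for $k\ne j$, take $b_k$ very large, so the indicator vanishes up to an exponentially small term. Alternatively, and more simply, reuse whatever $b_k$ makes coordinate $k$ also worse. In fact the same one-dimensional construction applied to every coordinate with $c_k>0$ works and avoids any cancellation issues.

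The core is a one-dimensional computation. Decompose $\hat\beta_{\cT,j}-\beta_{\cT,j}=(\hat\beta_{\cC,j}-\beta_{\cC,j})+(\hat\delta_{j}-\delta_{j})$, where $\hat\delta_j=w d_j$ is independent of $\hat\beta_{\cC,j}$ and $\delta_j=2wb_j$. The error of $\hat\beta_{\cC,j}+(1-1\{A_j\})\hat\delta_j$, with $A_j=\{d_j^2/v\le c_j\}$, then equals $\var(\hat\beta_{\cC,j})+E[(\hat\delta_j1\{A_j^c\}-\delta_j)^2]$, using independence and $E\hat\beta_{\cC,j}=\beta_{\cT,j}-\delta_j$. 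The excess over the baseline is therefore
\[
\Delta(b_j)=E\bigl[(\hat\delta_j 1\{A_j^c\}-\delta_j)^2\bigr]-\var(\hat\delta_j)
=w^2\Bigl\{E\bigl[(d_j-2b_j)^2 1\{A_j^c\}\bigr]+4b_j^2P(A_j)-v\Bigr\}
=w^2\Bigl\{4b_j^2P(A_j)-E\bigl[(d_j-2b_j)^21\{A_j\}\bigr]\Bigr\}.
\]
In standardized form, with $Z\sim N(\theta,1)$, $\theta=2b_j/\sqrt v$ and $A=\{|Z|\le\sqrt{c_j}\}$, this reads
\[
\Delta=w^2 v\bigl\{\theta^2P(A)-E[(Z-\theta)^21_A]\bigr\}.
\]

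The main step is to exhibit $\theta$ making this positive. On $A$ we have $|Z|\le\sqrt{c_j}$, so $(Z-\theta)^2\le(\theta+\sqrt{c_j})^2$, but that bound is too crude. Instead, write $(Z-\theta)^2=\theta^2-2\theta Z+Z^2$, which gives
\[
\theta^2P(A)-E[(Z-\theta)^21_A]=E[(2\theta Z-Z^2)1_A].
\]
Choose $\theta=\sqrt{c_j}$, for instance. On $A$, with $Z\in[0,\sqrt{c_j}]$, we have $2\theta Z-Z^2=Z(2\sqrt{c_j}-Z)\ge Z^2\ge0$. On $Z\in[-\sqrt{c_j},0)$ the integrand is negative, but the density $\phi(z-\theta)$ there is smaller than at $-z$. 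Pairing $z$ with $-z$ gives
\[
\int_0^{\sqrt{c_j}}\bigl[(2\theta z-z^2)\phi(z-\theta)+(-2\theta z-z^2)\phi(z+\theta)\bigr]dz .
\]
This is the step I expect to need care. Rather than fix $\theta=\sqrt{c_j}$, I would take $\theta$ large relative to $\sqrt{c_j}$, where the mass near $+z$ dominates exponentially: $\phi(z+\theta)/\phi(z-\theta)=e^{-2\theta z}$. The bracket is then at least $\phi(z-\theta)\{2\theta z-z^2-(2\theta z+z^2)e^{-2\theta z}\}$. For $\theta\ge\sqrt{c_j}$ and $z\in(0,\sqrt{c_j}]$ the leading term satisfies $2\theta z-z^2\ge\theta z$, and $(2\theta z+z^2)e^{-2\theta z}\le 3\theta z e^{-2\theta z}$. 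So the integrand is positive once $e^{-2\theta z}<1/3$, which fails only for small $z$. For those $z$, use $2\theta z(1-e^{-2\theta z})-z^2(1+e^{-2\theta z})\ge 2\theta z\cdot\theta z-2z^2>0$ when $\theta$ exceeds a constant, using $1-e^{-x}\ge x/2$ for small $x$. Hence for all sufficiently large $\theta$ the integrand is strictly positive on $(0,\sqrt{c_j}]$ and $\Delta>0$. Translating back, $b_j=\theta\sqrt v/2$, and since $c_j>0$ the event $A$ has positive probability. Summing over coordinates then gives $\cE(\hbeta_{\rm cprt},\bbeta_{\cT})>\cE(\hbeta_{\cT},\bbeta_{\cT})$ for this $\bbias$, for any $\boldsymbol\mu$ (which never enters $\Delta$).
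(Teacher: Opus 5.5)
Your proof is correct, but you prove the key one-dimensional inequality by a different route from the paper. The paper computes the truncated moments $E[1\{\hat{\delta}_{\cC,j}^{2}/v_{\delta}^{2}>c_{j}\}\hat{\delta}_{\cC,j}]$ and $E[1\{\hat{\delta}_{\cC,j}^{2}/v_{\delta}^{2}>c_{j}\}\hat{\delta}_{\cC,j}^{2}]$ through Bock's identities (Lemma~\ref{lem: threshold norm} with $p=1$). This gives the per-coordinate excess as $(\delta_{\cC,j}^{2}-v_{\delta}^{2})P(\chi^{2}_{3}\le c_{j})+\delta_{\cC,j}^{2}\{P(\chi^{2}_{5}>c_{j})-P(\chi^{2}_{3}>c_{j})\}$, with noncentral chi-squares (noncentrality suppressed). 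The second term is nonnegative by stochastic ordering in the degrees of freedom, so the excess is positive whenever $\delta_{\cC,j}^{2}>v_{\delta}^{2}$, i.e.\ $\theta_j^{2}>1$ in your notation. The paper makes this choice in every coordinate.

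Your closed form $w^{2}v\,E[(2\theta Z-Z^{2})1\{|Z|\le\sqrt{c_j}\}]$ is algebraically the same quantity. You prove it is positive by reflecting $z\mapsto -z$, using the likelihood ratio $e^{-2\theta z}$, and showing the paired integrand is pointwise positive. That argument is elementary and self-contained, with no noncentral $\chi^{2}$ identities needed. The price is a cruder threshold. Pointwise positivity at $z=\sqrt{c_j}$ forces $2\theta\tanh(\theta\sqrt{c_j})>\sqrt{c_j}$, so your condition $\theta>\max\{1,\sqrt{c_j}\}$ has to grow with $c_j$. The paper's bound shows $\theta>1$ already suffices for every $c_j$. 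For an existence statement either is enough.

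One point in your assembly step needs cleaning up. Setting $b_k=0$ for $k\ne j$ does not work. For $c_k>0$ that coordinate has excess $-w^{2}v\,E[Z^{2}1\{|Z|\le\sqrt{c_k}\}]<0$, which can cancel the gain in coordinate $j$. You need the other coordinates to contribute at least the baseline, not at most. Similarly, ``the indicator vanishes up to an exponentially small term'' says nothing about the sign of that term.

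The version you finally adopt is the correct one and matches the paper's choice. It uses your one-dimensional choice of $\theta_k$ in every coordinate with $c_k>0$, while coordinates with $c_k=0$ contribute exactly $2n_{\cT}^{-1}$. Keep only that version.
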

	
	\section{Model-assisted dShrink}\label{app: mds}
	
	The dShrink estimator shrinks $\hbeta_{\cC}$ and $\hdelta_{\cC}$ towards zero. It can be enhanced by incorporating 
	side information \citep{banerjee2020adaptive, luo2023empirical} to guide the shrinkage towards more informed targets.  The 
	side
	information, such as genetic variant functional annotations \citep{li2020dynamic} or disease ontology, 
	can provide model-based predictions for $\hbeta_{\cC}$ and $\hdelta_{\cC}$. Recall that $\bbQ_{\alpha}(\halpha_{\cT} - \halpha_{\cS})$ depend on $\halpha_{\cT}$ and $\halpha_{\cS}$. Suppose the 
	side information for $\beta_{\cT, j}$ can be summarized as a vector $\boldsymbol{a}_{\beta j}$ for $j = 1,\dots, p$, where $\beta_{\cT, j}$ is the $j$th component of $\bbeta_{\cT}$. Similarly, let $\boldsymbol{a}_{d j}$ be the 
	side information for the difference $\alpha_{\cT, j} - \alpha_{\cS, j}$, where $\alpha_{\cT, j}$ and $\alpha_{\cS, j}$ are the $j$th component of $\balpha_{\cT}$ and $\balpha_{\cS}$, respectively, for $j = 1,\dots, q$. We assemble these vectors into matrices 
	\[
	\bbA_{\beta} =
	\begin{pmatrix}
		\boldsymbol{a}_{\beta1}^{\T}\\
		\vdots\\
		\boldsymbol{a}_{\beta p}^{\T}
	\end{pmatrix}, \ 
	\bbA_{d} = 
	\begin{pmatrix}
		\boldsymbol{a}_{d 1}^{\T}\\
		\vdots\\
		\boldsymbol{a}_{d q}^{\T}
	\end{pmatrix}, \ \text{and} \ 
	\bbA_{\cC} = (\bbA_{\beta} \enspace \bbQ_{\alpha}\bbA_{d}),
	\] 
	and apply linear regression models by regressing $\hbeta_{\cC}$ and $\halpha_{\cT} - \halpha_{\cS}$ on their respective 
	side information variable
	matrices $\bbA_{\cC}$ and $\bbA_{d}$ respectively.  Recall that $\bbSig_{\cC} = \var(\hbeta_{\cC})$.
	Define $\bbSig_{d} = \var(\halpha_{\cT} - \halpha_{\cS}) = \bbSig_{\alpha, \cT} + \bbSig_{\alpha,\cS}$,  the weighted least-square hat matrices of these two models $\bbPi_{\cC} = \bbA_{\cC}(\bbA_{\cC}^{\T}\bbSig_{\cC}^{-1}\bbA_{\cC})^{-1}\bbA_{\cC}^{\T}\bbSig_{\cC}^{-1}$,  and $\bbPi_{d} = \bbA_{d}(\bbA_{d}^{\T}\bbSig_{d}^{-1}\bbA_{d})^{-1}\bbA_{d}^{\T}\bbSig_{d}^{-1}$ respectively.
	Then, the weighted least-square model-based estimators for $\bbeta_{\cC}$ and $\bdelta_{\cC}$ are
	$\hbeta_{\rm m} = \bbPi_{\cC}\hbeta_{\cC}$ and $\hdelta_{\rm m} = \bbQ_{\alpha}\bbPi_{d}(\halpha_{\cT} - \halpha_{\cS})$, 
	respectively. 
	
	We refine the dShrink estimator by shrinking $\hbeta_{\cC}$ and $\hdelta_{\cC}$ towards the side-information model-based estimators $\hbeta_{\rm m}$ and $\hdelta_{\rm m}$ instead of zero.   Following the same rationale behind $\widehat{\lambda}_{{\rm s}, 1}$ and $\widehat{\lambda}_{{\rm s}, 2}$ with $\hbeta_{\cC}$ and $\hdelta_{\cC}$ replaced by $\hbeta_{\cC} - \hbeta_{\rm m}$ and $\hdelta_{\cC} - \hdelta_{\rm m}$, we define the shrinkage factors 
	\[
	\begin{aligned}
		\widehat{\lambda}_{{\rm ms}, 1} & = \left\{1 -  \frac{c_{\beta, {\rm m}}}{(\hbeta_{\cC} - \hbeta_{\rm m})^{\T}\bbH(\hbeta_{\cC} - \hbeta_{\rm m})}\right\}_{+},\\ 
		\widehat{\lambda}_{{\rm ms}, 2} & = \left\{1 - \frac{c_{\delta, {\rm m}}}{(\hdelta_{\cC} - \hdelta_{\rm m})^{\T}\bbH(\hdelta_{\cC} - \hdelta_{\rm m})}\right\}_{+},
	\end{aligned}
	\]	
	where 
	\[
	\begin{aligned}
		c_{\beta, {\rm m}} & = (\tr\{\bbH\bbPi_{\cC}\bbSig_{\cC}\bbPi_{\cC}^{\T}\} - 2\sigma_{\rm max}(\bbH\bbPi_{\cC}\bbSig_{\cC}\bbPi_{\cC}^{\T}))_{+}, \\
		c_{\delta, {\rm m}} & = (\tr\{\bbH\bbQ_{\alpha}\bbPi_{d}\bbSig_{d}\bbPi_{d}^{\T}\bbQ_{\alpha}^{\T}\} - 2\sigma_{\rm max}(\bbH\bbQ_{\alpha}\bbPi_{d}\bbSig_{d}\bbPi_{d}^{\T}\bbQ_{\alpha}^{\T}))_{+}.
	\end{aligned}
	\]		
	
	The model-assisted dShrink (m-dShrink) estimator is
	\[
	\hbeta_{\rm mds} = 
	(1 - \widehat{\lambda}_{{\rm ms}, 1})\hbeta_{\rm m} + \widehat{\lambda}_{{\rm ms}, 1}\hbeta_{\cC} + (1-\widehat{\lambda}_{{\rm ms}, 2})\hdelta_{\rm m} + \widehat{\lambda}_{{\rm ms}, 2}\hdelta_{\cC}.
	\]
	Notice that the first two terms in $\hbeta_{\rm mds}$ 
	shrink $\hbeta_{\cC}$ towards the side information model-based estimator $\hbeta_{\rm m}$. The last two terms in $\hbeta_{\rm mds}$ can be interpreted similarly. Thus, the m-dShrink estimator make use of the  side information by shrinking $\hbeta_{\cC}$ and $\hdelta_{\cC}$ towards the side information model-based estimators. Theorem 2 shows that m-dShrink estimator $\hbeta_{\rm mds}$ can also achieve safe transfer estimation.  The derivation of $\hbeta_{\rm mds}$ and the proof of Theorem 2 are given in Supplemental Material Section \ref{app: proof mds}.
	\begin{theorem}\label{thm: dominance mds}
		Suppose $\halpha_{\cS}$, $\halpha_{\cT}$, and $\hbeta_{\cT}$ are normally distributed as in \eqref{eq: normality}. Then, $\cE(\hbeta_{\rm mds},\bbeta_{\cT}) \leq \cE(\hbeta_{\cT}, \bbeta_{\cT})$ for any $\balpha_{\cS}$, $\balpha_{\cT}$, and  $\bbeta_{\cT}$. If $c_{\beta, {\rm m}} > 0$ or $c_{\delta, {\rm m}} > 0$, then $\cE(\hbeta_{\rm mds},\bbeta_{\cT}) < \cE(\hbeta_{\cT}, \bbeta_{\cT})$ for any $\balpha_{\cS}$, $\balpha_{\cT}$, and  $\bbeta_{\cT}$.
	\end{theorem}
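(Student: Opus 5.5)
The plan is to decompose $\hbeta_{\rm mds}$ into two independent pieces that each compete with the corresponding piece of $\hbeta_{\cT} = \hbeta_{\cC} + \hdelta_{\cC}$, exactly as in the proof of Theorem \ref{thm: dominance ds}. Write
\[
\hbeta_{\rm mds} - \bbeta_{\cT} = \{\hbeta_{\rm m} + \widehat{\lambda}_{{\rm ms},1}(\hbeta_{\cC} - \hbeta_{\rm m}) - \bbeta_{\cC}\} + \{\hdelta_{\rm m} + \widehat{\lambda}_{{\rm ms},2}(\hdelta_{\cC} - \hdelta_{\rm m}) - \bdelta_{\cC}\}.
\]
The first bracket is a function of $\hbeta_{\cC}$ only and the second of $\halpha_{\cT} - \halpha_{\cS}$ only (since $\hdelta_{\cC} = \bbQ_{\alpha}(\halpha_{\cT} - \halpha_{\cS})$). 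Under \eqref{eq: normality}, $\hbeta_{\cC}$ and $\halpha_{\cT} - \halpha_{\cS}$ are jointly normal and uncorrelated, hence independent. So the $\cE$-error splits as the sum of the two $\cE$-errors plus a cross term equal to $2 E(\text{first})^{\T}\bbH E(\text{second})$. This cross term is not automatically zero, because each piece may be biased. I therefore work with the exact $\cE$-error of the sum rather than assuming unbiasedness, just as the dShrink proof must have done. Presumably that proof controls the cross term through the comparison described next.

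For each piece, the strategy is to compare it with the unshrunk piece ($\hbeta_{\cC}$, resp.\ $\hdelta_{\cC}$) by splitting $\hbeta_{\cC} = \hbeta_{\rm m} + (\hbeta_{\cC} - \hbeta_{\rm m})$.

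\textbf{First piece.} Here $\hbeta_{\rm m} = \bbPi_{\cC}\hbeta_{\cC}$, and $\bbPi_{\cC}$ is the $\bbSig_{\cC}^{-1}$-orthogonal projection. Hence $\hbeta_{\rm m}$ and $(\bbI - \bbPi_{\cC})\hbeta_{\cC}$ are uncorrelated Gaussians and therefore independent. The shrinkage factor $\widehat{\lambda}_{{\rm ms},1}$ depends only on $\bZ = (\bbI-\bbPi_{\cC})\hbeta_{\cC}$, so the piece becomes $\bbPi_{\cC}\hbeta_{\cC} + \widehat{\lambda}_{{\rm ms},1}\bZ$. This reduces the problem to a positive-part James--Stein estimator for $\bZ \sim N(\boldsymbol{\theta}, \bbSig_Z)$, shrinking toward $\bzero$, in the $\bbH$-loss. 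The $\hbeta_{\rm m}$ part is untouched and independent, so the only cross term in this comparison is $E\{(\bbPi_{\cC}\hbeta_{\cC} - \bbPi_{\cC}\bbeta_{\cC})^{\T}\bbH(\cdots)\} = 0$ by independence and zero mean.

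\textbf{Covariance mismatch.} The constant $c_{\beta,{\rm m}}$ is stated with $\bbPi_{\cC}\bbSig_{\cC}\bbPi_{\cC}^{\T}$ rather than $\var(\bZ) = (\bbI-\bbPi_{\cC})\bbSig_{\cC}(\bbI-\bbPi_{\cC})^{\T}$. I would first check whether the intended matrix is $\var(\hbeta_{\cC} - \hbeta_{\rm m})$. If it is not, I would redo the Stein computation with whatever constant appears. The key inequality is the general (Bock-type) result: for $\bZ \sim N(\boldsymbol{\theta}, \bbSig)$, the estimator $\{1 - c/(\bZ^{\T}\bbH\bZ)\}_{+}\bZ$ dominates $\bZ$ in $\bbH$-loss whenever $0 \le c \le 2(\tr\{\bbH\bbSig\} - 2\sigma_{\max}(\bbH\bbSig))$. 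The choice $c = (\tr\{\bbH\bbSig\} - 2\sigma_{\max}(\bbH\bbSig))_{+}$ lies in this range. To verify it, I apply Stein's identity to $g(\bZ) = -c\bZ/(\bZ^{\T}\bbH\bZ)$ and compute the divergence term $\tr\{\bbH\bbSig\nabla g\}$. I then bound $\bZ^{\T}\bbH\bbSig\bbH\bZ \le \sigma_{\max}(\bbH\bbSig)\,\bZ^{\T}\bbH\bZ$. The positive part only further reduces risk, by the standard Baranchik argument. Dominance is strict when $c > 0$. I expect this to be the same lemma used for Theorem \ref{thm: dominance ds}, which I would invoke directly.

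\textbf{Second piece and the main obstacle.} The second piece is handled identically, with $\bbPi_d$ acting on $\halpha_{\cT} - \halpha_{\cS}$ and then mapped through $\bbQ_{\alpha}$. The main obstacle is here: $\hdelta_{\cC} - \hdelta_{\rm m} = \bbQ_{\alpha}(\bbI - \bbPi_d)(\halpha_{\cT} - \halpha_{\cS})$ is a possibly degenerate Gaussian in $\mathbb{R}^p$, and the required independence from $\hdelta_{\rm m}$ holds only at the level of $\halpha_{\cT} - \halpha_{\cS}$. I would verify that it transfers, since measurable functions of independent vectors are independent, and that Stein's identity still applies on the support.

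\textbf{Assembling.} Finally, I sum the two risk reductions. The cross term between the two main pieces must be handled with care. I would first try to show that the difference of $\cE$-errors equals the sum of the per-piece differences plus $2E(\Delta_1)^{\T}\bbH E(\Delta_2)$-type terms. If those terms do not vanish, I would instead fall back on the structure used for Theorem \ref{thm: dominance ds}, namely conditioning on one component and applying the one-piece dominance to the other, noting that conditioning does not affect the other's Stein inequality, so the risk improves piece by piece.
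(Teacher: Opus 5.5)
Your independence facts and your per-piece Stein computation are fine. \textbf{The gap is in the assembly step.} Write $\bZ_1=\hbeta_{\cC}-\hbeta_{\rm m}$, $\bZ_2=\hdelta_{\cC}-\hdelta_{\rm m}$, $Q_i=\|\bZ_i\|_{\bbH}^{2}$, $c_1=c_{\beta,{\rm m}}$, $c_2=c_{\delta,{\rm m}}$. Let $\boldsymbol{s}_i=\min\{1,c_i/Q_i\}\bZ_i$ be the amounts removed by shrinkage. Your decomposition gives exactly
\[
\cE(\hbeta_{\rm mds},\bbeta_{\cT})-\cE(\hbeta_{\cT},\bbeta_{\cT})=\Delta_1+\Delta_2+2E(\boldsymbol{s}_1)^{\T}\bbH E(\boldsymbol{s}_2),
\]
with $\Delta_i=\cE(\bZ_i-\boldsymbol{s}_i,E\bZ_i)-\cE(\bZ_i,E\bZ_i)$.

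The cross term has no sign; it is positive when the two shrinkage biases are $\bbH$-aligned. So per-piece dominance $\Delta_i\le 0$ from Bock's range $0\le c\le 2(\tr\{\bbH\bbSig\}-2\sigma_{\rm max}(\bbH\bbSig))$ does not finish the proof. Your conditioning fallback fails too. Fix a value $\boldsymbol{d}$ of the shrunk second piece's error. Replacing $\hbeta_{\cC}$ by its shrunk version then changes the conditional error by $\Delta_1-2E(\boldsymbol{s}_1)^{\T}\bbH\boldsymbol{d}$. Averaging over $\boldsymbol{d}$, whose mean is $-E\boldsymbol{s}_2$, returns the same cross term.

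\textbf{The missing idea} is the quantitative Stein bound used in the proof of Theorem \ref{thm: dominance ds}. It holds because $c_i$ sits at the midpoint $\tr\{\bbH\bbSig_i\}-2\sigma_{\rm max}(\bbH\bbSig_i)$ of Bock's range, where $\bbSig_i=\var(\bZ_i)$:
\[
\Delta_i\le E[\min\{Q_i,c_i\}(\min\{Q_i,c_i\}-2c_i)/Q_i]\le -E[\min\{Q_i,c_i\}^{2}/Q_i]=-E\|\boldsymbol{s}_i\|_{\bbH}^{2}.
\]
In words, each piece's risk drops by at least the second moment of its own shrinkage. Your divergence computation, done region by region for the positive part, gives the first inequality; the second uses $\min\{Q_i,c_i\}\le c_i$. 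Combined with
\[
2E(\boldsymbol{s}_1)^{\T}\bbH E(\boldsymbol{s}_2)=2E(\boldsymbol{s}_1^{\T}\bbH\boldsymbol{s}_2)\le E\|\boldsymbol{s}_1\|_{\bbH}^{2}+E\|\boldsymbol{s}_2\|_{\bbH}^{2},
\]
this absorbs the cross term, with strict inequality when some $c_i>0$. The paper organizes this equivalently. It peels off the untouched block $(\hbeta_{\rm m},\hdelta_{\rm m})$, which is unbiased for its mean and independent of the residuals. It then applies the Theorem \ref{thm: dominance ds} argument verbatim to the independent residual pair $(\bZ_1,\bZ_2)$.

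\textbf{Your suspicion about the constants is right} and should be acted on, not deferred. We have $\var(\hbeta_{\cC}-\hbeta_{\rm m})=(\bbI_p-\bbPi_{\cC})\bbSig_{\cC}(\bbI_p-\bbPi_{\cC})^{\T}=\bbSig_{\cC}-\bbPi_{\cC}\bbSig_{\cC}$. The printed $c_{\beta,{\rm m}}$ instead uses $\bbPi_{\cC}\bbSig_{\cC}\bbPi_{\cC}^{\T}=\var(\hbeta_{\rm m})$, and likewise for $c_{\delta,{\rm m}}$. The paper's proof (``apply the same arguments as in Theorem 1'') tacitly needs the residual covariances. Redoing the Stein computation with the printed constants cannot succeed. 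For example, take $\bbH=\bbI_p$, $\bbSig_{\cC}\propto\bbI_p$, $\bbPi_{\cC}$ of rank $p-1\ge 3$, and the $\delta$ side unshrunk (e.g.\ $\bbPi_d$ of rank one, so the printed $c_{\delta,{\rm m}}=0$). Then the printed $c_{\beta,{\rm m}}>0$ while $\bZ_1$ is one-dimensional. Positive-part shrinkage of a one-dimensional normal mean increases risk when $\|(\bbI_p-\bbPi_{\cC})\bbeta_{\cC}\|$ is large. So prove the statement with $c_{\beta,{\rm m}},c_{\delta,{\rm m}}$ built from $\var(\bZ_1)$ and $\var(\bZ_2)$.

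By contrast, the degeneracy of $\bZ_2$, which you single out as the main obstacle, is harmless. $\hdelta_{\cC}$ is already degenerate when $q<p$ in Theorem \ref{thm: dominance ds}, and the same argument goes through.
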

	Note that the m-dShrink method constructs model-based estimates but does not assume the side-information based working models are correct. The m-dShrink estimator incorporate 
	side 
	information through the linear model while retaining the robustness to model misspecification and heterogeneity between populations. In addition, it can achieve substantial efficiency gain if the regression models of $\bbeta_{\cC}$ and $\bdelta_{\cC}$ on 
	side information variables  are approximately correct.
	
	\section{Proofs and additional theoretical results}\label{app: proofs}
	\subsection{Proofs of Proposition \ref{prop: pretest} and Proposition \ref{prop: pretest comp}}\label{app: proof prop}
	Before the proof of the propositions, we first prove a useful lemma. For any positive integer $k$ and $a > 0$, we use $\chi_{a, k}^{2}$ to denote a random variable that follows the noncentral chi-square distribution with $k$ degrees of freedom and non-centrality parameter $a$. For a positive definite $p\times p$ matrix $\bbH$, any $p$-dimensional vectors $\boldsymbol{v}_{1}$ and  $\boldsymbol{v}_{2}$, define the inner product $\langle \boldsymbol{v}_{1}, \boldsymbol{v}_{2} \rangle_{\bbH} =  \boldsymbol{v}_{1}^{\T}\bbH\boldsymbol{v}_{2}$, norm $\|\boldsymbol{v}_{1}\|_{\bbH} = \sqrt{\langle \boldsymbol{v}_{1}, \boldsymbol{v}_{1} \rangle_{\bbH}}$. For two sequences of positive numbers $\{c_{1n}\}_{n=1}^{\infty}$ and $\{c_{2n}\}_{n=1}^{\infty}$, we say $c_{1n} \asymp c_{2n}$ or $c_{1n} = \Theta(c_{2n})$ if $C^{-1}c_{1n} \leq c_{2n} \leq C c_{1n}$ for some constant $C > 1$.
	
	\begin{lemma}\label{lem: threshold norm}
		Let $\widehat{\bxi} \sim N(\bxi, \bbI_{p})$ be a $p$-dimensional normal vector. Then, for any positive constant $c$ and $p \times p$ positive definite matrix $\bbH$, 
		\[
		E\left[1\{\|\widehat{\bxi}\|^{2} > c\}\widehat{\bxi}\right] =  P(\chi_{\|\bxi\|^{2}/2, p + 2}^{2} > c)\bxi,
		\]
		and
		\[
		E\left[1\{\|\widehat{\bxi}\|^{2} > c\}\|\widehat{\bxi}\|_{\bbH}^{2}\right] =  P(\chi_{\|\bxi\|^{2}/2, p + 2}^{2} > c)\tr\{\bbH\} + P(\chi_{\|\bxi\|^{2}/2, p + 4}^{2} > c)\|\bxi\|_{\bbH}^{2}.
		\]
	\end{lemma}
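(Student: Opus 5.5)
The plan is to prove both identities with one generating-function argument: differentiate a Gaussian expectation with respect to the mean $\bxi$, once directly and once through the Poisson-mixture representation of the noncentral chi-square. Here $\chi^{2}_{a,k}$ is read in the convention the statement implicitly uses: a Poisson mixture of central $\chi^{2}_{k+2J}$ with $J\sim{\rm Poisson}(a)$. Under this convention $\|\widehat{\bxi}\|^{2}\sim\chi^{2}_{\|\bxi\|^{2}/2,\,p}$.

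Fix $h(u)=1\{u>c\}$ and write $\lambda=\|\bxi\|^{2}$. Define
\[
F_{k}(\lambda)=P(\chi^{2}_{\lambda/2,\,k}>c)=\sum_{j\ge0}e^{-\lambda/2}\frac{(\lambda/2)^{j}}{j!}\,P(\chi^{2}_{k+2j}>c),
\]
and set $F=F_{p}$, so that $F(\lambda)=E[h(\|\widehat{\bxi}\|^{2})]$. Differentiating the Poisson weights term by term shifts the index. This gives $F_{k}'(\lambda)=\tfrac12\{F_{k+2}(\lambda)-F_{k}(\lambda)\}$, and applying it twice gives
\[
F''(\lambda)=\tfrac14\{F_{p+4}-2F_{p+2}+F_{p}\}.
\]

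Next, view $G(\bxi)=E[h(\|\widehat{\bxi}\|^{2})]$ as a function of $\bxi$ and differentiate under the integral against the $N(\bxi,\bbI_{p})$ density. Because $h$ is bounded and the Gaussian density and its derivatives have integrable envelopes locally uniformly in $\bxi$, dominated convergence justifies
\[
\nabla G=E[h\,(\widehat{\bxi}-\bxi)],\qquad \nabla^{2}G=E\big[h\,\{(\widehat{\bxi}-\bxi)(\widehat{\bxi}-\bxi)^{\T}-\bbI_{p}\}\big].
\]
The chain rule through $\lambda=\|\bxi\|^{2}$ gives the same quantities in a second form:
\[
\nabla G=2F'\bxi,\qquad \nabla^{2}G=2F'\bbI_{p}+4F''\bxi\bxi^{\T}.
\]

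Equating the two forms of $\nabla G$ gives $E[h\widehat{\bxi}]=\bxi F+\bxi(F_{p+2}-F)=F_{p+2}\bxi$, which is the first identity. For the second-moment identity, expand
\[
E[h\widehat{\bxi}\widehat{\bxi}^{\T}]=\nabla^{2}G+F\bbI_{p}+\bxi E[h\widehat{\bxi}]^{\T}+E[h\widehat{\bxi}]\bxi^{\T}-F\bxi\bxi^{\T}.
\]
Substitute the chain-rule form of $\nabla^{2}G$ and the first identity. The $F$ and $F_{p+2}$ terms cancel, leaving
\[
E[h\widehat{\bxi}\widehat{\bxi}^{\T}]=F_{p+2}\bbI_{p}+F_{p+4}\bxi\bxi^{\T}.
\]
Since $\|\widehat{\bxi}\|_{\bbH}^{2}=\tr\{\bbH\widehat{\bxi}\widehat{\bxi}^{\T}\}$, multiplying by $\bbH$ and taking the trace yields the second claim.

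The main obstacle is bookkeeping rather than conceptual. One must justify differentiating the Poisson series term by term; this is fine because the series has summable, bounded coefficients, so it is an entire function of $\lambda$. One must also keep the noncentrality convention consistent, so that each derivative shifts the degrees of freedom by exactly $2$.
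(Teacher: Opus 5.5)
Your proof is correct, but it takes a different route from the paper.

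The paper derives nothing itself. It quotes Theorems 1 and 2 of \cite{bock1973statistical} for the truncated moments $E[1\{\|\widehat{\bxi}\|^{2}\le c\}\widehat{\bxi}]$ and $E[1\{\|\widehat{\bxi}\|^{2}\le c\}\|\widehat{\bxi}\|_{\bbH}^{2}]$. It then subtracts these from $E[\widehat{\bxi}]=\bxi$ and $E\|\widehat{\bxi}\|_{\bbH}^{2}=\tr\{\bbH\}+\|\bxi\|_{\bbH}^{2}$.

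You instead reprove those cited identities from scratch. You compute the gradient and Hessian of $G(\bxi)=E[h(\|\widehat{\bxi}\|^{2})]$ in two ways: by differentiating the Gaussian density in its mean, and by the chain rule through the Poisson-mixture recursion $F_{k}'=\tfrac12(F_{k+2}-F_{k})$. I checked the recursion, the two Hessian expressions, and the cancellation giving $E[h\widehat{\bxi}\widehat{\bxi}^{\T}]=F_{p+2}\bbI_{p}+F_{p+4}\bxi\bxi^{\T}$; all are right. Your dominated-convergence and term-by-term differentiation remarks are adequate.

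The paper's route is two lines but outsources all the content. Yours is self-contained and works directly with the event $\{\|\widehat{\bxi}\|^{2}>c\}$ rather than its complement. It gives the full matrix identity for any bounded measurable $h$, so it recovers the cited theorems in their general form, and it shows positive definiteness of $\bbH$ is never used.

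You also make the noncentrality convention explicit, which the paper leaves implicit. Reading $\chi^{2}_{a,k}$ as a Poisson$(a)$ mixture of central $\chi^{2}_{k+2J}$ is exactly what makes the stated formula with $a=\|\bxi\|^{2}/2$ correct. Under the more common convention, where the noncentrality parameter would be $\|\bxi\|^{2}$ itself, the lemma as written would be off by a factor of two.
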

	\begin{proof}
		According to Theorem 1 in \cite{bock1973statistical} and the continuity of the normal distribution, we have
		\[
		E\left[1\{\|\widehat{\bxi}\|^{2} \leq c\}\widehat{\bxi}\right] =  P(\chi_{\|\bxi\|^{2}/2, p + 2}^{2} \leq c)\bxi
		\]
		and hence
		\[
		\begin{aligned}
			E\left[1\{\|\widehat{\bxi}\|^{2} > c\}\widehat{\bxi}\right] 
			& = E[\widehat{\bxi}] - E\left[1\{\|\widehat{\bxi}\|^{2} \leq c\}\widehat{\bxi}\right]\\
			& =  P(\chi_{\|\bxi\|^{2}/2, p + 2}^{2} > c)\bxi.
		\end{aligned}
		\]
		Similarly, according to Theorem 2 in \cite{bock1973statistical}, we have
		\[
		E\left[1\{\|\widehat{\bxi}\|^{2} \leq c\}\|\widehat{\bxi}\|_{\bbH}^{2}\right] =  P(\chi_{\|\bxi\|^{2}/2, p + 2}^{2} \leq c)\tr\{\bbH\} + P(\chi_{\|\bxi\|^{2}/2, p + 4}^{2} \leq c)\|\bxi\|_{\bbH}^{2}.
		\]
		and hence
		\[
		\begin{aligned}
			E\left[1\{\|\widehat{\bxi}\|^{2} > c\}\|\widehat{\bxi}\|_{\bbH}^{2}\right] 
			& = E\left[\|\widehat{\bxi}\|_{\bbH}^{2}\right] - E\left[1\{\|\widehat{\bxi}\|^{2} \leq c\}\|\widehat{\bxi}\|_{\bbH}^{2}\right]\\
			& = \tr\{\bbH\} + \|\bxi\|_{\bbH}^{2} -  P(\chi_{\|\bxi\|^{2}/2, p + 2}^{2} \leq c)\tr\{\bbH\} - P(\chi_{\|\bxi\|^{2}/2, p + 4}^{2} \leq c)\|\bxi\|_{\bbH}^{2}\\
			& =  P(\chi_{\|\bxi\|^{2}/2, p + 2}^{2} > c)\tr\{\bbH\} + P(\chi_{\|\bxi\|^{2}/2, p + 4}^{2} > c)\|\bxi\|_{\bbH}^{2},
		\end{aligned}
		\]
		which completes the proof.
	\end{proof}
	
	Next, we get down to the proof of the propositions.
	
	\begin{proof}
		Recall that $\hbeta_{\cT} = \halpha_{\cT} \sim N(\boldsymbol{\mu} + \bbias, 2n_{\cT}^{-1}\bbI_{p}),\  \hbeta_{\cS} = \halpha_{\cS} \sim N(\boldsymbol{\mu} - \bbias, 2n_{\cS}^{-1}\bbI_{p})$.
		Then, in this example, $\hbeta_{\cC} = (n_{\cT}\halpha_{\cT} + n_{\cS}\halpha_{\cS}) / (n_{\cT} + n_{\cS})$, $\hdelta_{\cC} = n_{\cS}(\halpha_{\cT} - \halpha_{\cS}) / (n_{\cT} + n_{\cS})$, $\bbeta_{\cC} = \boldsymbol{\mu} + (n_{\cT} - n_{\cS})\bbias/(n_{\cT} + n_{\cS})$, and $\bdelta_{\cC} = 2n_{\cS}\bbias / (n_{\cT} + n_{\cS})$. Thus, $\var(\hbeta_{\cC}) = 2/(n_{\cT} + n_{\cS})\bbI_{p} \equalscolon v_{\cC}^{2}\bbI_{p}$, $\var(\hdelta_{\cC}) = 2n_{\cS}/\{n_{\cT}(n_{\cT} + n_{\cS})\}\bbI_{p} \equalscolon v_{\delta}^{2}\bbI_{p}$, and $v_{\cC}^{2} + v_{\delta}^{2} = 2n_{\cT}^{-1}$. For $\hbeta_{\rm prt}$, we have $\hbeta_{\rm prt} = \hbeta_{\cC} + 1\{\|\hdelta_{\cC}\|^{2} > c_{\rm g}\}\hdelta_{\cC}$. Then, we have
		\begin{align*}
			\cE(\hbeta_{\rm prt}, \bbeta_{\cT}) &= \cE(\hbeta_{\cC}, \bbeta_{\cC})+ E\left[\|1\{\|\hdelta_{\cC}\|^{2}/v_{\delta}^{2} > c_{\rm g}\}\hdelta_{\cC} - \bdelta_{\cC}\|_{\bbH}^{2}\right]\\
			&= v_{\cC}^{2}\tr\{\bbH\} + E\left[1\{\|\hdelta_{\cC}\|^{2}/v_{\delta}^{2} > c_{\rm g}\}\|\hdelta_{\cC}\|_{\bbH}^{2}\right] + \|\bdelta_{\cC}\|_{\bbH}^{2} - 2\bdelta_{\cC}^{\T}\bbH E\left[1\{\|\hdelta_{\cC}\|^{2}/v_{\delta}^{2} > c_{\rm g}\}\hdelta_{\cC}\right]\\
			&=v_{\cC}^{2}\tr\{\bbH\} + P(\chi_{\|\bdelta_{\cC}\|^{2}/(2v_{\delta}^{2}),p + 2}^{2} > c_{\rm g})v_{\delta}^{2}\tr\{\bbH\} + P(\chi_{\|\bdelta_{\cC}\|^{2}/(2v_{\delta}^{2}),p + 4}^{2} > c_{\rm g})\|\bdelta_{\cC}\|_{\bbH}^{2} + \|\bdelta_{\cC}\|_{\bbH}^{2}\\
			&\quad - 2P(\chi_{\|\bdelta_{\cC}\|^{2}/(2v_{\delta}^{2}),p + 2}^{2} 
			> c_{\rm g})\|\bdelta_{\cC}\|_{\bbH}^{2}\\
			&=(v_{\cC}^{2} + v_{\delta}^{2})\tr\{\bbH\} + v_{\delta}^{2}\tr\{\bbH\}\left\{P(\chi_{\|\bdelta_{\cC}\|^{2}/(2v_{\delta}^{2}),p + 2}^{2} > c_{\rm g}) - 1\right\} \\
			&\quad +
			\|\bdelta_{\cC}\|_{\bbH}^{2}\left\{1 - P(\chi_{\|\bdelta_{\cC}\|^{2}/(2v_{\delta}^{2}),p + 2}^{2} > c_{\rm g})\right\}\\
			&\quad + \|\bdelta_{\cC}\|_{\bbH}^{2}\left\{P(\chi_{\|\bdelta_{\cC}\|^{2}/(2v_{\delta}^{2}),p + 4}^{2} > c_{\rm g}) - P(\chi_{\|\bdelta_{\cC}\|^{2}/(2v_{\delta}^{2}),p + 2}^{2} > c_{\rm g})\right\}\\
			&=(v_{\cC}^{2} + v_{\delta}^{2})\tr\{\bbH\} + (\|\bdelta_{\cC}\|_{\bbH}^{2} -  v_{\delta}^{2}\tr\{\bbH\})P(\chi_{\|\bdelta_{\cC}\|^{2}/(2v_{\delta}^{2}),p + 2}^{2} \leq c_{\rm g})\\
			&\quad + \|\bdelta_{\cC}\|_{\bbH}^{2}\left\{P(\chi_{\|\bdelta_{\cC}\|^{2}/(2v_{\delta}^{2}),p + 4}^{2} > c_{\rm g}) - P(\chi_{\|\bdelta_{\cC}\|^{2}/(2v_{\delta}^{2}),p + 2}^{2} > c_{\rm g})\right\}\\
			&\geq (v_{\cC}^{2} + v_{\delta}^{2})\tr\{\bbH\} + (\|\bdelta_{\cC}\|_{\bbH}^{2} -  v_{\delta}^{2}\tr\{\bbH\})P(\chi_{\|\bdelta_{\cC}\|^{2}/(2v_{\delta}^{2}),p + 2}^{2} \leq c_{\rm g}),
		\end{align*}
		where the third equality is due to Lemma \ref{lem: threshold norm} and the last inequality is because $P(\chi_{\|\bdelta_{\cC}\|^{2}/(2v_{\delta}^{2}),p + 4}^{2} > c_{\rm g}) - P(\chi_{\|\bdelta_{\cC}\|^{2}/(2v_{\delta}^{2}),p + 2}^{2} > c_{\rm g}) > 0$ for any $\bdelta_{\cC}$ and positive $c_{g}$, $v_{\delta}^{2}$.        
		This implies $\cE(\hbeta_{\rm prt}, \bbeta_{\cT}) > (v_{\cC}^{2} + v_{\delta}^{2})\tr\{\bbH\} = 2n_{\cT}^{-1}\tr\{\bbH\} = \cE(\hbeta_{\cT}, \bbeta_{\cT})$ if $\|\bdelta_{\cC}\|_{\bbH}^{2} >  v_{\delta}^{2}\tr\{\bbH\}$.
		
		For $\hbeta_{\rm cprt}$ with $\bbH = \bbI_{p}$, we have $\hat{\beta}_{{\rm cprt},j} = \hat{\beta}_{\cC, j} + 1\{\hat{\delta}_{j}^{2} / v_{\delta}^{2} > c_{j}\} \hat{\delta}_{\cC, j}$ for $j = 1,\dots, p$, where $\hat{\beta}_{\cC, j}$ and $\hat{\delta}_{\cC, j}$ are the $j$th component of $\hbeta_{\cC}$ and $\hdelta_{\cC}$, respectively.
		Recall that $\hbeta_{\cC}\Perp \hdelta_{\cC}$.
		We have
			\begin{align*}
				&\cE(\hbeta_{\rm cprt}, \bbeta_{\cT})\\
				&= \cE(\hbeta_{\cC}, \bbeta_{\cC})+ \sum_{j}E\{(1\{\hat{\delta}_{j}^{2} / v_{\delta}^{2} > c_{j}\} \hat{\delta}_{\cC, j} - \delta_{\cC, j})^{2}\}\\
				&= v_{\cC}^{2}p + \sum_{j}\left(E[1\{\hat{\delta}_{j}^{2} / v_{\delta}^{2} > c_{j}\} \hat{\delta}_{\cC, j}^{2}] + \delta_{\cC, j}^{2} - 2\delta_{\cC, j}E[1\{\hat{\delta}_{j}^{2} / v_{\delta}^{2} > c_{j}\}\hat{\delta}_{\cC, j}]\right)\\
				&=v_{\cC}^{2}p + \sum_{j}\left\{P(\chi_{\delta_{\cC, j}^{2}/(2v_{\delta}^{2}),3}^{2} > c_{j})v_{\delta}^{2} + P(\chi_{\delta_{\cC, j}^{2}/(2v_{\delta}^{2}),5}^{2} > c_{j})\delta_{\cC, j}^{2} + \delta_{\cC, j}^{2} - 2\delta_{\cC, j}^{2}P(\chi_{\delta_{\cC, j}^{2}/(2v_{\delta}^{2}),3}^{2} > c_{j})\right\}\\
				&=(v_{\cC}^{2} + v_{\delta}^{2})p + \sum_{j}(\delta_{\cC, j}^{2} - v_{\delta}^{2})P(\chi_{\delta_{\cC, j}^{2}/(2v_{\delta}^{2}),3}^{2} \leq c_{j})\\
				&\quad + \sum_{j}\delta_{\cC, j}^{2}\left\{P(\chi_{\delta_{\cC, j}^{2}/(2v_{\delta}^{2}),5}^{2} > c_{j}) - P(\chi_{\delta_{\cC, j}^{2}/(2v_{\delta}^{2}),3}^{2} > c_{j})\right\}\\
				&\geq (v_{\cC}^{2} + v_{\delta}^{2})p + \sum_{j}(\delta_{\cC, j}^{2} - v_{\delta}^{2})P(\chi_{\delta_{\cC, j}^{2}/(2v_{\delta}^{2}),3}^{2} \leq c_{j}),
			\end{align*}
		where $\delta_{\cC, j}$ is the $j$th component of $\bdelta_{\cC}$ for $j = 1,\dots,p$, the third equality is due to Lemma \ref{lem: threshold norm}, and the last inequality is because $P(\chi_{\delta_{\cC, j}^{2}/(2v_{\delta}^{2}),5}^{2} > c_{j}) - P(\chi_{\delta_{\cC, j}^{2}/(2v_{\delta}^{2}),3}^{2} > c_{j}) > 0$ for any $\delta_{\cC, j}$ and positive $c_{j}$, $v_{\delta}^{2}$. Thus $\cE(\hbeta_{\rm cprt}, \bbeta_{\cT}) > (v_{\cC}^{2} + v_{\delta}^{2})p = 2n_{\cT}^{-1}p = \cE(\hbeta_{\cT}, \bbeta_{\cT})$ if $\delta_{\cC, j} > v_{\delta}^{2}$ for $j = 1,\dots, p$.
	\end{proof}
	
	\subsection{Derivation of the dShrink estimator}\label{app: derive dShrink}
	The minimizers of $\cE(\lambda_{1}\hbeta_{\cC}, \bbeta_{\cC})$ and $\cE(\lambda_{2}\hdelta_{\cC}, \bdelta_{\cC})$ are 
	\[
	\lambda_{{\rm s}, 1} = 1 - \tr\{\bbH\bbSig_{\cC}\} / [\tr\{\bbH\bbSig_{\cC}\} + \bbeta_{\cC}^{\T}\bbH\bbeta_{\cC}],
	\]
	and
	\[
	\lambda_{{\rm s}, 2} = 1 - \tr\{\bbH\bbSig_{\delta}\} / [\tr\{\bbH\bbSig_{\delta}\} + \bdelta_{\cC}^{\T}\bbH\bdelta_{\cC}],
	\]
	respectively. However, $\lambda_{{\rm s}, 1}, \lambda_{{\rm s}, 2}$ depends on the unknown $\bbeta_{\cC}$ and $\bdelta_{\cC}$ and can not be calculated directly from data. To address this, we invoke the Stein's formula \citep{stein1981estimation} which gives the unbiased estimates $\tr\{\bbH\bbSig_{\cC}\} + (\lambda_{1} - 1)^{2}\hbeta_{\cC}^{\T}\bbH\hbeta_{\cC} - 2(\lambda_{1} - 1)\tr\{\bbH\bbSig_{\cC}\}$ and $\tr\{\bbH\bbSig_{\delta}\} + (\lambda_{2} - 1)^{2}\hdelta_{\cC}^{\T}\bbH\hdelta_{\cC} - 2(\lambda_{2} - 1)\tr\{\bbH\bbSig_{\delta}\}$ for $\cE(\lambda_{1}\hbeta_{\cC}, \bbeta_{\cC})$ and $\cE(\lambda_{2}\hdelta_{\cC}, \bdelta_{\cC})$, respectively.  Notice that $\lambda_{{\rm s}, 1}, \lambda_{{\rm s}, 2} \in [0, 1]$.
	Minimizing these unbiased estimates with respect to $\lambda_{1}\in [0, 1]$ and $\lambda_{2}\in [0, 1]$ yields the minimizers
	\[
	\left(1 - \frac{\tr\{\bbH\bbSig_{\cC}\}}{\hbeta_{\cC}^{\T}\bbH\hbeta_{\cC}}\right)_{+}\ \text{and}\ 
	\left(1 - \frac{\tr\{\bbH\bbSig_{\delta}\}}{\hdelta_{\cC}^{\T}\bbH\hdelta_{\cC}}\right)_{+},
	\]
	where $(u)_{+} = \max\{0, u\}$ for any $u$.
	In the theoretical analysis, we find it necessary to slightly adjust the constants $\tr\{\bbH\bbSig_{\cC}\}$ and $\tr\{\bbH\bbSig_{\delta}\}$ in order to properly account for the estimation error. This gives 
	the estimator $(\widehat{\lambda}_{{\rm s}, 1}, \widehat{\lambda}_{{\rm s}, 2})$ in the main text.
	\subsection{Proof of \eqref{eq: pop dominance} in the main text}\label{app: proof pop dominance}
	\begin{proof}
		For any $\lambda_{1}$ and $\lambda_{2}$, we have
		\begin{equation}\label{eq: E upper bound}
			\begin{aligned}
				\cE(\lambda_{1}\hbeta_{\cC} + \lambda_{2}\hdelta_{\cS}, \bbeta_{\cT}) 
				& = \cE(\lambda_{1}\hbeta_{\cC} + \lambda_{2}\hdelta_{\cS}, \bbeta_{\cC} + \bdelta_{\cC})\\
				& = 
				\cE(\lambda_{1}\hbeta_{\cC}, \bbeta_{\cC}) + \cE(\lambda_{2}\hdelta_{\cC}, \bdelta_{\cC}) + 2(1 - \lambda_{1})(1 - \lambda_{2})\bbeta_{\cC}^{\T}\bbH\bdelta_{\cC}\\
				&\leq \cE(\lambda_{1}\hbeta_{\cC}, \bbeta_{\cC}) + \cE(\lambda_{2}\hdelta_{\cC}, \bdelta_{\cC}) + (1 - \lambda_{1})^{2}\bbeta_{\cC}^{\T}\bbH\bbeta_{\cC} + (1 - \lambda_{2})^{2}\bdelta_{\cC}^{\T}\bbH\bdelta_{\cC}\\
				&= \lambda_{1}^{2}\tr\{\bbH\bbSig_{\cC}\} + 2(1 - \lambda_{1})^{2}\bbeta_{\cC}^{\T}\bbH\bbeta_{\cC} + \lambda_{2}^{2}\tr\{\bbH\bbSig_{\delta}\} + 2(1 - \lambda_{2})^{2}\bdelta_{\cC}^{\T}\bbH\bdelta_{\cC}.
			\end{aligned}
		\end{equation}
		Plug $\lambda_{{\rm s}, 1}$ and $\lambda_{{\rm s}, 2}$ into the upper bound in \eqref{eq: E upper bound}.
		For the terms regarding $\lambda_{{\rm s}, 1}$ in \eqref{eq: E upper bound}, we have
		\[
		\begin{aligned}
			\lambda_{{\rm s}, 1}^{2}\tr\{\bbH\bbSig_{\cC}\} + 2(1 - \lambda_{{\rm s}, 1})^{2}\bbeta_{\cC}^{\T}\bbH\bbeta_{\cC}
			& = \frac{(\bbeta_{\cC}^{\T}\bbH\bbeta_{\cC})^{2}\tr\{\bbH\bbSig_{\cC}\}}{(\bbeta_{\cC}^{\T}\bbH\bbeta_{\cC} + \tr\{\bbH\bbSig_{\cC}\})^{2}} 
			+ \frac{2\bbeta_{\cC}^{\T}\bbH\bbeta_{\cC}\tr\{\bbH\bbSig_{\cC}\}^{2}}{(\bbeta_{\cC}^{\T}\bbH\bbeta_{\cC} + \tr\{\bbH\bbSig_{\cC}\})^{2}}\\
			& = \frac{\bbeta_{\cC}^{\T}\bbH\bbeta_{\cC}\tr\{\bbH\bbSig_{\cC}\}}{\bbeta_{\cC}^{\T}\bbH\bbeta_{\cC} + \tr\{\bbH\bbSig_{\cC}\}} 
			+ \frac{\bbeta_{\cC}^{\T}\bbH\bbeta_{\cC}\tr\{\bbH\bbSig_{\cC}\}^{2}}{(\bbeta_{\cC}^{\T}\bbH\bbeta_{\cC} + \tr\{\bbH\bbSig_{\cC}\})^{2}}\\
			& < \frac{\bbeta_{\cC}^{\T}\bbH\bbeta_{\cC}\tr\{\bbH\bbSig_{\cC}\}}{\bbeta_{\cC}^{\T}\bbH\bbeta_{\cC} + \tr\{\bbH\bbSig_{\cC}\}} 
			+ \frac{\tr\{\bbH\bbSig_{\cC}\}^{2}}{\bbeta_{\cC}^{\T}\bbH\bbeta_{\cC} + \tr\{\bbH\bbSig_{\cC}\}}\\
			& = \tr\{\bbH\bbSig_{\cC}\}.
		\end{aligned}
		\]
		The similar result holds for the term regarding $\lambda_{{\rm s}, 2}$ in \eqref{eq: E upper bound}. Then
		\[
		\cE(\lambda_{{\rm s}, 1}\hbeta_{\cC} + \lambda_{{\rm s}, 2}\hdelta_{\cC}, \bbeta_{\cT}) < \tr\{\bbH\bbSig_{\cC}\} + \tr\{\bbH\bbSig_{\delta}\} = \cE(\hbeta_{\cT}, \bbeta_{\cT}),
		\]
		which completes the proof.
	\end{proof}
	
	\subsection{Proof of Theorem \ref{thm: dominance ds}}\label{app: proof of dominance}
	\begin{proof}
		The $\cE$-error of the dShrink estimator $\hbeta_{\rm ds}$ can be decomposed as  
		\begin{align}	
			\cE(\hbeta_{\rm ds}, \bbeta_{\cT}) &= \cE(\widehat{\lambda}_{{\rm s}, 1}\hbeta_{\cC} +  \widehat{\lambda}_{{\rm s}, 2}\hdelta_{\cC} ,\bbeta_{\cT})\notag\\
			&=E\left\{(\widehat{\lambda}_{{\rm s}, 1}\hbeta_{\cC} - \bbeta_{\cC})^{\T}\bbH(\widehat{\lambda}_{{\rm s}, 1}\hbeta_{\cC} - \bbeta_{\cC})\right\} + E\left\{(\widehat{\lambda}_{{\rm s}, 2}\hdelta_{\cC} - \bdelta_{\cC})^{\T}\bbH(\widehat{\lambda}_{{\rm s}, 2}\hdelta_{\cC} - \bdelta_{\cC})\right\}\notag\\
			& + 2E\left\{(\widehat{\lambda}_{{\rm s}, 1}\hbeta_{\cC} - \bbeta_{\cC})^{\T}\bbH(\widehat{\lambda}_{{\rm s}, 2}\hdelta_{\cC} - \bdelta_{\cC})\right\}\notag\\
			&=\cE(\widehat{\lambda}_{{\rm s}, 1}\hbeta_{\cC}, \bbeta_{\cC}) + \cE(\widehat{\lambda}_{{\rm s}, 2}\hdelta_{\cC}, \bdelta_{\cC}) + 2E\left\{(\widehat{\lambda}_{{\rm s}, 1}\hbeta_{\cC} - \bbeta_{\cC})^{\T}\bbH(\widehat{\lambda}_{{\rm s}, 2}\hdelta_{\cC} - \bdelta_{\cC})\right\}.\label{eq: decom1} 
		\end{align}
		Note that $\widehat{\lambda}_{{\rm s}, 1}\hbeta_{\cC} = \hbeta_{\cC} - \min\{1, c_{\beta}/\hbeta_{\cC}^{\T}\bbH\hbeta_{\cC}\} \hbeta_{\cC}$ and $\widehat{\lambda}_{{\rm s}, 2}\hdelta_{\cC} = \hdelta_{\cC} - \min\{1, c_{\delta}/\hdelta_{\cC}^{\T}\bbH\hdelta_{\cC}\} \hdelta_{\cC}$.
		According to the proof of Theorem 1 in \cite{tan2015improved}, it holds that
		\begin{equation}\label{eq: bound excess risk beta}
			\begin{aligned}
				\cE(\widehat{\lambda}_{{\rm s}, 1}\hbeta_{\cC}, \bbeta_{\cC}) 
				&\leq \cE(\hbeta_{\cC}, \bbeta_{\cC})  + E\left[\min\{\hbeta_{\cC}^{\T}\bbH\hbeta_{\cC},c_{\beta}\}\left(\min\{\hbeta_{\cC}^{\T}\bbH\hbeta_{\cC},c_{\beta}\} - 2c_{\beta}\right)\frac{1}{\hbeta_{\cC}^{\T}\bbH\hbeta_{\cC}}\right]\\
				&\leq \cE(\hbeta_{\cC}, \bbeta_{\cC})  - E\left[\min\{\hbeta_{\cC}^{\T}\bbH\hbeta_{\cC},c_{\beta}\}^{2}\frac{1}{\hbeta_{\cC}^{\T}\bbH\hbeta_{\cC}}\right],
			\end{aligned}
		\end{equation}
		and
		\begin{equation}\label{eq: bound excess risk delta}
			\begin{aligned}
				\cE(\widehat{\lambda}_{{\rm s}, 2}\hdelta_{\cC}, \bdelta_{\cC}) 
				&\leq \cE(\hdelta_{\cC}, \bdelta_{\cC})  + E\left[\min\{\hdelta_{\cC}^{\T}\bbH\hdelta_{\cC},c_{\delta}\}\left(\min\{\hdelta_{\cC}^{\T}\bbH\hdelta_{\cC},c_{\delta}\} - 2c_{\delta}\right)\frac{1}{\hdelta_{\cC}^{\T}\bbH\hdelta_{\cC}}\right]\\
				&\leq \cE(\hdelta_{\cC}, \bdelta_{\cC})  - E\left[\min\{\hdelta_{\cC}^{\T}\bbH\hdelta_{\cC},c_{\delta}\}^{2}\frac{1}{\hdelta_{\cC}^{\T}\bbH\hdelta_{\cC}}\right],
			\end{aligned}
		\end{equation}
		where the first and second strict inequalities hold if $c_{\beta} > 0$ and $c_{\delta} > 0$, respectively.
		Because $\hbeta_{\cC} \Perp \hdelta_{\cC}$ and $E[\hbeta_{\cC} - \bbeta_{\cC}] = E[\hdelta_{\cC} - \bdelta_{\cC}] = 0$,
		we have
		\begin{align*}
			& 2E\left\{(\widehat{\lambda}_{{\rm s}, 1}\hbeta_{\cC} - \bbeta_{\cC})^{\T}\bbH(\widehat{\lambda}_{{\rm s}, 2}\hdelta_{\cC} - \bdelta_{\cC})\right\}\\
			&= 2E\left[\min\left\{1, \frac{c_{\beta}}{\hbeta_{\cC}^{\T}\bbH\hbeta_{\cC}}\right\}\min\left\{1, \frac{c_{\delta}}{\hdelta_{\cC}^{\T}\bbH\hdelta_{\cC}}\right\}\hbeta_{\cC}^{\T}\bbH\hdelta_{\cC}\right]\\
			&\leq E\left[\min\left\{1, \frac{c_{\beta}}{\hbeta_{\cC}^{\T}\bbH\hbeta_{\cC}}\right\}^2\hbeta_{\cC}^{\T}\bbH\hbeta_{\cC}\right]  + E\left[\min\left\{1, \frac{c_{\delta}}{\hdelta_{\cC}^{\T}\bbH\hdelta_{\cC}}\right\}^{2}\hdelta_{\cC}^{\T}\bbH\hdelta_{\cC}\right] \\
			&= E\left[\min\left\{\hbeta_{\cC}^{\T}\bbH\hbeta_{\cC}, c_{\beta}\right\}^2\frac{1}{\hbeta_{\cC}^{\T}\bbH\hbeta_{\cC}}\right]  + E\left[\min\left\{\hdelta_{\cC}^{\T}\bbH\hdelta_{\cC}, c_{\delta}\right\}^{2}\frac{1}{\hdelta_{\cC}^{\T}\bbH\hdelta_{\cC}}\right].
		\end{align*}
		Combining this with \eqref{eq: decom1}, \eqref{eq: bound excess risk beta} and \eqref{eq: bound excess risk delta}, we conclude that 
		\[
		\cE(\hbeta_{\rm ds}, \bbeta_{\cT}) \leq \cE(\hbeta_{\cC}, \bbeta_{\cC}) + \cE(\hdelta_{\cC}, \bdelta_{\cC}) = \cE(\hbeta_{\cT}, \bbeta_{\cT}),
		\]
		where the strict inequality holds when $c_{\beta} > 0$ or $c_{\delta} > 0$.
	\end{proof}

		\subsection{Proof of the results in Remark \ref{remark: linear}}\label{app: remark 3}
		In this section, we show that the SURE estimator and the dShrink estimator can be obtained from the penalized least squares problem
		\begin{equation}\label{eq: penal reg}
			\frac{1}{\sigma_{\cT}^{2}}\|\bY_{\cT} - \bbX_{\cT}\bbeta\|^{2} + \lambda_{0}\|\bbX_{\cT}\bbeta\|^{2} + \lambda_{\rm pool}\|\bbX_{\cT}(\bbeta - \hbeta_{\rm pool})\|^{2},
		\end{equation}
		where $\hbeta_{\rm pool}$ is the minimizer of 
		\[
		\frac{1}{\sigma_{\cT}^{2}}\|\bY_{\cT} - \bbX_{\cT}\bbeta\|^{2} 
		+
		\frac{1}{\sigma_{\cS}^{2}}\|\bY_{\cS} - \bbX_{\cS}\bbeta\|^{2}.
		\]
		In linear regression, we have $\bbSig_{\beta, \cT} = \bbSig_{\beta\alpha, \cT} = \bbSig_{\alpha, \cT} = \sigma_{\cT}^{2}(\bX_{\cT}^{\T}\bX_{\cT})^{-1}$ and $\bbSig_{\alpha, \cS} = \sigma_{\cS}^{2}(\bX_{\cS}^{\T}\bX_{\cS})^{-1}$. Note that 
		\[
		\begin{aligned}
			\hbeta_{\rm pool} 
			& = \left(\sigma_{\cT}^{-2} \bX_{\cT}^{\T}\bX_{\cT} + 
			\sigma_{\cS}^{-2} \bX_{\cS}^{\T}\bX_{\cS}\right)^{-1}\left(\sigma_{\cT}^{-2}\bX_{\cT}^{\T}\bY_{\cT} + \sigma_{\cS}^{-2}\bX_{\cS}^{\T}\bY_{\cS}\right)\\
			& = \left(\bbSig_{\beta,\cT}^{-1} + \bbSig_{\alpha,\cS}^{-1}\right)^{-1}\left(\bbSig_{\beta,\cT}^{-1}\hbeta_{\cT} + \bbSig_{\alpha,\cS}^{-1}\halpha_{\cS}\right) = \hbeta_{\cC}.
		\end{aligned}
		\]
		Straightforward calculation can show that the minimizer of the objective function \eqref{eq: penal reg} is
		\[
		\begin{aligned}
			(\sigma_{\cT}^{-2} + \lambda_{0} + \lambda_{\rm pool})^{-1}(\sigma_{\cT}^{-2}\hbeta_{\cT} + \lambda_{\rm pool}\hbeta_{\cC})
			&=
			\frac{\sigma_{\cT}^{-2} + \lambda_{\rm pool}}{\sigma_{\cT}^{-2} + \lambda_{0} + \lambda_{\rm pool}}\hbeta_{\cC} + 
			\frac{\sigma_{\cT}^{-2}}{\sigma_{\cT}^{-2} + \lambda_{0} + \lambda_{\rm pool}}\hdelta_{\cC}\\
			&\eqcolon \bar{\lambda}_{1}\hbeta_{\cC} + \bar{\lambda}_{2}\hdelta_{\cC}.    \end{aligned}
		\]
		Choosing $\lambda_{0}$ and $\lambda_{\rm pool}$ (may be negative or infinite) so that $(\bar{\lambda}_{1}, \bar{\lambda}_{2}) = (\widehat{\lambda}_{\SURE, 1}, \widehat{\lambda}_{\SURE, 2})$ or $(\widehat{\lambda}_{{\rm s}, 1}, \widehat{\lambda}_{{\rm s}, 2})$ leads to the SURE estimator or dShrink estimator, respectively.
		
		\subsection{Proof of Theorem \ref{thm: dominance ds asy}}\label{app: dominance ds asy}
		\begin{proof}
			Under the conditions of Theorem \ref{thm: dominance ds asy}, the limits $c_{\beta, \infty} = \lim_{n_{\cT}\to \infty}r_{\cT}c_{\beta}$ and $c_{\delta, \infty} = \lim_{n_{\cT}\to \infty}r_{\cT}c_{\delta}$ exists, and are finite and positive. 
			Define the functions
			\[
			\begin{aligned}
				\bg_{\beta, n_{\cT}}(\ba_{1}, \ba_{2}) 
				& = \ba_{1} - \min\{1, r_{\cT}c_{\beta}\|\ba_{1} + \ba_{2}\|_{\bbH}^{-2}\}(\ba_{1} + \ba_{2}),\\
				\bg_{\delta, n_{\cT}}(\ba_{1}, \ba_{2})
				& = \ba_{1} - \min\{1, r_{\cT}c_{\delta}\|\ba_{1} + \ba_{2}\|_{\bbH}^{-2}\}(\ba_{1} + \ba_{2}),\\
				\bg_{\beta, \infty}(\ba_{1}, \ba_{2}) 
				& = \ba_{1} - \min\{1, c_{\beta, \infty}\|\ba_{1} + \ba_{2}\|_{\bbH}^{-2}\}(\ba_{1} + \ba_{2}),\\
				\bg_{\delta, \infty}(\ba_{1}, \ba_{2})
				& = \ba_{1} - \min\{1, c_{\delta, \infty}\|\ba_{1} + \ba_{2}\|_{\bbH}^{-2}\}(\ba_{1} + \ba_{2}).
			\end{aligned}
			\]
			where we allow $\ba_{2}$ to take infinite values and all the above functions are defined to be $\ba_{1}$ if $\ba_{2}$ contains infinite components. Define the distance of two numbers $a_{1}$, $a_{2}$ on the extended real line as $|a_{1}/(1 + |a_{1}|) - a_{2}/(1 + |a_{2}|)|$ where we adopt the conventions $\infty / (1 + \infty) = 1$ and $-\infty / (1 + \infty) = -1$. Then, $\bg_{\beta, n_{\cT}}$, $\bg_{\delta, n_{\cT}}$, $\bg_{\beta, \infty}$, and $\bg_{\delta, \infty}$ are continuous on $(-\infty, \infty)^{p}\times [-\infty, \infty]^{p}$ and $\bg_{\beta, n_{\cT}} \to \bg_{\beta, \infty}$, $\bg_{\delta, n_{\cT}} \to \bg_{\delta, \infty}$ uniformly as $n_{\cT} \to \infty$.  Note that 
			\begin{equation}\label{eq: decompose ds}
				\begin{aligned}
					\sqrt{r_{\cT}}(\hbeta_{\rm ds} - \bbeta_{\cT})
					& = \sqrt{r_{\cT}}(\hbeta_{\cC} - \bbeta_{\cC}) - \min\{1, c_{\beta}\|\hbeta_{\cC}\|_{\bbH}^{-2}\}  \sqrt{r_{\cT}}\hbeta_{\cC} + \sqrt{r_{\cT}}(\hdelta_{\cC} - \bdelta_{\cC}) - \min\{1, c_{\delta}\|\hdelta_{\cC}\|_{\bbH}^{-2}\} \sqrt{r_{\cT}}\hdelta_{\cC}\\
					& = \sqrt{r_{\cT}}(\hbeta_{\cC} - \bbeta_{\cC}) - \min\{1, r_{\cT}c_{\beta}\|\sqrt{r_{\cT}}(\hbeta_{\cC} - \bbeta_{\cC}) + \sqrt{r_{\cT}}\bbeta_{\cC}\|_{\bbH}^{-2}\}  \left\{\sqrt{r_{\cT}}(\hbeta_{\cC} - \bbeta_{\cC}) + \sqrt{r_{\cT}}\bbeta_{\cC}\right\}\\
					&\quad + \sqrt{r_{\cT}}(\hdelta_{\cC} - \bdelta_{\cC}) - \min\{1, r_{\cT}c_{\delta}\|\sqrt{r_{\cT}}(\hdelta_{\cC} - \bdelta_{\cC}) + \sqrt{r_{\cT}}\bdelta_{\cC}\|_{\bbH}^{-2}\} \left\{\sqrt{r_{\cT}}(\hdelta_{\cC} - \bdelta_{\cC}) + \sqrt{r_{\cT}}\bdelta_{\cC}\right\}\\
					&= \bg_{\beta, n_{\cT}}(\sqrt{r_{\cT}}(\hbeta_{\cC} - \bbeta_{\cC}), \sqrt{r_{\cT}}\bbeta_{\cC}) + \bg_{\delta, n_{\cT}}(\sqrt{r_{\cT}}(\hdelta_{\cC} - \bdelta_{\cC}), \sqrt{r_{\cT}}\bdelta_{\cC}).
				\end{aligned}
			\end{equation}    	
			Define $\bbV_{\cC}$ and $\bbV_{\delta}$ as the limits of $r_{\cT}\bbSig_{\cC}$ and $r_{\cT}\bbSig_{\delta}$ as $n_{\cT} \to \infty$. Let $\bvareps_{\cC}$ and $\bvareps_{\delta}$ be independent normal vectors that satisfy
			$\bvareps_{\cC} \sim N(\bzero, \bbV_{\cC})$ and $\bvareps_{\delta} \sim N(\bzero, \bbV_{\delta})$. Then, under the assumption of this theorem, we have $\sqrt{r_{\cT}}(\hbeta_{\cC} - \bbeta_{\cC}, \bbeta_{\cC}, \hdelta_{\cC} - \bdelta_{\cC}, \bdelta_{\cC}) \to (\bvareps_{\cC}, \bxi_{\cC}, \bvareps_{\delta}, \bxi_{\delta})$ in distribution. Then, according to Theorem 18.11 of \cite{van2000asymptotic}, we have 
			\begin{equation}\label{eq: asy dist}
				\sqrt{r_{\cT}}(\hbeta_{\rm ds} - \bbeta_{\cT}) \to \bg_{\beta, \infty}(\bvareps_{\cC}, \bxi_{\cC}) + \bg_{\delta, \infty}(\bvareps_{\delta}, \bxi_{\delta})
			\end{equation}
			in distribution. Notice that
			\[
			\begin{aligned}
				\|\sqrt{r_{\cT}}(\hbeta_{\rm ds} - \bbeta_{\cT})\|_{\bbH}^{2} 
				& \leq 3r_{\cT}\|\hbeta_{\cT} - \bbeta_{\cT}\|_{\bbH}^{2} + 3\min\{1, c_{\beta}\|\hbeta_{\cC}\|_{\bbH}^{-2}\}^{2}r_{\cT} \|\hbeta_{\cC}\|_{\bbH}^{2} + 3\min\{1, c_{\delta}\|\hdelta_{\cC}\|_{\bbH}^{-2}\}^{2}r_{\cT}\|\hdelta_{\cC}\|_{\bbH}^{2}\\
				&\leq 3r_{\cT}\|\hbeta_{\cT} - \bbeta_{\cT}\|_{\bbH}^{2} + 3\min\{1, c_{\beta}\|\hbeta_{\cC}\|_{\bbH}^{-2}\} r_{\cT}\|\hbeta_{\cC}\|_{\bbH}^{2} + 3\min\{1, c_{\delta}\|\hdelta_{\cC}\|_{\bbH}^{-2}\}r_{\cT}\|\hdelta_{\cC}\|_{\bbH}^{2}\\
				&\leq 3r_{\cT}\|\hbeta_{\cT} - \bbeta_{\cT}\|_{\bbH}^{2} + 3r_{\cT}c_{\beta} + 3r_{\cT} c_{\delta},
			\end{aligned}
			\]
			which implies 
			\begin{equation}\label{eq: ds upper bound}
				\begin{aligned}
					\|\sqrt{r_{\cT}}(\hbeta_{\rm ds} - \bbeta_{\cT})\|_{\bbH}^{2+\tau}
					& \leq 3^{1+\tau}
					\left\{\|\sqrt{r_{\cT}}(\hbeta_{\cT} - \bbeta_{\cT})\|_{\bbH}^{2 + \tau} + (r_{\cT}c_{\beta})^{1 + \frac{\tau}{2}} + (r_{\cT} c_{\delta})^{1 + \frac{\tau}{2}}\right\}\\
					& \leq 3^{1+\tau}
					\left\{\|\bbH\|^{2 + \tau}\|\sqrt{r_{\cT}}(\hbeta_{\cT} - \bbeta_{\cT})\|^{2 + \tau} + (r_{\cT}c_{\beta})^{1 + \frac{\tau}{2}} + (r_{\cT} c_{\delta})^{1 + \frac{\tau}{2}}\right\}
				\end{aligned}
			\end{equation}
			according to Jensen's inequality. Hence, we have
			$E\{\|\sqrt{r_{\cT}}(\hbeta_{\rm ds} - \bbeta_{\cT})\|_{\bbH}^{2+\tau}\} \leq C$ for some constant $C$ according to Condition \ref{cond: asy normal} (iii). Combining this with \eqref{eq: asy dist}, we have $r_{\cT}\cE(\hbeta_{\rm ds}, \bbeta_{\cT}) \to E(\|\bg_{\beta, \infty}(\bvareps_{\cC}, \bxi_{\cC}) + \bg_{\delta, \infty}(\bvareps_{\delta}, \bxi_{\delta})\|_{\bbH}^{2})$ by Example 2.21 of \cite{van2000asymptotic}. Similarly, we have $r_{\cT}\cE(\hbeta_{\cT}, \bbeta_{\cT}) \to E(\|\bvareps_{\cC} + \bvareps_{\delta}\|_{\bbH}^{2})$ by Condition \ref{cond: asy normal} (iii). Results of this Theorem follow by applying Theorem \ref{thm: dominance ds} to $\bvareps_{\cC} + \bxi_{\cC}$ and $\bvareps_{\delta} + \bxi_{\delta}$ if $ \bxi_{\cC}$ and $\bxi_{\delta}$ are both finite. Note that $\bg_{\beta, \infty}(\bvareps_{\cC}, \bxi_{\cC}) = \bvareps_{\cC}$ ($\bg_{\delta, \infty}(\bvareps_{\delta}, \bxi_{\delta}) = \bvareps_{\delta}$) when $\bxi_{\cC}$ ($\bxi_{\delta}$) contains infinite components. The case where  $\bxi_{\cC}$ or $\bxi_{\delta}$ contains infinite components can be proved using similar arguments as in the proof of Theorem \ref{thm: dominance ds}.
		\end{proof}
		
		\subsection{Proof of Theorem \ref{thm: error unknown matrix}}\label{app: estimate matrix}
		\begin{proof}
			Noting that $\lim_{n_{\cT}\to \infty}r_{\cT}c_{\beta} > 0$ and $\lim_{n_{\cT}\to \infty}r_{\cT}c_{\delta} > 0$, we have
			$\hat{c}_{\beta} - c_{\beta} = o_{P}(c_{\beta})$, $\hat{c}_{\delta} - c_{\delta} = o_{P}(c_{\delta})$, $\|\tbeta_{\cC} - \hbeta_{\cC}\| = o_{P}(\|\hbeta_{\cC}\|)$, $\|\tdelta_{\cC} - \hdelta_{\cC}\| = o_{P}(\|\hdelta_{\cC}\|)$, $\|\tbeta_{\cC}\|_{\hH}^{2} - \|\hbeta_{\cC}\|_{\bbH}^{2} = o_{P}(\|\hbeta_{\cC}\|^{2})$, and $\|\tdelta_{\cC}\|_{\hH}^{2} - \|\hdelta_{\cC}\|_{\bbH}^{2} = o_{P}(\|\hdelta_{\cC}\|^{2})$ according to Condition \ref{cond: asy normal} (i) (ii) and Condition \ref{cond: matrices error} (i). By the same decomposition as \eqref{eq: decompose ds},
			we have
			\begin{equation}\label{eq: decompos eds}
				\begin{aligned}
					\sqrt{r_{\cT}}(\tbeta_{\rm ds} - \bbeta_{\cT})
					& = \sqrt{r_{\cT}}(\tbeta_{\cC} - \bbeta_{\cC}) - \min\{1, \hat{c}_{\beta}\|\tbeta_{\cC}\|_{\hH}^{-2}\}  \sqrt{r_{\cT}}\tbeta_{\cC} + \sqrt{r_{\cT}}(\tdelta_{\cC} - \bdelta_{\cC}) - \min\{1, \hat{c}_{\delta}\|\tdelta_{\cC}\|_{\hH}^{-2}\} \sqrt{r_{\cT}}\tdelta_{\cC}\\
					& = \sqrt{r_{\cT}}(\hbeta_{\cT} - \bbeta_{\cT}) - \min\{1, \hat{c}_{\beta}\|\tbeta_{\cC}\|_{\hH}^{-2}\}  \sqrt{r_{\cT}}\tbeta_{\cC} - \min\{1, \hat{c}_{\delta}\|\tdelta_{\cC}\|_{\hH}^{-2}\} \sqrt{r_{\cT}}\tdelta_{\cC}.
				\end{aligned}
			\end{equation}    	
			Thus,
			\begin{align*}
				&\|\tbeta_{\rm ds} - \hbeta_{\rm ds}\|\\
				& \leq \left\|\min\{1, \hat{c}_{\beta}\|\tbeta_{\cC}\|_{\hH}^{-2}\}\tbeta_{\cC} - \min\{1, c_{\beta}\|\hbeta_{\cC}\|_{\bbH}^{-2}\}  \hbeta_{\cC}\right\| +
				\left\|\min\{1, \hat{c}_{\delta}\|\tdelta_{\cC}\|_{\hH}^{-2}\tdelta_{\cC} - \min\{1, c_{\delta}\|\hdelta_{\cC}\|_{\bbH}^{-2}\}\hdelta_{\cC}\right\|\\
				& = \left\|\min\{1, (c_{\beta} + o_{P}(c_{\beta}))(\|\hbeta_{\cC}\|_{\bbH} + o_{P}(\|\hbeta_{\cC}\|))^{-2}\}(\hbeta_{\cC} + o_{P}(\|\hbeta_{\cC}\|)) - \min\{1, c_{\beta}\|\hbeta_{\cC}\|_{\bbH}^{-2}\}  \hbeta_{\cC}\right\|\\
				&\quad +
				\left\|\min\{1, (c_{\delta} + o_{P}(c_{\delta}))(\|\hdelta_{\cC}\|_{\bbH} + o_{P}(\|\hdelta_{\cC}\|))^{-2}(\hdelta_{\cC} + o_{P}(\|\hdelta_{\cC}\|)) - \min\{1, c_{\delta}\|\hdelta_{\cC}\|_{\bbH}^{-2}\}\hdelta_{\cC}\right\|\\
				& = o_{P}\left\{\frac{c_{\beta}}{\|\hbeta_{\cC}\|_{\bbH}} + \frac{c_{\delta}}{\|\hdelta_{\cC}\|_{\bbH}}\right\}\\
				& = o_{P}\left\{\frac{\tr\{\bbSig_{\cC}\}}{\|\hbeta_{\cC}\|} + \frac{\tr\{\bbSig_{\delta}\}}{\|\hdelta_{\cC}\|}\right\}\\
				& = o_{P}(\sqrt{r_{\cT}}).
			\end{align*}
			This proves $\|\tbeta_{\rm ds} - \hbeta_{\rm ds}\| = o_{P}(\sqrt{r_{\cT}})$ and implies that $\sqrt{r_{\cT}}(\tbeta_{\rm ds} - \bbeta_{\cT})$ converges to the same limiting distribution as that of $\sqrt{r_{\cT}}(\hbeta_{\rm ds} - \bbeta_{\cT})$. Then, similarly to \eqref{eq: ds upper bound}, we have
			\begin{equation}
				\begin{aligned}
					\|\sqrt{r_{\cT}}(\tbeta_{\rm ds} - \bbeta_{\cT})\|_{\bbH}^{2+\tau}
					& \leq 3^{1+\tau}
					\left\{\|\sqrt{r_{\cT}}(\hbeta_{\cT} - \bbeta_{\cT})\|_{\bbH}^{2 + \tau} + (r_{\cT}\hat{c}_{\beta})^{1 + \frac{\tau}{2}} + (r_{\cT} \hat{c}_{\delta})^{1 + \frac{\tau}{2}}\right\}\\
					& \leq 3^{1+\tau}
					\left\{\|\bbH\|^{2 + \tau}\|\sqrt{r_{\cT}}(\hbeta_{\cT} - \bbeta_{\cT})\|^{2 + \tau} + (r_{\cT}\hat{c}_{\beta})^{1 + \frac{\tau}{2}} + (r_{\cT} \hat{c}_{\delta})^{1 + \frac{\tau}{2}}\right\}
				\end{aligned}
			\end{equation}
			Therefore, we have
			$E\{\|\sqrt{r_{\cT}}(\tbeta_{\rm ds} - \bbeta_{\cT})\|_{\bbH}^{2+\tau}\} \leq C$ for some constant $C$ according to Conditions \ref{cond: asy normal} (iii) and \ref{cond: matrices error} (ii). The conclusions of this theorem follow from Example 2.21 of \cite{van2000asymptotic} and Theorem \ref{thm: dominance ds asy}.
		\end{proof}
	
	\subsection{Proof of Theorem \ref{thm: dominance mds}}\label{app: proof mds}
	Straightforward calculations can show that 
	\[
	(\hbeta_{\cC} - \hbeta_{\rm m}, \hdelta_{\cC} - \hdelta_{\rm m}) \Perp (\hbeta_{\rm m}, \hdelta_{\rm m}).
	\]
	Then, we have 
	\[
	\begin{aligned}
		\cE(\hbeta_{\cT}, \bbeta_{\cT}) 
		& = \cE(\hbeta_{\cC}, \bbeta_{\cC}) + \cE(\hdelta_{\cC}, \bdelta_{\cC})\\
		& = \cE(\hbeta_{\rm m}, \bbPi_{\cC}\bbeta_{\cC}) + \cE(\hdelta_{\rm m}, \bbQ_{\alpha}\bbPi_{d}(\balpha_{\cT} - \balpha_{\cS})) + \cE(\hbeta_{\cC} - \hbeta_{\rm m}, (\bbI_{p} - \bbPi_{\cC})\bbeta_{\cC})\\
		&\quad + \cE(\hdelta_{\cC} - \hdelta_{\rm m}, \bbQ_{\alpha}(\bbI_{q} - \bbPi_{d})(\balpha_{\cT} - \balpha_{\cS})).
	\end{aligned}
	\] 
	Note that $\hbeta_{\rm mds} = \hbeta_{\rm m} + \widehat{\lambda}_{{\rm ms}, 1}(\hbeta_{\cC} - \hbeta_{\rm m}) + \hdelta_{\rm m} + \widehat{\lambda}_{{\rm ms}, 2}(\hdelta_{\cC} - \hdelta_{\rm m})$. We have $\cE(\hbeta_{\rm mds}, \bbeta_{\cT}) = \cE(\hbeta_{\rm m}, \bbPi_{\cC}\bbeta_{\cC}) + \cE(\hdelta_{\rm m}, \bbQ_{\alpha}\bbPi_{d}(\balpha_{\cT} - \balpha_{\cS})) + \cE(\widehat{\lambda}_{{\rm ms}, 1}(\hbeta_{\cC} - \hbeta_{\rm m}) + \widehat{\lambda}_{{\rm ms}, 2}(\hdelta_{\cC} - \hdelta_{\rm m}), (\bbI_{p} - \bbPi_{\cC})\bbeta_{\cC} + \bbQ_{\alpha}(\bbI_{q} - \bbPi_{d})(\balpha_{\cT} - \balpha_{\cS}))$. Applying the same arguments in the proof of Theorem \ref{thm: dominance ds} to $\widehat{\lambda}_{{\rm ms}, 1}(\hbeta_{\cC} - \hbeta_{\rm m}) + \widehat{\lambda}_{{\rm ms}, 2}(\hdelta_{\cC} - \hdelta_{\rm m})$, we have $\cE(\widehat{\lambda}_{{\rm ms}, 1}(\hbeta_{\cC} - \hbeta_{\rm m}) + \widehat{\lambda}_{{\rm ms}, 2}(\hdelta_{\cC} - \hdelta_{\rm m}), (\bbI_{p} - \bbPi_{\cC})\bbeta_{\cC} + \bbQ_{\alpha}(\bbI_{q} - \bbPi_{d})(\balpha_{\cT} - \balpha_{\cS})) \leq \cE(\hbeta_{\cC} - \hbeta_{\rm m}, (\bbI_{p} - \bbPi_{\cC})\bbeta_{\cC}) + \cE(\hdelta_{\cC} - \hdelta_{\rm m}, \bbQ_{\alpha}(\bbI_{q} - \bbPi_{d})(\balpha_{\cT} - \balpha_{\cS}))$ where the strict inequality holds if $c_{\beta, {\rm m}} > 0$ or $c_{\delta, {\rm m}} > 0$. Combining this with the expressions of $\cE(\hbeta_{\cT}, \bbeta_{\cT})$ and $\cE(\hbeta_{\rm mds}, \bbeta_{\cT})$ derived above, the proof of
	Theorem \ref{thm: dominance mds} is completed. 
	
	\subsection{Proof of Theorem \ref{thm: dominance cds}}\label{app: proof cds}
	Note that $\halpha_{\cS} \Perp (\hbeta_{\cT}, \halpha_{\cT})$.
	Let
	\[
	\cE_{\halpha_{\cS}}(\widehat{\bxi}, \bxi) = E\left\{(\widehat{\bxi} - \bxi)^{\T}\bbH(\widehat{\bxi} - \bxi)\mid \halpha_{\cS}\right\}
	\]
	for any $p$-dimensional random vector $\widehat{\bxi}$ and deterministic vector $\bxi$.
	Similar arguments as those in the proof of Theorem \ref{thm: dominance ds} can show the conditional dominance result $\cE_{\halpha_{\cS}}(\hbeta_{\rm ds}^{\dag},\bbeta_{\cT}) \leq \cE(\hbeta_{\cT}, \bbeta_{\cT})$ for any $\halpha_{\cS}$, which implies the first result in Theorem \ref{thm: dominance cds} according to the law of iterated expectations. If $c_{\beta}^{\dag} > 0$ or $c_{\delta}^{\dag} > 0$, we have $\cE_{\halpha_{\cS}}(\hbeta_{\rm ds}^{\dag},\bbeta_{\cT}) < \cE(\hbeta_{\cT}, \bbeta_{\cT})$ and hence $\cE(\hbeta_{\rm ds}^{\dag},\bbeta_{\cT}) < \cE(\hbeta_{\cT}, \bbeta_{\cT})$ according to arguments similar to the proof of Theorem \ref{thm: dominance ds}, which completes the proof of Theorem \ref{thm: dominance cds}.

		\subsection{Proof of \eqref{eq: compr ds&cds} in the main text}\label{app: compr ds&cds}
		\begin{proof}
			Recall that $\bbH = \bbI_{p}$, $\hbeta_{\cT} = \halpha_{\cT} \sim N(\bzero, 2n_{\cT}^{-1}\bbI_{p}),\  \hbeta_{\cS} = \halpha_{\cS} \sim N(\bzero, 2n_{\cS}^{-1}\bbI_{p})$ in this example.
			Then, $\hbeta_{\cC} = (n_{\cT}\halpha_{\cT} + n_{\cS}\halpha_{\cS}) / (n_{\cT} + n_{\cS})$, $\hdelta_{\cC} = n_{\cS}(\halpha_{\cT} - \halpha_{\cS}) / (n_{\cT} + n_{\cS})$,  $\bbeta_{\cC} = \bzero$, and $\bdelta_{\cC} = \bzero$. Thus, $\var(\hbeta_{\cC}) = 2/(n_{\cT} + n_{\cS})\bbI_{p} = v_{\cC}^{2}\bbI_{p}$, $\var(\hdelta_{\cC}) = 2n_{\cS}/\{n_{\cT}(n_{\cT} + n_{\cS})\}\bbI_{p} = v_{\delta}^{2}\bbI_{p}$, $v_{\cC}^{2} + v_{\delta}^{2} = 2n_{\cT}^{-1}$, $c_{\beta} = (p - 2)v_{\cC}^{2}$, and $c_{\delta} = (p - 2)v_{\delta}^{2}$. Then, the $\cE$-error of the dShrink estimator $\hbeta_{\rm ds}$ can be decomposed as  
			\begin{align*}	
				\cE(\hbeta_{\rm ds}, \bbeta_{\cT}) &= \cE(\widehat{\lambda}_{{\rm s}, 1}\hbeta_{\cC} +  \widehat{\lambda}_{{\rm s}, 2}\hdelta_{\cC}, \bzero)\notag\\
				&=E\left(\widehat{\lambda}_{{\rm s}, 1}^{2}\|\hbeta_{\cC}\|^{2}\right) + E\left(\widehat{\lambda}_{{\rm s}, 2}^{2}\|\hdelta_{\cC}\|^{2}\right) + 2E\left\{(\widehat{\lambda}_{{\rm s}, 1}\hbeta_{\cC})^{\T}(\widehat{\lambda}_{{\rm s}, 2}\hdelta_{\cC})\right\}\notag\\
				&=E\left(\widehat{\lambda}_{{\rm s}, 1}^{2}\|\hbeta_{\cC}\|^{2}\right) + E\left(\widehat{\lambda}_{{\rm s}, 2}^{2}\|\hdelta_{\cC}\|^{2}\right)\\
				&=E\left(\|\hbeta_{\cC}\|^{2}\right) -  2E\left[\min\{1, (p - 2)v_{\cC}^{2}\|\hbeta_{\cC}\|^{-2}\}\|\hbeta_{\cC}\|^{2}\right] + E\left[\min\{1, (p - 2)v_{\cC}^{2}\|\hbeta_{\cC}\|^{-2}\}^{2}\|\hbeta_{\cC}\|^{2}\right]\\
				&\quad + E\left(\|\hdelta_{\cC}\|^{2}\right) -  2E\left[\min\{1, (p - 2)v_{\delta}^{2}\|\hdelta_{\cC}\|^{-2}\}\|\hdelta_{\cC}\|^{2}\right] + E\left[\min\{1, (p - 2)v_{\delta}^{2}\|\hdelta_{\cC}\|^{-2}\}^{2}\|\hdelta_{\cC}\|^{2}\right]\\
				& = 2n_{\cT}^{-1}p - E\left[1\left\{\|\hbeta_{\cC}\|^{2}\geq (p - 2)v_{\cC}^{2}\right\}(p - 2)v_{\cC}^{2}\{2 - (p - 2)v_{\cC}^{2}\|\hbeta_{\cC}\|^{-2}\}\right]\\
				&\quad - E\left[1\left\{\|\hbeta_{\cC}\|^{2}< (p - 2)v_{\cC}^{2}\right\}\|\hbeta_{\cC}\|^{2}\right]\\
				&\quad - E\left[1\left\{\|\hdelta_{\cC}\|^{2}\geq (p - 2)v_{\delta}^{2}\right\}(p - 2)v_{\delta}^{2}\{2 - (p - 2)v_{\delta}^{2}\|\hdelta_{\cC}\|^{-2}\}\right]\\
				&\quad - E\left[1\left\{\|\hdelta_{\cC}\|^{2}< (p - 2)v_{\delta}^{2}\right\}\|\hdelta_{\cC}\|^{2}\right].
			\end{align*}
			Let $\pi_{\cS} = (n_{\cT} + n_{\cS})^{-1}n_{\cS}$. The distributions of $\hbeta_{\cC}$ and $\sqrt{(1 - \pi_{\cS})/\pi_{\cS}}\hdelta_{\cC}$ are identical. Thus,
			\begin{align*}	
				\cE(\hbeta_{\rm ds}, \bbeta_{\cT}) 
				& = 2n_{\cT}^{-1}p - E\left[1\left\{\|\hdelta_{\cC}\|^{2}\geq (p - 2)v_{\delta}^{2}\right\}(p - 2)(\pi_{\cS}^{-1} - 1)v_{\delta}^{2}\{2 - (p - 2)v_{\delta}^{2}\|\hdelta_{\cC}\|^{-2}\}\right]\\
				&\quad -(\pi_{\cS}^{-1} - 1) E\left[1\left\{\|\hdelta_{\cC}\|^{2}< (p - 2)v_{\delta}^{2}\right\}\|\hdelta_{\cC}\|^{2}\right]\\
				&\quad - E\left[1\left\{\|\hdelta_{\cC}\|^{2}\geq (p - 2)v_{\delta}^{2}\right\}(p - 2)v_{\delta}^{2}\{2 - (p - 2)v_{\delta}^{2}\|\hdelta_{\cC}\|^{-2}\}\right]\\
				&\quad - E\left[1\left\{\|\hdelta_{\cC}\|^{2}< (p - 2)v_{\delta}^{2}\right\}\|\hdelta_{\cC}\|^{2}\right]\\
				& = 2n_{\cT}^{-1}p - \pi_{\cS}^{-1}E\left[1\left\{\|\hdelta_{\cC}\|^{2}\geq (p - 2)v_{\delta}^{2}\right\}(p - 2)v_{\delta}^{2}\{2 - (p - 2)v_{\delta}^{2}\|\hdelta_{\cC}\|^{-2}\}\right]\\
				&\quad -\pi_{\cS}^{-1}E\left[1\left\{\|\hdelta_{\cC}\|^{2}< (p - 2)v_{\delta}^{2}\right\}\|\hdelta_{\cC}\|^{2}\right].
			\end{align*}
			On the other hand, we have $\hbeta_{\cC}^{\dag} = \halpha_{\cS} = \hbeta_{\cC} - (\pi_{\cS}^{-1} - 1)\hdelta_{\cC}$ and $\hdelta_{\cC}^{\dag} = \halpha_{\cT} - \halpha_{\cS} = \pi_{\cS}^{-1}\hdelta_{\cC}$. Thus, $\var(\hbeta_{\cC}^{\dag}\mid \halpha_{\cS}) = \mathbf{0}$, $\var(\hdelta_{\cC}^{\dag}\mid \halpha_{\cS}^{\dag}) = 2n_{\cT}^{-1}\bbI_{p}$, $c_{\beta}^{\dag} = 0$, and $c_{\delta}^{\dag} = 2(p - 2)n_{\cT}^{-1} = (p - 2)\pi_{\cS}^{-1}v_{\delta}^{2}$. Then, similar to the above decomposition for the dShrink estimator, the $\cE$-error of the c-dShrink estimator $\hbeta_{\rm ds}^{\dag}$ can be decomposed as
			\begin{align*}	
				\cE(\hbeta_{\rm ds}^{\dag}, \bbeta_{\cT})
				& = 2n_{\cT}^{-1}p - E\left[1\left\{\|\hdelta_{\cC}^{\dag}\|^{2}\geq c_{\delta}^{\dag}\right\}c_{\delta}^{\dag}\{2 - c_{\delta}^{\dag}\|\hdelta_{\cC}^{\dag}\|^{-2}\}\right] - E\left[1\left\{\|\hdelta_{\cC}^{\dag}\|^{2}< c_{\delta}^{\dag}\right\}\|\hdelta_{\cC}^{\dag}\|^{2}\right]\\
				&\quad - 2E\left[1\left\{\|\hdelta_{\cC}^{\dag}\|^{2}\geq c_{\delta}^{\dag}\right\}c_{\delta}^{\dag}\|\hdelta_{\cC}^{\dag}\|^{-2}\hdelta_{\cC}^{\dag\T}\hbeta_{\cC}^{\dag}\right] - 2E\left[1\left\{\|\hdelta_{\cC}^{\dag}\|^{2}< c_{\delta}^{\dag}\right\}\hdelta_{\cC}^{\dag\T}\hbeta_{\cC}^{\dag}\right]\\
				& = 2n_{\cT}^{-1}p - E\left[1\left\{\|\hdelta_{\cC}\|^{2}\geq (p - 2)\pi_{\cS}v_{\delta}^{2}\right\}(p - 2)\pi_{\cS}^{-1}v_{\delta}^{2}\{2 - (p - 2)\pi_{\cS}v_{\delta}^{2}\|\hdelta_{\cC}\|^{-2}\}\right]\\
				&\quad - E\left[1\left\{\|\hdelta_{\cC}\|^{2}< (p - 2)\pi_{\cS}v_{\delta}^{2}\right\}\pi_{\cS}^{-2}\|\hdelta_{\cC}\|^{2}\right]\\
				&\quad + 2(\pi_{\cS}^{-1} - 1)E\left[1\left\{\|\hdelta_{\cC}\|^{2}\geq (p - 2)\pi_{\cS}v_{\delta}^{2}\right\}(p - 2)v_{\delta}^{2}\right]\\
				&\quad + 2(\pi_{\cS}^{-2} - \pi_{\cS}^{-1})E\left[1\left\{\|\hdelta_{\cC}\|^{2}< (p - 2)\pi_{\cS}v_{\delta}^{2}\right\}\|\hdelta_{\cC}\|^{2}\right]\\
				& = 2n_{\cT}^{-1}p - E\left[1\left\{\|\hdelta_{\cC}\|^{2}\geq (p - 2)\pi_{\cS}v_{\delta}^{2}\right\}(p - 2)v_{\delta}^{2}\{2 - (p - 2)v_{\delta}^{2}\|\hdelta_{\cC}\|^{-2}\}\right]\\
				&\quad + (\pi_{\cS}^{-2} - 2\pi_{\cS}^{-1})E\left[1\left\{\|\hdelta_{\cC}\|^{2}< (p - 2)\pi_{\cS}v_{\delta}^{2}\right\}\|\hdelta_{\cC}\|^{2}\right]\\
				& = 2n_{\cT}^{-1}p - E\left[1\left\{\|\hdelta_{\cC}\|^{2}\geq (p - 2)\pi_{\cS}v_{\delta}^{2}\right\}(p - 2)v_{\delta}^{2}\{2 - (p - 2)v_{\delta}^{2}\|\hdelta_{\cC}\|^{-2}\}\right]\\
				&\quad - E\left[1\left\{\|\hdelta_{\cC}\|^{2}< (p - 2)\pi_{\cS}v_{\delta}^{2}\right\}\|\hdelta_{\cC}\|^{2}\right] + (\pi_{\cS}^{-1} - 1)^{2}E\left[1\left\{\|\hdelta_{\cC}\|^{2}< (p - 2)\pi_{\cS}v_{\delta}^{2}\right\}\|\hdelta_{\cC}\|^{2}\right].
			\end{align*}
			Thus,
			\begin{align*}	
				&\cE(\hbeta_{\rm ds}^{\dag}, \bbeta_{\cT}) - \cE(\hbeta_{\rm ds}, \bbeta_{\cT})\\
				& = (\pi_{\cS}^{-1} - 1)^{2}E\left[1\left\{\|\hdelta_{\cC}\|^{2}< (p - 2)\pi_{\cS}v_{\delta}^{2}\right\}\|\hdelta_{\cC}\|^{2}\right]\\
				&\quad - E\left[1\left\{(p - 2)v_{\delta}^{2}\|\hdelta_{\cC}\|^{-2}\in (1, \pi_{\cS}^{-1}]\right\}(p - 2)v_{\delta}^{2}\{2 - (p - 2)v_{\delta}^{2}\|\hdelta_{\cC}\|^{-2}\}\right]\\
				& \quad + E\left[1\left\{(p - 2)v_{\delta}^{2}\|\hdelta_{\cC}\|^{-2}\in (1, \pi_{\cS}^{-1}]\right\}\|\hdelta_{\cC}\|^{2}\right]\\
				&\quad + (\pi_{\cS}^{-1} - 1)E\left[1\left\{\|\hdelta_{\cC}\|^{2}\geq (p - 2)v_{\delta}^{2}\right\}(p - 2)v_{\delta}^{2}\{2 - (p - 2)v_{\delta}^{2}\|\hdelta_{\cC}\|^{-2}\}\right]\\
				&\quad
				+ (\pi_{\cS}^{-1} - 1)E\left[1\left\{\|\hdelta_{\cC}\|^{2}< (p - 2)v_{\delta}^{2}\right\}\|\hdelta_{\cC}\|^{2}\right]\\
				& = (\pi_{\cS}^{-1} - 1)^{2}E\left[1\left\{(p - 2)v_{\delta}^{2}\|\hdelta_{\cC}\|^{-2} > \pi_{\cS}^{-1}\right\}\|\hdelta_{\cC}\|^{2}\right]\\
				& \quad + E\left[1\left\{(p - 2)v_{\delta}^{2}\|\hdelta_{\cC}\|^{-2}\in (1, \pi_{\cS}^{-1}]\right\}\|\hdelta_{\cC}\|^{2}\left\{(p - 2)v_{\delta}^{2}\|\hdelta_{\cC}\|^{-2} - 1\right\}^{2}\right]\\
				&\quad + (\pi_{\cS}^{-1} - 1)E\left[1\left\{(p - 2)v_{\delta}^{2}\|\hdelta_{\cC}\|^{-2}\leq 1\right\}(p - 2)v_{\delta}^{2}\{2 - (p - 2)v_{\delta}^{2}\|\hdelta_{\cC}\|^{-2}\}\right]\\
				&\quad
				+ (\pi_{\cS}^{-1} - 1)E\left[1\left\{(p - 2)v_{\delta}^{2}\|\hdelta_{\cC}\|^{-2} > 1\right\}\|\hdelta_{\cC}\|^{2}\right]\\
				& = E\left(1\left\{(p - 2)v_{\delta}^{2}\|\hdelta_{\cC}\|^{-2} > 1\right\}\|\hdelta_{\cC}\|^{2}\left[\min\{\pi_{\cS}^{-1}, (p - 2)v_{\delta}^{2}\|\hdelta_{\cC}\|^{-2}\} - 1\right]^{2}\right)\\
				&\quad + (\pi_{\cS}^{-1} - 1)E\left[1\left\{(p - 2)v_{\delta}^{2}\|\hdelta_{\cC}\|^{-2}\leq 1\right\}(p - 2)v_{\delta}^{2}\{2 - (p - 2)v_{\delta}^{2}\|\hdelta_{\cC}\|^{-2}\}\right]\\
				&\quad
				+ (\pi_{\cS}^{-1} - 1)E\left[1\left\{(p - 2)v_{\delta}^{2}\|\hdelta_{\cC}\|^{-2} > 1\right\}\|\hdelta_{\cC}\|^{2}\right].
			\end{align*}
		\end{proof}
		\subsection{Hardness of estimating $(\lambda_{{\rm opt},1}, \lambda_{{\rm opt},2})$ and $(\lambda_{{\rm s},1}, \lambda_{{\rm s},2})$}\label{app: hardness}
		The mean square consistent estimation of $(\lambda_{{\rm opt},1}, \lambda_{{\rm opt},2})$ and $(\lambda_{{\rm s},1}, \lambda_{{\rm s},2})$ are not possible in general. For illustration, we show that the minimax lower bound does not converge to zero as $n_{\cT}, n_{\cS} \to \infty$ even under the simple example \eqref{eq: counter eg} in the main text. For simplicity, we assume $n_{\cT} = n_{\cS}$. Then, we have the following proposition.
		\begin{proposition}\label{prop: hardness of coefficient estimation}
			Under the example \eqref{eq: counter eg} in the main text with $n_{\cT} = n_{\cS}$. Let $\Theta = \{(\boldsymbol{\mu}, \bbias): \|\boldsymbol{\mu}\|^{2} \leq C, \|\bbias\|^{2} \leq C\}$ for some $C > 0$ be the parameter space. Suppose $\bbH = \bbI_{p}$. Then, we have
			\[
			\inf_{(\check{\lambda}_{1}, \check{\lambda}_{2})}\sup_{\boldsymbol{\mu},\bbias}E\left[\left\|(\check{\lambda}_{1}, \check{\lambda}_{2}) - (\lambda_{{\rm opt},1}, \lambda_{{\rm opt},2})\right\|^{2}\right] \geq
			\frac{1}{8}\left(\frac{c}{c + p}\right)^{2}
			\]
			and
			\[
			\inf_{(\check{\lambda}_{1}, \check{\lambda}_{2})}\sup_{\boldsymbol{\mu},\bbias}E\left[\left\|(\check{\lambda}_{1}, \check{\lambda}_{2}) - (\lambda_{{\rm s},1}, \lambda_{{\rm s},2})\right\|^{2}\right] \geq
			\frac{1}{8}\left(\frac{c}{c + p}\right)^{2},
			\]
			where $c = \min\{C, 1/2\}$ and the infimums are taken over all possible estimators.
		\end{proposition}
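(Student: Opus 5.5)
The plan is a two-point (Le Cam) lower bound in which the separation between the two parameter points is of order $n^{-1/2}$. At that scale the data cannot distinguish the two hypotheses, while the combination coefficients stay a fixed distance apart.

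First I will reduce the example. Write $n=n_{\cT}=n_{\cS}$. By the computation in the proof of Proposition \ref{prop: pretest},
\[
\bbeta_{\cC}=\boldsymbol{\mu},\qquad \bdelta_{\cC}=\bbias,\qquad \bbSig_{\cC}=\bbSig_{\delta}=n^{-1}\bbI_{p}.
\]
Moreover, $\hbeta_{\cC}$ and $\hdelta_{\cC}$ are independent. The observed data $(\hbeta_{\cT},\hbeta_{\cS})$ is an invertible linear transform of $(\hbeta_{\cC},\hdelta_{\cC})$, and
\[
(\hbeta_{\cC},\hdelta_{\cC})\sim N\{(\boldsymbol{\mu},\bbias),n^{-1}\bbI_{2p}\}.
\]
With $\bbH=\bbI_p$, the off-diagonal entry of $\bbB$ is $\boldsymbol{\mu}^{\T}\bbias$. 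Whenever this entry vanishes, $\bbB$ is diagonal and
\[
(\lambda_{{\rm opt},1},\lambda_{{\rm opt},2})=(\lambda_{{\rm s},1},\lambda_{{\rm s},2})=\Bigl(\frac{\|\boldsymbol{\mu}\|^2}{\|\boldsymbol{\mu}\|^2+p/n},\ \frac{\|\bbias\|^2}{\|\bbias\|^2+p/n}\Bigr).
\]
So one construction handles both displayed bounds.

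Next I choose the two hypotheses. Take $\theta_0=(\bzero,\bzero)$ and $\theta_1=(\boldsymbol{\mu}_0,\bzero)$ with $\|\boldsymbol{\mu}_0\|^2=c/n$. Since $c\le C$ and $n\ge1$, both points lie in $\Theta$, and $\boldsymbol{\mu}^{\T}\bbias=0$ at both. The target coefficient vector is $(0,0)$ under $\theta_0$ and $(c/(c+p),0)$ under $\theta_1$. The squared separation is therefore $d^2=\{c/(c+p)\}^2$, which does not depend on $n$.

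Then I apply the standard two-point inequality. For any estimator $\check\lambda$ and targets $\lambda^{(0)},\lambda^{(1)}$,
\[
E_0\|\check\lambda-\lambda^{(0)}\|^2+E_1\|\check\lambda-\lambda^{(1)}\|^2\ \ge\ \tfrac{d^2}{2}\int\min(p_0,p_1)\ \ge\ \tfrac{d^2}{2}\,(1-{\rm TV}(P_0,P_1)).
\]
This follows from $\|a-\lambda^{(0)}\|^2+\|a-\lambda^{(1)}\|^2\ge d^2/2$ for every $a$. Hence the maximum risk over $\Theta$ is at least $\tfrac{d^2}{4}(1-{\rm TV})$.

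The step that controls the bound is bounding ${\rm TV}$, which I do with Pinsker's inequality. The KL divergence between the two Gaussian laws with covariance $n^{-1}\bbI_{2p}$ is $n\|\boldsymbol{\mu}_0\|^2/2=c/2\le 1/4$. Pinsker gives ${\rm TV}\le\sqrt{c/4}\le 1/(2\sqrt2)<1/2$. The maximum risk is then at least $\tfrac14\cdot\tfrac12\{c/(c+p)\}^2=\tfrac18\{c/(c+p)\}^2$, simultaneously for $(\lambda_{{\rm opt},1},\lambda_{{\rm opt},2})$ and $(\lambda_{{\rm s},1},\lambda_{{\rm s},2})$.

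I expect no real obstacle. The points needing care are that the separation must be taken at the $n^{-1/2}$ scale, so that the coefficients stay apart while the KL divergence stays bounded, and that $\boldsymbol{\mu}^{\T}\bbias=0$ at both hypotheses, so that $\lambda_{\rm opt}$ coincides with $\lambda_{\rm s}$.
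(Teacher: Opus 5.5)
Your proposal is correct and is essentially the paper's argument: a Le Cam two-point bound with separation of order $n^{-1/2}$, Pinsker's inequality for the total variation distance, and the observation that $\boldsymbol{\mu}^{\T}\bbias=0$ at both points, so that $(\lambda_{{\rm opt},1},\lambda_{{\rm opt},2})=(\lambda_{{\rm s},1},\lambda_{{\rm s},2})$ there. The only difference is cosmetic: you test $(\bzero,\bzero)$ against $(\boldsymbol{\mu}_0,\bzero)$ and prove the two-point inequality directly, whereas the paper tests $(\ba,\bzero)$ against $(\bzero,\ba)$ and cites equation (15.14) of Wainwright; this doubles both the squared separation and the KL divergence and gives the same constant $\frac{1}{8}\{c/(c+p)\}^{2}$.
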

		\begin{proof}
			For any $\boldsymbol{\mu}$, $\bbias$, straightforward calculations can show that
			$\bbeta_{\cC} = \boldsymbol{\mu}$, $\bdelta_{\cC} = \bbias$, $\tr\{\bbH\bbSig_{\cC}\} = \tr\{\bbH\bbSig_{\delta}\} = pn_{\cT}^{-1}$,
			\[
			\begin{aligned}
				(\lambda_{{\rm opt},1}, \lambda_{{\rm opt},2}) 
				& = (1, 1) - \{\|\boldsymbol{\mu}\|^{2}\|\bbias\|^{2} + pn_{\cT}^{-1}(\|\boldsymbol{\mu}\|^{2} + \|\bbias\|^{2}) + p^{2}n_{\cT}^{-2} - (\boldsymbol{\mu}^{\T}\bbias)^{2}\}^{-1}\\
				&\quad\times(pn_{\cT}^{-1}(\|\bbias\|^{2} + pn_{\cT}^{-1} - \boldsymbol{\mu}^{\T}\bbias), pn_{\cT}^{-1}(\|\boldsymbol{\mu}\|^{2} + pn_{\cT}^{-1} - \boldsymbol{\mu}^{\T}\bbias)),
			\end{aligned}
			\]
			and
			\[
			\begin{aligned}
				(\lambda_{{\rm s},1}, \lambda_{{\rm s},2}) 
				& = (1, 1) - \left(\frac{pn_{\cT}^{-1}}{pn_{\cT}^{-1} + \|\boldsymbol{\mu}\|^{2}}, \frac{pn_{\cT}^{-1}}{pn_{\cT}^{-1} + \|\bbias\|^{2}}\right).
			\end{aligned}
			\]
			Take a arbitrary $p$-dimensional vector $\ba$ such that $\|\ba\|^{2} = n_{\cT}^{-1}c$. Let $\boldsymbol{\mu}_{0} = \ba$, $\bbias_{0} = \bzero$, $\boldsymbol{\mu}_{1} = \bzero$, and $\bbias_{1} = \ba$. Let $P_{0}$ and $P_{1}$ be the joint distributions of $(\hbeta_{\cT}, \hbeta_{\cS})$ under the parameters $(\boldsymbol{\mu}_{0}, \bbias_{0})$ and $(\boldsymbol{\mu}_{1}, \bbias_{1})$, respectively. For $t = 0, 1$, define $(\lambda_{{\rm opt},1}^{(t)}, \lambda_{{\rm opt},2}^{(t)})$ and $(\lambda_{{\rm s},1}^{(t)}, \lambda_{{\rm s},2}^{(t)})$ as the value of $(\lambda_{{\rm opt},1}, \lambda_{{\rm opt},2})$ and $(\lambda_{{\rm s},1}, \lambda_{{\rm s},2})$ under $P_{t}$, respectively. Then, we have
			\[
			\big\|(\lambda_{{\rm opt},1}^{(0)}, \lambda_{{\rm opt},2}^{(0)}) - (\lambda_{{\rm opt},1}^{(1)}, \lambda_{{\rm opt},2}^{(1)})\big\|^{2} = 2c^{2}(c + p)^{-2}
			\]
			and
			\[
			\big\|(\lambda_{{\rm s},1}^{(0)}, \lambda_{{\rm s},2}^{(0)}) - (\lambda_{{\rm s},1}^{(1)}, \lambda_{{\rm s},2}^{(1)})\big\|^{2} = 2c^{2}(c + p)^{-2}.
			\]
			Moreover, we have $d_{\rm TV}(P_{0}, P_{1}) \leq \sqrt{c / 2} \leq 1/2$ according to Pinsker's inequality and the expression for the KL-divergence between multivariate normal distributions, where $d_{\rm TV}(\cdot, \cdot)$ is the total variation distance between two distributions. Then, the conclusion of this proposition by applying equation (15.14) in \cite{wainwright2019high} with $\delta = 2^{-1/2}c/(c + p)$, $\rho(\ba_{1}, \ba_{2}) = \|\ba_{1} - \ba_{2}\|$, and $\Phi(a) = a^{2}$ for any two-dimensional vectors $\ba_{1}$, $\ba_{2}$, and scalar $a$.
		\end{proof}
	
	\section{Additional simulation results}
		\subsection{MSEs with respect to $(\lambda_{{\rm opt}, 1}, \lambda_{{\rm opt}, 2})$ of combination coefficient estimators in example  \eqref{eq: counter eg} in the main text}
		Figure \ref{fig: eg-weight2} presents the MSEs with respect to $(\lambda_{{\rm opt}, 1}, \lambda_{{\rm opt}, 2})$ of combination coefficient estimators in example  \eqref{eq: counter eg} in the main text. In Figure \ref{fig: eg-weight2}, adopting the surrogate loss can reduce the MSE for estimating $(\lambda_{{\rm opt}, 1}, \lambda_{{\rm opt}, 2})$ in most cases.
		\begin{figure}
			\centering
			\includegraphics[scale = 0.4]{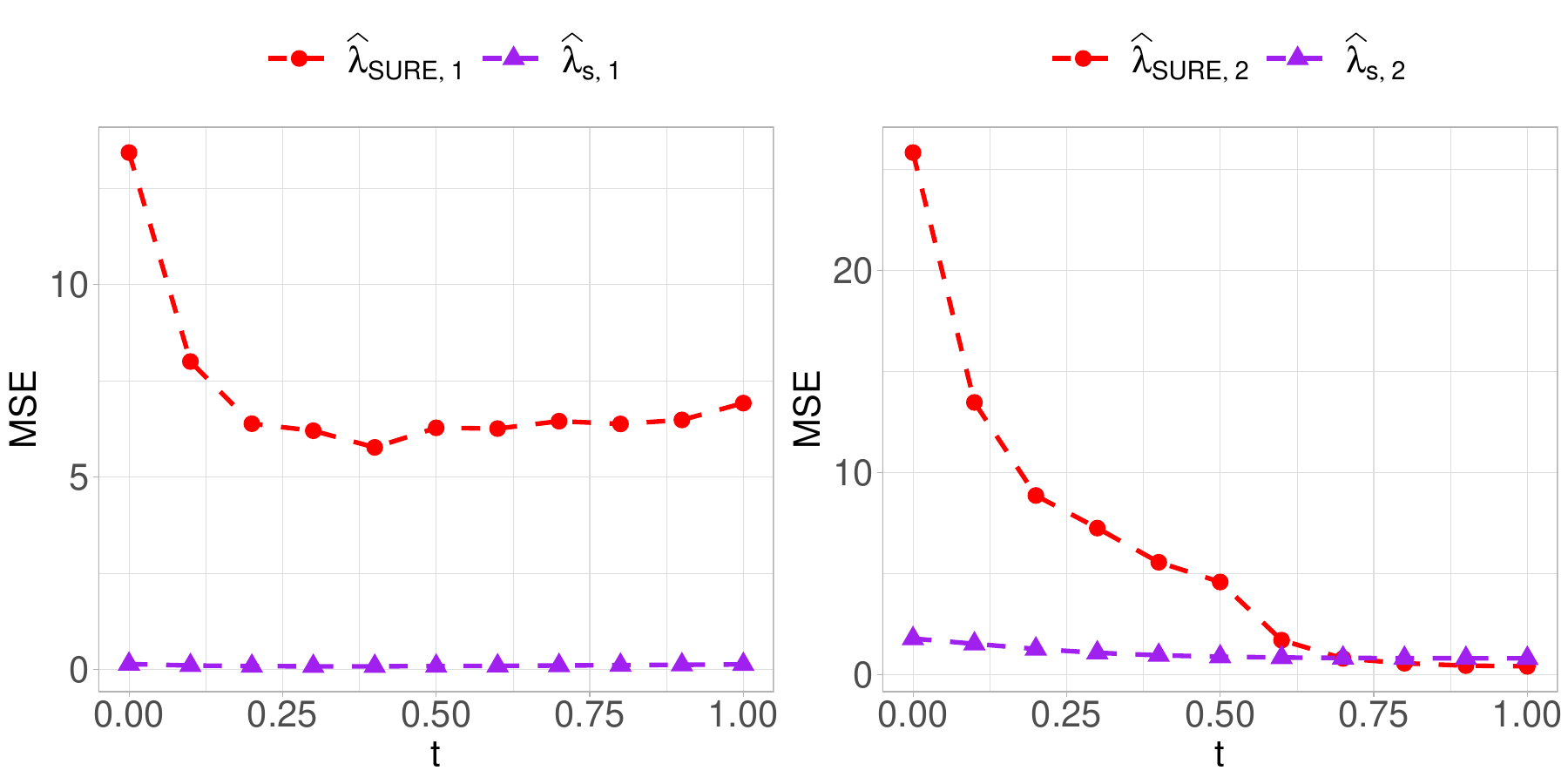}
			\caption{\it MSEs of $\widehat{\lambda}_{{\rm s}, 1}$, $\widehat{\lambda}_{{\rm s}, 2}$ $\widehat{\lambda}_{{\SURE}, 1}$, and $\widehat{\lambda}_{{\SURE}, 2}$ defined in \eqref{eq: lambda ds} and \eqref{eq: lambda sure} with respect to $\lambda_{{\rm opt}, 1}$ and $\lambda_{{\rm opt}, 2}$  in the simulation under the setting of \eqref{eq: counter eg} with $\bbH = \bbI_{p}$, $p = 5$, $n_{\cT} = n_{\cS} = 300$,  $\bbeta_{\cT} = \balpha_{\cT} = \boldsymbol{\mu} + \bbias,$  $\balpha_{\cS} = \boldsymbol{\mu} - \bbias,$ $\boldsymbol{\mu} = (0.05, 0.02, 0.1, 0.1, 0.1) ^{\T} / \sqrt{5}$, 
				$\bbias = t\etab$, $\etab = (0.2, 0.3, 0.3, 0.3, 0.3)^{\T} / \sqrt{5}$ and different $t$'s.}\label{fig: eg-weight2}
		\end{figure}
	
	\subsection{Numerical results regarding the ridge-SURE estimator}\label{app: sim ridge SURE}
	In this section, we evaluate the ridge-SURE estimator $\hbeta_{\SURE}^{(r)}$ introduced in Section \ref{subsec: SURE}  under various tuning parameters $r$.  We conduct a simulation study following the setting described in \eqref{eq: counter eg} in the main text. Figure \ref{fig: eg-ridge-SURE} presents the MSEs of different estimators under different heterogeneity levels based on $5000$ simulation runs.
	\begin{figure}
		\centering
		\includegraphics[scale = 0.25]{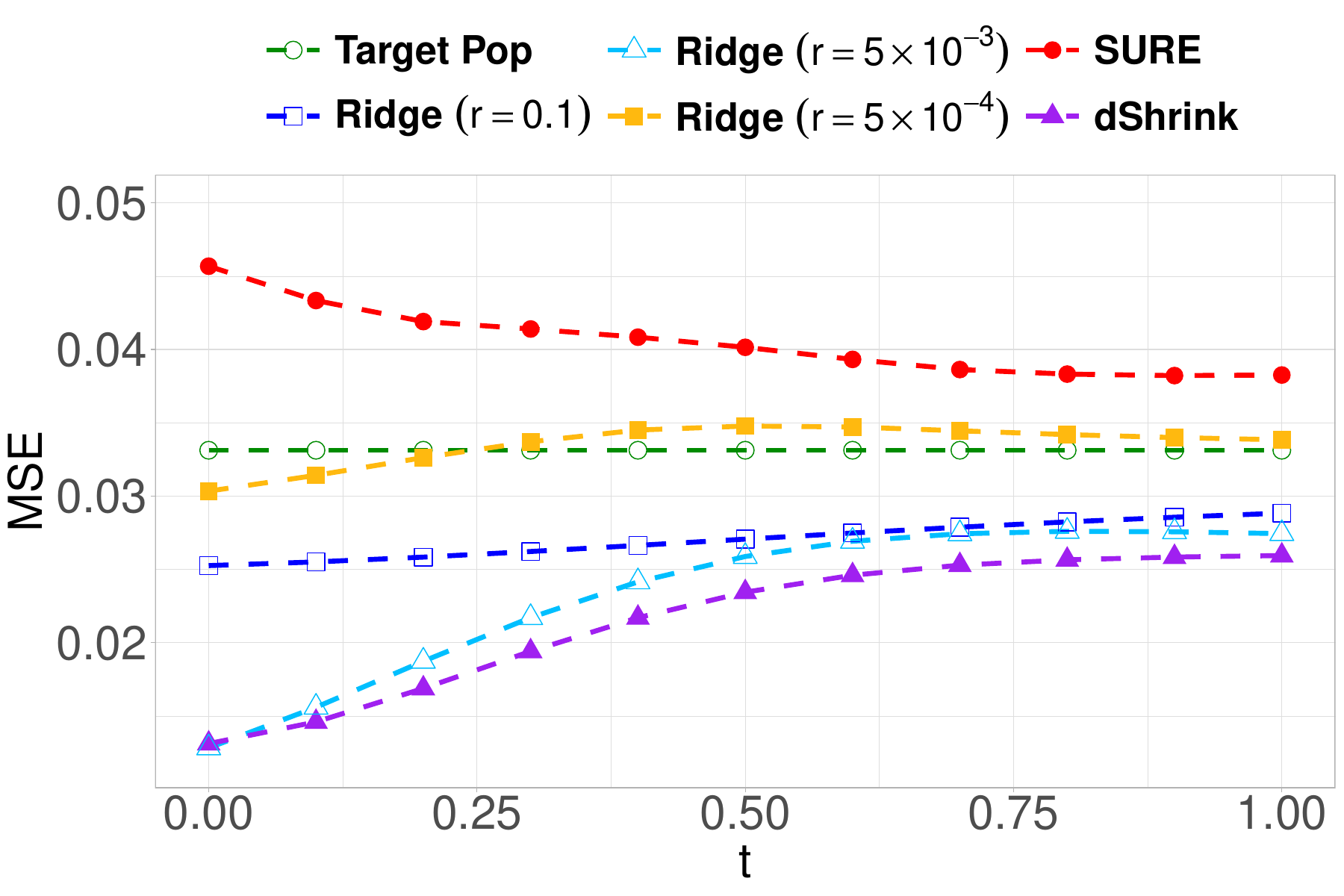}
		\caption{\it MSEs of $\hbeta_{\cT}$, $\hbeta_{\rm ds}$, $\hbeta_{\SURE}$, and $\hbeta_{\SURE}^{(r)}$ with different $r$'s in the simulation under the setting of \eqref{eq: counter eg} with $\bbH = \bbI_{p}$, $p = 5$, $n_{\cT} = n_{\cS} = 300$, $\boldsymbol{\mu} = (0.05, 0.02, 0.1, 0.1, 0.1) ^{\T} / \sqrt{5}$, $\bbias = t\etab$, $\etab = (0.2, 0.3, 0.3, 0.3, 0.3)^{\T} / \sqrt{5}$ and different $t$'s.}\label{fig: eg-ridge-SURE}
	\end{figure}
	As shown in Figure \ref{fig: eg-ridge-SURE}, the performance of $\hbeta_{\SURE}^{(r)}$ is sensitive to the choice of the tuning parameter $r$. In particular, $\hbeta_{\SURE}^{(r)}$  is closely match $\hbeta_{\rm ds}$ when $r = 5\times 10^{-3}$ but performs worse than $\hbeta_{\rm ds}$ when $r = 0.1$ or $5\times 10^{-4}$. Notably, when $r = 5 \times 10^{-4}$, $\hbeta_{\SURE}^{(r)}$ may yield a larger error than $\hbeta_{\cT}$, indicating a negative transfer problem.

		\subsection{Bias and variance comparison in the simulation in Section \ref{subsec: sim} in the main text}
		We present in Figure \ref{fig: sim bias var} the bias and variance of the dShrink estimator with $\bbH$ being an estimate of $E[(1 \ \bX^{\T})^{\T}(1 \ \bX^{\T})]$ under the simulation setting in Section \ref{subsec: sim} in the main text. Bias and variance of some other estimators ($\hbeta_{\cT}$, $\hbeta_{\cC}$, $\hbeta_{\rm prt}$, $\hbeta_{\SURE}$) are plotted for reference. As illustrated in Figure \ref{fig: sim bias var}, the target population-based estimator is nearly unbiased, dShrink achieves a marked reduction in variance relative to the target population-based estimator at the cost of a mild increase in bias.
		\begin{figure}[h]
			\centering
			\subfigure[Squared sum of the bias across different parameter components. $p_{X} = 10$, $n_{\cT} = 100$, $n_{\cS} = 500$]{
				\includegraphics[scale = 0.25]{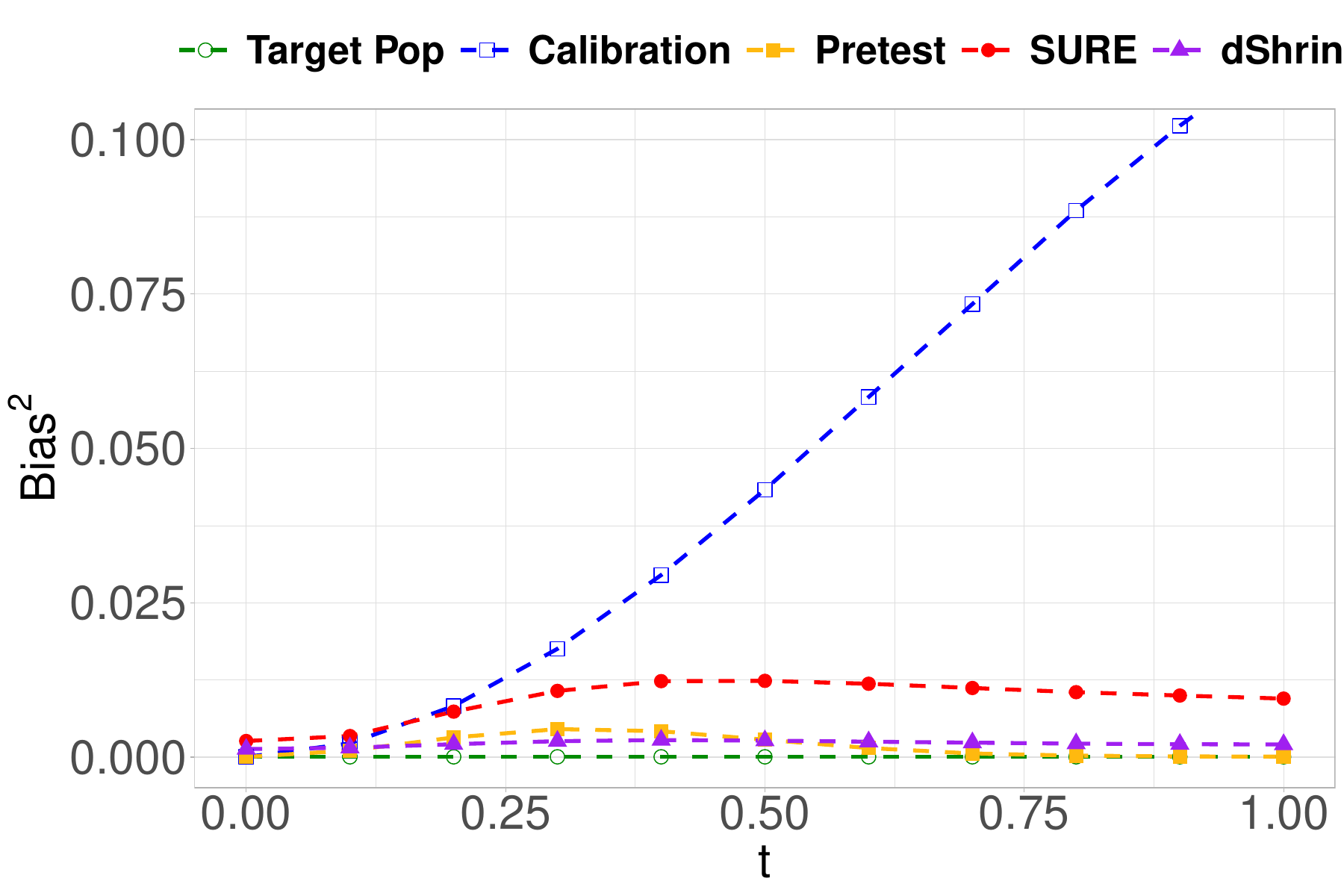}
			}
			\subfigure[Sum of the variance across different parameter components, $p_{X} = 10$, $n_{\cT} = 100$, $n_{\cS} = 500$]{
				\includegraphics[scale = 0.25]{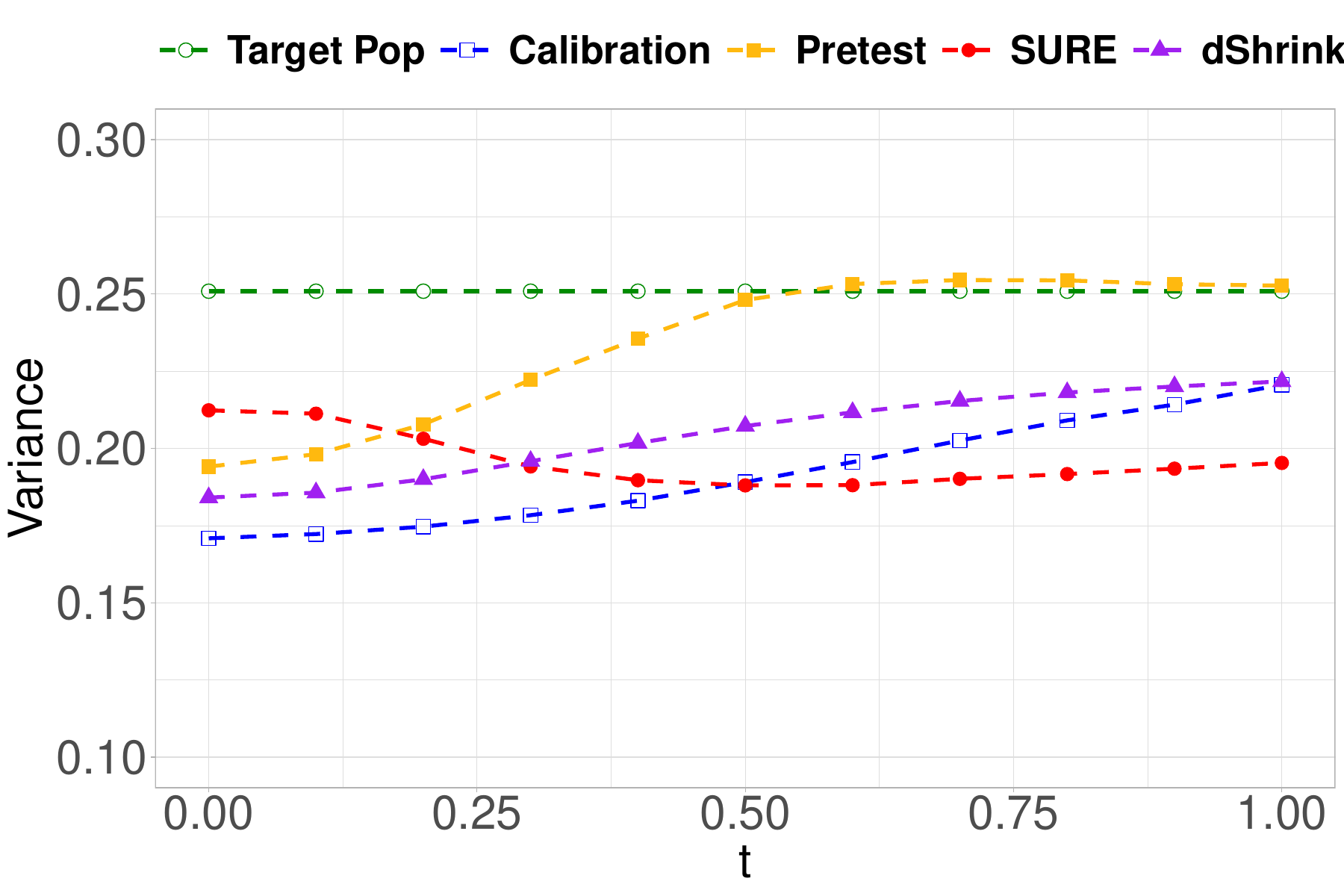}
			}
			\subfigure[Squared sum of the bias across different parameter components. $p_{X} = 30$, $n_{\cT} = 300$, $n_{\cS} = 1000$]{
				\includegraphics[scale = 0.25]{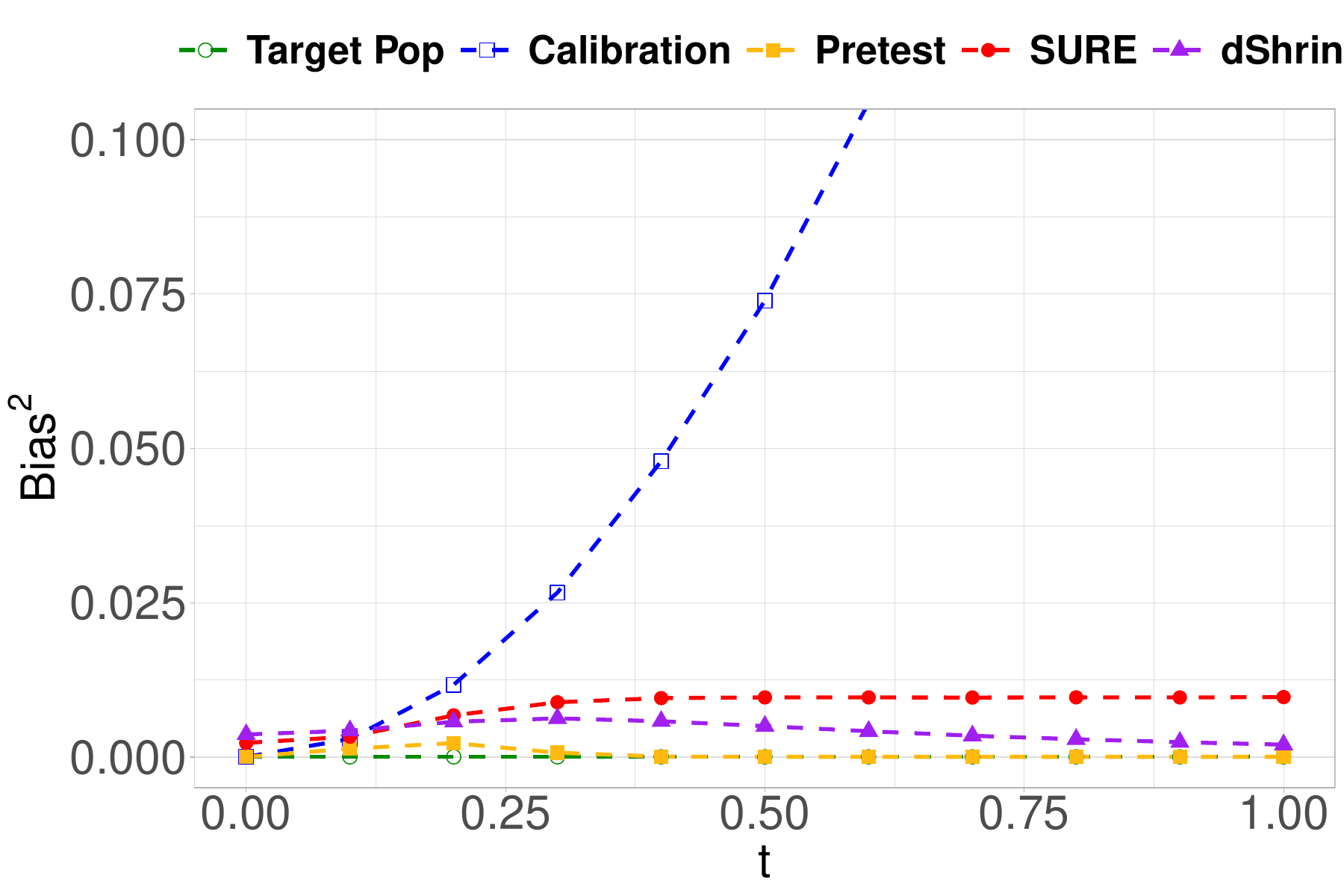}
			}
			\subfigure[Sum of the variance across different parameter components, $p_{X} = 30$, $n_{\cT} = 300$, $n_{\cS} = 1000$]{
				\includegraphics[scale = 0.25]{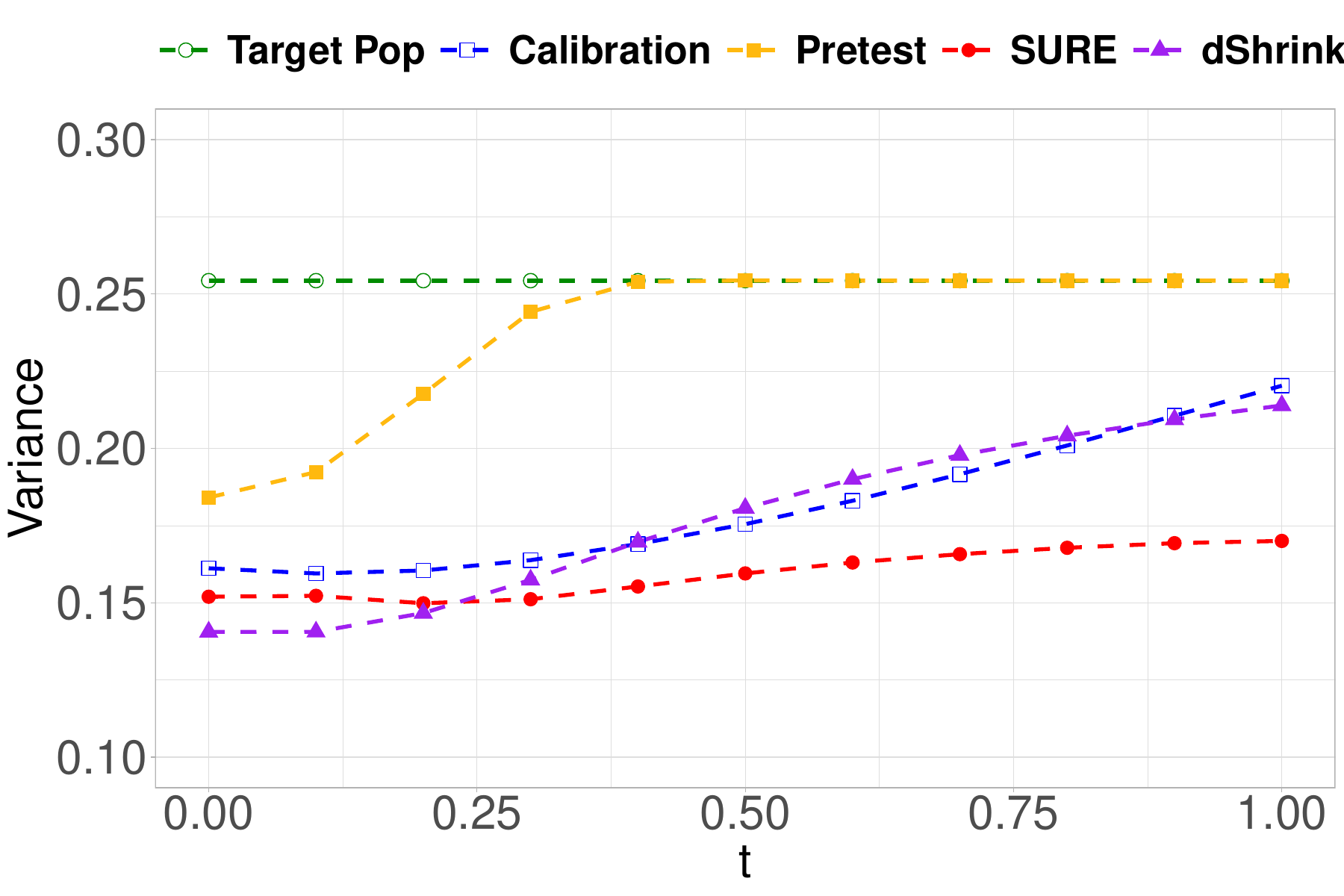}
			}
			\caption{\it Squared sum of the bias and sum of the variance across different parameter components of $\hbeta_{\cT}$, $\hbeta_{\cC}$, $\hbeta_{\rm prt}$, $\hbeta_{\SURE}$, and $\hbeta_{\rm ds}$ in the simulation studies in Section \ref{subsec: sim} in the main text.}\label{fig: sim bias var}
		\end{figure}
		
		\subsection{Simulation results based on the real data in Section \ref{subsec: ITE} in the main text}
		We conducted simulation studies based on the real data in Section \ref{subsec: ITE} by viewing the estimation in the real data as the true value and sampled with replacement from the target population data to obtain the simulated data sets. The simulation was repeated for $5000$ times.
		The target population-based estimator $\hbeta_{\cT}$ and the external summary statistics $\halpha_{\cS}$ were defined in the same way as those in Section \ref{subsec: ITE}.
		We applied different methods to estimate the target parameter. We took $\bbH$ to be an estimate of $E_{\cT}[\partial m(\bX^{\T}\bbeta_{\cT})/\partial \bbeta \{\partial m(\bX^{\T}\bbeta_{\cT})/\partial \bbeta\}^{\T}]$ in the implementation of the dShrink method, where $m(\cdot)$ is the logistic function. This choice of $\bbH$ reflects the first-order approximation of the MSE of the individualized treatment effect function. 
		We calculated the MSEs of the individualized treatment effect function based on different estimators. MSEs (multiplied by $n_{\cT}$) based on $\hbeta_{\cT}$ and the c-dShrink method are $11.09$ and $6.83$, respectively. The c-dShrink method achieves a significant reduction (more than $38\%$) in MSE compared to $\hbeta_{\cT}$. We constructed $0.95$-CI for the individualized treatment effect for each patients.  We took $\bbH$ as an estimate of $\bbSig_{\beta, \cT}^{-1}$ in c-dShrink when constructing CIs. The average width of CIs based on $\hbeta_{\cT}$ and c-dShrink are $0.48$ and $0.38$, respectively, while the corresponding average coverage rates are $0.95$ and $0.94$, respectively. These results demonstrate that c-dShrink achieves a substantial reduction in CI width compared to $\hbeta_{\cT}$ while maintaining near-nominal coverage. Figure \ref{fig: ITE bias var} presents the boxplot of the estimated individualized treatment effects for different individuals (c-dShrink implemented with $\bbH$ being an estimate of $\bbSig_{\beta, \cT}^{-1}$). Figure \ref{fig: ITE bias var} shows that the target population-based estimator is nearly unbiased, and c-dShrink achieves a significant reduction in variance relative to the target population-based estimator at the cost of a mild increase in bias.
		
		\begin{figure}[h]
			\centering
			\includegraphics[scale = 0.3]{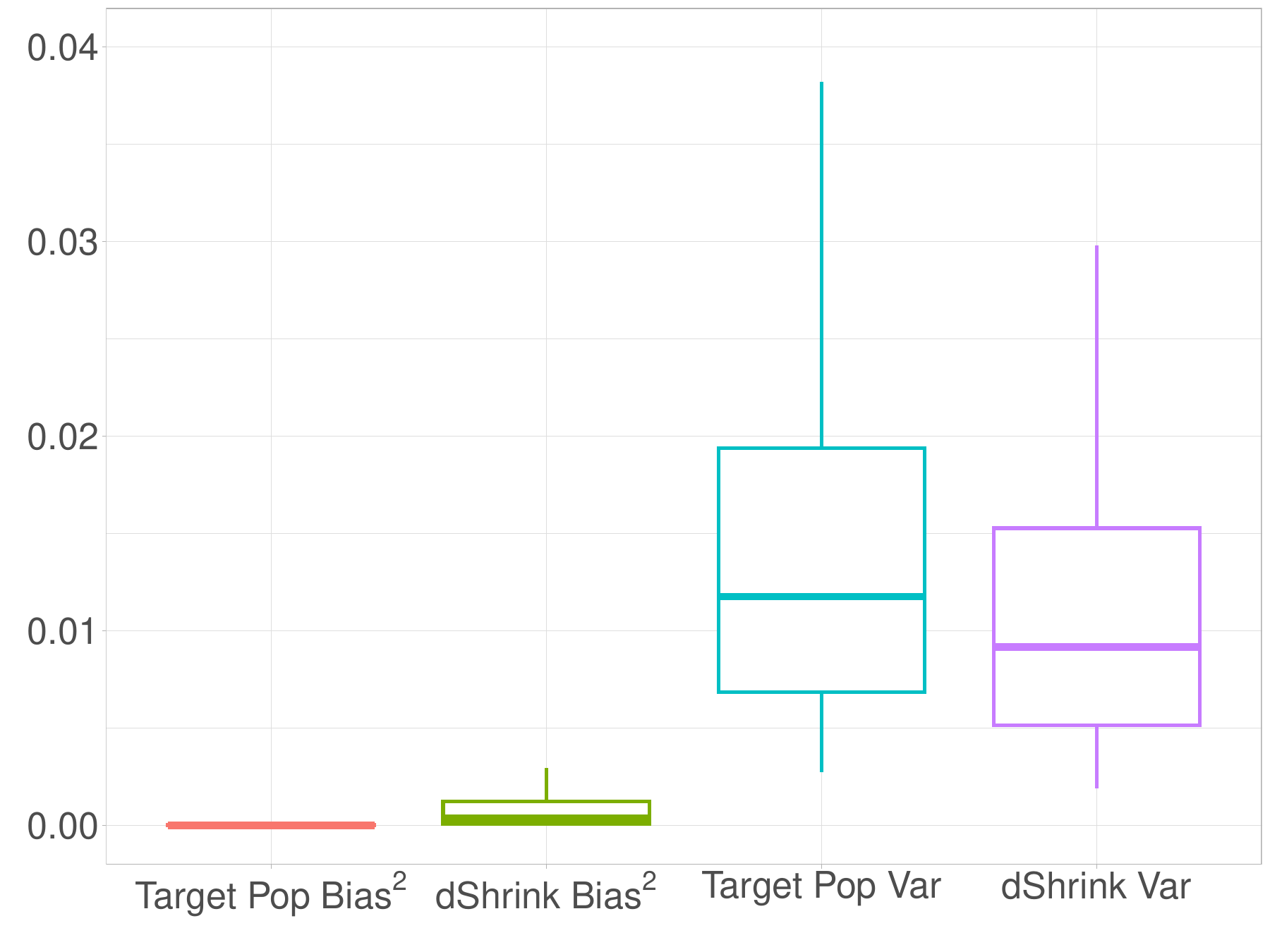}
			\caption{\it Squared sum of the bias and sum of the variance across the estimation for individualized treatment effects of different patients based on the target population-based estimator and the c-dShrink estimator.}\label{fig: ITE bias var}
		\end{figure}
	
	\subsection{Performances of m-dShrink and c-dShrink estimators in the simulation studies}\label{app: sim m&c}
	In this section, we evaluate the m-dShrink estimator and c-dShrink estimator under the simulation setting of Section \ref{subsec: sim} in the main text.
	Figure \ref{fig: sim ds-mds} compares the dShrink estimator and the m-dShrink estimator. $\bbA_{\beta}$ and $\bbA_{d}$ are taken as a vector of $1$'s in implementing the m-dShrink estimator.
	\begin{figure}[h]
		\centering
		\subfigure[]{
			\includegraphics[scale = 0.25]{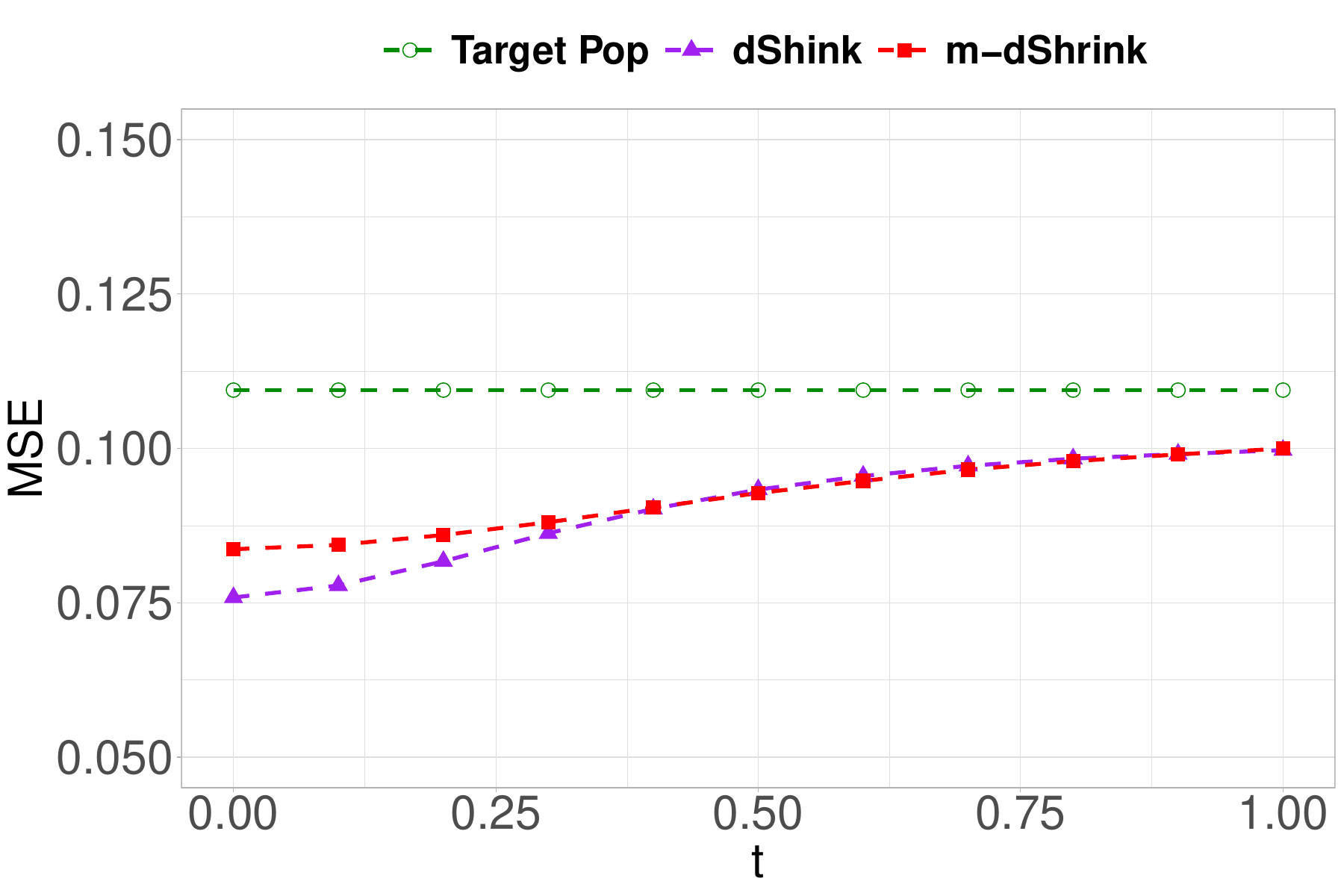}
		}
		\subfigure[]{
			\includegraphics[scale = 0.25]{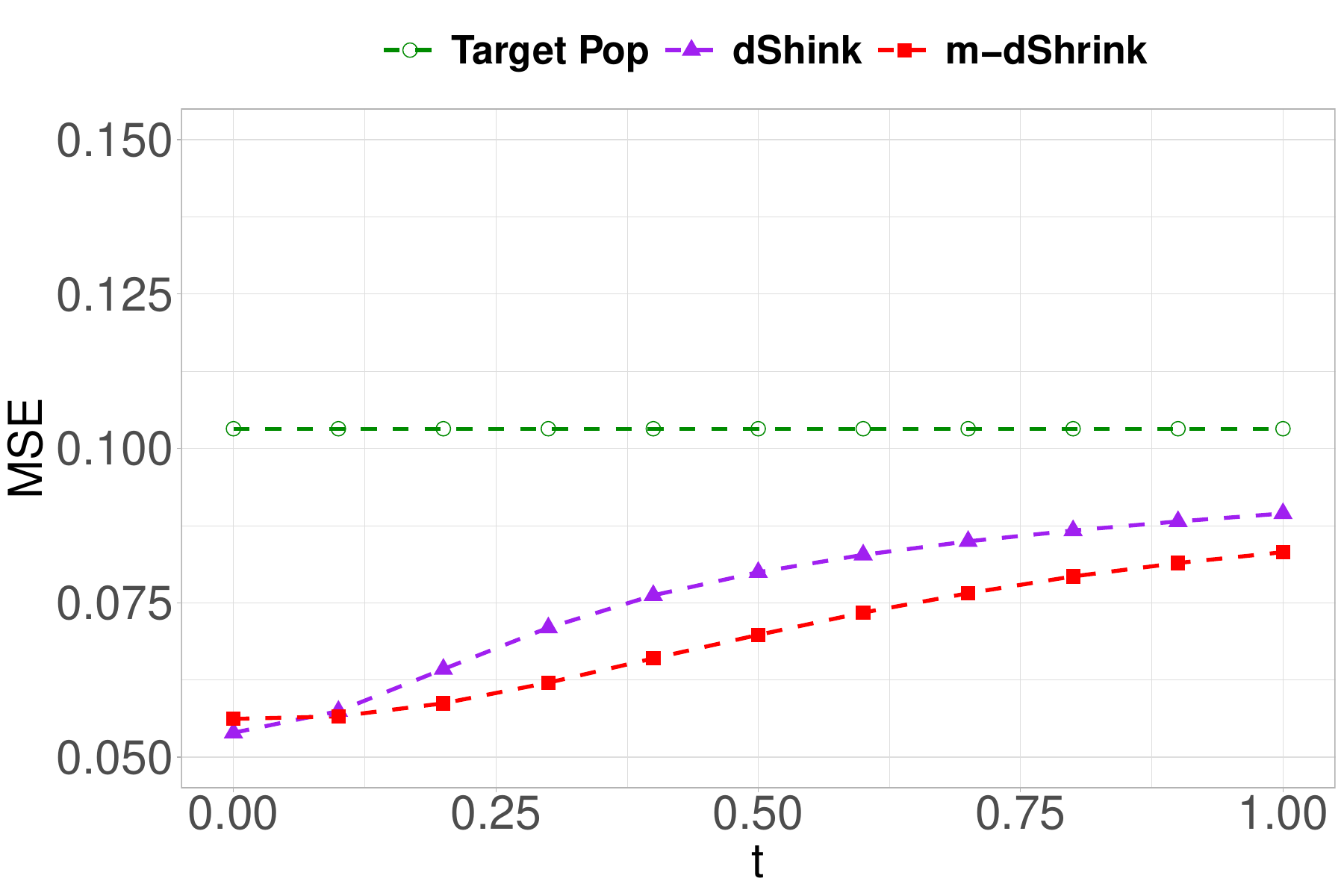}
		}
		\caption{\it MSEs of the regression functions derived from $\hbeta_{\cT}$, $\hbeta_{\rm ds}$, and $\hbeta_{\rm mds}$. (a) $p_{X} = 10$, $q_{X} = 6$, $n_{\cT} = 100$, and $n_{\cS} = 500$; (b)  $p_{X} = 30$, $q_{X} = 20$, $n_{\cT} = 300$, and $n_{\cS} = 1000$.}\label{fig: sim ds-mds}
	\end{figure}
	
	The m-dShrink estimator performs worse than or similar to the dShrink estimator in setting (a).  Note that the m-dShrink estimator involves linear regression models, which are fitted based on $\hbeta_{\cC}$ and $\halpha_{\cT} - \halpha_{\cS}$. $p$ and $q$ can be viewed as ``sample sizes" in the two regressions. Thus, the m-dShrink estimator does not perform well in setting (a) where $p$ and $q$ are small. In setting (b), $p$ and $q$ are larger and the m-dShrink estimator outperforms the dShrink estimator for most $t$'s. 
	
	Figure \ref{fig: sim ds-cds} compares the dShrink estimator and the c-dShrink estimator.
	\begin{figure}[h]
		\centering
		\subfigure[]{
			\includegraphics[scale = 0.25]{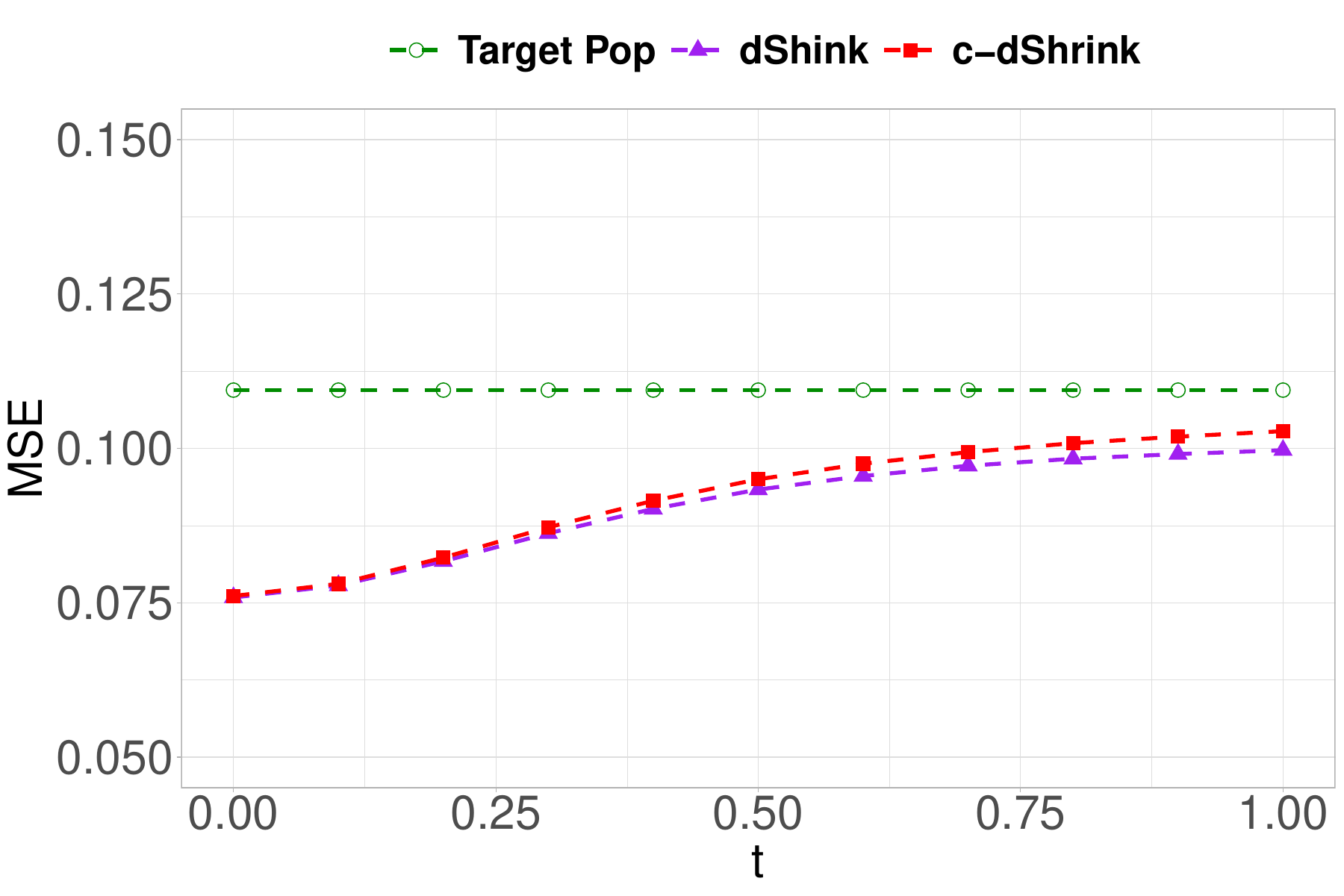}
		}
		\subfigure[]{
			\includegraphics[scale = 0.25]{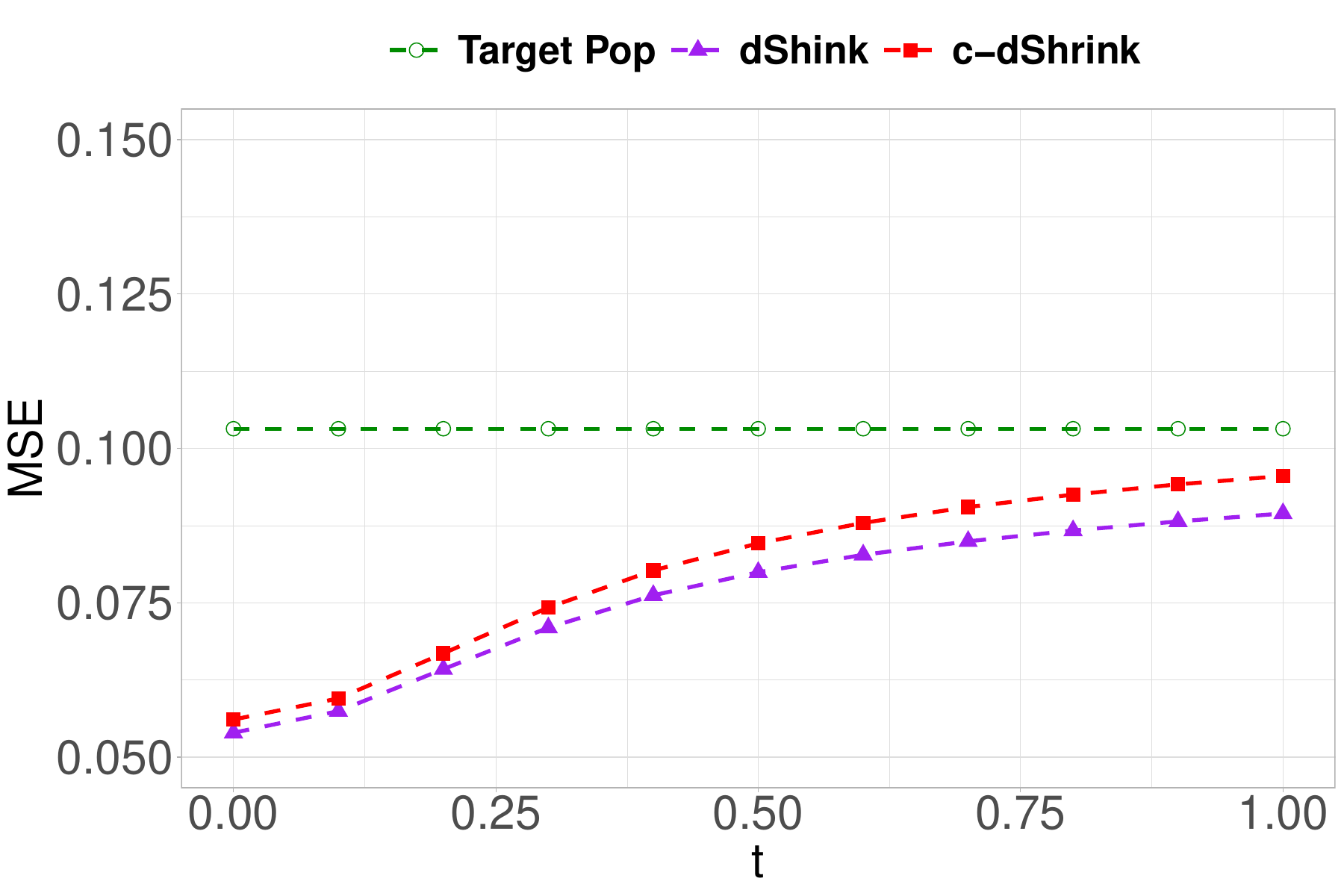}
		}
		\caption{\it MSEs of the regression functions derived from $\hbeta_{\cT}$, $\hbeta_{\rm ds}$, and $\hbeta_{\rm cds}^{\dag}$. (a) $p_{X} = 10$, $q_{X} = 6$, $n_{\cT} = 100$, and $n_{\cS} = 500$; (b)  $p_{X} = 30$, $q_{X} = 20$, $n_{\cT} = 300$, and $n_{\cS} = 1000$.}\label{fig: sim ds-cds}
	\end{figure}
	The c-dShrink estimator incurs larger errors than the dShrink estimator as it does not incorporate the covariance information $\bbSig_{\alpha,\cS}$ of $\halpha_{\cS}$. Despite this, the increase in error is minimal, and c-dShrink maintains a smaller error than $\hbeta_{\cT}$ across all evaluated scenarios.
	
		\subsection{Simulations with different error variances in target and source populations}\label{app: sim hVar}
		This section considers the same setting as that in Section \ref{subsec: sim} in the main text, while the error has different variances in target and source populations.
		Specifically, we set $\var(\epsilon) = 1$ and $2$ in target and source populations, respectively.   
		Figure \ref{fig: sim cmpr hVar} shows the errors of CML, GIM, PCML, and dShrink under different settings.  We varied the number of variables $p_{x}$ and $q_{x}$ and the sample sizes $n_{\cT}$ and $n_{\cS}$ in Figure \ref{fig: sim cmpr hVar}.
		
		\begin{figure}[h]
			\centering
			\subfigure[]{
				\includegraphics[scale = 0.25]{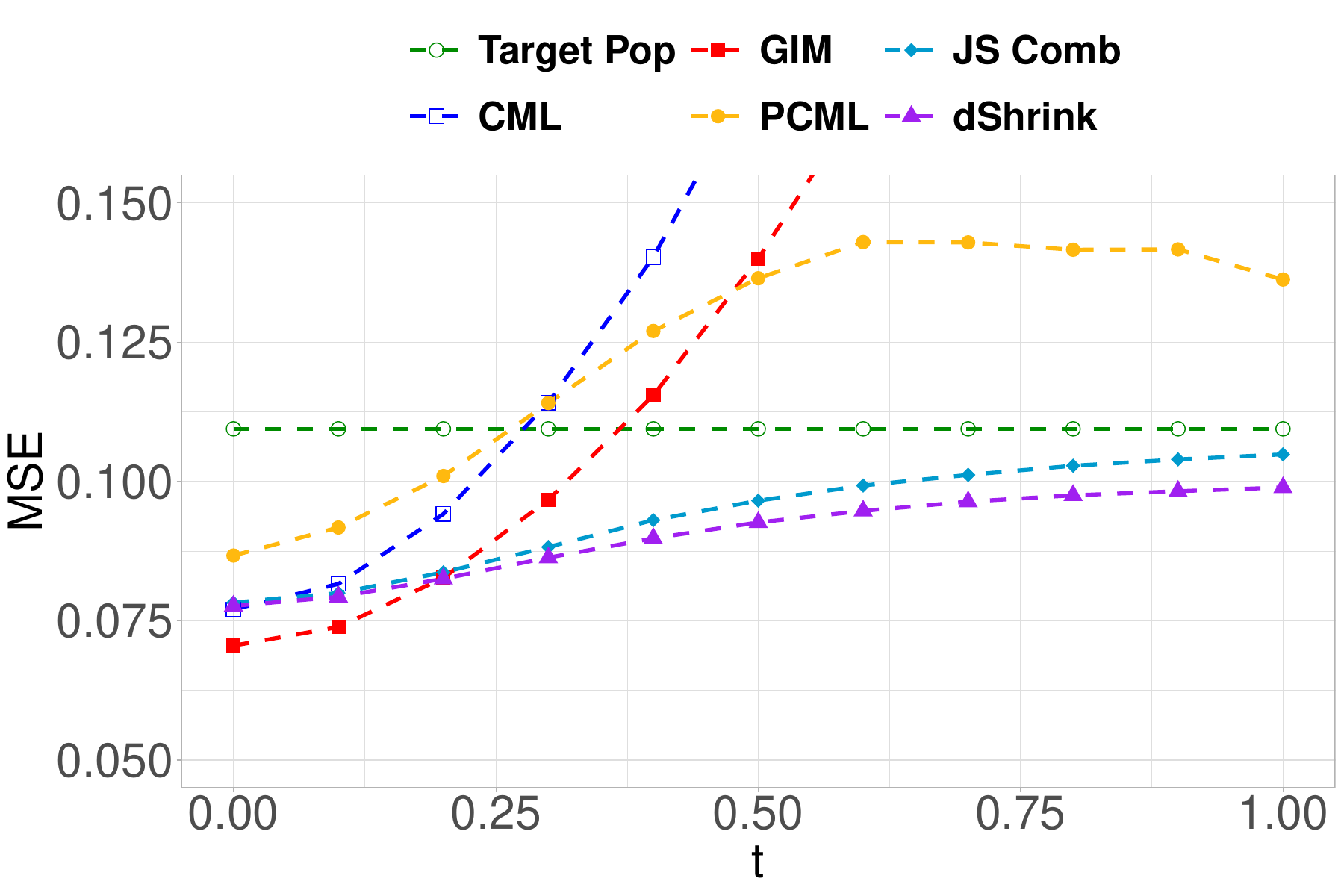}
			}
			\subfigure[]{
				\includegraphics[scale = 0.25]{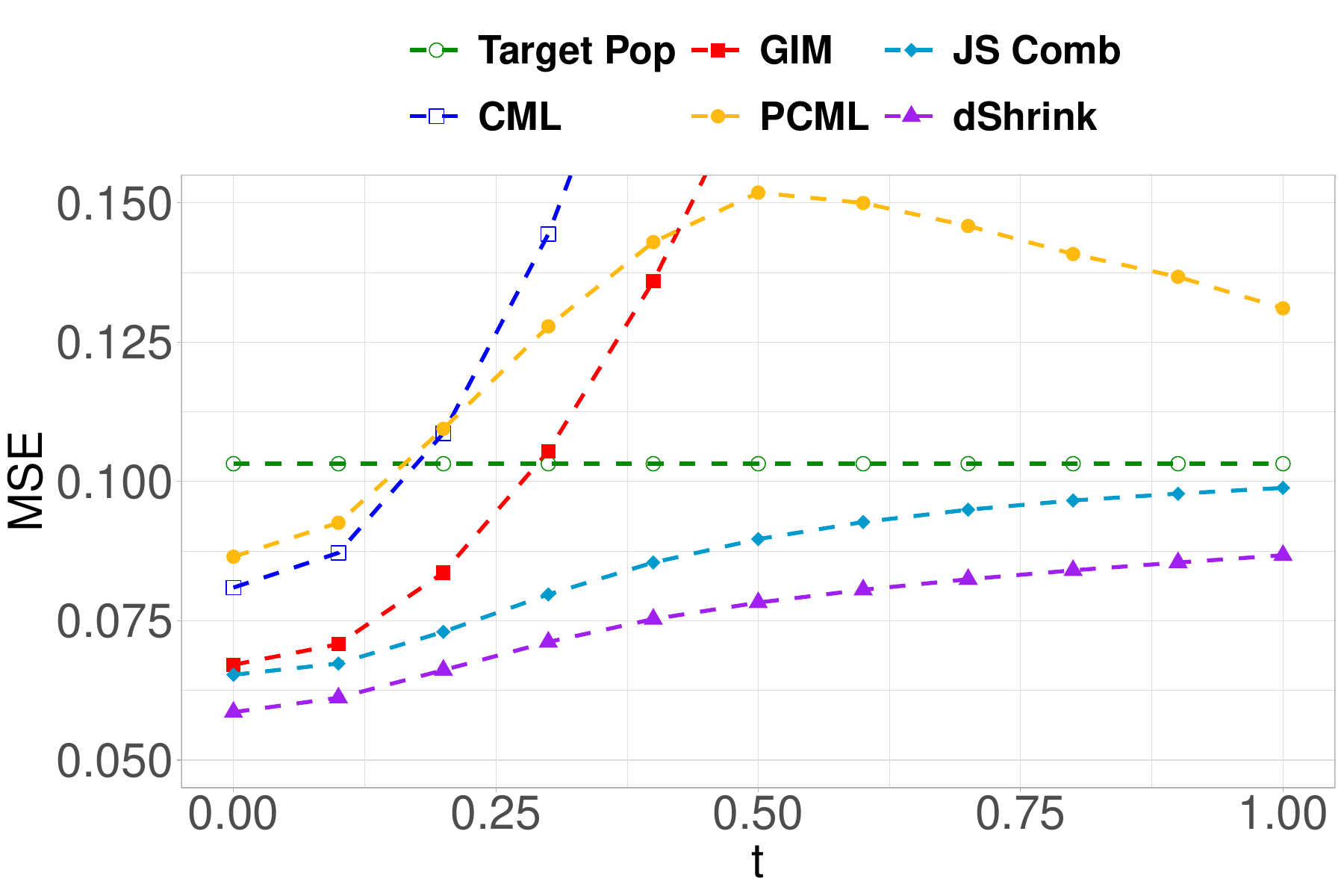}
			}
			\caption{\it MSEs of the regression functions derived from $\hbeta_{\cT}$, $\hbeta_{\rm ds}$, and the existing methods CML, GIM, PCML in simulation studies with different error variances in target and source populations. (a) $p_{X} = 10$, $q_{X} = 6$, $n_{\cT} = 100$, and $n_{\cS} = 500$; (b)  $p_{X} = 30$, $q_{X} = 20$, $n_{\cT} = 300$, and $n_{\cS} = 1000$.}\label{fig: sim cmpr hVar}
		\end{figure}
		
		Figure \ref{fig: sim CI hVar} shows the average coverage rates and average widths over CIs of different parameter components constructed using the $\hbeta_{\cT}$ and $\hbeta_{\rm ds}$.
		We set $\bbH$ to be an estimate of $\bbSig_{\beta, \cT}^{-1}$ in the implementation of dShrink. 
		
		\begin{figure}[h]
			\centering
			\subfigure[]{
				\includegraphics[scale = 0.28]{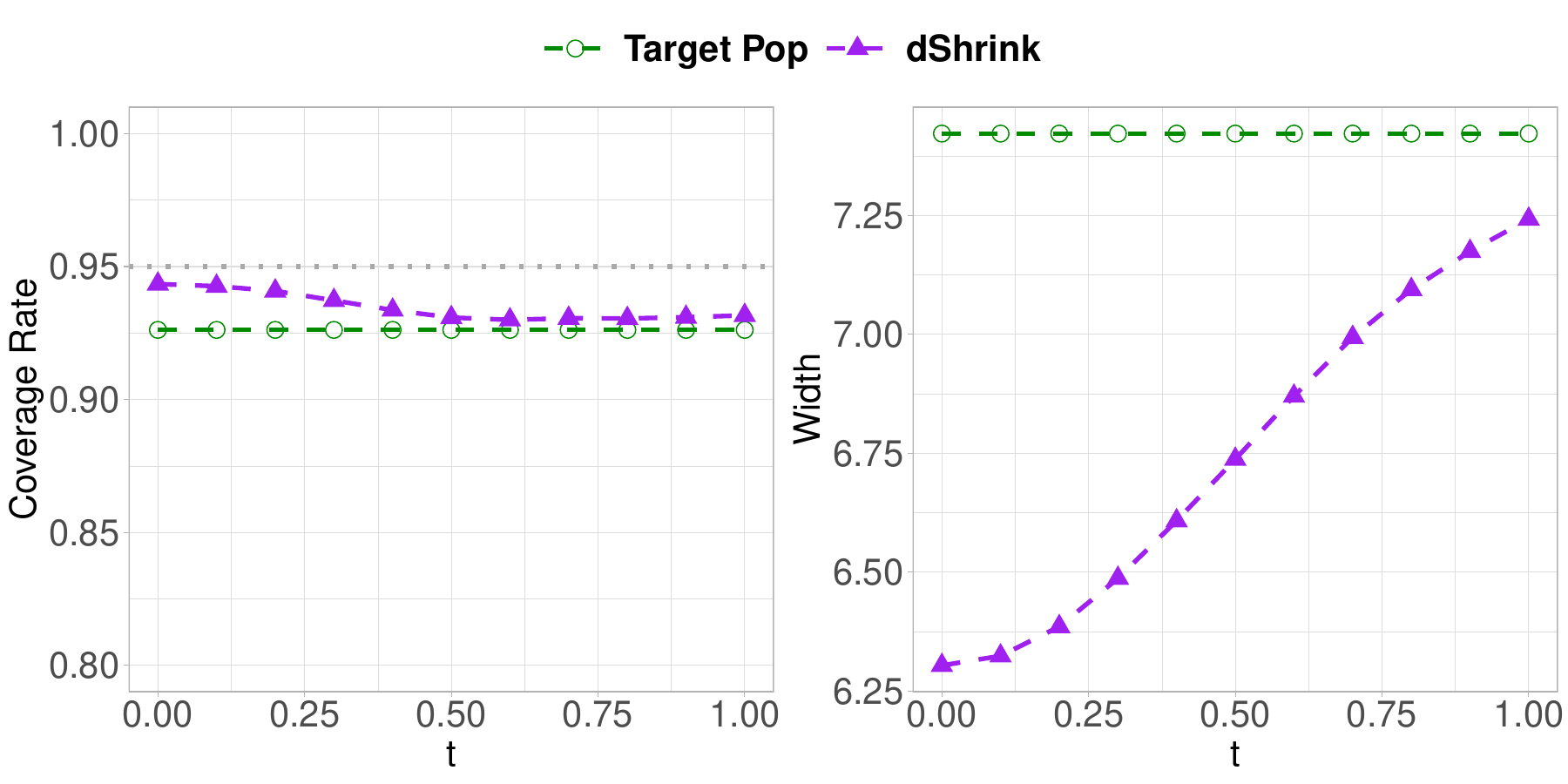}
			}
			\subfigure[]{
				\includegraphics[scale = 0.28]{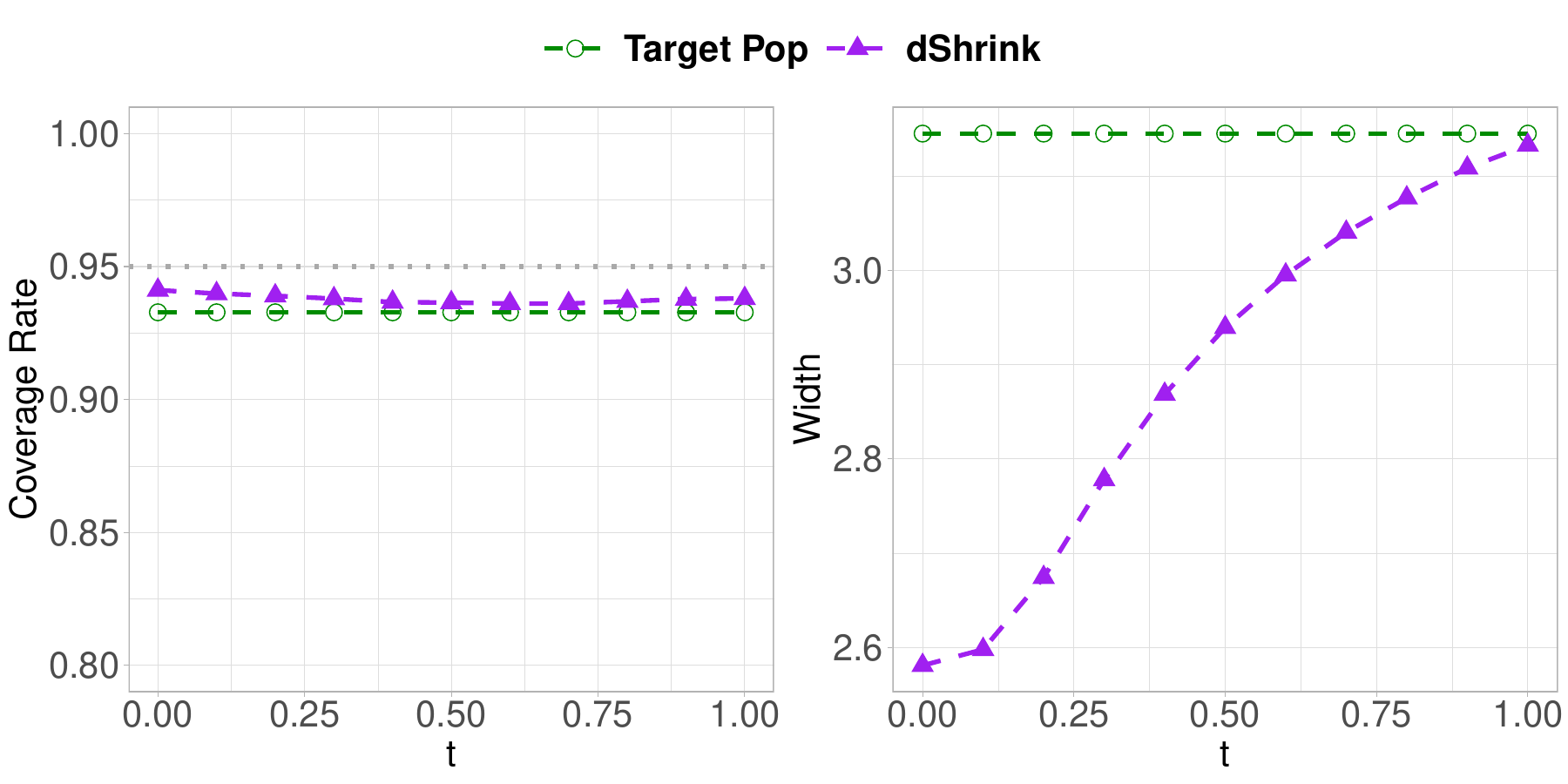}
			}
			\caption{\it Average coverage rates and average widths over different components of the standard CIs based on $\hbeta_{\cT}$ and $\hbeta_{\rm ds}$. The average widths are multiplied by ten. (a) $p_{X} = 10$, $q_{X} = 6$, $n_{\cT} = 100$, and $n_{\cS} = 500$; (b)  $p_{X} = 30$, $q_{X} = 20$, $n_{\cT} = 300$, and $n_{\cS} = 1000$.}\label{fig: sim CI hVar}
		\end{figure}
		The results in this section are similar to those in Section \ref{subsec: sim}, which demonstrates the robustness of dShrink against the heterogeneity of error variance in target and source populations.
		
		\subsection{Simulations with correlated observations}\label{app: sim correlated obs}
		To demonstrate the generality of the proposed method, we consider a nonlinear regression problem with non-normal correlated errors. In the target population, let $\bX$ be the $p_{X}$-dimensional covariate vector of an individual with i.i.d. standard normal components. For each individual, three repeated measurements for the outcome were observed. Let 
		$\bY = (Y_{1}, Y_{2}, Y_{3})$
		be the $3$-dimensional outcome vector of an individual with 
		\[
		Y_{j} = p(\bX^{\T}\beta_{\cT}) + \frac{\epsilon_{c} + \epsilon_{j}}{2},
		\]
		where $p(\cdot)$ is the logistic function, $\epsilon_{c}$ is a common error term across $j = 1, 2, 3$ that follows a t-distribution with $10$ degrees of freedom, $\epsilon_{j}$ is an error term specific to the $j$-th observation that follows a Laplace distribution with mean zero and unit variance, $(\epsilon_{c}, \epsilon_{1}, \epsilon_{2}, \epsilon_{3})\Perp \bX$, and $\{\epsilon_{j}\}_{j = 1}^{3}$ are independent of each other.
		
		The source population data are generated in the same way as the target population except that the coefficient $\bbeta_{\cT}$ is replaced by $\bbeta_{\cS}$. To specify $\bbeta_{\cT}$ and $\bbeta_{\cS}$, we generate $p_{X}$-dimensional vectors $\boldsymbol{\mu}$ and $\bbias$ with i.i.d. components from $N(0, 0.25 / p)$, and set $\bbeta_{\cT} = (\boldsymbol{\mu} - t\bbias) / \|\boldsymbol{\mu} - t\bbias\|$ and $\bbeta_{\cS} = (\boldsymbol{\mu} + t\bbias) / \|\boldsymbol{\mu} + t\bbias\|$ for $0\leq t \leq 1$. In this example, the norm of the coefficient affects the variance of estimators. Thus, we normalized $\boldsymbol{\mu} - t\bbias$ and $\boldsymbol{\mu} + t\bbias$ by their norm when setting $\bbeta_{\cT}$ and $\bbeta_{\cS}$ to make the coefficient norm unchanged across different $t$'s.
		The target population-based estimator $\hbeta_{\cT}$ and the external summary statistics $\halpha_{\cS}$ are constructed based on the generalized estimating equation estimation for $\bbeta_{\cT}$ and $\bbeta_{\cS}$, respectively. In this setting, $p = q = p_{X}$. Figure \ref{fig: sim error correlated} presents the MSEs of different estimators.  We took $\bbH = \bbI_{p}$ in the implementation of dShrink.
		\begin{figure}[h]
			\centering
			\subfigure[]{
				\includegraphics[scale = 0.25]{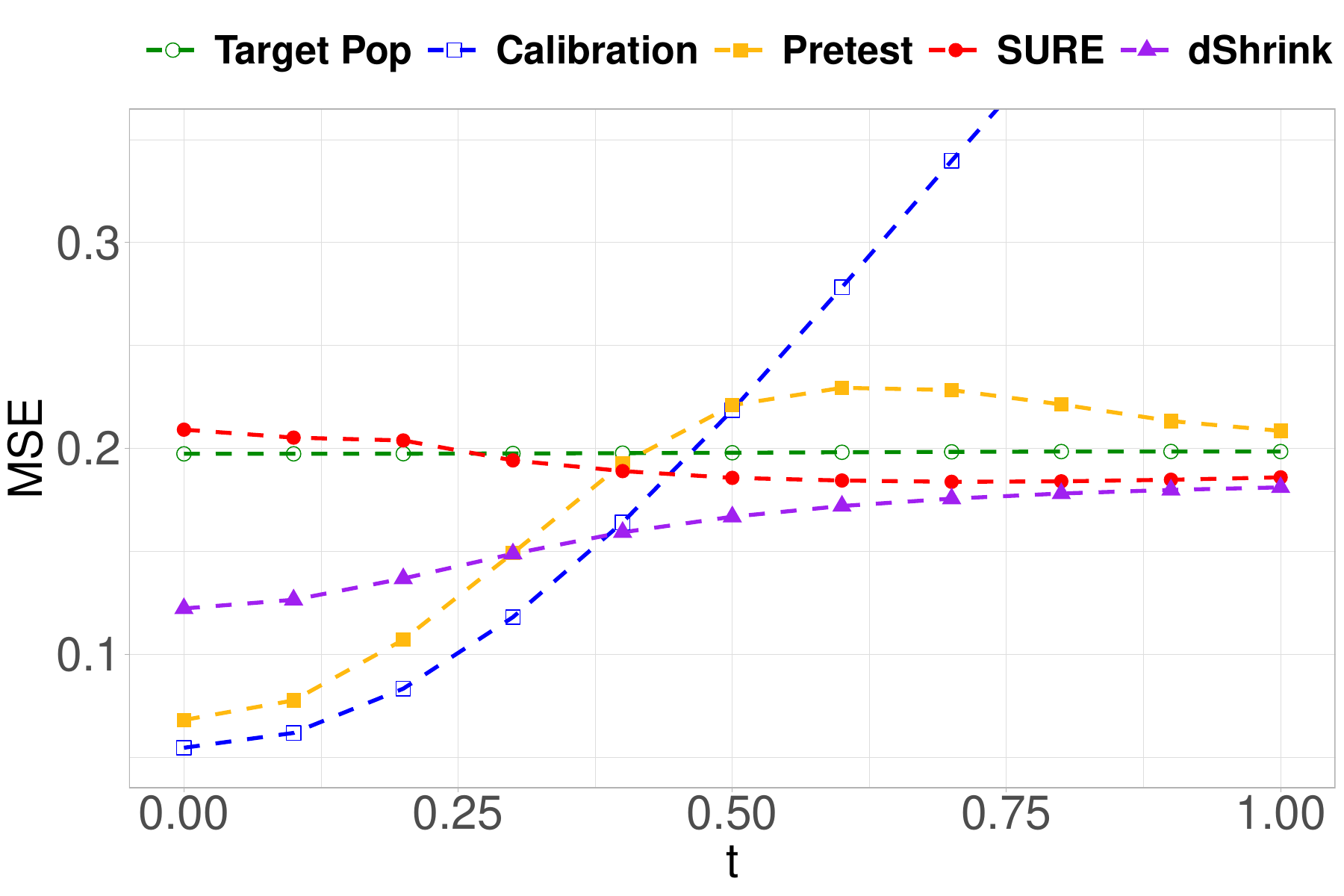}
			}
			\subfigure[]{
				\includegraphics[scale = 0.25]{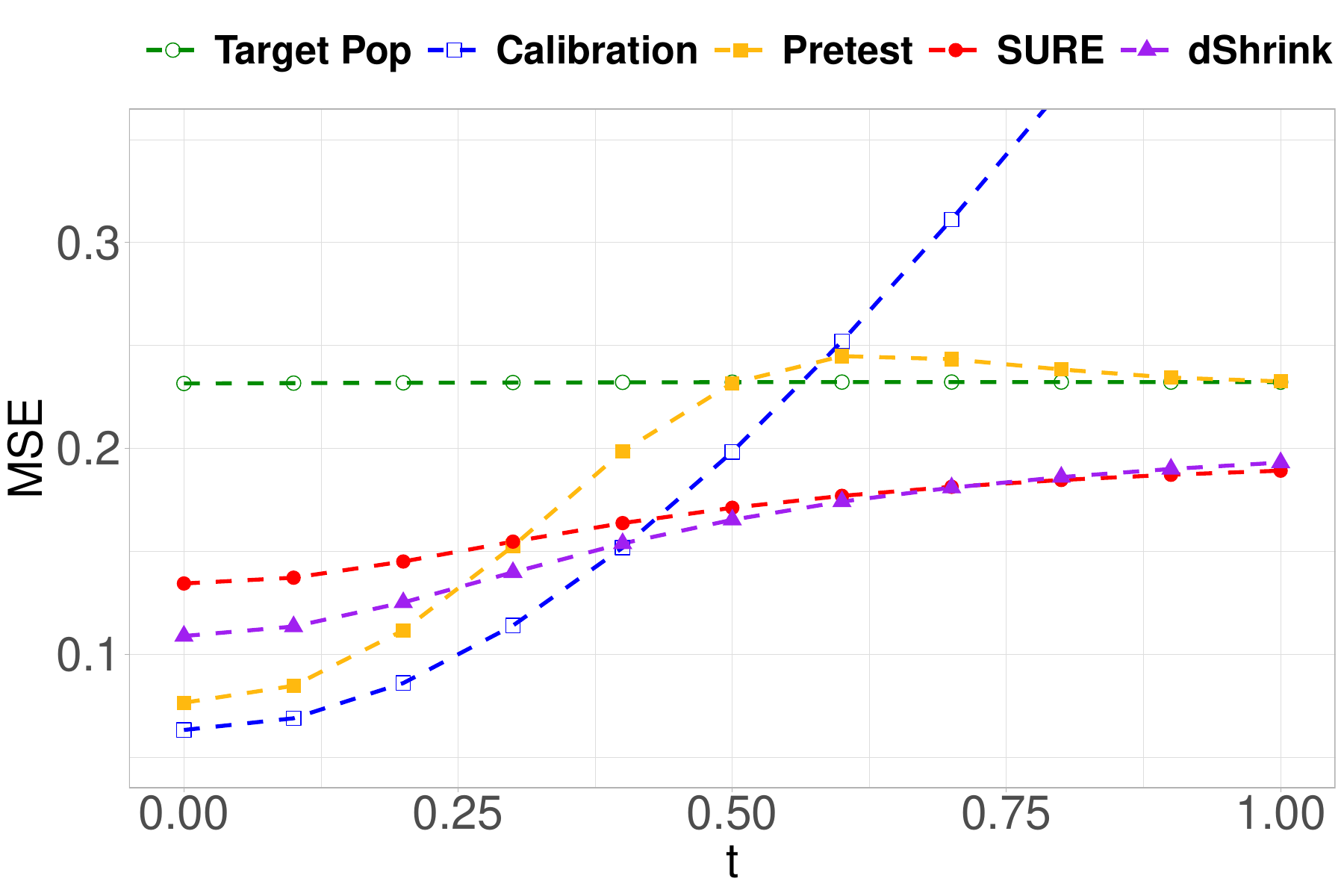}
			}
			\caption{\it MSEs of $\hbeta_{\cT}$, $\hbeta_{\cC}$, $\hbeta_{\rm prt}$, $\hbeta_{\SURE}$, and $\hbeta_{\rm ds}$ in the simulation with correlated errors. (a) $p = 6$, $n_{\cT} = 300$, and $n_{\cS} = 600$; (b)  $p_{X} = 12$, $n_{\cT} = 600$, and $n_{\cS} = 1200$.}\label{fig: sim error correlated}
		\end{figure}
		
		Figure \ref{fig: sim CI correlated} shows the average coverage rates and average widths over CIs of different parameter components constructed using the $\hbeta_{\cT}$ and $\hbeta_{\rm ds}$.
		We set $\bbH$ to be an estimate of $\bbSig_{\beta, \cT}^{-1}$ in the implementation of dShrink. 
		
		\begin{figure}[h]
			\centering
			\subfigure[]{
				\includegraphics[scale = 0.28]{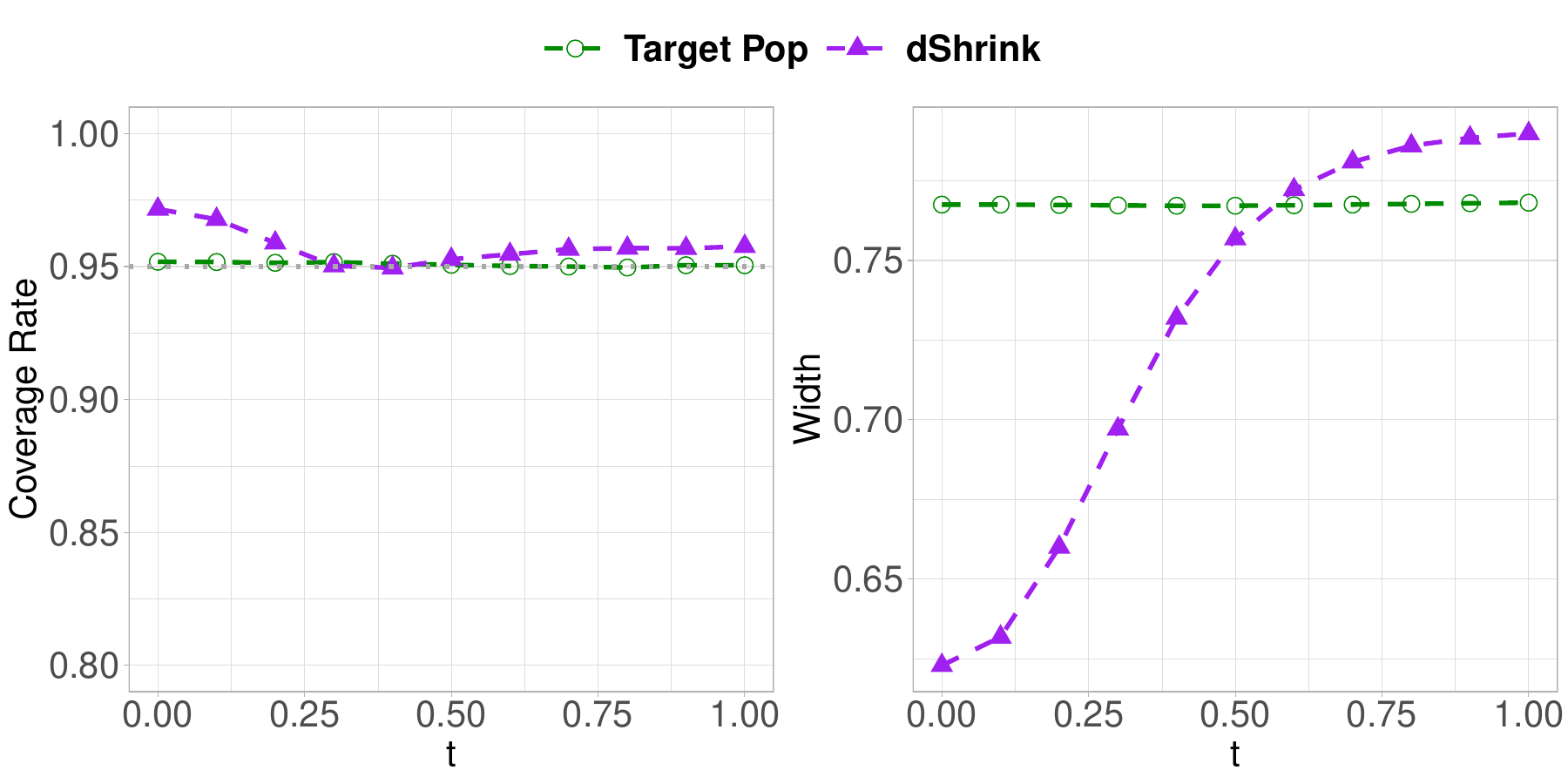}
			}
			\subfigure[]{
				\includegraphics[scale = 0.28]{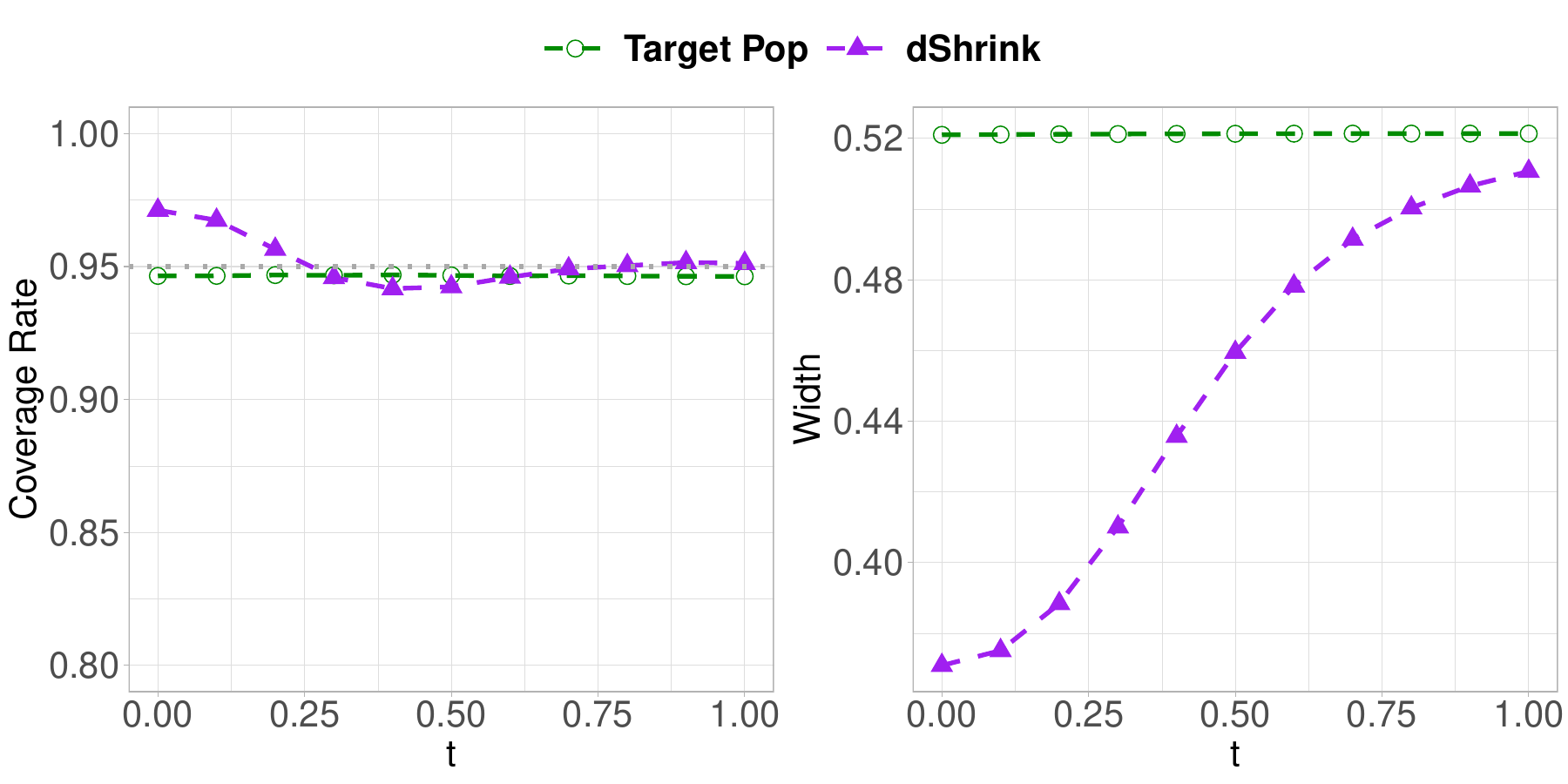}
			}
			\caption{\it Average coverage rates and average widths over CIs of different parameter components based on $\hbeta_{\cT}$ and $\hbeta_{\rm ds}$. (a) $p_{X} = 6$, $n_{\cT} = 300$, and $n_{\cS} = 600$; (b)  $p_{X} = 12$, $n_{\cT} = 600$, and $n_{\cS} = 1200$.}\label{fig: sim CI correlated}
		\end{figure}
		
		Figures \ref{fig: sim error correlated} and \ref{fig: sim CI correlated} show that dShrink can achieve improved MSE compared to $\hbeta_{\cT}$ and guaranteed coverage rate in the setting with non-normal correlated data. Moreover, it often yields narrower confidence intervals than those based on $\hbeta_{\cT}$. These results demonstrate the broad applicability of the proposed dShrink estimator.

		\subsection{Simulations when the same variables are observed in target and source populations and individual-level data are available}\label{app: sim same variable}
		In this Section~\ref{app: sim same variable}, we compare the proposed method with data-enriched linear regression \citep{chen2015data}, TransLasso \citep{li2022transfer}, TransGLM \citep{tian2022transfer}, and ISEDI \citep{hector2024turning} in a setting where the same variables are observed in both populations and individual-level data are available.
		
		Data were generated from the same model considered in Section \ref{subsec: sim} in the main text, but the same variables were observed in target and source populations, and individual-level data were available. Note that TransLasso and TransGLM are originally designed for high dimensional problems. We set the dimension of the covariate to be higher compared to that in Section \ref{subsec: sim}.
		The components of $\bbeta_{\cT}$ were generated independently from ${\rm Bernoulli}(0.8) \times N(0, 2 / p)$. Let $\etab$ be a $p$-dimensional vector whose components were generated independently from ${\rm Bernoulli}(0.5) \times N(0, 2 / p)$ and $\bbeta_{\cS} = \bbeta_{\cT} + t\etab$, where $0\leq t\leq 1$ is a parameter that characterizes the difference between $P_{\cT}$ and $P_{\cS}$. The matrices $\bbH$, $\bbSig_{\cT}$, and $\bbSig_{\cS}$ were set in the same way as in Section \ref{subsec: sim}. The target population-based estimator $\hbeta_{\cT}$ is the least squares estimator. The expected quadratic error is approximated by the mean of the quadratic error across $5000$ simulation runs. Figure \ref{fig: sim error} shows the errors of different estimators under different settings.  We varied the number of variables $p_{X}$ and the sample sizes $n_{\cT}$ and $n_{\cS}$ in Figure \ref{fig: sim cmpr samevariable}.
		
		\begin{figure}[h]
			\centering
			\subfigure[]{
				\includegraphics[scale = 0.25]{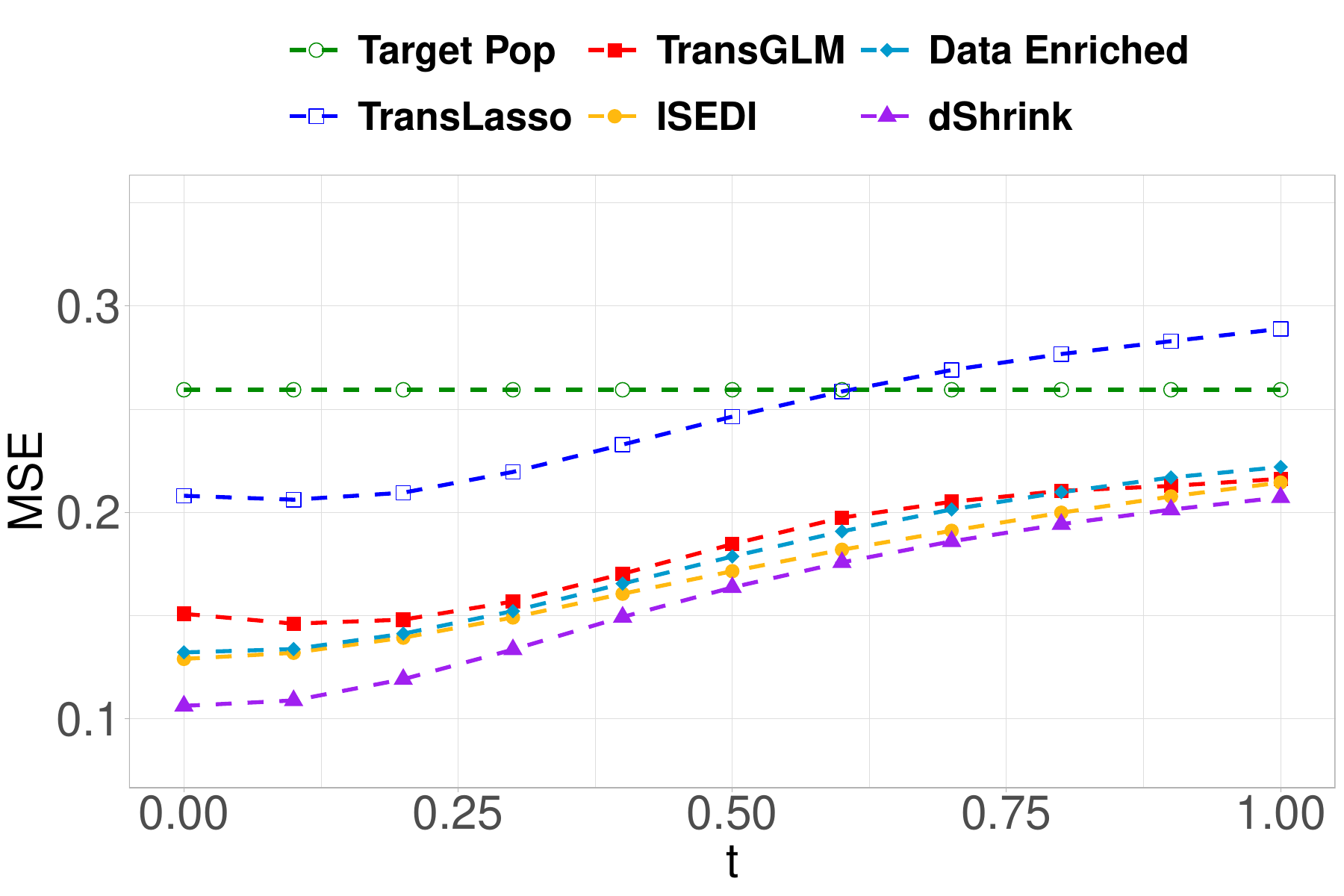}
			}
			\subfigure[]{
				\includegraphics[scale = 0.25]{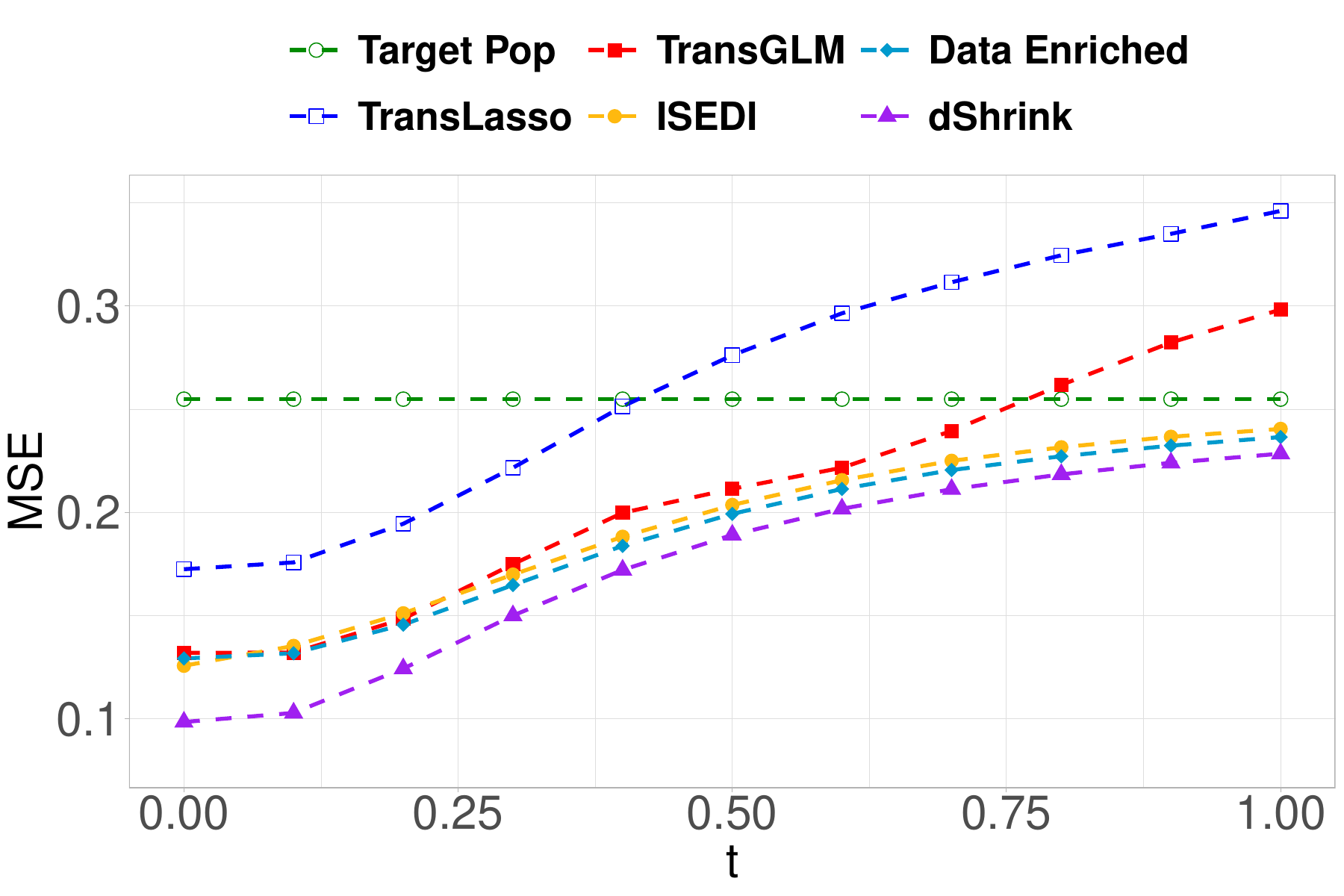}
			}
			\caption{\it MSEs of the regression functions derived from $\hbeta_{\cT}$, $\hbeta_{\rm ds}$, and the existing methods CML, GIM, PCML in simulation studies. (a) $p_{X} = 25$, $n_{\cT} = 100$, and $n_{\cS} = 200$; (b)  $p_{X} = 50$, $n_{\cT} = 200$, and $n_{\cS} = 400$.}\label{fig: sim cmpr samevariable}
		\end{figure}
		
		Figure \ref{fig: sim CI sameVariable} presents the average coverage rates and average widths over different parameter components of CIs based on different methods. Data-enriched linear regression and TransLasso are not included in the comparison because they do not produce inference results.  We set $\bbH$ to be an estimate of $\bbSig_{\beta, \cT}^{-1}$ in the implementation of dShrink. 
		\begin{figure}[h]
			\centering
			\subfigure[]{
				\includegraphics[scale = 0.28]{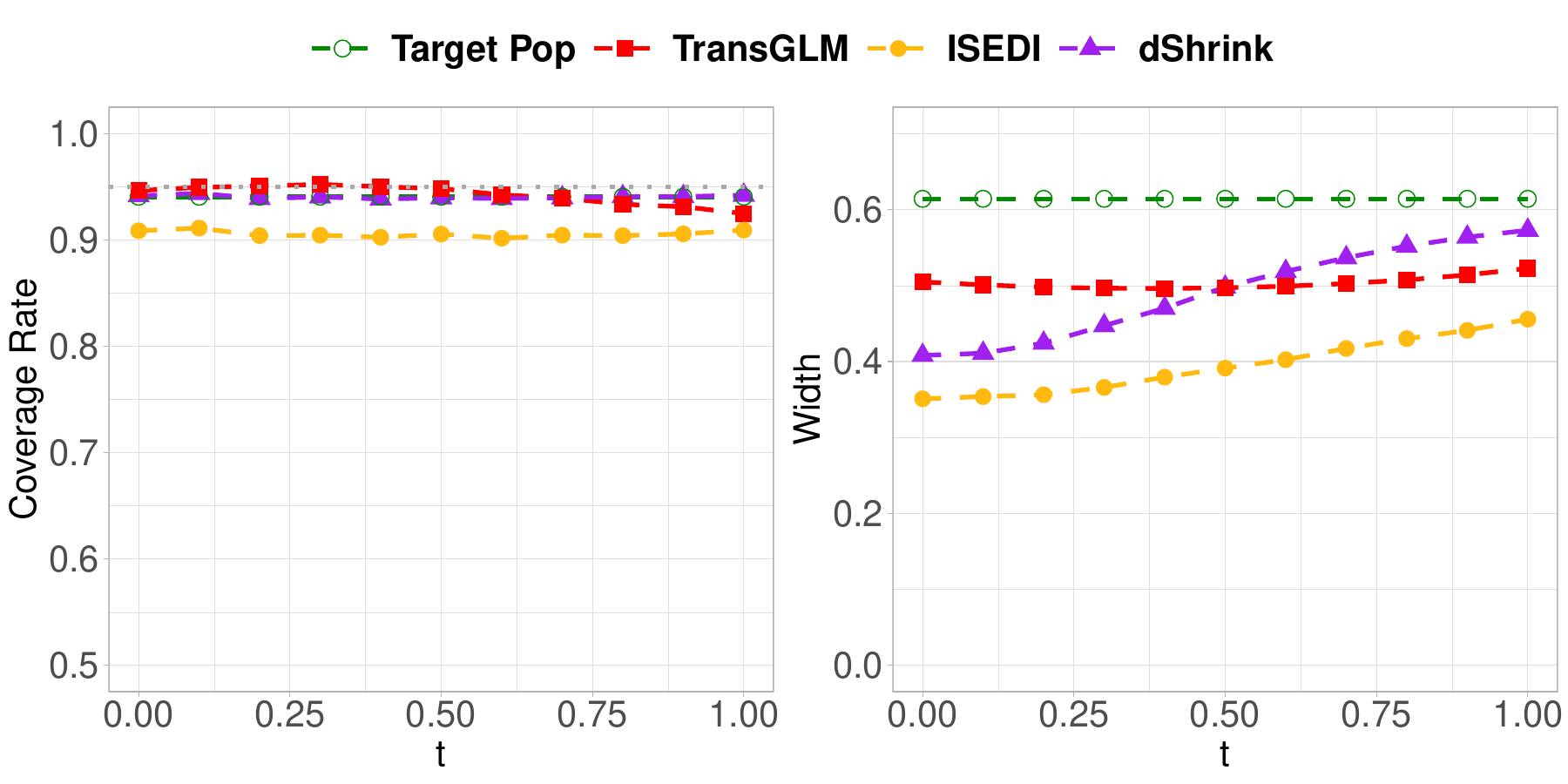}
			}
			\subfigure[]{
				\includegraphics[scale = 0.28]{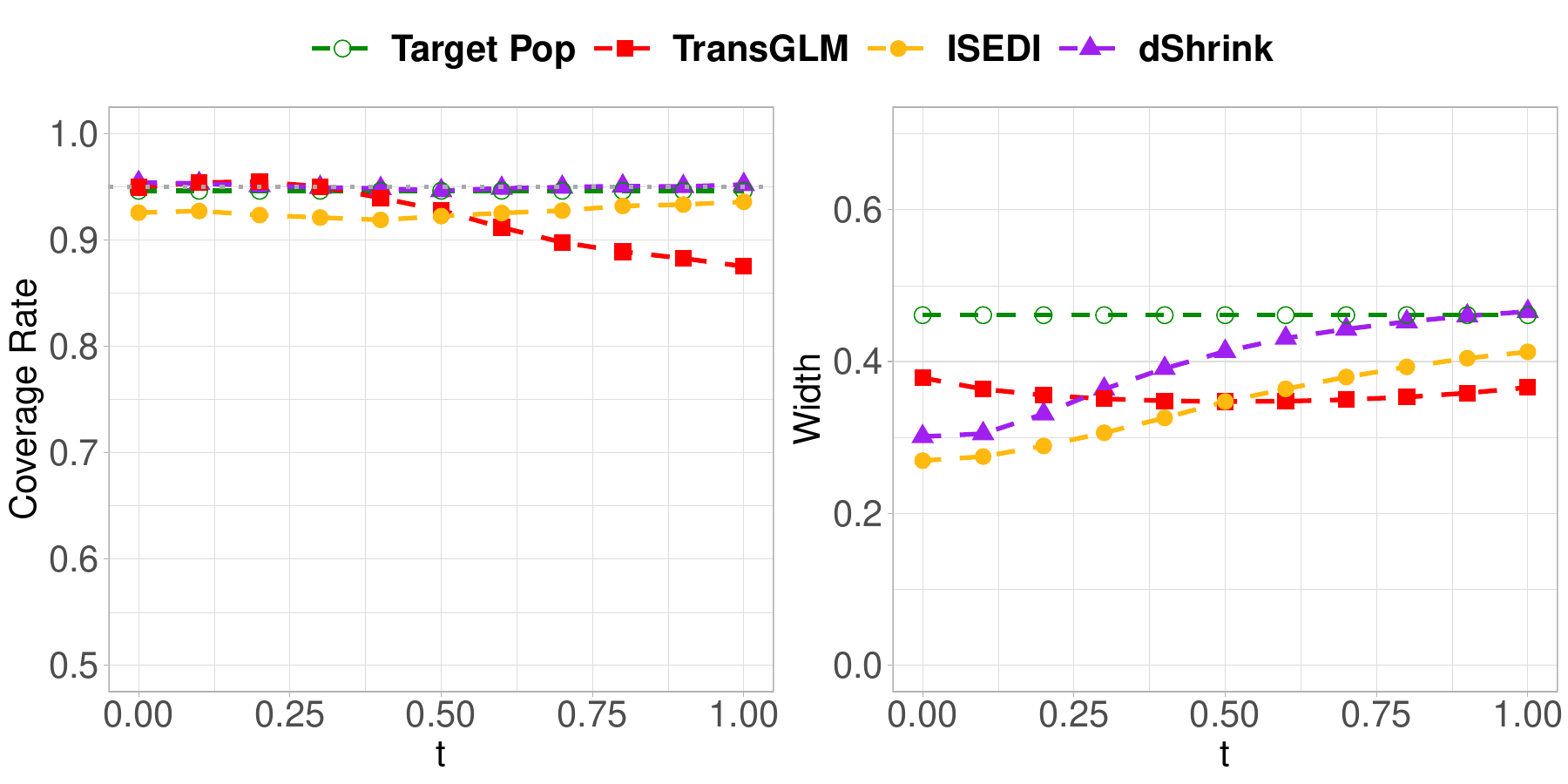}
			}
			\caption{\it Average coverage rates and average widths over CIs of different parameter components based on $\hbeta_{\cT}$, TransGLM, ISEDI, and $\hbeta_{\rm ds}$. (a) $p_{X} = 25$, $n_{\cT} = 100$, and $n_{\cS} = 200$; (b)  $p_{X} = 50$, $n_{\cT} = 200$, and $n_{\cS} = 400$.}\label{fig: sim CI sameVariable}
		\end{figure}
		
		Figures \ref{fig: sim cmpr samevariable} and \ref{fig: sim CI sameVariable} demonstrate that dShrink outperforms the existing methods in terms of MSE and coverage rate, and can reduce the width of the CIs compared to the target population-based estimator in most cases.

\end{document}